\renewcommand{\baselinestretch}{1.3}    
    \ifnum\bookmarkget{level}=1 %
\newlength{\tocleftmargin}
\newcommand{\forceindent}{{\noindent \hbox to \parindent{}}}
\renewcommand{\thesection}{\arabic{section}}
\newcommand{\sectionenumerated}[2][]
{
    \refstepcounter{section}
    \addcontentsline{toc}{section}{\thesection\hspace{\tocleftmargin}#2}
    \section*{\thesection\ifthenelse{\isempty{#1}}{}{\label{#1}}\quad#2}
}
\newcounter{sectionnonum}
\newcommand{\sectionnoenumeration}[1]
{
    \refstepcounter{sectionnonum}
    \section*{#1}
    \addcontentsline{toc}{section}{#1}
}
\newcounter{subsectionnonum}
\newcommand{\subsectionnoenumeration}[1]
{
    \refstepcounter{subsectionnonum}
    \subsection*{#1}
    \addcontentsline{toc}{subsection}{#1}
}
\newcounter{myparagraph}[subsection]
\renewcommand{\themyparagraph}{\thesubsection.\arabic{myparagraph}}
\newcommand{\myparagraph}[2][]
{
    \refstepcounter{myparagraph}

    \subsubsection*{%
        \texorpdfstring%
            {\themyparagraph\ifthenelse{\isempty{#1}}{}{\label{#1}}\quad#2}%
            {\themyparagraph\hspace{12pt}#2}%
    }

    \addcontentsline{toc}{subsubsection}{\themyparagraph\hspace{\tocleftmargin}#2}
}
\newtheorem{theorem}{Теорема}[section]
\newtheorem{lemma}[theorem]{Лемма}
\newtheorem{predicate}[theorem]{Утверждение}
\newtheorem{define}[theorem]{Определение}
\newtheorem{corollary}[theorem]{Следствие}
\newtheorem{hypothesis}[theorem]{Гипотеза}
\newtheorem{replacement}{Замена}[section]
\newcommand{\compose}{\mathop{*}\limits}
\newcommand{\mcirc}{\mathop{\circ}\limits}
\newcommand{\smallwedge}{\mathop{\wedge}\limits}
\newcommand{\NN}{\mathbb N}
\newcommand{\ZZ}{\mathbb Z}
\newcommand{\FF}{\mathbb F}
\newcommand{\vv}{\mathbf}
\newcommand{\balpha}{\boldsymbol\alpha}
\newcommand{\frS}{\mathfrak S}
\newcommand{\fp}{\FF_2[x] \mathop / f(x)}
\newcommand{\fpm}{\FF_2^*[x] \mathop / f(x)}
\newcommand{\LC}{L^{(C)}}
\newcommand{\LT}{L^{(T)}}
\newcommand{\WC}{W^{(C)}}
\newcommand{\WT}{W^{(T)}}
\newcommand{\gate}{Ф.\,Э.}
\newcounter{synthalgchapter}
\newcounter{synthalg}[synthalgchapter]
\renewcommand{\thesynthalg}{\arabic{synthalgchapter}.\arabic{synthalg}}
\newcommand{\nextalg}[1]{\refstepcounter{synthalg}\textbf{A\thesynthalg\label{#1}}}
\newcommand{\algref}[1]{A\ref{#1}}
\newcounter{optcounter}
\newcommand{\nextopt}{\refstepcounter{optcounter}\textit{O\arabic{optcounter}}}
\newcounter{fakecounter} 
\newcommand{\Table}{\refstepcounter{tablecount}\begin{table}}
\newcommand{\Longtable}{\refstepcounter{tablecount}\begin{longtable}}
\newcommand{\Figure}{\refstepcounter{figurecount}\begin{figure}}
\newcommand{\Bibitem}[1]{\refstepcounter{citecount}\bibitem{#1}}
\begin{document}

\clearpage
\begin{titlepage}
    \begin{center}
        \renewcommand{\baselinestretch}{1.2}
        \selectfont
        
        Федеральное государственное бюджетное образовательное учреждение\\
        высшего профессионального образования\\
        <<Московский государственный технический университет имени Н.\,Э. Баумана>>\\
        (МГТУ им. Н.\,Э. Баумана)

        \bigskip
        Факультет: <<Информатика и системы управления>>\\
        Кафедра: <<Информационная безопасность>>
        
        \medskip
        \bigskip
        \begin{flushright}
            \large
            На правах рукописи\\
            \vspace{2cm}
            
        \end{flushright}
 
        \vspace{0.1cm}
        {
            \Large
            Закаблуков Дмитрий Владимирович
            
        }
       
        \bigskip
        {
            \LARGE \textbf{
                Методы синтеза обратимых схем\\
                из функциональных элементов\\
                NOT, CNOT и 2-CNOT
            }
            
        }
        
        \bigskip
        \bigskip
        {
            \large
            Специальность 01.01.09 --- дискретная математика\\
            и математическая кибернетика
            
        }
 
        \bigskip
        \bigskip
        \bigskip
        {
            \Large
            ДИССЕРТАЦИЯ\\
            на соискание учёной степени\\
            кандидата физико-математических наук
            
        }

        \bigskip
        \bigskip
        {
            \begin{flushright}
                \large
                Научный руководитель:\\
                кандидат физико-математических наук,\\
                доцент А.\,Е. Жуков
                
            \end{flushright}
        }
        
        \vspace*{\fill}

        {
            \large
            Москва --- 2018
            
        }

    \end{center}
\end{titlepage}
\clearpage
\setcounter{page}{2}
\section*{Обозначения}
\begin{longtable}{lcl}
    \gate{} & --- & функциональный элемент\\
    NOT     & --- & инвертор\\
    $N_t^n$ & --- & инвертор c $n$ входами, $t$~--- контролируемый выход\\
    $N_t$ & --- & инвертор, $t$~--- контролируемый выход\\
    CNOT    & --- & управляемый инвертор\\
    $C_{i;t}^n$ & --- & управляемый инвертор с $n$ входами, $t$~--- контролируемый выход,\\
            && $i$~--- контролирующий вход\\
    $C_{i;t}$ & --- & управляемый инвертор, $t$~--- контролируемый выход,\\
            && $i$~--- контролирующий вход\\
    2-CNOT  & --- & элемент Тоффоли\\
    $C_{i_1,i_2;t}^n$ & --- & элемент Тоффоли с $n$ входами, $t$~--- контролируемый выход,\\
            && $i_1$, $i_2$~--- контролирующие входы\\
    $C_{i_1,i_2;t}$ & --- & элемент Тоффоли, $t$~--- контролируемый выход,\\
            && $i_1$, $i_2$~--- контролирующие входы\\
    $k$-CNOT & --- & обобщённый элемент Тоффоли с $k$ контролирующими входами\\
    $C_{I;t}^n$ & --- & обобщённый элемент Тоффоли с $n$ входами, $t$~--- контролируемый\\
            && выход, $I$~--- множество контролирующих входов\\
    $C_{I;t}$ & --- & обобщённый элемент Тоффоли, $t$~--- контролируемый выход,\\
            && $I$~--- множество контролирующих входов\\
    $E(t)$ & --- & инвертор, $t$~--- контролируемый выход\\
    $E(t,I)$ & --- & обобщённый элемент Тоффоли, $t$~--- контролируемый выход,\\
               && $I$~--- множество прямых контролирующих входов\\
    $E(t,I,J)$ & --- & обобщённый элемент Тоффоли, $t$~--- контролируемый выход,\\
               && $I$~--- множество прямых контролирующих входов,\\
               && $J$~--- множество инвертированных контролирующих входов\\
    $\Omega_n^2$ & --- & множество всех элементов NOT, CNOT и 2-CNOT с $n$ входами\\
    $\Omega_n$ & --- & множество всех элементов NOT и $k$-CNOT с $n$ входами\\
    $\Omega_*^2$ & --- & множество всех элементов NOT, CNOT и 2-CNOT\\
    $\ZZ_2^n$ & --- & множество двоичных векторов длины $n$\\
    $S(M)$ & --- & симметрическая группа подстановок на множестве $M$\\
    $A(M)$ & --- & знакопеременная группа подстановок на множестве $M$\\
    $\frS$ & --- & обратимая схема\\
    $L(\frS)$ & --- & сложность обратимой схемы $\frS$\\
    $\LC(\frS)$ & --- & количество элементов NOT и CNOT в обратимой схеме $\frS$\\
    $\LT(\frS)$ & --- & количество элементов 2-CNOT в обратимой схеме $\frS$\\
    $D(\frS)$ & --- & глубина обратимой схемы $\frS$\\
    $W(E)$ & --- & квантовый вес функционального элемента $E$\\
    $\WC$ & --- & квантовый вес элемента NOT/CNOT\\
    $\WT$ & --- & квантовый вес элемента 2-CNOT\\
    $W(\frS)$ & --- & квантовый вес обратимой схемы $\frS$\\
    $Q(\frS)$ & --- & количество дополнительных входов схемы $\frS$\\
    $P_2(n,n)$ & --- & множество всех булевых отображений $\ZZ_2^n \to \ZZ_2^n$\\
    $L(f,q)$ & --- & минимальная сложность обратимой схемы, реализующей булево\\
        && отображение $f$ с $q$ дополнительными входами\\
    $L(n,q)$ & --- & функция сложности по Шеннону обратимой схемы\\
        && с $q$ дополнительными входами\\
    $D(f,q)$ & --- & минимальная глубина обратимой схемы, реализующей булево\\
        && отображение $f$ с $q$ дополнительными входами\\
    $D(n,q)$ & --- & функция глубины по Шеннону обратимой схемы\\
        && с $q$ дополнительными входами\\
    $W(f,q)$ & --- & минимальный квантовый вес обратимой схемы, реализующей\\
        && булево отображение $f$ с $q$ дополнительными входами\\
    $W(n,q)$ & --- & функция квантового веса по Шеннону обратимой схемы\\
        && с $q$ дополнительными входами
\end{longtable}
\clearpage

\tableofcontents

\sectionnoenumeration{Введение}

\subsectionnoenumeration{История вопроса}

\forceindent
При проектировании цифровых микросхем всё чаще во главу угла ставится требование компактности микросхемы
и её низкого энергопотребления. Если тепловые потери во время вычислительного процесса будут относительно высокими,
производители не смогут выпускать новые масштабируемые решения для рынка, имеющие достаточно низкую рабочую температуру,
а пользователи будут вынуждены искать подходящий источник питания для вычислительного устройства.
При разработке мобильных систем ограниченное энергопотребление становится уже критически важным.

С одной стороны, проблема тепловых потерь во время вычислительного процесса связана с несовершенством современных
технологий производства цифровых микросхем и используемых для этого материалов.
Однако с течением времени все эти недостатки постепенно устраняются. Возникает вопрос: если полностью устранить все
технологические недостатки, то можно ли добиться нулевого уровня тепловых потерь во время вычислительного процесса?
Ответ на этот вопрос даёт фундаментальный физический принцип, предсказанный Дж. фон Нейманом в 1949 году%
~\cite{von_neyman} и сформулированный Р. Ландауэром в 1961 году~\cite{landauer}:
в любой вычислительной системе, независимо от её физической реализации, при потере 1 бита информации
выделяется минимум $kT \ln 2$ Дж тепла, где $k$~--- постоянная Больцмана, $T$~--- абсолютная температура, при которой
происходят вычисления.

Продолжительное время данный принцип оставался всего лишь чистой теорией, не подкреплённой
результатами экспериментов. Это было связано с трудностями измерения малых объёмов выделяемой энергии.
Однако в 2012 году учёным удалось провести эксперимент с коллоидной частицей, впервые подтвердивший принцип Р.~Ландауэра%
~\cite{berut_landauer_verification}. В 2014 был проведён ещё один эксперимент, показавший, что при уменьшении
возможных макроскопических состояний системы в 2 раза выделяется минимум $kT \ln 2$ Дж тепла~\cite{jun_landauer_verification}.
Это также подтверждает принцип Р.~Ландауэра, поскольку потерю 1 бита информации можно рассматривать
как уменьшение возможных состояний системы в 2 раза.
В других теоретических работах также было показано~\cite{vacarro_info_erasure}, что стирание 1 бита информации
невозможно без увеличения общей энтропии системы.

С математической точки зрения, уменьшение состояний системы в 2 раза можно рассматривать как результат работы
сюръективной функции: если для двух различных наборов входных значений $X_1 \ne X_2$ значение функции совпадает,
$f(X_1) = f(X_2) = Y$, то можно считать, что произошла потеря 1 бита информации о входном значении функции $f$.
Если к значению $Y$ добавить один бит и считать, что он равен 0,
если на вход функции $f$ был подан набор $X_1$, и равен 1, если на вход функции $f$ был подан набор $X_2$,
то потери информации не происходит.
Таким образом, вычисление значения булевой функции от $n$ переменных, равной константе, приводит к потере $n$ бит информации.
Мы тем самым пришли к тепловым потерям, как результату \textit{необратимости} вычислений: только биективная функция является
обратимой на всём множестве входных значений.

Однако насколько величина в $kT \ln 2$ Дж является существенной? Несложно посчитать, что при комнатной температуре она
будет равна $2,8 \cdot 10^{-21}$ Дж, что само по себе крайне мало. Тем не менее, в современных вычислительных устройствах
одновременно работают миллионы транзисторов. Если предположить, что на каждом такте работы каждый из этих транзисторов,
реализуя необратимое вычисление, теряет 1 бит информации и выделяет указанное количество энергии,
то общая величина тепловых потерь уже не будет столь незначительной.
К примеру, процессор Intel Core i7 содержит более 700 млн. транзисторов и работает на частоте выше 2 ГГц.
При пиковой нагрузке температура этого процессора может доходить до 80\textcelsius. Если все вычисления в этом процессоре
будут необратимы, мы получим примерно $4,7$ мВт выделяемой энергии. На фоне общего энергопотребления данная величина
продолжает выглядеть незначительной.

Современные технологии развиваются очень быстро. Эмпирический закон Мура~\cite{moore_law}, сформулированный им ещё в 1965 году,
гласит, что примерно каждые 2 года количество транзисторов на единицу площади удваивается. В последнее время скорость роста
плотности транзисторов несколько снизилась, но всё ещё остаётся экспоненциальной~\cite{evans_moore_law_validity}.
Таким образом, уже примерно к 2030 году размеры транзисторов при соблюдении закона Мура достигнут атомарного уровня.
Некоторые разрабатываемые технологии теоретически могут позволить достичь плотности размещения логических устройств в $10^{17}$
на кубический сантиметр~\cite{merkle_helical_logic}. Согласно принципу Р.~Ландауэра,
если все производимые вычисления будут необратимы, такое количество вычислительных устройств
при комнатной температуре во время работы на частоте в 10 ГГц должно выделять более $3 \cdot 10^6$ Вт.
В то же время, компьютер, содержащий в 1000 раз больше логических устройств
с такой плотностью, должен будет выделять $3 \cdot 10^9$ Вт из-за необратимости вычислений, имея при этом всё ещё
допустимые физические размеры (10 см\textsuperscript{3}). Отвод такого количества тепловой энергии представляет собой
неразрешимую технологическую проблему.

Если потеря информации во время вычислительного процесса приводит к тепловым потерям, то логично предположить,
что полностью обратимый процесс без потери информации должен снижать общий уровень выделяемой энергии.
Ч.~Беннет показал~\cite{bennet_reversibility}, что нулевой уровень тепловых потерь возможен только тогда,
когда все логические устройства схемы являются обратимыми, другими словами, когда они реализуют биективное отображение.
Отметим, что обратимость вычислений является \textit{необходимым}, но не \textit{достаточным} условием нулевого уровня
выделяемой энергии во время вычислительного процесса.
Краткая история обратимых вычислений может быть найдена в работе~\cite{bennet_history}.

Однако обратимость важна не только для снижения энергопотребления вычислительных устройств.
В некоторых случаях вычисления должны быть обратимыми в силу происходящих физических процессов.
Примером могут служить квантовые вычисления~\cite{nielsen_quantum,preskill_lecture},
представляющие особый интерес в связи с тем, что с их помощью некоторые экспоненциально сложные проблемы
могут быть решены за полиномиальное время~\cite{nielsen_quantum,halgren,shor,shmidt,boneh,tame,gavinsky,kedlaya}.
К примеру, для дискретного логарифмирования и факторизации чисел известен полиномиальный квантовый алгоритм Шора~\cite{shor}.
В 2013 году появился первый коммерческий 512-кубитный квантовый компьютер~\cite{dwave_computer}.
Такие успехи современных технологий позволяют надеяться, что квантовые алгоритмы найдут широкое применение в ближайшем будущем.
Схемы же из обратимых функциональных элементов являются строгой математической моделью физических процессов,
происходящих во время квантовых вычислений.

Существуют и другие способы создания обратимых логических устройств и элементов цифровых схем, помимо квантовых технологий.
Среди них можно выделить КМОП технологии~\cite{cmos_reversible} (в частности, адиабатическая~\cite{cmos_adiabatic}
и термодинамическая~\cite{thermodinamic_logic} обратимая логика), оптические технологии~\cite{optical_reversible_logic},
нанотехнологии~\cite{nanotechnology_reversible_logic} и технологии с использованием молекул ДНК~\cite{dna_reversible_logic}.
Таким образом, обратимые схемы являются не только математической абстракцией, но и реальными вычислительными устройствами.

Большинство функциональных элементов, рассматриваемых в отечественной литературе по синтезу управляющих систем,
не являются обратимыми. К примеру, конъюнктор и дизъюнктор реализуют необратимые отображения. Если быть совсем точным,
ни один функциональный элемент, реализующий булеву функцию более чем одной переменной, не является обратимым.
Это следует из того, что обратимые функциональные элементы должны реализовывать биективное булево отображение.
Единственным обратимым классическим функциональным элементом является инвертор (в англоязычной литературе обозначаемый как NOT).
За последние десятелетия было предложено несколько новых обратимых функциональных элементов, среди которых:
элемент Фейнмана~\cite{feynman}, именуемый также контролируемой инверсией (Controlled NOT, CNOT);
элемент Тоффоли~\cite{toffoli}, именуемый также контролируемой контролируемой инверсией (Controlled Controlled NOT, CCNOT
или 2-CNOT); элемент Фредкина~\cite{fredkin} и ряд других.

Интерес к функциональным элементам CNOT и 2-CNOT был обусловлен развитием теории квантовых вычислений.
В работе~\cite{barenco_elementary_gates} было показано, что каждому элементу NOT соответствует однокубитный квантовый вентиль,
элементу CNOT~--- двухкубитный квантовый вентиль, а элемент Тоффоли может быть реализован в виде композиции
5 двухкубитных квантовых вентилей.
В дальнейшем такие квантовые реализации стали определять \textit{квантовый вес} обратимых функциональных элементов:
количество одно- и двухкубитных квантовых вентилей, необходимых для их реализации. Появилась потребность в эффективных
алгоритмах синтеза обратимых схем с минимальным квантовым весом. В той же работе~\cite{barenco_elementary_gates}
было доказано, что элемент 2-CNOT является универсальным в том плане, что с его помощью можно реализовать любую булеву
функцию от $n$ переменных. Однако полученная схема в некоторых случаях будет содержать больше, чем $n$ входов.
Дальнейшие исследования показали~\cite{shende_synthesis,my_lemma_prove}, что с помощью элементов NOT, CNOT и 2-CNOT
можно реализовать любую чётную подстановку на множестве $\ZZ_2^n$ в обратимой схеме ровно с $n$ входами,
а если при этом использовать один дополнительный вход, то можно реализовать любую подстановку на множестве $\ZZ_2^n$.

Таким образом, как и в случае схемной сложности булевой функции, в качестве меры сложности подстановки на множестве $\ZZ_2^n$
можно рассматривать сложность реализующей её обратимой схемы. Задача же синтеза обратимой схемы может свестись к поиску
минимального представления элемента (подстановки) в системе образующих (множество подстановок, задаваемых обратимыми
функциональными элементами) соответствующей группы подстановок.
В работах~\cite{shende_synthesis,iwama_transform_rules,iterative_compositions,fast_synthesis_exact_minimal,
miller_spectral,miller_spectral_two_place,miller_transform_based,saeedi_novel,group_based,maslov_rm_synthesis} были предложены
различные алгоритмы синтеза обратимых схем, состоящих из функциональных элементов NOT, CNOT и 2-CNOT.
Более подробно эти алгоритмы будут рассмотрены во второй главе. Часть из них является переборными алгоритмами,
другие используют для синтеза либо теорию групп подстановок, либо изменение таблицы истинности для
входного булева преобразования. Однако стоит отметить, что для случая, когда заданная чётная подстановка на множестве $\ZZ_2^n$
имеет малое количество подвижных точек, не было предложено эффективных методов синтеза реализующей её обратимой схемы.
В этом случае существующие алгоритмы либо требуют значительного времени для синтеза, либо сложность полученной обратимой
схемы является слишком высокой по сравнению со схемами, синтезированными другими алгоритмами.

Открытым вопросом на текущий момент также является поиск эффективных алгоритмов снижения сложности обратимой схемы.
Практически во всех существующих алгоритмах синтеза описывается этап снижения сложности синтезируемой схемы%
~\cite{shende_synthesis,iwama_transform_rules,miller_transform_based,saeedi_novel}.
Во всех рассмотренных автором работах для этой цели используются заранее построенные таблицы эквивалентных замен композиций
функциональных элементов. Такие таблицы строятся либо по некоторому набору правил~\cite{iwama_transform_rules,
miller_transform_based,saeedi_novel},
либо поиском минимальных схем короткой длины полным перебором~\cite{shende_synthesis}.
Преимущество такого подхода заключается в экономии времени при снижении сложности обратимой схемы. Однако,
по мнению автора, один недостаток данного подхода не позволяет эффективно его использовать на практике:
таблицы эквивалентных замен обычно строятся для фиксированного значения числа входов обратимой схемы и при росте этого значения
начинают требовать значительного объёма памяти для своего хранения.

Ещё одним открытым вопросом является изучение зависимости сложности синтезируемой обратимой схемы от количества используемых
дополнительных входов в общем случае. Исторически сложилось, что почти все существующие работы по синтезу обратимых схем
ставят перед собой цель получить обратимую схему без дополнительных входов, а в остальных работах упор делается на
снижение количества дополнительных входов~\cite{maslov_thesis}. Это связано с тем, что в квантовых вычислениях,
в отличие от классических, технологически сложно и дорого добавлять дополнительные входы в схему.
С другой стороны, эффект снижения сложности и глубины обратимых схем за счёт использования дополнительных входов
уже известен~\cite{miller_reducing_complexity,reducing_depth}:
к примеру, обобщённый элемент Тоффоли~\cite{toffoli} с $k$ контролирующими входами
может быть представлен в виде композиции либо $8(k-3)$ элементов 2-CNOT без использования дополнительных входов,
либо $(k-1)$ элементов 2-CNOT, но с использованием $(k-2)$ дополнительных входов~\cite{barenco_elementary_gates}.
Таким образом, для произвольной чётной подстановки на множестве $\ZZ_2^n$ связь <<сложность-память>>
для реализующей её обратимой схемы до сих пор установлена не была.

Теория схемной сложности берёт своё начало с работы К.~Шеннона~\cite{shannon}, в которой он предложил
в качестве меры сложности булевой функции рассматривать сложность минимальной контактной схемы,
реализующей эту функцию. Им же было показано, что почти все булевы функции от $n$ переменных реализуются со
сложностью порядка $2^n \mathop / n$ в базисе функциональных элементов, соответствующих всем двуместным булевым функциям.
Асимптотически оптимальный метод синтеза схем в этом базисе функциональных элементов был разработан О.\,Б. Лупановым%
~\cite{lupanov_one_method,yablonsky}. Им также была установлена асимптотика функции Шеннона для сложности реализации
булевых функций во всех основных классах схем: класс формул~\cite{lupanov_formuls}, класс контактных~\cite{lupanov_contact}
и релейно-контактных~\cite{lupanov_rele} схем. Во всех этих работах не рассматривался вопрос связи <<сложность-память>>
для синтезируемых схем.

В классических необратимых схемах не запрещено ветвление входов и выходов функциональных элементов.
Однако в случае обратимых схем такие ветвления запрещены. В работах~\cite{lupanov_one_class,korshunov,shiganov_neighbor}
был рассмотрен вопрос синтеза схем с ограничением на количество соединений для одного функционального элемента.
Однако, как и в предыдущем случае, полученные оценки не учитывали количество использованных дополнительных входов схемы
(дополнительной памяти).

Вопрос о вычислениях с ограниченной памятью (ограниченным числом <<регистров>>/ячеек памяти)
рассматривался Н.\,А. Карповой в работе~\cite{karpova}.
Ею было доказано, что в базисе классических функциональных элементов, реализующих все \mbox{$p$-местные} булевы функции,
асимптотическая оценка функции Шеннона сложности схемы с тремя и более регистрами памяти зависит от значения $p$,
но не изменяется при увеличении количества используемых регистров памяти. Также было показано, что существует булева функция,
которая не может быть реализована в маломестных базисах с использованием менее, чем двух регистров памяти.
В случае же базиса из всех элементов NOT, CNOT и 2-CNOT уже конъюнкция трёх переменных не может быть реализована
с использованием менее, чем 5-ти регистров памяти: 3 регистра для хранения значения входных переменных, 1 регистр для
промежуточного и 1 для итогового результата.
В работе~\cite{konovodov_limited_width} было показано, что для некоторого класса булевых функций ослабление ограничения
на ширину схемы с 2 до 3 позволяет снизить асимптотическую сложность схемы.
В работе~\cite{nikitin_conveer} было показано, что для некоторых классов конвейерных схем увеличение
количества допустимой памяти приводит к асимптотическому снижению сложности схемы.

О.\,Б. Лупановым также были рассмотрены схемы из функциональных элементов с задержками~\cite{lupanov_delay}. Было доказано, 
что в регулярном базисе функциональных элементов любая булева функция может быть реализована схемой,
имеющей задержку $T(n) \sim \tau n$, где $\tau$~--- минимум приведённых задержек всех элементов базиса, при сохранении
асимптотически наилучшей сложности. Однако не рассматривался вопрос зависимости $T(n)$
от количества используемых регистров памяти.
В работах~\cite{hrapchenko_depth_and_delay,hrapchenko_difference} было доказано, что если различать задержку и глубину схемы,
то даже в минимальной схеме задержка может быть почти в 2 раза меньше глубины (при одинаковых единицах измерения).
В работе~\cite{hrapchenko_new} было доказано, что в некоторых случаях эти величины могут различаться на порядок.
Вопрос асимптотической глубины в различных управляющих системах был рассмотрен в работах%
~\cite{lozhkin_formula_depth,lozhkin_danilov}.
Тем не менее, связь <<глубина-память>> или <<задержка-память>> в данных работах также не рассматривалась.
В работах~\cite{lupanov_delay,lozhkin_formula_depth} было показано, что в некоторых классах управляющих систем
удаётся построить схему, сложность и задержка/глубина которой не превосходят асимптотически наилучшие оценки.
В случае обратимых схем такого результата добиться, по-видимому, не удастся, т.\,к. в обратимых схемах запрещено
ветвление входов и выходов функциональных элементов. Но никаких конкретных доказательств данного утверждения
до настоящего времени получено не было.

В настоящее время одним из основных направлений научных работ в отечественной литературе, связанных со сложностью
управляющих систем из некоторых классов, является получение асимптотических оценок функции Шеннона высокой степени точности%
~\cite{lozhkin_disser,lozhkin_high_precise,shupletsov_predicate,shiganov_neighbor}.
В работе~\cite{maslov_thesis} были получены асимптотические верхние и нижние оценки сложности обратимых схем, состоящих из
обобщённых элементов Тоффоли с прямыми и инвертированными контролирующими входами, а также был предложен
асимптотически оптимальный метод синтеза обратимых схем из элементов mEXOR (терминология автора работы~\cite{maslov_thesis}).
Тем не менее, в данной работе не рассматривался вопрос асимптотической глубины,
а также не было выявлено связей <<сложность-память>> и <<глубина-память>> для обратимых схем, в том числе, состоящих
из функциональных элементов NOT, CNOT и 2-CNOT.
В работе~\cite{vinokurov} была доказана нижняя оценка сложности обратимых схем, состоящих из обобщённых элеметов Тоффоли
и не имеющей дополнительных входов.
В работе~\cite{shende_synthesis} была доказана нижняя асимптотическая оценка сложности обратимой схемы,
состоящей из функциональных элементов NOT, CNOT и 2-CNOT и не имеющей дополнительных входов.
В работе~\cite{maslov_rm_synthesis} была доказана наилучшая известная на сегодняшний день верхняя асимптотическая оценка
сложности обратимой схемы, состоящей из функциональных элементов NOT, CNOT и 2-CNOT и не имеющей дополнительных входов.

Ещё одним многообещающим направлением исследований является изучение однонаправленности (one-wayness) преобразований через
построение реализующих их схем~\cite{zhukov_asym,interlando}. В обратимой схеме при использовании дополнительных входов
возможно появление так называемого <<вычислительного мусора>> на выходах: ненулевых значений, не являющихся частью результата.
В работе~\cite{kitaev_vyaliy} было показано, что любое биективное отображение можно реализовать обратимой схемой
без порождения вычислительного мусора. Впоследствии был предложен подход по изучению асимметричных преобразований через
построение реализующих их обратимых схем~\cite{zhukov_model} и было сделано предположение,
что сложность прямого и обратного преобразований определяется сложностью подсхем по уборке
вычислительного мусора для этих преобразований~\cite{zhukov_asym}.
Если построить подсхему по уборке вычислительного мусора (обнулить значения на соответствующих выходах),
можно получить обратимую схему, в которой проявляется структура как прямого, так и обратного преобразования:
если зеркально отобразить данную схему, получится схема, реализующая обратное преобразование.

На сегодняшний день были получены обратимые схемы для следующих асимметричных преобразований:
\begin{enumerate}
    \item
        Линейные преобразования и нелинейные подстановки из работ~\cite{boppana_one_way,hiltgen}:
        обратимые схемы, реализующие эти преобразования без порождения вычислительного мусора на выходах,
        описаны в работах~\cite{zhukov_asym,zakablukov_zasorina_chikin}.
        Асимптотическая сложность этих схем согласуется с полученными теоретическими данными о сложности соответствующих
        преобразований.
        
    \item
        Двоичное сложение и вычитание. Обратимая схема, реализующая двоичный сумматор без порождения вычислительного мусора
        на выходах, описана в работе~\cite{zakablukov_zasorina_chikin}.
        Асимптотическая сложность этой схемы согласуется с полученными теоретическими данными о сложности соответствующих
        преобразований, полученных в работе~\cite{redkin_summator}.
        
    \item
        Умножение и деление в кольце многочленов и конечном поле характеристики 2.
        Обратимые схемы, реализующие эти преобразования, описаны в работе~\cite{zakablukov_zasorina_chikin}.
        В данном случае асимптотическая сложность этих схем оказалась
        выше, чем полученные теоретические данные о сложности этих преобразований~\cite{sergeev_log_depth}.
\end{enumerate}
Таким образом, остаётся открытым вопрос об эффективной реализации асимметричных преобразований обратимыми схемами
без порождения вычислительного мусора. Следующим таким асимметричным преобразованием, представляющим наибольший интерес,
по мнению автора, является алгоритм дискретного логарифмирования (неквантовый), для которого не удалось найти какие-либо
опубликованные результаты по его реализации в обратимых схемах.
Однако стоит отметить, что на сегодняшний день известны квантовые алгоритмы дискретного логарифмирования
(к примеру, алгоритм Шора) с полиномиальной временн\'{о}й сложностью~\cite{shor}.
Данные алгоритмы могут быть естественным образом реализованы обратимой схемой.

\pagebreak
\subsectionnoenumeration{Цели и задачи работы}

\forceindent
Целью работы является изучение обратимых схем из функциональных элементов NOT, CNOT и 2-CNOT, разработка
новых методов синтеза таких схем и изучение зависимости их сложности и глубины от количества используемых
дополнительных входов схемы.
В работе используются методы теории синтеза управляющих систем, методы теории групп подстановок,
мощностные методы установления нижних оценок.

Работа носит не только теоретический, но и практический характер. Предложенные методы синтеза и способы снижения
сложности обратимых схем были реализованы в программном обеспечении~\cite{my_program}
по синтезу обратимых схем без дополнительных входов.
Данное программное обеспечение, по мнению автора, может быть применено в будущем при решении задач синтеза квантовых схем
малой сложности. С другой стороны, разработанные методы снижения сложности обратимых схем позволяют
изучать структуру подстановок на множестве двоичных векторов при помощи изучения структуры реализующих их обратимых схем.

Все полученные в диссертации результаты являются новыми. В настоящей работе впервые систематически изучается вопрос
синтеза схем из обратимых функциональных элементов при различном количестве используемых в схеме дополнительных входов
(дополнительной памяти). Разработан новый быстрый алгоритм синтеза обратимой схемы,
реализующей заданную чётную подстановку с малым числом подвижных точек.
Предложены и систематизированы различные способы снижения сложности обратимых схем,
состоящих из обобщённых элементов Тоффоли. Получены асимптотические оценки сложности, глубины и квантового веса обратимых схем
и показано, что данные оценки существенно зависят от количества используемых дополнительных входов схемы.
Разработан асимптотически оптимальный метод синтеза обратимых схем без дополнительных входов.
Предложены различные способы синтеза обратимых схем, реализующих алгоритм дискретного логарифмирования
в конечном поле характеристики 2.

Основные результаты диссертации опубликованы автором в работах~\cite{zakablukov_zasorina_chikin,my_thesis_dm9,
my_fast_group_based_algorithm,my_complexity_bounds,my_lemma_prove,my_thesis_circuit_synthesis,
my_thesis_bit_bmstu,my_complexity_reduction,my_program,my_jcss_complexity_no_memory,my_asymp_depth,my_discret_matem_complexity,
my_lncs_application_of_group_theory},
из которых статьи~\cite{my_fast_group_based_algorithm,my_complexity_reduction,my_complexity_bounds,
my_asymp_depth,my_discret_matem_complexity} ---
в рецензируемых научных изданиях из перечня ВАК.
Результаты диссертации докладывались и обсуждались на спецсеминаре кафедры математической кибернетики факультета ВМК МГУ,
на семинаре отдела <<Интеллектуальных систем>> ФИЦ ИУ РАН и на следующих конференциях:
\begin{enumerate}
    \item
        ХX Всероссийская научно-практическая конференция
        <<Проблемы информационной безопасности в системе высшей школы>> (Москва, МИФИ, февраль 2013).
        
    \item
        V Международная конференция <<Безопасные информационные технологии - 2014>> (Москва, МГТУ им. Баумана, ноябрь 2014).
        
    \item
        9-я Международная конференция <<Дискретные модели в теории управляющих систем>> (Москва и Подмосковье, МГУ, май 2015).

    \item
        8\textsuperscript{th} Conference on Reversible Computation (RC 2016) (Италия, Болонья, июль 2016).
\end{enumerate}

\subsectionnoenumeration{Краткое содержание работы}

\forceindent
Диссертация состоит из введения, 5 глав, заключения и списка литературы.
Текст диссертации изложен на~\pageref{page_last} странице, содержит~\total{figurecount} иллюстрации и~\total{tablecount} таблиц.
Список литературы включает~\total{citecount} наименований.

В первой главе даются базовые определения обратимых функциональных элементов NOT, CNOT и 2-CNOT,
обобщённого элемента $k$-CNOT, а также обратимых схем, состоящих из этих элементов. Вводится множество
$\Omega_n^2$, состоящее из всех элементов NOT, CNOT и 2-CNOT с $n$ входами. Доказывается, что, во-первых,
каждый \gate{} из множества $\Omega_n^2$ задаёт некоторую подстановку на множестве $\ZZ_2^n$, а во-вторых,
что множество подстановок, задаваемых всеми \gate{} из множества $\Omega_n^2$, при $n < 4$ генерирует симметрическую
группу подстановок $S(\ZZ_2^n)$, а при $n \geqslant 4$~--- знакопеременную группу подстановок $A(\ZZ_2^n)$.
Также в первой главе вводится понятие обратимой схемы, реализующей заданное булево отображение с использованием и
без использования дополнительной памяти (дополнительных входов). Доказывается, что для реализации нечётной подстановки
из $S(\ZZ_2^n)$ при $n \geqslant 4$ в обратимой схеме, состоящей из \gate{} множества $\Omega_n^2$, требуется
как минимум один дополнительный вход.
Вводится понятие значимых входов и выходов обратимой схемы, а также понятие \textit{вычислительного мусора} на незначимых выходах.
Показывается связь между схемной сложностью реализации прямого и обратного отображения через сложность обратимой схемы,
реализующей прямое отображение.

Во второй главе рассматриваются различные существующие алгоритмы синтеза обратимых схем:
переборные алгоритмы, непереборные быстрые алгоритмы и алгоритмы снижения сложности обратимой схемы.
Приводится сравнение данных алгоритмов по основным параметрам: время синтеза, количество требуемой для синтеза памяти,
сложность синтезированной обратимой схемы.
На основании данного сравнения делается вывод, что до текущего момента не было разработано быстрых и эффективных
алгоритмов синтеза обратимой схемы, реализующей подстановку с малым числом подвижных точек. Существующие же алгоритмы
синтеза в этом случае либо требуют значительного времени для своей работы, либо синтезируемая схема имеет избыточную сложность.
Даётся описание двух новых быстрых алгоритмов синтеза, использующих теорию групп подстановок. Принцип работы этих алгоритмов
основывается на доказательстве леммы из первой главы о том, что множество подстановок, задаваемых всеми \gate{}
из множества $\Omega_n^2$, при $n \geqslant 4$ генерирует знакопеременную группу подстановок $A(\ZZ_2^n)$.
Доказывается, что наилучший из разработанных алгоритмов синтеза позволяет получить обратимую схему со сложностью,
асимптотически не превышающей $7n2^m$ для любой чётной подстановки из $A(\ZZ_2^n)$, у которой не более $2^m$ подвижных точек.
В конце второй главы сравниваются существующие и предлагаемые быстрые алгоритмы синтеза, использующие
теорию групп подстановок. На основании этого сравнения делается вывод об эффективности новых алгоритмов
с точки зрения их быстродействия и сложности синтезируемых обратимых схем.

В третьей главе рассматриваются различные способы снижения сложности обратимых схем. Доказывается необходимое и
достаточное условие коммутируемости двух обратимых элементов $E(t_1, I_1, J_1)$ и $E(t_2, I_2, J_2)$:
либо $t_1 \notin I_2 \cup J_2$ и $t_2 \notin I_1 \cup J_1$, либо $(I_1 \cap J_2) \cup (I_2 \cap J_1) \ne \varnothing$.
Предлагаются также различные эквивалентные замены композиций обратимых \gate{}
с доказательством корректности таких замен при помощи операций на множествах. Часть замен позволяет снизить сложность
обратимой схемы, состоящей из элементов $E(t, I, J)$, оставшиеся же замены позволяют получить композицию новых \gate{}
Описывается алгоритм снижения сложности обратимых схем, состоящих из элементов $E(t, I, J)$, использующий описанные
эквивалентные замены композиций \gate{} Даётся оценка снизу временн\'{о}й сложности данного алгоритма.
Далее рассматриваются различные способы снижения сложности обратимой схемы на этапе её синтеза. Первый способ:
поиск грани булева куба $\mathbb B^n$, такой что для найденной грани размерности $k$ в синтезируемой схеме
можно заменить композицию порядка $2^{n-k-2}$ подряд идущих \gate{} на композицию не более $n$ элементов $E(t, I, J)$.
Второй способ: эффективное разбиение циклов в представлении исходной подстановки в виде произведения независимых циклов.
Описывается алгоритм быстрого поиска такого разбиения. Последний способ: рассмотрение произведения справа и слева
в представлении исходной подстановки в виде произведения транспозиций. Показывается эффективность предложенных
способов снижения сложности обратимой схемы на практике при помощи разработанного программного обеспечения.

В четвёртой главе рассматривается вопрос асимптотической сложности и глубины обратимых схем, состоящих из \gate{}
множества $\Omega_n^2$ и реализующих некоторое отображение $\ZZ_2^n \to \ZZ_2^n$.
Вводится множество $F(n,q)$ всех отображений $\ZZ_2^n \to \ZZ_2^n$, которые могут быть реализованы такими
обратимыми схемами с $(n+q)$ входами. Рассматриваются обратимые схемы, реализующие отображение $f \in F(n,q)$ 
с использованием $q$ дополнительных входов (дополнительной памяти). Вводятся функции Шеннона сложности $L(n,q)$,
глубины $D(n,q)$ и квантового веса $W(n,q)$ обратимой схемы как функции от $n$ и количества
дополнительных входов схемы $q$.
При помощи мощностного метода Риордана--Шеннона доказываются следующие нижние оценки:
\begin{align*}
    L(n,q) &\geqslant \frac{2^n(n-2)}{3\log_2(n+q)} - \frac{n}{3} \; , \\
    D(n,q) &\geqslant \frac{2^n(n-2)}{3(n+q)\log_2(n+q)} - \frac{n}{3(n+q)} \; , \\
    W(n,q) &\geqslant \min \left( \WC, \WT \right) \cdot \left( \frac{2^n(n-2)}{3\log_2(n+q)} - \frac{n}{3} \right) \; .
\end{align*}
Предлагается обобщение алгоритма синтеза обратимых схем, описанного во второй главе:
исходная подстановка из $A(\ZZ_2^n)$ представляется в виде
произведения не пар независимых транспозиций, а групп по $K$ независимых транспозиций в каждой группе.
Доказывается, что любая такая группа может быть задана композицией одного обобщённого элемента Тоффоли с большим количеством
контролирующих входов и множества элементов CNOT и 2-CNOT. Оценивается сложность и глубина обратимой схемы,
синтезируемой данным алгоритмом, откуда получаются следующие верхние оценки для обратимых схем без дополнительной памяти:
\begin{gather*}
    L(n,0) \leqslant \frac{3n2^{n+4}}{\psi(n)}
            \left( 1 + \varepsilon_L(n) \right) \; , \\
    D(n,0) \leqslant \frac{n2^{n+5}}{\psi(n)}
            \left( 1 + \varepsilon_D(n) \right) \; , \\
    W(n,0) \leqslant \frac{n2^{n+4} \left( \WC(1 + \varepsilon_C(n)) + 2\WT(1 + \varepsilon_T(n)) \right)}{\psi(n)} \; ,
\end{gather*}
где $\psi(n) = \log_2 n - \log_2 \log_2 n - \log_2 \phi(n)$,
$\phi(n)$~--- любая сколь угодно медленно растущая функция такая,
что $\phi(n) < n \mathop / \log_2 n$,
\begin{align*}
    \varepsilon_L(n) &= \frac{1}{6\phi(n)} +\left(\frac{8}{3} - o(1)\right)
            \frac{\log_2 n \cdot \log_2 \log_2 n}{n} \; , \\
    \varepsilon_D(n) &= \frac{1}{4\phi(n)} +(4 + o(1))\frac{\log_2 n \cdot \log_2 \log_2 n}{n} \; , \\
    \varepsilon_C(n) &= \frac{1}{2\phi(n)} - \left( \frac{1}{2} - o(1) \right) \cdot \frac{\log_2 \log_2 n }{n} \; , \\
    \varepsilon_T(n) &= (4 - o(1))\frac{\log_2 n \cdot \log_2 \log_2 n}{n} \; .
\end{align*}

Далее описывается асимптотически оптимальный метод синтеза обратимых схем с дополнительной памятью, аналогичный методу
Лупанова для классических схем. При описании данного метода подсчитывается количество используемых дополнительных
входов обратимой схемы (дополнительной памяти) при достижении минимальной сложности и минимальной глубины схемы.
Для данного метода удалось получить верхнюю оценку сложности обратимых схем с дополнительной памятью
$$
    L(n, q_0) \lesssim 2^n \text{ \,при\, } q_0 \sim n 2^{n-\lceil n \mathop / \phi(n)\rceil} \; ,
$$
где $\phi(n)$ и $\psi(n)$~--- любые сколь угодно медленно растущие функции такие,
что $\phi(n) \leqslant n \mathop / (\log_2 n + \log_2 \psi(n))$.

Получены верхние оценки глубины обратимых схем с дополнительной памятью.
$$
    D(n,q_1) \lesssim 3n \text{ \,при\, } q_1 \sim 2^n \; ,
$$
при этом показано, что обратимая схема $\frS$, реализующая отображение $f \in F(n, q_1)$ с глубиной $D(\frS) \sim 3n$,
имеет сложность $L(\frS) \sim 2^{n+1}$
и квантовый вес $W(\frS) \sim \WC \cdot 2^{n+1} + \WT \cdot n2^{n - \lceil n \mathop / \phi(n)\rceil}$,
где $\phi(n)$ и $\psi(n)$~--- любые сколь угодно медленно растущие функции такие,
что $\phi(n) \leqslant n \mathop / (\log_2 n + \log_2 \psi(n))$.
$$
    D(n,q_2) \lesssim 2n \text{ \,при\, } q_2 \sim \phi(n)2^n  \; ,
$$
где $\phi(n)$~--- любая сколь угодно медленно растущая функция такая, что $\phi(n) = o(n)$;
при этом показано, что обратимая схема $\frS$, реализующая отображение $f \in F(n, q_2)$ с глубиной $D(\frS) \sim 2n$,
имеет сложность $L(\frS) \sim \phi(n)2^{n+1}$
и квантовый вес $W(\frS) \sim \WC \cdot \phi(n)2^{n+1} + \WT \cdot 2^{n - \lceil n \mathop / \phi(n)\rceil}$.

Получена верхняя оценка квантового веса обратимых схем с дополнительной памятью
$$
    W(n,q_0) \lesssim \WC \cdot 2^n + \WT \cdot n2^{n - \lceil n \mathop / \phi(n)\rceil},
        \text{ \,при\, } q_0 \sim n 2^{n-\lceil n \mathop / \phi(n)\rceil}  \;  ,
$$
где $\phi(n)$ и $\psi(n)$~--- любые сколь угодно медленно растущие функции такие,
что $\phi(n) \leqslant n \mathop / (\log_2 n + \log_2 \psi(n))$.

При сопоставлении верхних и нижних оценок для сложности обратимых схем получаются следующие соотношения:
\begin{gather*}
    L(n,0) \asymp n2^n \mathop / \log_2 n \; , \\
    L(n,q_0) \asymp 2^n \text{ \,при\, } q_0 \sim n 2^{n-\lceil n \mathop / \phi(n)\rceil} \; ,
\end{gather*}
где $\phi(n)$ и $\psi(n)$~--- любые сколь угодно медленно растущие функции такие,
что $\phi(n) \leqslant n \mathop / (\log_2 n + \log_2 \psi(n))$.

Получены общие верхние оценки сложности, глубины и квантового веса обратимых схем.
Для любого значения $q$ такого, что $8n < q \lesssim n 2^{n-\lceil n \mathop / \phi(n)\rceil}$,
где $\phi(n)$ и $\psi(n)$~--- любые сколь угодно медленно растущие функции такие,
что $\phi(n) \leqslant n \mathop / (\log_2 n + \log_2 \psi(n))$, верны соотношения
\begin{gather*}
    L(n,q) \lesssim 2^n + \frac{8n2^n}{\log_2 (q-4n) - \log_2 n - 2} \; , \\
    D(n,q) \lesssim 2^{n+1}(2,5 + \log_2 n - \log_2 (\log_2 (q - 4n) - \log_2 n - 2))  \; , \\
    W(n,q) \lesssim \WT \cdot \left(2^n + \frac{8 \cdot 2^n}{\log_2 (q - 4n) - \log_2 n - 2} \right)
        + \frac{32 \WC n2^n}{\log_2 (q - 4n) - \log_2 n - 2} \; , \\
    W(n,q) \lesssim \WT \cdot \left(2^n + \frac{8n2^n}{\log_2 (q-4n) - \log_2 n - 2}\right)
        + \frac{32 \WC 2^n}{\log_2 (q-4n) - \log_2 n - 2} \; .
\end{gather*}

Получена оценка порядка роста функции $L(n,q)$.
Для любого значения $q$ такого, что $n^2 \lesssim q \lesssim 2^{n-\lceil n \mathop / \phi(n)\rceil + 1}$,
где $\phi(n)$ и $\psi(n)$~--- любые сколь угодно медленно растущие функции такие, что
$\phi(n) \leqslant n \mathop / (\log_2 n + \log_2 \psi(n))$, верно соотношение
$$
    L(n,q) \asymp \frac{n2^n}{\log_2 q} \; .
$$

Также получена оценка порядка роста функции Шеннона $L_A(n,q)$ сложности обратимых схем,
реализующих отображения $\ZZ_2^n \to \ZZ_2^n$ только из знакопеременной группы $A(\ZZ_2^n)$.
Для любого значения $q$ такого, что $0 \leqslant q \lesssim 2^{n-\lceil n \mathop / \phi(n)\rceil + 1}$,
где $\phi(n)$ и $\psi(n)$~--- любые сколь угодно медленно растущие функции такие, что
$\phi(n) \leqslant n \mathop / (\log_2 n + \log_2 \psi(n))$,
верно соотношение
$$
    L_A(n,q) \asymp \frac{n2^n}{\log_2 (n+q)} \; .
$$

На основании полученных асимптотических оценок делается вывод, что использование дополнительной памяти в обратимых схемах,
состоящих из \gate{} множества $\Omega_n^2$, почти всегда позволяет существенно снизить сложность, глубину
и квантовый вес таких схем, в отличие от классических схем, состоящих из необратимых \gate{}

В пятой главе показывается применение обратимых схем для решения задачи схемной реализации некоторых
вычислительно асимметричных преобразований. Подробно рассматривается алгоритм дискретного логарифмирования
по основанию примитивного элемента в конечном поле характеристики 2 на примере фактор-кольца $\fp$,
где $f(x)$~--- неприводимый многочлен степени $n$, и его реализация обратимой схемой.
При помощи разработанного программного обеспечения строятся обратимые схемы
без дополнительной памяти и с дополнительной памятью, реализующие алгоритм дискретного логарифмирования.
Показывается, что уже при использовании $n$ дополнительных входов сложность схемы существенно снижается.
Доказывается верхняя асимптотическая оценка сложности обратимой схемы $\frS$, реализующей алгоритм дискретного логарифмирования,
$$
    L(\frS_\mathrm{log}) \lesssim \frac{2^{n+1} \cdot \log_2 n}{n}
$$
при использовании $q \sim 2^{n-\lceil n \mathop / \phi(n)\rceil + 2} \cdot \log_2 n$ дополнительных входов,
где $\phi(n)$ и $\psi(n)$~--- любые сколь угодно медленно растущие функции такие,
что $\phi(n) \leqslant n \mathop / (\log_2 n + \log_2 \psi(n))$.
Данная оценка асимптотически ниже, чем для произвольного булева преобразования, и достигается при асимптотически
меньшем количестве дополнительных входов.

В конце пятой главы рассматривается вопрос схемной сложности реализации алгоритма, обратного к заданному,
и делается попытка объяснить разницу в схемной сложности для прямого и обратного алгоритмов через
необратимость и потерю части информации во время работы прямого алгоритма.
В подтверждение гипотезы приводятся примеры обратимых схем, реализующие такие вычислительно асимметричные преобразования,
как сложение в кольце многочленов, умножение и возведение в степень в конечном поле характеристики 2.

В заключении рассматриваются открытые вопросы и предлагаются различные направления дальнейших исследований:
\begin{itemize}
    \item
        улучшение нижней оценки для глубины обратимой схемы $D(n,q)$ при помощи более точного подсчёта количества схем
        заданной глубины;
    \item
        улучшение константы $3 \cdot 2^4$ в верхней оценке $L(n,0)$;
    \item
        разработка нового алгоритма синтеза обратимых схем без дополнительной памяти с лучшей верхней асимптотической оценкой
        глубины схемы;
    \item
        разработка нового алгоритма синтеза обратимых схем с дополнительной памятью с лучшей верхней асимптотической оценкой
        глубины схемы;
    \item
        изучение вопроса асимметричности преобразований через построение реализующих их обратимых схем.
\end{itemize}

Автор хотел бы выразить благодарность научному руководителю А.\,Е. Жукову за постановку задач и всестороннюю помощь,
а также родным за помощь и поддержку.

\sectionenumerated{Основы теории обратимой логики}

\forceindent В данной главе будут даны основные определения из теории обратимой логики. Будет показана связь подстановок с
двоичной обратимой логикой. Будет приведено доказательство того, что любая чётная подстановка из $A(\ZZ_2^n)$ при $n \geqslant 4$
и любая подстановка из $S(\ZZ_2^n)$ при $n < 4$, может быть реализована с помощью обратимых функциональных
элементов NOT, CNOT и 2-CNOT.

\subsection{Определение обратимого функционального элемента и схемы}

\forceindent
Здесь и далее будем рассматривать только схемы, состоящие из функциональных элементов.
\begin{define}\label{def_logic_gate}
    Функциональный элемент (\gate) $n \times m$~--- идеальная модель вычислительного устройства с $n$ входами и $m$ выходами,
    задающего на выходах результат некоторого булевого отображения $\ZZ_2^n \to \ZZ_2^m$ над входами.
\end{define}
\noindent Примером \gate{} может служить инвертор NOT,
задающий булево преобразование $f\colon \ZZ_2^1 \to \ZZ_2^1$ вида $f(\langle x \rangle) = \langle x \oplus 1 \rangle$.

Базовое определение обратимых \gate{} было введено в работах~\cite{toffoli,feynman}.
Будем использовать следующее формальное определение:
\begin{define}\label{def_reverse_logic_gate}
    Обратимый \gate{} $n \times n$ (далее просто обратимый \gate)~--- Ф.\,Э. $n \times n$, для которого
    задаваемое им булево отображение биективно.
\end{define}

Очевидно, что инвертор NOT является обратимым \gate{} Другим примером обратимого \gate{} является
обобщённый элемент Тоффоли с $k$ контролирующими входами.
В литературе для этого элемента принято стандартное обозначение $k$-CNOT~\cite{shende_synthesis}.
Определение обратимых \gate{} NOT и~$k$-CNOT, а также обратимых схем, состоящих из этих элементов,
было дано в нескольких работах, к примеру, в работах~\cite{maslov_thesis,shende_synthesis}.
Будем использовать следующее формальное определение элемента $k$-CNOT:
\begin{define}\label{def_k_not}
    Обобщённый элемент Тоффоли с $k$ контролирующими входами~--- обратимый \gate{} $(k+1) \times (k+1)$,
    задающий булево преобразование
    $f\colon \ZZ_2^{k+1} \to \ZZ_2^{k+1}$ следующего вида:
    $$
        f(\langle x_1, \ldots, x_k, x_{k+1} \rangle) =
        \langle x_1, \ldots, x_k, x_{k+1} \oplus x_1 \land \ldots \land x_k \rangle \; .
    $$
\end{define}

Частными случаями элемента $k$-CNOT являются управляемый инвертор CNOT (\mbox{1-CNOT}),
предложенный Р. Фейнманом~\cite{feynman}, и элемент Тоффоли (2-CNOT)~\cite{toffoli}.
Графическое обозначение этих элементов дано на рис.~\ref{pic_gates}.
Графическое обозначение элемента $k$-CNOT схоже с обозначением элемента Тоффоли:
контролирующие входы обозначаются символом {\large $\bullet$},
контролируемый выход~--- символом $\boldsymbol{\oplus}$.

\medskip
\Figure[ht]
    \centering
    \includegraphics[scale=1.2]{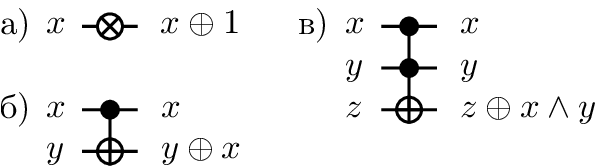}
    \caption
    {
        \small Графическое обозначение обратимых \gate: а) NOT (инвертор);\\
        б) 1-СNOT (управляемый инвертор); в) 2-CNOT (элемент Тоффоли).
    }\label{pic_gates}
\end{figure}

Можно расширить определение элементов NOT и $k$-CNOT таким образом, чтобы все \gate{} имели ровно $n$ входов и выходов.
Для описания таких \gate{} примем обозначение, предложенное в работе~\cite{group_based}:

\begin{define}\label{def_not_n}
    $N_j^n$~--- инвертор NOT с $n$ входами, инвертирующий свой $j$-й вход:
    $$
        N_j^n(\langle x_1, \ldots, x_j, \ldots, x_n \rangle) = \langle x_1, \ldots, x_j \oplus 1, \ldots x_n \rangle \; .
    $$
\end{define}

\begin{define}\label{def_k_not_n}
    $C_{i_1, i_2, \ldots, i_k;j}^n$~--- обобщённый элемент Тоффоли $k$-CNOT с $n$ входами,
    инвертирующий свой $j$-й вход тогда и только тогда, когда
    значение на всех контролирующих входах $i_1, \ldots, i_k$ равно 1, $j \ne i_1, \ldots, i_k$:
    $$
        C_{i_1, i_2, \ldots, i_k;j}^n(\langle x_1, \ldots, x_j, \ldots, x_n \rangle) =
        \langle x_1, \ldots, x_j \oplus x_{i_1} \wedge \ldots \wedge x_{i_k}, \ldots, x_n \rangle \; .
    $$
\end{define}
\noindent В случае, если значение $n$ понятно из контекста, верхний индекс в обозначении этих элементов будем опускать:
$N_j$ и $C_{i_1, i_2, \ldots, i_k;j}$.

В работе~\cite{saeedi_rule_based} было предложено обобщить представление элемента $k$-CNOT для случая нулевого значения
на некоторых контролирующих входах. Будем обозначать такой \gate{} через $C_{I;J;t}$.

\begin{define}\label{define_common_k_cnot} $C_{I;J;t}$~--- \gate, задающий булево преобразование
    $\ZZ_2^n \to \ZZ_2^n$ вида
    $$
        C_{I;J;t} \left( \langle x_1, \ldots, x_t, \ldots, x_n \rangle \right) =
            \left \langle x_1, \ldots,
            x_t \oplus \left( \smallwedge_{i \in I}{x_i} \right) \wedge \left( \smallwedge_{j \in J}{\bar x_j} \right), \ldots,
            x_n \right \rangle \; ,
    $$
    где $I$~--- множество прямых контролирующих входов, $J$~--- множество инвертированных контролирующих входов,
    $t \notin I \cup J$; $I \cap J = \varnothing$.
\end{define}
\noindent
Элементы $C_{I;J;t}$ будут подробнее рассмотрены в главе~\ref{section_complexity_minimization}.
    
Для удобства примем следующие обозначения: $E(t)$~--- элемент $N_t$; $E(t,I)$~--- элемент $C_{I;t}$;
$E(t,I,J)$~--- элемент $C_{I;J;t}$. Тогда можно считать, что $E(t,I) = E(t,I,\varnothing)$,
$E(t) = E(t, \varnothing, \varnothing)$.

Схема из \gate{} классически определяется как ориентированный граф без циклов с помеченными рёбрами и вершинами.
В обратимых схемах, состоящих из элементов NOT и $k$-CNOT, запрещено ветвление и произвольное подключение
входов и выходов \gate{}
Мы будем использовать следующее формальное определение обратимой схемы.
\begin{define}\label{def_reversible_circuit}
    Правильно сформированная обратимая схема $\frS$~--- ациклическая комбинационная логическая схема,
    в которой все \gate{} обратимы и соединены друг с другом последовательно без ветвлений.
\end{define}
\noindent
Мы будем рассматривать только те обратимые схемы, в которых все \gate{} имеют одинаковое количество входов и выходов $n$.
В ориентированном графе, описывающем такую обратимую схему,
все вершины, соответствующие \gate, имеют ровно $n$ занумерованных входов и выходов.
Эти вершины нумеруются от 1 до $l$, при этом $i$-й выход $m$-й вершины, $m < l$,
соединяется только с $i$-м входом $(m+1)$-й вершины. Входами обратимой схемы являются входы первой вершины,
а выходами~--- выходы $l$-й вершины.
Соединение \gate{} друг с другом будем также называть композицией элементов и обозначать $*$.
К примеру, запись $N_4 * C_{1;2} * C_{1,2,4;3} * C_{3;4}$ при $n=5$ соответствует схеме,
показанной на рис.~\ref{pic_scheme_example}.

\Figure[ht]
    \centering
    \includegraphics[scale=1.2]{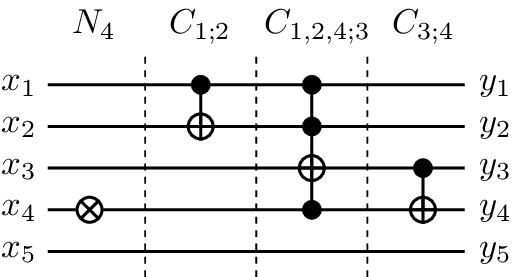}
    \caption
    {
        \small Cхема $\frS = N_4 * C_{1;2} * C_{1,2,4;3} * C_{3;4}$ с $n = 5$ входами.
    }\label{pic_scheme_example}
\end{figure}
\noindent
Стоит отметить, что любую обратимую схему, состоящую из двух и более \gate, можно рассматривать как композицию её подсхем.
К примеру, схему $\frS = C_{1;2} * C_{1,2,4;3} * C_{3;4}$ можно рассматривать как композицию схем
$\frS_1 = C_{1;2}$ и $\frS_2 = C_{1,2,4;3} * C_{3;4}$.

Н.\,А. Карповой в работе~\cite{karpova} были рассмотрены схемы с ограниченной памятью, состоящие из классических \gate{}
Каждому входу и выходу вершины графа, описывающего такую схему, приписывается некоторый символ
из множества $R = \{\,r_1, \ldots, r_n\,\}$. Символ $r_i$ можно интерпретировать как имя регистра памяти
(номер ячейки памяти), значение из которого поступает на вход \gate{} или в который записывается значение с выхода \gate{}
Данная модель функциональных схем с ограниченной памятью ($\mathbf {M_1}$) и модель обратимых схем ($\mathbf {M_2}$),
описанная нами выше, весьма похожи. Регистрам памяти модели $\mathbf {M_1}$ соответствуют
линии обратимой схемы модели $\mathbf {M_2}$, поскольку и те, и другие хранят результат вычислений на каждом шаге работы схемы.
Однако между данными моделями есть существенное отличие.

В модели $\mathbf {M_1}$ символ $r_i$, приписанный выходу некоторого \gate, может совпадать с символом,
приписанным одному из входов этого же \gate{} Другими словами, модель $\mathbf {M_1}$ позволяет
перезаписывать значения в регистрах памяти.

В модели $\mathbf {M_2}$ контролируемому выходу \gate{} приписывается номер линии, значение на которой будет инвертировано,
если значение булевой функции от значений на контролирующих входах элемента будет равно 1,
причём номер такой линии не может совпадать ни с одним из номеров линий, приписанных контролирующим входам элемента.
Другими словами, модель $\mathbf {M_2}$ не позволяет перезаписывать значения на линиях обратимой схемы,
а лишь инвертировать их в некоторых случаях.

Среди всех характеристик обратимой схемы для нас представляют интерес сложность, глубина и квантовый вес схемы.
Пусть обратимая схема $\frS$ с $n$ входами представляет собой композицию $l$ элементов
$\frS = \compose_{k=1}^l {E_k(t_k, I_k, J_k)}$.
\begin{define}\label{def_scheme_complexity}
    Сложность $L(\frS)$ обратимой схемы $\frS$~--- количество \gate{} в схеме.
\end{define}

Классически глубина схемы из \gate{} определяется как длина максимального пути на графе,
описывающем данную схему, между какими-либо входными и выходными вершинами. В рассматриваемой модели обратимой схемы граф,
описывающий такую схему, представляет собой просто одну цепочку последовательно соединённых вершин. Поэтому, если использовать
классическое определение глубины схемы, получится, что в нашем случае глубина обратимой схемы равна её сложности.

Для того чтобы не менять модель обратимой схемы, введём следующее определение глубины обратимой схемы.
Будем считать, что обратимая схема $\frS$ имеет глубину $D(\frS) = 1$,
если для любых двух её элементов $E_1(t_1,I_1,J_1)$ и $E_2(t_2,I_2,J_2)$ выполняется равенство
$$
    \left(\{\,t_1\,\} \cup I_1 \cup J_1 \right) \cap \left(\{\,t_2\,\} \cup I_2 \cup J_2 \right) = \varnothing \; .
$$
Также будем считать, что обратимая схема $\frS$ имеет глубину $D(\frS) \leqslant d$, если её можно разбить на
$d$ непересекающихся подсхем, каждая из которых имеет глубину 1:
\begin{equation}\label{formula_decomposition_of_scheme_for_depth}
    \frS = \bigsqcup_{i=1}^d{{\frS'}_i}, \text{ }D({\frS'}_i) = 1 \; .
\end{equation}
Тогда можно ввести следующее определение глубины обратимой схемы.
\begin{define}\label{def_scheme_depth_on_one_input}
    Глубина $D(\frS)$ обратимой схемы $\frS$~--- минимально возможное количество $d$
    непересекающихся подсхем глубины 1 в разбиении схемы $\frS$ по формуле~\eqref{formula_decomposition_of_scheme_for_depth}.
\end{define}
Используя это определение, можно вывести простое соотношение, связывающее сложность и глубину обратимой схемы $\frS$,
имеющей $n$ входов:
\begin{equation}\label{formula_complexity_and_depth_connection}
    \frac{L(\frS)}{n} \leqslant D(\frS) \leqslant L(\frS) \; . 
\end{equation}

На рис.~\ref{pic_scheme_example_depth} показан пример обратимой схемы
$\frS = C_{1;2} * C_{3;1} * N_2 * N_4 * C_{1,4;2} * N_3$, имеющей сложность $L(\frS) = 6$ и глубину $D(\frS) = 3$,
поскольку мы можем представить данную схему композицией трёх непересекающихся подсхем глубины 1:
$\frS = (C_{1;2}) * (C_{3;1} * N_2 * N_4) * (C_{1,4;2} * N_3)$, и не можем представить в виде композиции
двух непересекающихся подсхем глубины 1.
\Figure[ht]
    \centering
    \includegraphics[scale=1.2]{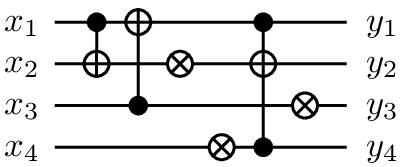}
    \caption
    {
        \small Обратимая схема $\frS = C_{1;2} * C_{3;1} * N_2 * N_4 * C_{1,4;2} * N_3$ сложности $L(\frS) = 6$
        и глубины $D(\frS) = 3$.
    }\label{pic_scheme_example_depth}
\end{figure}

В рассматриваемой модели обратимой схемы все \gate{} соединяются последовательно, поэтому максимальное время,
через которое на выходах схемы появится результат, определяется только её сложностью.
Однако если схему представлять в виде композиции подсхем, то время работы схемы может быть меньше
за счёт параллельной работы некоторых подсхем. В примере выше (см. рис.~\ref{pic_scheme_example_depth})
такими подсхемами могут быть $\frS_1 = C_{3;1}$, $\frS_2 = N_2$ и $\frS_3 = N_4$.
Именно поэтому глубина схемы рассматривается в настоящей работе наравне с её сложностью.

Если рассматривать обратимые \gate{} безотносительно к технологии, при помощи которых они могут быть получены
в реальной жизни, то можно считать, что все они имеют вес 1. Однако как показывает практика, в квантовых
технологиях, к примеру, реализовать элемент 2-CNOT намного сложнее, чем CNOT~\cite{barenco_elementary_gates}.
Поэтому будем считать, что обратимый элемент $E$ имеет вес $W(E)$, значение которого зависит от технологии
производства обратимых \gate{}

Наибольший интерес, по мнению автора, представляют квантовые вычисления, поэтому будем называть величину $W(E)$ для
обратимого элемента $E$ \textit{квантовым весом} этого элемента, равным количеству квантовых вентилей,
необходимых для его реализаци. Тогда можно ввести понятие квантового веса обратимой схемы.
\begin{define}
    Квантовый вес $W(\frS)$ обратимой схемы $\frS$~--- сумма квантовых весов всех её \gate{}
\end{define}
\noindent
Согласно работе~\cite{barenco_elementary_gates}, элементы NOT и CNOT имеют квантовый вес 1, а элемент 2-CNOT~--- квантовый вес 5.
Следовательно, чем меньше в обратимой схеме элементов $k$-CNOT с $k > 1$, тем проще её реализовать при помощи
квантовых технологий.\label{page_quantum_weight}

В настоящей работе не рассматривается вопрос реализации обратимой схемы при помощи различных квантовых вентилей
и возможного снижения количества этих вентилей в схеме: здесь и далее будем считать,
что квантовый вес $W(E)$ элемента $E$ фиксирован, задан заранее и не зависит от расположения данного элемента в схеме.
В этом случае, задавая различные значения $W(E)$, по значению величины $W(\frS)$ можно будет делать выводы
о количестве различных обратимых элементов в схеме $\frS$.

Используя определение квантового веса обратимой схемы, можно вывести простое соотношение,
связывающее сложность и квантовый вес обратимой схемы $\frS$:
\begin{equation}\label{formula_quantum_weight_connection_with_complexity_common_case}
    W(\frS) \geqslant \left( \min_{E \in \frS} {W(E)} \right) \cdot L(\frS) \; .
\end{equation}

\subsection{Связь обратимых схем с подстановками}

\forceindent
Несложно показать, что обратимые схемы непосредственно связаны с подстановками из симметрической группы $S(\ZZ_2^n)$.
\begin{define}\label{def_cycle}
    Пусть $S(A)$ является симметрической группой над множеством $A = \lbrace a_1, a_2, \ldots, a_m \rbrace$.
    Тогда $(a_{i_1}, a_{i_2}, \ldots, a_{i_k})$ называется циклом длины $k$, где $k \leqslant m$, $a_{i_j} \in A$,
    $a_{i_j} \ne a_{i_l}$.
\end{define}
\noindent
Цикл длины 2 также называется \textit{транспозицией}.

Любую подстановку можно представить в виде произведения независимых циклов. Каждый цикл в свою очередь можно представить
в виде произведения транспозиций.
\begin{define}\label{def_even_permutation}
    Подстановка является чётной, если она представима в виде произведения чётного числа транспозиций,
    и нечётной в противном случае.
\end{define}
\noindent
Операцию умножения (композиции) подстановок будем обозначать символом $\circ$.

Как было сказано выше, любой \gate{} NOT, CNOT и 2-CNOT задаёт биективное булево отображение
$f\colon \ZZ_2^n \to \ZZ_2^n$.
Рассмотрев все возможные вектора входных значений \gate{} и соответствующие им вектора выходных значений,
можно говорить о задаваемой этим элементом подстановке $h$ из симметрической группы $S(\ZZ_2^n)$.

Композиция \gate{} также задаёт подстановку, равную произведению подстановок, задаваемых этими элементами.
Рассмотрим схему $\frS_f = \compose_{i=1}^{m}{E_{f_i}}$, где элемент $E_{f_i}$ задаёт булево преобразование $f_i$
и соответствующую ему подстановку $h_i \in S(\ZZ_2^n)$.
Обозначим через $f$ задаваемое схемой $\frS_f$ булево отображение,
через $h$~--- задаваемую ей (если существует) подстановку.
Пусть $f_i\colon \ZZ_2^n \to \ZZ_2^n$, $1 \leqslant i \leqslant m$, тогда
\begin{gather}
    f(\vv x) = f_m(f_{m-1}( \ldots f_2(f_1(\vv x)) \ldots )), \vv x \in \ZZ_2^n \label{formula_scheme_f} \; , \\
    h = h_1 \circ h_2 \circ \ldots \circ h_{m-1} \circ h_m \label{formula_scheme_permutation} \; .
\end{gather}
(Здесь и далее используется соглашение, согласно которому произведение подстановок действует слева направо,
т.\,е. $(h \circ g)(x) = g(h(x))$.)

Следовательно, любая обратимая схема с $n$ входами задаёт подстановку из $S(\ZZ_2^n)$, и наоборот,
любая подстановка из $S(\ZZ_2^n)$ задаёт семейство обратимых схем с $n$ входами.
Обратимую схему с $n$ входами в других работах также называют $n$-битовым обратимым
логическим вентилем~\cite{group_based}. Всего существует $(2^n)!$ таких схем.

\begin{predicate}\label{predicate_reverse_scheme}
    Пусть обратимая схема $\frS_f = \compose_{i=1}^{m}{E_{f_i}}$ задаёт булево преобразование $f$,
    тогда композиция элементов $\compose_{i=m}^{1}{E_{f_i^{-1}}}$ задаёт булево преобразование $f^{-1}$.
\end{predicate}
\begin{proof}
Согласно формуле~\eqref{formula_scheme_f}
$
    f(\vv x) = f_m(f_{m-1}( \ldots (f_1(\vv x) \ldots )) = \vv y \; .
$
\\
$
    g(\vv y) = f_1^{-1}(f_2^{-1}( \ldots f_m^{-1}(\vv y) \ldots )) =
    f_1^{-1}(f_2^{-1}( \ldots f_m^{-1}(f_m(f_{m-1}( \ldots (f_1(\vv x) \ldots ))) \ldots )) = \vv x \; .
$
\\
$g(f(\vv x)) = \vv x$ для всех $\vv x \in \ZZ_2^n \Rightarrow g(\vv x) = f^{-1}(\vv x)$.
\end{proof}

Из определений~\ref{def_not_n} и~\ref{def_k_not_n} видно, что для элементов NOT и $k$-CNOT
задаваемые ими булевы преобразования обратны сами к себе: $f_i= f_i^{-1}$.
\begin{predicate}\label{predicate_reverse_scheme_particular}
    Если схема $\frS_f$ состоит из элементов NOT и $k$-CNOT, то схему $\frS_{f^{-1}}$ можно получить,
    соединив все элементы схемы $\frS_f$ в обратном порядке (зеркально отобразить).
\end{predicate}
\noindent
К примеру, схема, реализующая обратное преобразование к преобразованию, задаваемому схемой,
показанной на рис.~\ref{pic_scheme_example}, задаётся композицией элементов $C_{3;4} * C_{1,2,4;3} * C_{1;2} * N_4$.

\subsection{%
    \texorpdfstring{Чётность подстановок для NOT и $k$-CNOT}%
       {Чётность подстановок для NOT и k-CNOT}}

\forceindent
Возникает вопрос, какие подстановки задаются элементами $N_j^n$ и $C_{i_1, i_2, \ldots, i_k;j}^n$,
а также какие подстановки из $S(\ZZ_2^n)$ могут быть реализованы при помощи схем, состоящих из этих \gate ?

\begin{predicate}\label{predicate_parity_of_not}
    Элемент $N_j^n$ задаёт нечётную подстановку при $n = 1$ и чётную при $n > 1$.
\end{predicate}
\begin{proof}
    Элемент $N_j^n$ задаёт подстановку $h_{\text{NOT}}$ вида
    \begin{gather*}
        h_{\text{NOT}} = (\vv x^{(1)}, \vv y^{(1)}) \circ (\vv x^{(2)}, \vv y^{(2)}) \circ \ldots
        \circ (\vv x^{(m)}, \vv y^{(m)}),\text{\;\;}\vv x^{(i)}, \vv y^{(i)} \in \ZZ_2^n \; , \\
        \vv x^{(i)} = \langle x^{(i)}_1, \ldots, x^{(i)}_j, \ldots, x^{(i)}_n\rangle,\text{\;\;}
        \vv y^{(i)} = \langle x^{(i)}_1, \ldots, x^{(i)}_j \oplus 1, \ldots, x^{(i)}_n\rangle,\text{\;\;}x^{(i)}_k \in \ZZ_2 \; .
    \end{gather*}
    \noindent
    Таким образом, количество различных транспозиций $(\vv x^{(i)}, \vv y^{(i)})$, $\vv x^{(i)} \neq \vv y^{(i)}$,
    равно количеству различных векторов
    $\langle x^{(i)}_1, \ldots, x^{(i)}_{j-1}, x^{(i)}_{j+1}, \ldots, x^{(i)}_n\rangle$, т.\,е. $m = 2^{n-1}$,
    откуда следует нечётность подстановки $h_{\text{NOT}}$ при $n = 1$ и чётность при $n > 1$.
\end{proof}

\begin{predicate}\label{predicate_parity_of_$k$-CNOT}
    Элемент $C_{i_1, i_2, \ldots, i_k;j}^n$ задаёт нечётную подстановку при $k = n - 1$ и чётную при $k < n - 1$.
\end{predicate}
\begin{proof}
    Элемент $C_{i_1, i_2, \ldots, i_k;j}^{k+1}$ задаёт подстановку $h_{\text{$k$-CNOT}}$ вида
    \begin{gather*}
        h_{\text{$k$-CNOT}} = (\vv x^{(1)}, \vv y^{(1)}) \circ (\vv x^{(2)}, \vv y^{(2)}) \circ \ldots
        \circ (\vv x^{(m)}, \vv y^{(m)}),\text{\;\;}\vv x^{(i)}, \vv y^{(i)} \in \ZZ_2^n \; , \\
        \vv x^{(i)} = \langle x^{(i)}_1, \ldots, x^{(i)}_j, \ldots, x^{(i)}_n\rangle,\text{\;\;}
        \vv y^{(i)} = \langle x^{(i)}_1, \ldots, x^{(i)}_j \oplus x^{(i)}_{i_1} \land \ldots\land x^{(i)}_{i_k},
        \ldots, x^{(i)}_n\rangle,\text{\;\;}x^{(i)}_k \in \ZZ_2 \; .
    \end{gather*}
    \noindent
    Таким образом, количество различных транспозиций $(\vv x^{(i)}, \vv y^{(i)})$, $\vv x^{(i)} \neq \vv y^{(i)}$,
    равно количеству различных векторов
    $\langle x^{(i)}_1, \ldots, x^{(i)}_{j-1}, x^{(i)}_{j+1}, \ldots, x^{(i)}_n\rangle$,
    для которых $x^{(i)}_t = 1$ при $t \in \{ i_1, \ldots, i_k\}$,
    т.\,е. $m = 2^{n-(k+1)}$, откуда следует нечётность подстановки $h_{\text{$k$-CNOT}}$ при $k=n-1$ и чётность при $k<n-1$.
\end{proof}

\begin{corollary}\label{corollary_parity_of_scheme}
    Обратимая схема с $n$ входами, состоящая из элементов NOT, CNOT и 2-CNOT, при $n < 4$ задаёт некоторую подстановку
    из симметрической группы $S(\ZZ_2^n)$, а при $n \geqslant 4$~--- некоторую чётную подстановку
    из знакопеременной группы $A(\ZZ_2^n)$.
\end{corollary}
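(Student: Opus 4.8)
The plan is to reduce the statement entirely to Predicates~\ref{predicate_parity_of_not} and~\ref{predicate_parity_of_$k$-CNOT} together with formula~\eqref{formula_scheme_permutation}, which expresses the permutation realized by a circuit as the ordered composition $h = h_1 \circ \ldots \circ h_m$ of the permutations realized by its individual gates.

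For the case $n < 4$ there is nothing to prove beyond Definition~\ref{def_reverse_logic_gate}: each gate realizes a bijection $\ZZ_2^n \to \ZZ_2^n$, a composition of bijections is again a bijection, and every bijection of $\ZZ_2^n$ is by definition an element of $S(\ZZ_2^n)$. So the circuit realizes some permutation from $S(\ZZ_2^n)$.

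For $n \geqslant 4$ the key observation is that \emph{every} gate type allowed in the circuit realizes an \emph{even} permutation on $\ZZ_2^n$. A NOT gate is $N_j^n$ with $n \geqslant 4 > 1$, hence even by Predicate~\ref{predicate_parity_of_not}. A CNOT gate is the case $k = 1$ of the element $C_{i_1, \ldots, i_k;j}^n$, and since $n \geqslant 4$ we have $1 < n - 1$, so it is even by Predicate~\ref{predicate_parity_of_$k$-CNOT}. A 2-CNOT gate is the case $k = 2$, and again $2 < n - 1$ for $n \geqslant 4$, so it too is even by that same predicate. Since $A(\ZZ_2^n)$ is a subgroup of $S(\ZZ_2^n)$, a product of even permutations is even; applying this to formula~\eqref{formula_scheme_permutation} shows that the permutation realized by the whole circuit lies in $A(\ZZ_2^n)$.

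There is essentially no obstacle here: the corollary follows immediately from the two parity computations already carried out for the individual gate types. The only point worth emphasizing is the bookkeeping that accounts for the threshold $n \geqslant 4$ --- a NOT becomes odd only at $n = 1$, a CNOT only at $n = 2$, and a 2-CNOT only at $n = 3$, so $n \geqslant 4$ is precisely the range in which all three gate types are simultaneously even.
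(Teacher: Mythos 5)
Your proof is correct and takes essentially the same route as the paper, which likewise derives the corollary directly from Predicates~\ref{predicate_parity_of_not} and~\ref{predicate_parity_of_$k$-CNOT} together with formula~\eqref{formula_scheme_permutation}; you merely spell out the parity bookkeeping that the paper leaves implicit.
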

\begin{proof}
    Следует из утверждений~\ref{predicate_parity_of_not}, \ref{predicate_parity_of_$k$-CNOT}
    и формулы~\eqref{formula_scheme_permutation}.
\end{proof}
\noindent
Другими словами, при помощи обратимой схемы с $n \geqslant 4$ входами, состоящей из элементов NOT, CNOT и 2-CNOT,
нельзя реализовать нечётную подстановку.

\begin{corollary}\label{corollary_parity_of_xor_matrix}
    Обратимое линейное преобразование над координатами векторов линейного векторного пространства $\ZZ_2^n$
    задаёт чётную подстановку при $n > 2$.
\end{corollary}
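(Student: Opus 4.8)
План состоит в том, чтобы свести утверждение к уже установленной выше чётности подстановки, задаваемой элементом 1-CNOT. Напомним, что обратимое линейное преобразование пространства $\ZZ_2^n$~--- это отображение вида $\vv x \mapsto A\vv x$, где $A$~--- обратимая матрица размера $n \times n$ над полем $\FF_2$. Сначала я бы заметил, что группа $GL(n, \FF_2)$ всех таких матриц порождается элементарными трансвекциями $T_{i,j}$ (при $i \ne j$), прибавляющими $i$-ю координату к $j$-й, то есть отображениями $\langle x_1, \ldots, x_n\rangle \mapsto \langle x_1, \ldots, x_j \oplus x_i, \ldots, x_n\rangle$. Это есть не что иное, как метод Гаусса над $\FF_2$: обратимую матрицу $A$ можно привести к единичной элементарными преобразованиями строк, а над $\FF_2$ таковыми являются лишь прибавление одной строки к другой и перестановки строк; при этом транспозиция двух координат сама представляется произведением трёх трансвекций ($(x_i, x_j) \mapsto (x_j, x_i)$ реализуется как $T_{i,j} \circ T_{j,i} \circ T_{i,j}$), так что отдельная порождающая <<перестановка строк>> не нужна.

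Далее я бы отметил, что $T_{i,j}$~--- это в точности подстановка на множестве $\ZZ_2^n$, задаваемая \gate{} $C_{i;j}^n$ (элементом 1-CNOT). По утверждению~\ref{predicate_parity_of_$k$-CNOT}, применённому при $k = 1$, эта подстановка является чётной в точности тогда, когда $k < n - 1$, то есть при $n > 2$. Следовательно, при $n > 2$ каждая из порождающих $T_{i,j}$ группы $GL(n, \FF_2)$ задаёт чётную подстановку на $\ZZ_2^n$; значит, и задаваемая любым их произведением подстановка чётна. Представив $A$ произведением трансвекций и воспользовавшись формулой~\eqref{formula_scheme_permutation} (или просто тем, что $A(\ZZ_2^n)$~--- подгруппа группы $S(\ZZ_2^n)$), заключаем, что подстановка, задаваемая матрицей $A$, лежит в $A(\ZZ_2^n)$, что и требовалось.

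Единственное по-настоящему нетривиальное место~--- утверждение о порождении группы $GL(n, \FF_2)$ трансвекциями, однако и оно вполне элементарно; аккуратности требует лишь упомянутое сведение перестановок координат к трансвекциям. Стоит также пояснить, почему условие $n > 2$ существенно и почему нельзя ограничиться ссылкой на следствие~\ref{corollary_parity_of_scheme}: последнее даёт чётность только для схем с $n \geqslant 4$ входами, в то время как приведённое рассуждение через утверждение~\ref{predicate_parity_of_$k$-CNOT} охватывает и случай $n = 3$; при $n = 2$ же утверждение неверно, поскольку $C_{1;2}$ действует на $\ZZ_2^2$ как одна транспозиция $(\langle 1, 0\rangle, \langle 1, 1\rangle)$, то есть как нечётная подстановка.
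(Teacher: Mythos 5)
Ваше доказательство по существу совпадает с авторским: в работе обратимая матрица также приводится к единичной элементарными преобразованиями столбцов, прибавление столбца реализуется элементом $C_{i;j}^n$, перестановка~--- композицией $C_{i;j}^n * C_{j;i}^n * C_{i;j}^n$, после чего чётность следует из утверждения~\ref{predicate_parity_of_$k$-CNOT} и формулы~\eqref{formula_scheme_permutation}. Добавленные вами пояснения о существенности условия $n > 2$ и о контрпримере при $n = 2$ корректны, но не меняют сути рассуждения.
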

\begin{proof}
    Линейное преобразование над координатами векторов линейного векторного пространства $\ZZ_2^n$
    можно задать при помощи матрицы $A_{n \times n} = (a_{ij})$, $a_{ij} \in \ZZ_2$, $0 \leqslant i, j \leqslant n-1$.

    По условию линейное преобразование обратимо, следовательно, элементарными преобразованиями столбцов (сложение и перестановка)
    матрицу $A_{n \times n}$ можно привести к единичной матрице $E_{n \times n}$.

    Схема с $n$ входами и выходами, не содержащая ни одного элемента, задаёт линейное преобразование с матрицей $E_{n \times n}$.
    Прибавление $i$-го столбца матрицы к $j$-му столбцу матрицы задаётся элементом $C_{i;j}^n$.
    Перестановка $i$-го и $j$-го столбцов матрицы задаётся композицией элементов $C_{i;j}^n * C_{j;i}^n * C_{i;j}^n$.
    Таким образом, ставя в соответствие каждому из элементарных преобразований столбцов,
    которые приводят матрицу $A_{n \times n}$ к матрице $E_{n \times n}$, композицию элементов CNOT,
    можно построить схему, реализующую заданное обратимое линейное преобразование.

    Построенная таким способом обратимая схема будет содержать только элементы CNOT.
    Из утверждения~\ref{predicate_parity_of_$k$-CNOT} и формулы~\eqref{formula_scheme_permutation} следует,
    что такая схема задаёт чётную подстановку при $n > 2$.
\end{proof}

\subsection{%
    \texorpdfstring{Множество обратимых \gate, порождающее $S(\ZZ_2^n)$}%
        {Множество обратимых Ф.Э., порождающее симметрическую группу}%
}

\forceindent
Обозначим через $\Omega_n^2$ множество всех элементов $N_j^n$, $C_{i;j}^n$ и
$C_{i_1, i_2;j}^n$ при фиксированном значении $n$, а через $S_{\Omega_n^2}$~--- множество подстановок,
задаваемых этими \gate{} Мощности этих множеств равны следующей величине:
$$
    |\Omega_n^2| = |S_{\Omega_n^2}| = n + {\binom{n}{1}} \cdot (n-1) + {\binom{n}{2}} \cdot (n-2) = O(n^3) \; .
$$
В работе~\cite{shende_synthesis} было доказано, что множество $S_{\Omega_n^2}$ порождает симметрическую группу
$S(\ZZ_2^n)$ при $n < 4$ и знакопеременную группу $A(\ZZ_2^n)$ при $n \geqslant 4$.
Здесь будет приведено другое доказательство, на основе которого разработан новый быстрый алгоритм синтеза обратимых схем
с использованием теории групп подстановок (см. раздел~\ref{subsection_my_synthesis_algorithms}).
Частично это доказательство было опубликовано в работе~\cite{my_lemma_prove}.

\begin{lemma}\label{basis_on_s_n}
    Множество подстановок $S_{\Omega_n^2}$ при $n < 4$ порождает симметрическую группу $S(\ZZ_2^n)$.
\end{lemma}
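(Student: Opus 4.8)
The plan is to treat the three cases $n=1,2,3$ separately, in each case first identifying the (large, well-understood) group generated by the NOT- and CNOT-gates and then, when this is not already everything, enlarging it by conjugating a single $2$-CNOT.

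First I would record the structural facts. The elements $N_j^n$ realize the translations $\vv x \mapsto \vv x \oplus \vv e_j$, so together they generate the group $T_n$ of all $2^n$ translations of $\ZZ_2^n$. The elements $C_{i;j}^n$ realize precisely the elementary column operations used in the proof of Corollary~\ref{corollary_parity_of_xor_matrix}, and since those operations generate $\mathrm{GL}(n,2)$, the permutations coming from the CNOT-gates with $n$ inputs generate the group of all invertible linear maps of $\ZZ_2^n$. Since every NOT and every CNOT acts affinely, the subgroup $H_n$ of $S(\ZZ_2^n)$ generated by all NOT- and CNOT-gates with $n$ inputs is exactly the affine group $\mathrm{AGL}(n,2) = T_n \rtimes \mathrm{GL}(n,2)$; in particular $H_n$ is $2$-transitive on $\ZZ_2^n$ for every $n$, because any ordered pair of distinct vectors can be sent to any other by translating the first entry to $\vv 0$, sending the resulting nonzero vector to the required nonzero vector by a linear map, and translating back. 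Now for $n=1$ there are no CNOT- or $2$-CNOT-gates and $N_1^1 = (\langle 0\rangle,\langle 1\rangle)$ already generates $S(\ZZ_2^1)$; for $n=2$ there is no $2$-CNOT-gate, but $|H_2| = |\mathrm{AGL}(2,2)| = 4\cdot|\mathrm{GL}(2,2)| = 24 = |S(\ZZ_2^2)|$, hence $H_2 = S(\ZZ_2^2)$. The only genuine case is $n=3$, where $|\mathrm{AGL}(3,2)| = 8\cdot 168 = 1344 < 8!$, so the affine gates alone do not suffice.

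For $n=3$ I would finish as follows. The gate $C_{1,2;3}^3$ inverts coordinate $3$ exactly on the two inputs $\langle 1,1,0\rangle$ and $\langle 1,1,1\rangle$, so it realizes the transposition $\tau = (\langle 1,1,0\rangle,\langle 1,1,1\rangle)$ (an odd permutation, consistent with Proposition~\ref{predicate_parity_of_$k$-CNOT} since here $k = n-1$). For any $g \in H_3 = \mathrm{AGL}(3,2)$ both $g$ and $g^{-1}$ are realized by schemes over $\Omega_3^2$, hence so is the conjugate $g^{-1}\circ\tau\circ g$, which is the transposition $(g(\langle 1,1,0\rangle),\,g(\langle 1,1,1\rangle))$. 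By $2$-transitivity of $\mathrm{AGL}(3,2)$ this unordered pair runs over all pairs of distinct vectors of $\ZZ_2^3$ as $g$ varies, so the group generated by $S_{\Omega_3^2}$ contains every transposition of $\ZZ_2^3$ and therefore equals $S(\ZZ_2^3)$. I expect the only point needing care to be the claim that the CNOT-gates already generate the full linear group $\mathrm{GL}(3,2)$ (so that $H_3$ really is $2$-transitive and one transposition suffices) — but this is exactly the content of the construction in the proof of Corollary~\ref{corollary_parity_of_xor_matrix}, so no new work is required; the remaining verifications (the identification of $\tau$, the orders of the affine groups, and the fact that transpositions generate the symmetric group) are routine.
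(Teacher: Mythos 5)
Your proof is correct; every step checks out, including the identification of $\langle\text{NOT},\text{CNOT}\rangle$ with $\mathrm{AGL}(n,2)$, the order count $|\mathrm{AGL}(2,2)|=24=|S(\ZZ_2^2)|$, the reading of $C_{1,2;3}^3$ as the transposition $(\langle 1,1,0\rangle,\langle 1,1,1\rangle)$, and the use of $2$-transitivity to sweep out all transpositions by conjugation. The underlying mechanism is the same as the paper's — conjugate the transposition realized by the gate with $n-1$ controls by affine permutations — but the packaging is genuinely different. The paper works entirely constructively: it first manufactures a fixed base transposition $(\vv 0,\vv x_1)$ by conjugating with explicit NOT gates, then reaches every transposition of the special form $(\vv 0,\vv x)$ by conjugating with explicit CNOT circuits, and only this star of transpositions through $\vv 0$ is used to generate $S(\ZZ_2^n)$; even the case $n=2$ is handled by an explicit conjugation $N_2 * C_{2;1} * N_2$ rather than by an order count. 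The reason for that extra explicitness is that the proof doubles as the engine of the synthesis algorithms in Section 2 (and of Lemma~\ref{basis_on_a_n}), where one needs actual gate sequences, not just existence. Your version is shorter and more structural — $2$-transitivity of $\mathrm{AGL}(n,2)$ replaces the hand-built conjugating circuits, and you get all transpositions rather than only those through $\vv 0$ — at the cost of not exhibiting the circuits themselves; as a proof of the lemma as stated, it is complete, and the one point you flag (that the CNOT transvections generate $\mathrm{GL}(n,2)$) is indeed already supplied by the construction in Corollary~\ref{corollary_parity_of_xor_matrix}.
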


Прежде чем перейти непосредственно к доказательству Леммы~\ref{basis_on_s_n}, отметим, что
для множества $A = \{\, a_1, a_2, \ldots, a_m \,\}$
множество подстановок $\{\,(a_1, a_2), (a_1, a_2, \ldots, a_m)\,\}$ порождает симметрическую группу $S(A)$.

Подстановку $(a_1, a_2, \ldots, a_m)$ можно представить в виде следующего произведения транспозиций:
$$
    (a_1, a_2, \ldots, a_m)=(a_1, a_2) \circ (a_1, a_3) \circ \ldots \circ (a_1, a_{m-1}) \circ (a_1, a_m) \; .
$$
В доказательстве ниже будут построены все транспозиции вида
$(\vv 0, \vv x)$, $\vv x \in \ZZ_2^n \setminus \{\, \vv 0 \,\}$, $\vv 0 = \langle 0, \ldots, 0 \rangle$,
при помощи подстановок из множества $S_{\Omega_n^2}$.

Действие сопряжением подстановкой $g$ на подстановку $h$ обозначим через $h^g$:
$$
    h^g = g^{-1} \circ h \circ g \; .
$$
Известно, что действие сопряжением сохраняет цикловую структуру подстановки.

\begin{proof}[Доказательство Леммы~\ref{basis_on_s_n}]
    Для всех значений $n < 4$ получим базовую транспозицию $h=(\vv 0, \vv x_1)$,
    где $\vv x_1 = \langle 0, \ldots, 0, 1 \rangle$.

    При $n=1$ искомая транспозиция $h=(\langle 0 \rangle, \langle 1 \rangle)$ задаётся элементом $N_1$.

    При $n=2$ элемент $C_{2;1}$ задаёт подстановку $h_1 = (\langle 1, 0 \rangle, \langle 1, 1 \rangle)$.
    Элемент $N_2$ задаёт подстановку
    $h_2 = (\langle 0, 0 \rangle, \langle 1, 0 \rangle) \circ (\langle 0, 1 \rangle, \langle 1, 1 \rangle)$.
    Искомая транспозиция $h = (\langle 0, 0 \rangle, \langle 0, 1 \rangle) = h_1^{h_2}$ задаётся композицией элементов
    $N_2 * C_{2;1}*N_2$.

    При $n=3$ элемент $C_{2,3;1}$ задаёт подстановку $g_1 = (\langle 1, 1, 0 \rangle, \langle 1, 1, 1 \rangle)$.
    Элемент $N_2$ задаёт подстановку $g_2$ вида
    $$
        g_2 = (\langle 0, 0, 0 \rangle, \langle 0, 1, 0 \rangle) \circ
        (\langle 0, 0, 1 \rangle, \langle 0, 1, 1 \rangle) \circ (\langle 1, 0, 0 \rangle, \langle 1, 1, 0 \rangle) \circ
        (\langle 1, 0, 1 \rangle, \langle 1, 1, 1 \rangle) \; .
    $$
    Элемент $N_3$ задаёт подстановку $g_3$ вида
    $$
        g_3 = (\langle 0, 0, 0 \rangle, \langle 1, 0, 0 \rangle) \circ
        (\langle 0, 0, 1 \rangle, \langle 1, 0, 1 \rangle) \circ (\langle 0, 1, 0 \rangle, \langle 1, 1, 0 \rangle) \circ
        (\langle 0, 1, 1 \rangle, \langle 1, 1, 1 \rangle) \; .
    $$
    Искомая транспозиция $h = (\langle 0, 0, 0 \rangle, \langle 0, 0, 1 \rangle) = g_1^{g_2\circ g_3}$ задаётся композицией элементов
    $N_3 * N_2 * C_{2,3;1} * N_2 * N_3$.

    Для того, чтобы получить произвольную транспозицию $g = (\vv 0, \vv x)$,
    $\vv x \in \ZZ_2^n \setminus \{\, \vv 0, \vv x_1 \,\}$, будем действовать сопряжением на базовую транспозицию
    $h=(\vv 0, \vv x_1)$.

    Для всех $x_i = 1$ вектора $\vv x = \langle x_1, \ldots, x_n \rangle$, $i \ne 1$, действуем сопряжением на $h$ подстановкой,
    задаваемой элементом $C_{1;i}$, получим $h'$. Если $x_1 = 1$, то $h' = g$.
    В противном случае существует $x_j = 1$, $j \ne 1$;
    действуем сопряжением на $h'$ подстановкой, задаваемой элементом $C_{j;1}$, получим $h'' = g$.

    Таким образом, для всех значений $n < 4$ были получены все транспозиции вида
    $(\vv 0, \vv x)$, $\vv x \in \ZZ_2^n \setminus \{\, \vv 0 \,\}$,
    при помощи подстановок из множества $S_{\Omega_n^2}$.
    Следовательно, $S_{\Omega_n^2}$ порождает симметрическую группу $S(\ZZ_2^n)$ при $n < 4$.
\end{proof}

К примеру, транспозиция $(\langle 0, 0, 0 \rangle, \langle 1, 1, 0 \rangle)$ при $n = 3$
задаётся следующей композицией \gate:
$$
    C_{2;1} * (C_{1;3} * C_{1;2}) * (N_3 * N_2 * C_{2,3;1} * N_2 * N_3) * (C_{1;2} * C_{1;3}) * C_{2;1} \; .
$$

\subsection{%
    \texorpdfstring{Множество обратимых \gate, порождающее $A(\ZZ_2^n)$}%
        {Множество обратимых Ф.Э., порождающее знакопеременную группу}%
}

\forceindent Прежде чем перейти к случаю $n \geqslant 4$, докажем ещё несколько утверждений.

\begin{predicate}\label{predicate_recursive_on_independent_product}
    Подстановка $h \in S(\ZZ_2^n)$ вида
    $$
        h = (\langle 0,0,1,1,\ldots,1 \rangle, \langle 1,0,1,1,\ldots,1 \rangle) \circ
        (\langle 0,1,1,1,\ldots,1 \rangle, \langle 1,1,1,1,\ldots,1 \rangle)
    $$
    при $n \geqslant 4$ задаётся композицией элементов $C_{2,3;1} * C_{n, \ldots,4;2} * C_{2,3;1} * C_{n, \ldots,4;2}$.
\end{predicate}
\begin{proof}
    $ $

    Элемент $C_{2,3;1}$ задаёт подстановку $h_1 = \mcirc_{\vv x, \vv y \in \ZZ_2^n}{(\vv x, \vv y)}$, где
    $\vv x = \langle 0, 1, 1, x_4, \ldots, x_n \rangle$, $\vv y = \langle 1, 1, 1, x_4, \ldots, x_n \rangle$.

    Элемент $C_{n, \ldots,4;2}$ задаёт подстановку $g = \mcirc_{\vv x, \vv y \in \ZZ_2^n}{(\vv x, \vv y)}$,
    где $\vv x = \langle x_1, 0, x_3, 1, \ldots, 1 \rangle$, $\vv y = \langle x_1, 1, x_3, 1, \ldots, 1 \rangle$.

    Подстановка $h = h_1 \circ h_1^g$, откуда следует истинность доказываемого утверждения.
\end{proof}

\begin{corollary}\label{corollary_recursive_on_independent_product_swap}
    Подстановка $h \in S(\ZZ_2^n)$ вида
    $$
        h = (\langle 0,0,1,1,\ldots,1 \rangle, \langle 1,0,1,1,\ldots,1 \rangle) \circ
        (\langle 0,1,1,1,\ldots,1 \rangle, \langle 1,1,1,1,\ldots,1 \rangle)
    $$
    при $n \geqslant 4$ также задаётся композицией элементов $C_{n, \ldots,4;2} * C_{2,3;1} * C_{n, \ldots,4;2} * C_{2,3;1}$.
\end{corollary}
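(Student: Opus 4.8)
The plan is to avoid recomputing the composition of the four gates and instead exploit two facts: that $h$ is an involution, and that $\frS'=C_{n,\ldots,4;2} * C_{2,3;1} * C_{n,\ldots,4;2} * C_{2,3;1}$ is exactly the mirror image (the elements in reverse order) of the circuit $\frS=C_{2,3;1} * C_{n,\ldots,4;2} * C_{2,3;1} * C_{n,\ldots,4;2}$ which, by Proposition~\ref{predicate_recursive_on_independent_product}, realizes $h$.

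First I would observe that $h$, being a product of two disjoint transpositions, satisfies $h \circ h = \mathrm{id}$, hence $h^{-1}=h$. Next, since $\frS$ consists solely of elements of type $k$-CNOT, Proposition~\ref{predicate_reverse_scheme_particular} applies: reversing the order of its elements yields a circuit realizing the inverse permutation. Therefore $\frS'$ realizes $h^{-1}=h$, which is precisely the assertion of the corollary.

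Alternatively one can argue group-theoretically. Let $h_1$ and $g$ denote the permutations of $C_{2,3;1}$ and $C_{n,\ldots,4;2}$; the proof of Proposition~\ref{predicate_recursive_on_independent_product} gives $h=h_1 \circ g \circ h_1 \circ g$, using that $h_1^{g}=g \circ h_1 \circ g$ because $g$ is an involution. Under the left-to-right convention of~\eqref{formula_scheme_permutation} the circuit $\frS'$ realizes $g \circ h_1 \circ g \circ h_1$. Since $h_1$, $g$ and $h$ are all self-inverse, passing to inverses in $h=h_1 \circ g \circ h_1 \circ g$ gives $h=h^{-1}=g \circ h_1 \circ g \circ h_1$, i.e. the permutation of $\frS'$, as required.

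I do not expect any genuine obstacle here; the statement is essentially a corollary of "mirroring a $k$-CNOT circuit inverts the permutation" together with the triviality $h^{-1}=h$. The only thing to handle carefully is the bookkeeping of the composition convention and the explicit check that every gate involved, as well as $h$ itself, is an involution, so that "reverse the circuit", "invert the permutation" and "leave $h$ unchanged" all coincide.
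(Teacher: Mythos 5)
Your proposal is correct and matches the paper's own argument: the paper likewise uses that $h$, being a product of independent transpositions, satisfies $h=h^{-1}$, writes $h=h_1\circ h_1^{g}$, and inverts to get $h=h_1^{g}\circ h_1 = g^{-1}\circ h_1\circ g\circ h_1$, which is exactly your second, group-theoretic computation. Your first variant via Proposition~\ref{predicate_reverse_scheme_particular} (mirroring a NOT/$k$-CNOT circuit inverts the permutation) is just a repackaging of the same idea and is equally valid.
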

\begin{proof}
    Подстановка $h$ представляет собой произведение независимых транспозиций, следовательно $h = h^{-1}$.
    Используя обозначения из доказательства утверждения~\ref{predicate_recursive_on_independent_product}, можно записать, что
    $h = h_1 \circ h_1^g \Rightarrow h^{-1} = (h_1 \circ g^{-1} \circ h_1 \circ g)^{-1}$.

    $h = h^{-1} = g^{-1} \circ h_1^{-1} \circ g \circ h_1^{-1} = g^{-1} \circ h_1 \circ g \circ h_1 = h_1^g \circ h_1$,
    откуда следует, что подстановка $h$ задаётся композицией элементов
    $C_{n, \ldots,4;2} * C_{2,3;1} * C_{n, \ldots,4;2} * C_{2,3;1}$.
\end{proof}

\begin{corollary}\label{corollary_recursive_on_independent_product_common}
    В общем случае при $n \geqslant 4$ элемент $C_{i_1,\ldots,i_k;j}^n$,
    $k \leqslant n - 2$, можно заменить без изменения результирующего булевого преобразования на композицию элементов
    \begin{equation}
        C_{i_k,l;j} * C_{i_1,\ldots,i_{k-1};l} * C_{i_k,l;j} * C_{i_1,\ldots,i_{k-1};l} \; ,
        \label{formula_k_CNOT_recursive}
    \end{equation}
    или на композицию элементов
    \begin{equation}
        C_{i_1,\ldots,i_{k-1};l} * C_{i_k,l;j} * C_{i_1,\ldots,i_{k-1};l} * C_{i_k,l;j} \; ,
        \label{formula_k_CNOT_recursive_mirrored}
    \end{equation}
    где $l \ne i_1, \ldots, i_k, j$.
\end{corollary}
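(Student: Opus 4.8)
The plan is to verify directly that each of the two compositions, viewed as a map $\ZZ_2^n \to \ZZ_2^n$, coincides with the Boolean transformation of $C_{i_1,\ldots,i_k;j}^n$; by Definition~\ref{def_reversible_circuit} together with formula~\eqref{formula_scheme_f} this is all that has to be checked. Since $k \leqslant n-2$, the lines $i_1,\ldots,i_k,j$ occupy at most $k+1 \leqslant n-1$ of the $n$ available lines, so a line $l \notin \{\,i_1,\ldots,i_k,j\,\}$ really does exist. Fix an arbitrary input $\langle x_1, \ldots, x_n \rangle$ and abbreviate $a = x_{i_1}\wedge\cdots\wedge x_{i_{k-1}}$, $b = x_{i_k}$, $c = x_l$ (with the convention that $a=1$ when $k=1$, in which case $C_{i_1,\ldots,i_{k-1};l}$ is simply the inverter $N_l$). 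All lines other than $j$ and $l$ are left untouched by the four elements in~\eqref{formula_k_CNOT_recursive} and~\eqref{formula_k_CNOT_recursive_mirrored} (in particular $i_1,\ldots,i_k$ keep their input values throughout), so it suffices to track the values carried by lines $j$ and $l$.

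First I would treat~\eqref{formula_k_CNOT_recursive}, applying the elements left to right (leftmost acting first). After $C_{i_k,l;j}$ line $j$ holds $x_j \oplus (b\wedge c)$ and line $l$ still holds $c$; after $C_{i_1,\ldots,i_{k-1};l}$ line $l$ holds $c \oplus a$; after the second $C_{i_k,l;j}$ line $j$ holds $x_j \oplus (b\wedge c) \oplus \bigl(b\wedge (c\oplus a)\bigr)$; after the second $C_{i_1,\ldots,i_{k-1};l}$ line $l$ holds $c \oplus a \oplus a = c$. Hence line $l$ is restored to its input value, while by distributivity of conjunction over addition modulo $2$,
$$ x_j \oplus (b\wedge c) \oplus \bigl(b\wedge(c\oplus a)\bigr) = x_j \oplus \bigl(b\wedge(c\oplus c\oplus a)\bigr) = x_j \oplus (a\wedge b) = x_j \oplus x_{i_1}\wedge\cdots\wedge x_{i_k}, $$
so the composition acts exactly as $C_{i_1,\ldots,i_k;j}^n$ on every input, which proves~\eqref{formula_k_CNOT_recursive}.

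For~\eqref{formula_k_CNOT_recursive_mirrored} I would avoid repeating the computation and instead pass to inverses. By Definitions~\ref{def_not_n} and~\ref{def_k_not_n} the elements $N$ and $k$-CNOT are self-inverse, hence so is $C_{i_1,\ldots,i_k;j}^n$; applying Proposition~\ref{predicate_reverse_scheme} to the equality just established reverses the order of the four factors and replaces each by its inverse (which changes nothing), turning the right-hand side of~\eqref{formula_k_CNOT_recursive} into the right-hand side of~\eqref{formula_k_CNOT_recursive_mirrored} while leaving the left-hand side $C_{i_1,\ldots,i_k;j}^n$ unchanged. This yields~\eqref{formula_k_CNOT_recursive_mirrored}. (Alternatively, the same line-tracking argument run on~\eqref{formula_k_CNOT_recursive_mirrored} gives the identity directly; the bookkeeping is identical.)

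There is no genuine obstacle here: the argument is elementary and is just the general form of the verifications already carried out in Proposition~\ref{predicate_recursive_on_independent_product} and Corollary~\ref{corollary_recursive_on_independent_product_swap}. The only two points that require a little care are the convention fixing the direction of composition (leftmost factor applied first) and the single Boolean identity $b\wedge c \oplus b\wedge d = b\wedge(c\oplus d)$ which makes the middle line $l$ cancel out; an explicit check of the degenerate case $k=1$ (where $a=1$ and $C_{i_1,\ldots,i_{k-1};l}=N_l$) shows the stated formulas remain correct there as well.
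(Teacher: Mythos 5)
Your proof is correct, but it takes a genuinely different route from the paper's. The paper proves the corollary by constructing a bijective relabelling of the lines $\langle i_1,\ldots,i_k,l,j\rangle \to \langle n, n-1,\ldots,3,2,1\rangle$ and thereby reducing the general statement to the special case $k=n-2$ already established in Utверждение~\ref{predicate_recursive_on_independent_product} and Следствие~\ref{corollary_recursive_on_independent_product_swap}, which were themselves proved in the language of permutations via conjugation ($h = h_1 \circ h_1^g$ and $h = h_1^g \circ h_1$). You instead verify the Boolean transformation directly by tracking the values on lines $j$ and $l$ through the four gates, with the cancellation $b\wedge c \oplus b\wedge(c\oplus a) = b\wedge a$ doing the work and line $l$ being restored by the second application of $C_{i_1,\ldots,i_{k-1};l}$; the mirrored form then follows from self-inverseness via Утверждение~\ref{predicate_reverse_scheme}. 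The paper in fact remarks that exactly this explicit-computation route is the one taken in the cited work~\cite{group_based}, so your argument reproduces that alternative proof rather than the paper's own. What the paper's approach buys is brevity and consistency with the conjugation-based machinery used throughout the chapter; what yours buys is self-containedness, an explicit identification of the single Boolean identity that makes the construction work, and an explicit treatment of the degenerate case $k=1$ (where the inner gate degenerates to $N_l$), which the paper's relabelling argument passes over with the remark that "the essence of the proof remains the same" for $k<n-2$. Both derivations of the mirrored composition~\eqref{formula_k_CNOT_recursive_mirrored} — your inverse-reversal argument and the paper's appeal to $h=h^{-1}$ for a product of independent transpositions — are essentially the same observation in different languages.
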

\begin{proof}
    Построив биективное отображение
    $$
        \langle i_1,i_2,\ldots,i_k,l,j \rangle \text{} \to \text{} \langle n,n-1,\ldots,3,2,1 \rangle \; ,
    $$
    можно перейти к случаю $k=n-2$, доказанному в утверждении~\ref{predicate_recursive_on_independent_product}
    и следствии~\ref{corollary_recursive_on_independent_product_swap}. При $k < n-2$ суть доказательства остаётся той же.
\end{proof}

Эквивалентные замены~\eqref{formula_k_CNOT_recursive} и~\eqref{formula_k_CNOT_recursive_mirrored}
элемента $C_{i_1,\ldots,i_k;j}^n$ также были доказаны в работе~\cite{group_based} через явное вычисление
значений на выходах этого элемента.

\begin{predicate}[{\cite[Следствие~7.4]{barenco_elementary_gates}}]\label{fast_realization_of_k_cnot}
    Элемент $C_{i_1,\ldots,i_k;j}^n$ при $n \geqslant 4$ и $k \leqslant n - 2$ можно заменить
    без изменения результирующего булевого преобразования на композицию не более чем $8(k-3)$ элементов CNOT и 2-CNOT.
\end{predicate}
Отметим, что данная замена возможна за счёт одного из входов элемента, не являющегося ни контролирующим, ни контролируемым
(условие $k \leqslant n - 2$).

Рассмотрим теперь произведение двух зависимых транспозиций.

\begin{predicate}\label{predicate_recursive_on_dependent_product}
    Подстановка $h \in S(\ZZ_2^n)$ вида
    $$
        h = (\langle 0,1,1,1,\ldots,1 \rangle, \langle 1,0,1,1,\ldots,1 \rangle) \circ
        (\langle 0,1,1,1,\ldots,1 \rangle, \langle 1,1,1,1,\ldots,1 \rangle)
    $$
    при $n \geqslant 4$ задаётся композицией элементов $C_{n, \ldots,5,4,1;2} * C_{2,3;1} * C_{n, \ldots,5,4,1;2} * C_{2,3;1}$
    и композицией $C_{1,3;2} * C_{n, \ldots,5,4,2;1} * C_{1,3;2} * C_{n, \ldots,5,4,2;1}$.
\end{predicate}
\begin{proof}
    $ $

    Элемент $C_{2,3;1}$ задаёт подстановку $h_1 = \mcirc_{\vv x, \vv y \in \ZZ_2^n}{(\vv x, \vv y)}$, где
    $\vv x = \langle 0, 1, 1, x_4, \ldots, x_n \rangle$, $\vv y = \langle 1, 1, 1, x_4, \ldots, x_n \rangle$.

    Элемент $C_{n, \ldots,5,4,1;2}$ задаёт подстановку
    $g_1 = \mcirc_{\vv x, \vv y \in \ZZ_2^n}{(\vv x, \vv y)}$,
    где $\vv x = \langle 1, 0, x_3, 1, \ldots, 1 \rangle$, $\vv y = \langle 1, 1, x_3, 1, \ldots, 1 \rangle$.

    Подстановка $h = h_1^{g_1} \circ h_1$, откуда следует истинность первой части доказываемого утверждения.

    С другой стороны, элемент $C_{1,3;2}$ задаёт подстановку
    $h_2 = \mcirc_{\vv x, \vv y \in \ZZ_2^n}{(\vv x, \vv y)}$, где
    $\vv x = \langle 1, 0, 1, x_4, \ldots, x_n \rangle$, $\vv y = \langle 1, 1, 1, x_4, \ldots, x_n \rangle$.

    Элемент $C_{n, \ldots,5,4,2;1}$ задаёт подстановку
    $g_2 = \mcirc_{\vv x, \vv y \in \ZZ_2^n}{(\vv x, \vv y)}$,
    где $\vv x = \langle 0, 1, x_3, 1, \ldots, 1 \rangle$, $\vv y = \langle 1, 1, x_3, 1, \ldots, 1 \rangle$.

    Подстановка $h = g_2^{h_2} \circ g_2$, откуда следует истинность второй части доказываемого утверждения.
\end{proof}

Теперь можно перейти ко второй лемме данной главы.
\begin{lemma}\label{basis_on_a_n}
    Множество подстановок $S_{\Omega_n^2}$ при $n \geqslant 4$ порождает знакопеременную группу $A(\ZZ_2^n)$.
\end{lemma}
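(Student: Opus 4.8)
The plan is to prove the non-trivial inclusion $A(\ZZ_2^n) \subseteq G$, where $G := \langle S_{\Omega_n^2} \rangle$ is the group generated by $S_{\Omega_n^2}$; the opposite inclusion $G \subseteq A(\ZZ_2^n)$ for $n \geqslant 4$ is exactly Corollary~\ref{corollary_parity_of_scheme}. The first step is to observe that $G$ is already large because of the linear gates: the permutation of $\ZZ_2^n$ induced by $C_{i;j}^n$ is the elementary transvection $\vv x \mapsto \vv x + x_i \vv e_j$, and such transvections generate $\mathrm{SL}_n(\FF_2) = \mathrm{GL}_n(\FF_2)$; adjoining the translations $\vv x \mapsto \vv x + \vv e_j$ induced by the gates $N_j^n$ shows that $G$ contains the whole affine group $\mathrm{AGL}_n(\FF_2)$ acting on $\ZZ_2^n$. (This is consistent with Corollaries~\ref{corollary_parity_of_scheme} and~\ref{corollary_parity_of_xor_matrix}: for $n \geqslant 4$ every affine map of $\ZZ_2^n$ is an even permutation.)

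The second step is an elementary but decisive observation about $\FF_2$: two distinct nonzero vectors are automatically linearly independent, since the only scalars are $0$ and $1$. Hence, for $n \geqslant 2$, the group $\mathrm{GL}_n(\FF_2)$ acts transitively on ordered pairs of distinct nonzero vectors, and therefore $\mathrm{AGL}_n(\FF_2)$ acts transitively on ordered triples of pairwise distinct points of $\ZZ_2^n$ (translate one point to $\vv 0$, then act linearly). Consequently, if $G$ contains even a single $3$-cycle $(\vv a, \vv b, \vv c)$, then conjugating it by the elements of $\mathrm{AGL}_n(\FF_2) \subseteq G$ yields every $3$-cycle of $S(\ZZ_2^n)$; since $2^n \geqslant 16 > 2$, the $3$-cycles generate $A(\ZZ_2^n)$, so $A(\ZZ_2^n) \subseteq G$ and the lemma follows.

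Thus everything reduces to exhibiting one $3$-cycle in $G$, and this is supplied by Proposition~\ref{predicate_recursive_on_dependent_product}: the permutation $h = (\langle 0,1,1,\ldots,1 \rangle, \langle 1,0,1,\ldots,1 \rangle) \circ (\langle 0,1,1,\ldots,1 \rangle, \langle 1,1,1,\ldots,1 \rangle)$ is a product of two transpositions sharing a point, hence a $3$-cycle, and it is realized by the explicit composition $C_{n,\ldots,5,4,1;2} * C_{2,3;1} * C_{n,\ldots,5,4,1;2} * C_{2,3;1}$. For $n = 4$ all four factors already lie in $\Omega_4^2$. For $n \geqslant 5$ the gate $C_{n,\ldots,5,4,1;2}$ is an $(n-2)$-CNOT whose set of used lines misses line $3$, so Proposition~\ref{fast_realization_of_k_cnot} (equivalently, the recursion of Corollary~\ref{corollary_recursive_on_independent_product_common}) rewrites it as a composition of NOT, CNOT and 2-CNOT gates on the same $n$ lines; hence $h \in G$ for every $n \geqslant 4$. (Alternatively one could start from the product of two disjoint transpositions of Proposition~\ref{predicate_recursive_on_independent_product} and use that such products generate the alternating group on $m \geqslant 5$ symbols, but the $3$-cycle route is shorter.)

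I expect the only delicate point --- and the place where the hypothesis $n \geqslant 4$ is genuinely used --- to be this last step: the building-block $3$-cycle is naturally written with a Toffoli-type gate of many controls, and converting it into an honest $\Omega_n^2$-circuit requires a spare line, which exists precisely when $n \geqslant 4$. (For $n = 3$ no spare line is available, and indeed $S_{\Omega_3^2}$ generates the full symmetric group rather than merely $A(\ZZ_2^3)$, by Lemma~\ref{basis_on_s_n}.) Apart from this, the argument is merely the assembly of the preceding propositions with two textbook facts: elementary transvections generate the special linear group over a field, and $3$-cycles generate the alternating group.
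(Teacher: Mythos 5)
Your proof is correct, but it takes a genuinely different route from the paper's. The paper argues constructively: it writes an arbitrary even permutation as a product of pairs of transpositions, conjugates each pair --- by explicitly listed NOT, CNOT and 2-CNOT gates, treated case by case --- into the canonical pair of independent transpositions of Proposition~\ref{predicate_recursive_on_independent_product}, realizes that canonical pair by the four-gate composition given there, and splits a dependent pair into two independent ones. You instead observe that the NOT and CNOT gates already put the whole affine group $\mathrm{AGL}_n(\FF_2)$ inside $G=\langle S_{\Omega_n^2}\rangle$, that over $\FF_2$ this group acts transitively on ordered triples of distinct points (two distinct nonzero vectors are automatically independent), and that therefore the single $3$-cycle supplied by Proposition~\ref{predicate_recursive_on_dependent_product} --- made into an honest $\Omega_n^2$-circuit via the spare line $3$ and Proposition~\ref{fast_realization_of_k_cnot} or Corollary~\ref{corollary_recursive_on_independent_product_common} --- conjugates to every $3$-cycle, and $3$-cycles generate $A(\ZZ_2^n)$. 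Both proofs dispose of the easy inclusion $G \subseteq A(\ZZ_2^n)$ via Corollary~\ref{corollary_parity_of_scheme}. Your argument is shorter, and it cleanly isolates where $n \geqslant 4$ is really used (the spare line); what it gives up is quantitative content: the detours through ``transvections generate $\mathrm{SL}_n$'' and ``$3$-cycles generate $A$'' carry no useful word-length bounds, whereas the paper's conjugation scheme is precisely what is reused in Chapter~2 as the synthesis algorithm \algref{alg_my_common} with the explicit complexity estimate $L \lesssim 7n2^m$. As a pure generation statement, though, your proof is complete and sound.
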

\begin{proof}
    Рассмотрим произвольную чётную подстановку $h \in A(\ZZ_2^n)$. По определению подстановка $h$ представляется в виде
    произведения чётного числа транспозиций, следовательно, эти транспозиции можно разбить на пары:
    \begin{equation}
        h = \mcirc_{\vv x_i, \vv y_i \in \ZZ_2^n}
        {\left( (\vv x_1, \vv y_1) \circ (\vv x_2, \vv y_2) \right )} \; .
        \label{formula_even_perm_as_product_of_pairs}
    \end{equation}
    Рассмотрим по отдельности каждую такую пару транспозиций
    $p = (\vv x_1, \vv y_1) \circ (\vv x_2, \vv y_2)$.

    \bigskip
    \bigskip
    Пусть $p$ является произведением двух независимых транспозиций: $p = (\vv x, \vv y) \circ (\vv z, \vv w)$.
    Для простоты доказательства приведём $p$ действием сопряжения к виду
    $$
        g = (\langle 0,0,1,1,\ldots,1 \rangle, \langle 1,0,1,1,\ldots,1 \rangle) \circ
        (\langle 0,1,1,1,\ldots,1 \rangle, \langle 1,1,1,1,\ldots,1 \rangle) \; .
    $$

    Сперва применим к $p$ действие сопряжением подстановкой $h_1$, задаваемой композицией элементов
    $\frS_1 = \compose_{x_j = 1}{N_j}$:
    $$
        p^{(1)} = p^{h_1} = (\langle 0, \ldots, 0 \rangle, \vv y^{(1)}) \circ (\vv z^{(1)}, \vv w^{(1)}) \; .
    $$

    $\vv y^{(1)} \ne \langle 0, \ldots, 0 \rangle \Rightarrow$ существует $y^{(1)}_i = 1$.
    Применяем к $p^{(1)}$ действие сопряжением подстановкой $h_2$, задаваемой композицией элементов
    $\frS_2 = \compose_{y^{(1)}_j = 1, j \ne 1}{C_{1;j}}$, если $y^{(1)}_1 = 1$, или композицией элементов
    $$
        \frS_2 = C_{i;1} *  \left( \compose_{y^{(1)}_j = 1, j \ne 1}{C_{1;j}} \right ) * C_{i;1} \; ,
    $$
    если $y^{(1)}_1 = 0$. Получим новую подстановку
    \begin{gather*}
        p^{(2)} = \left( p^{(1)} \right)^{h_2} =
        (\vv x^{(2)}, \vv y^{(2)}) \circ (\vv z^{(2)}, \vv w^{(2)}) \; , \\
        \vv x^{(2)} = \langle 0, \ldots, 0 \rangle, \vv y^{(2)} = \langle 1, 0, \ldots, 0 \rangle \; .
    \end{gather*}

    $\vv z^{(2)} \ne \vv x^{(2)}, \vv y^{(2)} \Rightarrow$ существует $z^{(2)}_i = 1$, $i \ne 1$.
    Применяем к $p^{(2)}$ действие сопряжением подстановкой $h_3$, задаваемой композицией элементов
    $\frS_3 = \compose_{z^{(2)}_j = 1, j \ne 2}{C_{2;j}}$, если $z^{(2)}_2 = 1$, или композицией элементов
    $$
        \frS_3 = C_{i;2} *  \left( \compose_{z^{(2)}_j = 1, j \ne 2}{C_{2;j}} \right ) * C_{i;2} \; ,
    $$
    если $z^{(2)}_2 = 0$. Получим новую подстановку
    \begin{gather*}
        p^{(3)} = \left( p^{(2)} \right)^{h_3} =
        (\vv x^{(3)}, \vv y^{(3)}) \circ (\vv z^{(3)}, \vv w^{(3)}) \; , \\
        \vv x^{(3)} = \langle 0, \ldots, 0 \rangle, \vv y^{(3)} = \langle 1, 0, \ldots, 0 \rangle,
        \vv z^{(3)} = \langle 0, 1, 0,\ldots, 0 \rangle \; .
    \end{gather*}

    Т.\,к. $\vv w^{(3)} \ne \vv x^{(3)}, \vv y^{(3)}, \vv z^{(3)}$, то возможны 2 варианта:

    \begin{enumerate}
        \item
            $w^{(3)}_1 = w^{(3)}_2 = 1 \Rightarrow$ применяем к $p^{(3)}$ действие сопряжением подстановкой $h_4$,
            задаваемой композицией элементов $\frS_4 = \compose_{w^{(3)}_j = 1, j \ne 1,2}{C_{1,2;j}}$.

        \item
            Либо $w^{(3)}_1 = 0$, либо $w^{(3)}_2 = 0$, но тогда существует $w^{(3)}_i = 1$, $i \ne 1,2$.

            Применяем к $p^{(3)}$ действие сопряжением подстановкой $h_4$,
            задаваемой композицией элементов
            $$
                \frS_4 = C_{i;1} *  \left( \compose_{w^{(3)}_j = 1, j \ne 1,2}{C_{1,2;j}} \right ) * C_{i;1} \; ,
            $$
            если $w^{(3)}_1 = 0$, $w^{(3)}_2 = 1$; композицией элементов
            $$
                \frS_4 = C_{i;2} *  \left( \compose_{w^{(3)}_j = 1, j \ne 1,2}{C_{1,2;j}} \right ) * C_{i;2} \; ,
            $$
            если $w^{(3)}_1 = 1$, $w^{(3)}_2 = 0$; и композицией элементов
            $$
                \frS_4 = C_{i;1} *  C_{i;2} *  \left( \compose_{w^{(3)}_j = 1, j \ne 1,2}{C_{1,2;j}} \right )
                    * C_{i;2} *  C_{i;1} \; ,
            $$
            если $w^{(3)}_1 = 0$, $w^{(3)}_2 = 0$.
    \end{enumerate}

    В итоге получим новую подстановку
    \begin{align*}
        &p^{(4)} = \left( p^{(3)} \right)^{h_4} =
        (\vv x^{(4)}, \vv y^{(4)}) \circ (\vv z^{(4)}, \vv w^{(4)})  \; , \\
        &\vv x^{(4)} = \langle 0, 0, 0 \ldots, 0 \rangle, \vv y^{(4)} = \langle 1, 0, 0, \ldots, 0 \rangle \; , \\
        &\vv z^{(4)} = \langle 0, 1, 0,\ldots, 0 \rangle, \vv w^{(4)} = \langle 1, 1, 0, \ldots, 0 \rangle \; .
    \end{align*}

    На последнем шаге применяем к $p^{(4)}$ действие сопряжением подстановкой $h_5$, задаваемой композицией элементов
    $\frS_5 = \compose_{3 \leqslant j \leqslant n}{N_j}$:
    $$
        g = \left( p^{(4)} \right)^{h_5} = (\langle 0,0,1,\ldots,1 \rangle, \langle 1,0,1,\ldots,1 \rangle) \circ
        (\langle 0,1,1,\ldots,1 \rangle, \langle 1,1,1,\ldots,1 \rangle) \; .
    $$

    Таким образом, верны следующие равенства:
    \begin{align*}
        & g = p^{h_1 \circ h_2 \circ h_3 \circ h_4 \circ h_5} \; , \\
        & p = g^{h_5^{-1} \circ h_4^{-1} \circ h_3^{-1} \circ h_2^{-1} \circ h_1^{-1}} \; , \\
        & p = g^{h_5 \circ h_4 \circ h_3 \circ h_2 \circ h_1} \; .
    \end{align*}

    Согласно утверждению~\ref{predicate_recursive_on_independent_product}, подстановка $g$
    задаётся композицией элементов $C_{2,3;1} * C_{n, \ldots,4;2} * C_{2,3;1} * C_{n, \ldots,4;2}$.
    Следовательно, пара независимых транспозиций $p = (\vv x, \vv y) \circ (\vv z, \vv w)$
    задаётся следующей композицией элементов:
    $$
        \left( \compose_{i = 1}^5 \frS_i \right ) * C_{2,3;1} * C_{n, \ldots,4;2} * C_{2,3;1} * C_{n, \ldots,4;2} *
        \left( \compose_{i = 5}^1 \frS_i \right ) \; ,
    $$
    где композиция элементов $\left( \compose_{i = 1}^5 \frS_i \right )$ задаёт подстановку
    $h_1 \circ h_2 \circ h_3 \circ h_4 \circ h_5$,
    а композиция элементов $\left( \compose_{i = 5}^1 \frS_i \right )$~--- подстановку
    $h_5 \circ h_4 \circ h_3 \circ h_2 \circ h_1$.

    Согласно утверждению~\ref{fast_realization_of_k_cnot}, элемент $C_{n, \ldots,4;2}$ может быть выражен
    через композицию элементов CNOT и 2-CNOT.

    В случае, когда $p$ является произведением зависимых транспозиций, мы можем выразить $p$ через
    произведение двух пар независимых транспозиций:
    $$
        p = (\vv x, \vv y) \circ (\vv x, \vv z) = ((\vv x, \vv y) \circ (\vv a, \vv b)) \circ
            ((\vv a, \vv b) \circ (\vv x, \vv z)) \; .
    $$
    Каждую из пар независимых транспозиций можно выразить через композицию элементов множества $\Omega_n^2$
    описанным выше способом.

    Таким образом, любая пара транспозиций $p = (\vv x_1, \vv y_1) \circ (\vv x_2, \vv y_2)$
    (из представления~\eqref{formula_even_perm_as_product_of_pairs} произвольной чётной подстановки $h \in A(\ZZ_2^n)$
    в виде произведения пар транспозиций) задаётся композицией элементов из множества $\Omega_n^2$.
    Следовательно, множество подстановок $S_{\Omega_n^2}$ порождает знакопеременную группу $A(\ZZ_2^n)$ при $n \geqslant 4$.
\end{proof}

\subsection{Схемы с дополнительной памятью}

\forceindent 
Как было показано в предыдущих разделах, обратимая схема с $n \geqslant 4$ входами, состоящая из элементов NOT и $k$-CNOT,
$ k < n - 1$, всегда реализует чётную подстановку на множестве $\ZZ_2^n$. Возникает вопрос: возможно ли реализовать
при помощи такой схемы произвольное булево отображение $f\colon \ZZ_2^n \to \ZZ_2^m$,
в общем случае не биективное?

Введём следующие отображения:
\begin{enumerate}

    \item
        \textit{Расширяющее} отображение $\phi_{n,n+k}\colon \ZZ_2^n \to \ZZ_2^{n+k}$ вида
        $$
            \phi_{n,n+k}( \langle x_1, \ldots, x_n \rangle ) = \langle x_1, \ldots, x_n, 0, \ldots, 0 \rangle \; .
        $$
    \item
        \textit{Редуцирующее} отображение $\psi_{n+k,n}^\pi\colon \ZZ_2^{n+k} \to \ZZ_2^n$ вида
        $$
            \psi_{n+k,n}^\pi( \langle x_1, \ldots, x_{n+k} \rangle ) =
            \langle x_{\pi(1)}, \ldots, x_{\pi(n)} \rangle \; ,
        $$
        где $\pi$~--- некоторая подстановка на множестве $\ZZ_{n+k}$.

\end{enumerate}

Рассмотрим произвольное булево отображение $f\colon \ZZ_2^n \to \ZZ_2^m$.
\begin{define}\label{def_scheme_with_additional_memory}
    Обратимая схема $\frS_g$ с $(n+q) \geqslant m$ входами, задающая булево преобразование $g\colon \ZZ_2^{n+q} \to \ZZ_2^{n+q}$,
    реализует отображение $f$ c использованием $q \geqslant 0$ дополнительных входов (дополнительной памяти),
    если существует такая подстановка $\pi \in S(\ZZ_{n+q})$, что
    $$
        \psi_{n+q,m}^\pi(g( \phi_{n,n+q}(\vv x))) = f(\vv x) \; ,
    $$
    где $\vv x \in \ZZ_2^n$, $f(\vv x) \in \ZZ_2^m$ (см. рис.~\ref{pic_scheme_realization}).
\end{define}
\noindent
Если в определении~\ref{def_scheme_with_additional_memory} количество дополнительных входов $q=0$,
то будем говорить, что схема реализует отображение $f$ \textit{без дополнительной памяти}.

\medskip
\Figure[ht]
    \centering
    \includegraphics[scale=1.2]{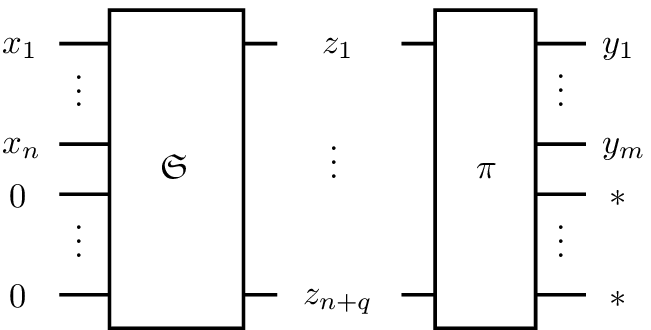}
    \caption
    {
        \small Обратимая схема $\frS$, реализующая булево отображение $f\colon \ZZ_2^n \to \ZZ_2^m$ с $q$
        дополнительными входами. Для всех $\vv x \in \ZZ_2^n$ верно равенство $f(\vv x) = \vv y$, где $\vv y \in \ZZ_2^m$.
    }\label{pic_scheme_realization}
\end{figure}

Отметим, что в данной терминологии выражения <<реализует отображение>> и <<задаёт отображение>> имеют разные значения:
если обратимая схема $\frS_g$ задаёт отображение $f$, то $g(\vv x) = f(\vv x)$ для всех входных значений $\vv x$.
Также из определения выше следует, что при $m > n$ не существует обратимой схемы без дополнительной памяти,
реализующей отображение $f\colon \ZZ_2^n \to \ZZ_2^m$.

Будем называть подстановку $\pi$ из определения~\ref{def_scheme_with_additional_memory} \textit{перестановкой выходов}
обратимой схемы.
\begin{define}
    Обратимая схема $\frS$ строго реализует заданное отображение, если её перестановка выходов является тождественной,
    и нестрого в противном случае.
\end{define}
\noindent
На рис.~\ref{pic_swap} показаны обратимые схемы, реализующие нестрого~(а) и строго~(б) преобразование
$f(\vv x) = \vv y$, где $\vv x, \vv y \in \ZZ_2^2$, $y_1 = x_2$, $y_2 = x_1$.

\medskip
\Figure[ht]
    \centering
    \includegraphics[scale=1.2]{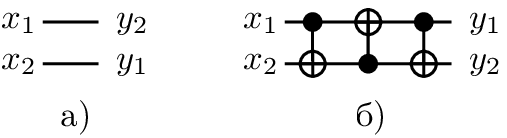}
    \caption
    {
        \small Обратимые схемы, реализующие нестрого~(а) и строго~(б) преобразование
        $f(\langle x_1, x_2 \rangle) = \langle x_2, x_1 \rangle$.
    }\label{pic_swap}
\end{figure}

Из рис.~\ref{pic_swap} следует, что из обратимой схемы $\frS^*$, нестрого реализующей заданное отображение $f$,
можно получить обратимую схему $\frS$, строго реализующую это же отображение, присоединив к выходам схемы $\frS^*$
композицию из элементов CNOT в количестве не более $3m$ штук. Следовательно, верно следующее соотношение:
$$
    L(\frS) \leqslant L(\frS^*) + 3m \; .
$$

Далее до конца этой главы все определения и утверждения будут формулироваться только для схем, строго реализующих заданное
отображение. Аналогичные определения и утверждения для схем, нестрого реализующих заданное отображение, можно получить,
перенумеровав выходы схемы в соответствии с перестановкой выходов. Подстановку $\pi$ в верхнем индексе редуцирующего
отображения $\psi_{n,m}^\pi$ будем опускать, т.\,к. для обратимой схемы, строго реализующей заданное отображение,
эта подстановка является тождественной.

\begin{define}
    Значимые входы схемы~--- входы, не являющиеся дополнительными.
    Значимые выходы схемы~--- выходы, значения на которых нужны для дальнейших вычислений.
\end{define}
\noindent
К примеру, на рис.~\ref{pic_scheme_realization} незначимые входы помечены символом 0, а незначимые выходы~--- символом $*$.

Будем считать, что обратимая схема из определения~\ref{def_scheme_with_additional_memory}
порождает \textit{вычислительный мусор} на незначимых выходах,
если для некоторого $\vv x \in \ZZ_2^n$ и $\vv y = g( \phi_{n,n+q}(\vv x))$,
$\vv y \in \ZZ_2^{n+q}$, выполняется неравенство
\begin{equation}
    \phi_{n,n+q}(\psi_{n+q,n}(\vv y)) \ne \vv y \label{formula_garbage_output_ne} \; .
\end{equation}
\noindent
Другими словами, вычислительный мусор~--- ненулевое значение на каком-либо незначимом выходе обратимой схемы,
когда значения на всех дополнительных входах этой схемы равны 0.

Если отображение $f$ инъективно, то существует такое булево отображение $f'\colon \ZZ_2^m \to \ZZ_2^n$,
что $f'(f(\vv x)) = \vv x$, $\vv x \in \ZZ_2^n$.
Пусть обратимая схема $\frS_g = \compose_{i = 1}^s {E_{f_i}}$ реализует отображение $f$ без дополнительной памяти
(с дополнительной памятью без порождения вычислительного мусора).
Тогда из определения обратимой схемы и определения~\ref{def_scheme_with_additional_memory} следует,
что схема $\frS'_{g'} = \compose_{i = s}^1 {E_{f_i}}$ реализует отображение $f'$ без дополнительной памяти
(с дополнительной памятью без порождения вычислительного мусора).
Однако если схема $\frS_g$ реализует отображение $f$ с дополнительной памятью и порождает вычислительный мусор,
то в общем случае неверно утверждение, что схема $\frS'_{g'}$ реализует отображение $f'$.
Поясним сказанное при помощи введённых отображений $\phi$ и $\psi$.
Пусть для некоторого $\vv x \in \ZZ_2^n$ и $\vv y = g( \phi_{n,n+q}(\vv x))$
выполняется неравенство~\eqref{formula_garbage_output_ne}. В этом случае верно равенство
$\psi_{n+q,n}(g'(\vv y)) = \vv x$.
Однако в общем случае
$$
    \psi_{n+q,n}(g'(\phi_{n,n+q}(\psi_{n+q,n}(\vv y)))) \ne \vv x \; ,
$$
а это равносильно утверждению, что схема $\frS'_{g'}$ в общем случае не реализует отображение $f'$.
Таким образом, условие отсутствия вычислительного мусора на незначимых выходах схемы \mbox{$\frS_g = \compose_{i = 1}^s {E_{f_i}}$}
является \textit{достаточным} (но не необходимым), чтобы композиция элементов $\compose_{i = s}^1 {E_{f_i}}$
задавала отображение $f'$. Из этого утверждения можно сделать два важных вывода.

\begin{predicate}
    Для заданного биективного булевого отображения $f\colon \ZZ_2^n \to \ZZ_2^n$ минимальная сложность
    обратимой схемы, реализующей преобразование $f^{-1}$ и состоящей из элементов множества $\Omega_n^2$,
    равна сложности обратимой схемы $\frS$, реализующей преобразование $f$ и состоящей из элементов множества $\Omega_n^2$,
    если (достаточное условие):
    \begin{enumerate}
        \item
            Схема $\frS$ реализует преобразование $f$ без дополнительной памяти
            или с дополнительной памятью без порождения вычислительного мусора.
        
        \item
            Схема $\frS$ имеет минимальную сложность среди всех обратимых схем,
            состоящих из элементов множества $\Omega_n^2$ и реализующих преобразование $f$.        
    \end{enumerate}
\end{predicate}
\noindent
Другими словами, построение обратимой схемы, реализующую прямое преобразование без дополнительной памяти
или с дополнительной памятью без порождения вычислительного мусора,
позволяет дать оценку сложности обратимой схемы, реализующей обратное преобразование.

\begin{predicate}
    Для заданного биективного булевого преобразования $f\colon \ZZ_2^n \to \ZZ_2^n$ сложность
    обратимой схемы, реализующей преобразование $f^{-1}$ и состоящей из элементов множества $\Omega_n^2$,
    может отличаться от сложности обратимой схемы $\frS$, реализующей преобразование $f$ и состоящей из элементов
    множества $\Omega_n^2$, только тогда, когда (необходимое условие):
    \begin{enumerate}
        \item
            Схема $\frS$ реализует преобразование $f$ с порождением вычислительного мусора.
        
        \item
            Схема $\frS$ имеет минимальную сложность среди всех обратимых схем,
            состоящих из элементов множества $\Omega_n^2$ и реализующих преобразование $f$.        
    \end{enumerate}
\end{predicate}

\medskip
Используя понятие дополнительной памяти, можно переформулировать основные леммы этой главы:
\begin{enumerate}
    \item
        Для любой подстановки $h \in S(\ZZ_2^n)$ при $n < 4$ существует задающая её обратимая схема
        без дополнительной памяти, состоящая из элементов множества $\Omega_n^2$ (Лемма~\ref{basis_on_s_n}).
    
    \item
        Для любой чётной подстановки $h \in A(\ZZ_2^n)$ при $n \geqslant 4$ существует задающая её обратимая схема
        без дополнительной памяти, состоящая из элементов множества $\Omega_n^2$ (Лемма~\ref{basis_on_a_n}).
    
    \item
        Для любой нечётной подстановки $h \in S(\ZZ_2^n)$ при $n \geqslant 4$ не существует реализующей её
        обратимой схемы без дополнительной памяти, состоящей из элементов множества $\Omega_n^2$
        (следствие~\ref{corollary_parity_of_scheme}).
\end{enumerate}

\begin{predicate}\label{predicate_scheme_with_additional_memory}
    Для любой нечётной подстановки $h \in S(\ZZ_2^n)$, $n \geqslant 4$, существует реализующая её обратимая схема
    с одним дополнительным входом, состоящая из элементов множества $\Omega_n^2$.
\end{predicate}
\begin{proof}
    Подстановке $h$ соответствует некоторое преобразование $f_h\colon \ZZ_2^n \to \ZZ_2^n$.
    Введём два множества: $\ZZ^{(0)} = \{\,\vv x \in \ZZ_2^{n+1}\mid x_{n+1} = 0\,\}$ и
    $\ZZ^{(1)} = \{\,\vv x \in \ZZ_2^{n+1}\mid x_{n+1} = 1\,\}$.
    Очевидно, что $\ZZ_2^{n+1} = \ZZ^{(0)} \cup \ZZ^{(1)}$,
    $\ZZ^{(0)} \cap \ZZ^{(1)} = \varnothing$.

    Зададим отображение $\phi\colon \ZZ_2^n \times \ZZ_2 \to \ZZ_2^{n+1}$ следующим образом:
    $$
        \phi(\langle x_1, \ldots, x_n \rangle, y) = \langle x_1, \ldots, x_n, y \rangle \; ,
    $$
    где $x_i, y \in \ZZ_2$.

    Зададим преобразование $g_i\colon \ZZ^{(i)} \to \ZZ^{(i)}$ следующим образом:
    $$
        g_i(\phi(\vv x, i)) = \phi(f_h(\vv x), i) \; ,
    $$
    где $\vv x \in \ZZ_2^n$, $i = 0$ или $1$.
    Подстановка $h_{g_i}$, задаваемая преобразованием $g_i$, принадлежит симметрической группе $S(\ZZ^{(i)})$.
    Очевидно, что при таком построении верно равенство
    $$
        h_{g_i}(\phi(\vv x, i)) = \phi(h(\vv x), i) \; ,
    $$
    где $\vv x \in \ZZ_2^n$. Следовательно, подстановки $h_{g_0}$, $h_{g_1}$ и $h$ имеют одинаковую цикловую структуру,
    а значит, и одинаковую чётность.

    Зададим преобразование $g\colon \ZZ_2^{n+1} \to \ZZ_2^{n+1}$ следующим образом:
    $$
        g(\langle x_1, \ldots, x_{n+1} \rangle) = \bar x_{n+1} \wedge g_0(\langle x_1, \ldots, x_n, 0 \rangle) \oplus
        x_{n+1} \wedge g_1(\langle x_1, \ldots, x_n, 1 \rangle) \; .
    $$
    Очевидно, что подстановка $h_g$, задаваемая преобразованием $g$,
    равна $h_g = h_{g_0} \circ h_{g_1} \Rightarrow$ она является чётной.

    Согласно Лемме~\ref{basis_on_a_n}, для подстановки $h_g \in A(\ZZ_2^{n+1})$ существует задающая её обратимая схема
    с $(n+1)$ входами, состоящая из элементов множества $\Omega_n^2$. А по определению~\ref{def_scheme_with_additional_memory}
    эта же схема будет реализовывать преобразование $f_h$, соответствующее нечётной подстановке $h$.
    Вход схемы с номером $(n+1)$ и будет тем самым одним дополнительным входом из условия.
\end{proof}

В этом доказательстве показан лишь один из способов построить обратимую схему $\frS_g$,
реализующую преобразование, соответствующее нечётной подстановке $h \in S(\ZZ_2^n)$, с использованием одного дополнительного
входа. Сложность схемы $\frS_g$ будет зависеть в том числе и от того, как задано преобразование $g$.
В последующих главах будет изучаться вопрос зависимости сложности и глубины обратимой схемы от вида преобразования $g$
и от количества используемых дополнительных входов.

\bigskip
В заключение данной главы рассмотрим вопрос реализации сюръективного отображения $f\colon \ZZ_2^n \to \ZZ_2^m$, $m < n$.
Введём множество $A_{\vv y}=\{\,\vv x \in \ZZ_2^n\mid f(\vv x) = \vv y \in \ZZ_2^m\,\}$.
Обозначим $d = \max\limits_{\vv y}{|A_{\vv y}|}$.

\begin{predicate}\label{predicate_max_additional_outputs}
    Не существует обратимой схемы, состоящей из элементов множества $\Omega_n^2$, реализующей сюръективное отображение $f$
    с $q < \lceil \log_2 d \rceil$ дополнительными входами.
\end{predicate}
\begin{proof}
    Докажем от противного.
    Пусть существует обратимая схема $\frS_g$, состоящая из элементов множества $\Omega_n^2$
    и реализующая сюръективное отображение $f$ с $q < \lceil \log_2 d \rceil$ дополнительными входами.

    Существует множество $A_{\vv y}$, мощность которого равна $d$: $|A_{\vv y}| = d$, $\vv y \in \ZZ_2^m$.
    Определим множество $A \subseteq \ZZ_2^{m+q}$ следующим образом:
    $$
        A = \{\,\vv x = \langle x_1, \ldots, x_n, 0, \ldots, 0 \rangle \in \ZZ_2^{m+q} \mid
            \langle x_1, \ldots, x_n \rangle \in A_{\vv y}\,\} \; .
    $$

    Рассмотрим булево преобразование $g\colon \ZZ_2^{m+q} \to \ZZ_2^{m+q}$,
    задаваемое схемой $\frS_g$.
    Для всех $\vv x \in A$ верно следующее равенство:
    \begin{align*}
        &g(\vv x) = \langle y_1, \ldots, y_m, z_1, \ldots, z_q \rangle \; , \\
        &\vv y = \langle y_1, \ldots, y_m \rangle, z_i \in \ZZ_2 \; .
    \end{align*}
    Отсюда следует, что мощность множества значений преобразования $g$ на множестве $A$
    $$
        |g[A]| \leqslant 2^q \; .
    $$
    При этом $|A| = |A_{\vv y}| = d > 2^q \Rightarrow$ на множестве $A$ преобразование $g$ сюръективно
    $\Rightarrow g$ не биективно, а значит схема $\frS_g$ не существует.

    Пришли к противоречию, следовательно, доказываемое утверждение верно.
\end{proof}

\sectionenumerated{Cинтез обратимых схем}

\forceindent В данной главе будут рассмотрены основные переборные и непереборные алгоритмы синтеза обратимых схем,
состоящих из элементов NOT и $k$-CNOT; по возможности будет приведено их краткое описание и основные характеристики.
Далее будет приведена сравнительная таблица алгоритмов синтеза по таким основным характеристикам, как временн\'{а}я сложность
алгоритма, требуемое для работы алгоритма количество памяти, количество дополнительных входов синтезированной схемы
и её сложность.
В конце главы будет представлен новый, основанный на использовании теории групп подстановок алгоритм синтеза обратимой схемы,
состоящей из элементов NOT, CNOT и 2-CNOT, реализующей заданную чётную подстановку $h \in A(\ZZ_2^n)$ при $n > 3$
со сложностью $O(n 2^m)$, где $m = \lceil \log_2|M| \rceil$,
$M = \{\,\vv x\mid h(\vv x) \ne \vv x\,\}$~--- множество подвижных точек преобразования $h$.

В настоящей работе не будут рассматриваться алгоритмы синтеза обратимых схем, реализующих частично заданные булевы функции
(например, алгоритм, описанный в работе~\cite{wave_synthesis}), и алгоритмы синтеза, использующие отличные от NOT и $k$-CNOT
обратимые элементы (к примеру, \gate{} \textit{Kerntopf} в работе~\cite{regularity_symmetry}).
Последнее ограничение связано с тем, что задаваемая любым обратимым \gate{} подстановка на множестве $\ZZ_2^n$ может быть
представлена в виде произведения транспозиций, задаваемых элементами NOT и $k$-CNOT
(см. доказательство Лемм~\ref{basis_on_s_n} и~\ref{basis_on_a_n}).

Сперва рассмотрим алгоритмы, главной задачей которых является снижение сложности обратимых схем.
Несмотря на то, что алгоритмы такого типа не предназначены для синтеза схем, они нередко являются составной частью более
сложных алгоритмов синтеза и применяются на последнем этапе для снижения сложности синтезированной схемы.

\subsection{Алгоритмы снижения сложности схем}
\label{subsection_optimization_algorithms}
\refstepcounter{synthalgchapter}

\forceindent В работе~\cite{iwama_transform_rules} было описано множество простых, но нетривиальных правил замен
композиций обратимых \gate{} определённого вида на эквивалентные им композиции \gate{} в обратимых схемах,
состоящих из элементов NOT и $k$-CNOT.
Также авторами было показано, что приведённое множество замен является полным, т.\,е. для любых двух эквивалентных схем
$\frS_1$ и $\frS_2$, задающих одно и то же булево преобразование, существует последовательность из предложенных замен,
которая приводит $\frS_1$ к $\frS_2$. Там же был предложен алгоритм~\nextalg{alg_iwama} снижения сложности
обратимых схем с применением данного множества замен.

Алгоритм~\algref{alg_iwama} работает только с обратимыми схемами $\frS_f$,
задающими преобразование $f\colon \ZZ_2^n \to \ZZ_2^n$ следующего вида:
\begin{equation}
    f(\langle x_1, \ldots, x_n\rangle) =  \langle x_1, \ldots, x_{n-1},
        x_n \oplus \phi(\langle x_1, \ldots, x_{n-1}\rangle)  \rangle,
    \text{ где }\phi\colon \ZZ_2^{n-1} \to \ZZ_2 \; .
    \label{formula_transf_for_bool_func_with_one_output}
\end{equation}

Из формулы~\eqref{formula_transf_for_bool_func_with_one_output} следует, что алгоритм~\algref{alg_iwama} не применим
к обратимым схемам, задающим преобразование произвольного вида.
Тем не менее, при разработке данного алгоритма авторами был рассмотрен вопрос эквивалентной замены композиций \gate{}
В большинстве случаев такая замена требуется, когда необходимо поменять два обратимых \gate{} местами.
Все эти замены или часть из них в том или ином виде используются во многих алгоритмах синтеза и оптимизации схем из
логических элементов NOT и $k$-CNOT, рассматриваемых далее.

Приведём здесь предложенные авторами правила эквивалентной замены композиций обратимых \gate{}
Композиция элементов $E(t_1,I_1) *E(t_2,I_2)$ может быть:
\begin{itemize}
    \item[\nextopt\label{opt_duplicates})]
        исключена из схемы без изменения результирующего преобразования,
        если $I_1 = I_2$, $t_1 = t_2$ (случай двух одинаковых \gate);

    \item[\nextopt\label{opt_full_independent})]
        заменена на $E(t_2,I_2)*E(t_1,I_1)$,
        если $t_1 \notin I_2$, $t_2 \notin I_1$;

    \item[\nextopt\label{opt_left_target_in_rigth_controls})]
        заменена на
        $E(t_2,I_2)*E(t_1,I_1)*E(t_1,I_1 \cup I_2 \setminus \{\,t_2\,\})$, если $t_1 \notin I_2$, $t_2 \in I_1$;

    \item[\nextopt\label{opt_right_target_in_left_controls})]
        заменена на
        $E(t_2,I_1 \cup I_2 \setminus \{\,t_1\,\})*E(t_2,I_2)*E(t_1,I_1)$, если $t_1 \in I_2$, $t_2 \notin I_1$;

    \item[\nextopt\label{opt_add_controls})]
        заменена на $E(t_1,I_1)*E(t_2,\{\,t_1\,\} \cup \{\,I_2 \setminus I_1\,\})$,
        если $t_1 \notin I_2$, $t_2 \notin I_1$, $I_1 \subseteq I_2$ и $x_{t_1} \equiv 0$
        (значение на $t_1$-м входе элемента $E(t_1,I_1)$) для всех возможных значений на входах схемы;
\end{itemize}

Ещё одно правило эквивалентной замены \nextopt\label{opt_null_control}: элемент $E(j,I)$ можно исключить из схемы
без изменения результирующего преобразования, если существует такое значение $i \in I$, что $x_i \equiv 0$
(значение на $i$-м входе данного элемента) для всех входных значений схемы.

В главе~\ref{section_complexity_minimization} эти правила эквивалентных замен будут расширены для случая
инвертированных контролирующих входов (для элементов $E(t, I, J)$).

\subsection{Алгоритмы полного перебора}
\refstepcounter{synthalgchapter}

\forceindent
В работе~\cite{iterative_compositions} был представлен переборный алгоритм~\nextalg{alg_khlopotine} синтеза обратимой схемы,
дающей на одном из выходов результат заданной булевой функции над входами. Булево преобразование, задаваемое такой схемой,
с точностью до перенумерации входов описано в разделе алгоритма~\algref{alg_iwama} на
с.~\pageref{formula_transf_for_bool_func_with_one_output}. 
Основной задачей данного алгоритма является уменьшение количества дополнительных входов синтезированной схемы.

Алгоритм~\algref{alg_khlopotine} основан на <<весовых функциях>> из теории информации (терминология авторов)
и предсказании наилучшего решения на один шаг вперёд.
Рабочей единицей алгоритма является каскад из $N$ обратимых \gate{} ($N$~--- параметр алгоритма,
задаётся вручную). Авторами вводится понятие \textit{наилучшего элемента}~--- \gate,
дающего максимальное количество совпадений
минимальных конъюктивных форм (\textit{минтерм}) полученной булевой фунцкции и заданной функции.
Вкратце принцип работы алгоритма можно описать следующим образом:
\begin{enumerate}
    \item \label{alg_khlopotine_first_step}
        Получить каскад из $N$ наилучших элементов (\textit{текущий} каскад).

    \item
        Для каждого из \gate{} текущего каскада ищутся $N$ наилучших элементов (\textit{следующий} каскад).
        Среди всех \gate{} из следующих каскадов ищется наилучший.

    \item \label{alg_khlopotine_last_step}
        \gate{} из текущего каскада, соответствующий найденному на предыдущем шаге наилучшему элементу из следующих каскадов,
        включается в схему; следующий каскад включённого \gate{} становится текущим.

    \item
        Шаги \ref{alg_khlopotine_first_step}--\ref{alg_khlopotine_last_step} повторяются,
        пока синтезированная обратимая схема не реализует заданную функцию.
\end{enumerate}

Авторами алгоритма не оговаривается, как выбрать параметр $N$, как зависит время синтеза от величины $N$
и всегда ли алгоритм способен синтезировать схему для любой заданной функции.
Также остаётся неясным, какова сложность синтезированной схемы. Предположительно,
алгоритму~\algref{alg_khlopotine} для синтеза схемы требуется объём памяти, намного превосходящий $N^2$
(хранение всех возможных пар каскадов), а временн\'{а}я сложность намного превосходит $(N^2)^l$, где $l$~--- сложность
синтезированной схемы.

\bigskip
\bigskip
В работе~\cite{shende_synthesis} рассматривается вопрос синтеза обратимых схем, состоящих из элементов NOT и $k$-CNOT
и задающих какую-либо чётную подстановку. Авторами данной работы конструктивно доказывается, что любая чётная подстановка
может быть задана обратимой схемой без дополнительной памяти, состоящей из элементов NOT, CNOT и 2-CNOT,
а любая нечётная~--- обратимой схемой с одним дополнительным входом, состоящей из элементов NOT, CNOT и 2-CNOT.

Также авторами данной работы предложен алгоритм~\nextalg{alg_shende} синтеза обратимых схем с \textit{минимальной} 
сложностью, состоящих из элементов NOT, CNOT и 2-CNOT.
Данный алгоритм основан на \textit{поиске в глубину с итеративным углублением}. Для его работы необходимо построить библиотеку
минимальных обратимых схем сложности $L \leqslant m$ и задаваемых ими преобразований.
Синтез схемы, задающей требуемое булево преобразование, осуществляется следующим образом:
\begin{itemize}
    \item
        проверяется наличие заданного преобразования в библиотеке минимальных схем; в случае его присутствия,
        из библиотеки извлекается соответствующая ему минимальная схема;
    
    \item
        в случае отсутствия схемы в библиотеке, начинается поиск в глубину с шагом $k$ по количеству \gate{} в схеме,
        $1 \leqslant k \leqslant m$:
        \begin{itemize}
            \item
                из библиотеки извлекается очередная минимальная схема сложности $k$;
            \item
                к выходам этой схемы подключаются всеми возможными способами элементы NOT, CNOT и 2-CNOT;
            \item
                сверяется требуемое преобразование с преобразованием, задаваемым полученной схемой;
                в случае совпадения этих преобразований все подсхемы полученной схемы заменяются на минимальные из библиотеки
                (если существуют).
        \end{itemize}
\end{itemize}
\noindent
Так же алгоритмом предусмотрена принудительная остановка поиска в глубину,
если сложность синтезированной схемы начинает превышать заранее заданную величину. 

Таким образом, алгоритм~\algref{alg_shende}, благодаря библиотеке минимальных схем, является более быстрым по сравнению
с алгоритмом полного перебора. Однако данный алгоритм не применим на практике для синтеза сложных схем, т.\,к. с ростом
количества входов схемы и сложности синтезируемой схемы экспоненциально растёт размер библиотеки минимальных схем
и время синтеза. Вторым существенным ограничением данного алгоритма является то, что для некоторых заданных преобразований
он не сможет синтезировать схему из-за принудительной остановки поиска в глубину, описанного выше.

Временн\'{а}я сложность алгоритма~\algref{alg_shende} намного превышает
величину $N^l$, где $l$~--- сложность синтезированной схемы, $N$~--- количество схем в библиотеке минимальных схем;
требуемый для синтеза объём памяти составляет порядка $O(mN)$ (хранение библиотеки минимальных схем), где $m$~--- максимальная
сложность обратимой схемы из библиотеки.

\bigskip
\bigskip
В работе~\cite{fast_synthesis_exact_minimal} был представлен быстрый алгоритм~\nextalg{alg_fast_exact} синтеза обратимых схем
с минимальной сложностью, состоящих из \gate{} заранее сформированной библиотеки элементов.
Данный алгоритм использует математическое программное обеспечение \textit{GAP} (\textit{Groups, Algorithms, Programming}),
представляющее собой пакет программ для вычислительной дискретной алгебры.
Авторами показывается, что задачу синтеза обратимой схемы можно свести к некоторой задаче из теории групп подстановок,
которую и решает \textit{GAP}.

Алгоритм~\algref{alg_fast_exact}, также как и алгоритм~\algref{alg_shende}, является переборным, однако применение
теории групп подстановок и специализированного ПО позволяет решать задачу синтеза обратимых схем,
имеющих минимальную сложность, за приемлемое время при количестве входов схемы $n=3$.
К сожалению, при б\'{о}льшем значении $n$ данный алгоритм не применим из-за чрезмерно продолжительного времени синтеза.
Также не даётся никаких оценок на требуемый алгоритмом объём памяти для синтеза схемы.

\subsection{Непереборные алгоритмы}
\refstepcounter{synthalgchapter}

\forceindent
В работах~\cite{miller_spectral} и~\cite{miller_spectral_two_place} был представлен непереборный
алгоритм~\nextalg{alg_miller_spectral} синтеза обратимых схем с близкой
к минимальной сложностью, использующий спектральный метод Радемахера-Уолша.

Для заданной функции $f(x_1, \ldots, x_n)$ спектр Радемахера-Уолша $\vv R = \vv T^n \bar f$,
где $\bar f$~--- столбец значений функции $f$, $\vv T^n$~--- матрица Адамара:
\begin{align*}
    \vv T^0 &=
    \left [ \begin{matrix}
        1
    \end{matrix} \right ] \; , \\
    \vv T^n &=
    \left [ \begin{matrix}
        \vv T^{n - 1} & \vv T^{n - 1} \\
        \vv T^{n - 1} & -\vv T^{n - 1}
    \end{matrix} \right ] \; .
\end{align*}

В качестве базовой меры сложности $C(f)$ функции $f$ авторами используется количество смежных нулей и смежных единиц
на карте Карно. Как было показано в работе~\cite{hurst}, значение $C(f)$ равно
$$
    C(f) = \frac{1}{2}\left( n2^n - \frac{1}{2^{n-2}}\sum_{v = 0}^{2^n - 1} w(\vv v) \vv r_v^2 \right) \; ,
$$
где $w(\vv v)$~--- вес двоичного вектора $\vv v$, $\vv r_v$~--- $v$-я координата спектра $\vv R$.

В качестве основной меры сложности функции $f$ авторы используют величину $D(f) = n2^n NZ(\vv R) + C(f)$,
где $NZ(\vv R)$~--- количество нулевых коэффициентов в спектре $\vv R$.

Сам алгоритм синтеза~\algref{alg_miller_spectral} работает следующим образом. Для заданного преобразования
$f\colon \ZZ_2^n \to \ZZ_2^n$ строится система выходных функций $f_i(x_1, \ldots, x_n)$ и их спектров
$\vv R_i$, $ 1 \leqslant i \leqslant n$. Для каждого спектра $\vv R_i$ запускается алгоритм поиска следующего \gate:
\begin{enumerate}
    \item
        Для каждого из ${\binom{n}{1}}(n-1)$ всех возможных элементов $C_{i_1;j}$ рассмотреть изменение значения $D(f_i)$
        и выбрать тот \gate, для которого это изменение будет максимальным положительным.

    \item
        Если на предыдущем шаге не был выбран \gate, для каждого из ${\binom{n}{2}}(n-2)$ всех возможных элементов
        $C_{i_1,i_2;j}$ рассмотреть изменение значения $D(f_i)$ и выбрать тот \gate, для которого это изменение будет
        максимальным положительным.

    \item
        Если на предыдущем шаге не был выбран \gate, для каждого из ${\binom{n}{3}}(n-3)$ всех возможных элементов
        $C_{i_1,i_2,i_3;j}$ рассмотреть изменение значения $D(f_i)$ и выбрать тот \gate, для которого это изменение будет
        максимальным положительным.

    \item
        Если ни один \gate{} не выбран на предыдущих шагах, алгоритм заканчивается с ошибкой.
        Иначе $\vv R'_j = \vv R_j$, $j \ne i$, где $x_i$~--- переменная, изменяемая выбранным \gate{}
        Новое значение $\vv R'_i$ вычисляется на основе выбранного \gate{}
\end{enumerate}

Из данного описания не ясно, почему алгоритм может завершиться с ошибкой на последнем шаге.
Если ошибки можно избежать, рассматривая элементы $k$-CNOT с б\'{о}льшим количеством контролирующих входов, то время синтеза схемы
будет расти экспоненциально и в случае рассмотрения всех возможных элементов $k$-CNOT составит $O(2^{nl})$,
где $l$~--- сложность синтезированной схемы.
Для обратимых схем с $l \sim n$ время синтеза уже будет порядка $O \left( 2^{n^2} \right )$.
Требуемый для синтеза объём памяти составляет порядка $O(n2^n)$ (хранение спектров $n$ функций).

\bigskip
\bigskip
В работе~\cite{maslov_rm_synthesis} был описан похожий алгоритм синтеза~\nextalg{alg_maslov_rm},
использующий спектр Рида-Маллера (вектор коэффициентов многочлена Жигалкина).

Для заданной функции $f(x_1, \ldots, x_n)$ спектр Рида-Маллера $\vv {Rm}(f) = \vv M^n \bar f$,
где $\bar f$~--- столбец значений функции $f$, $\vv M^n$~--- матрица следующего вида:
\begin{align*}
    \vv M^0 &=
    \left [ \begin{matrix}
        1
    \end{matrix} \right ] \; , \\
    \vv M^n &=
    \left [ \begin{matrix}
        \vv M^{n - 1} & 0 \\
        \vv M^{n - 1} & \vv M^{n - 1}
    \end{matrix} \right ] \; .
\end{align*}

В качестве базовой меры сложности $C(f)$ функции $f$ авторами используется количество различных коэффициентов в спектре
$\vv {Rm}(f)$ и в спектре Рида-Маллера тождественной функции. Сам алгоритм синтеза можно описать следующим образом:
имея таблицу истинности для заданного булева преобразования $F\colon \ZZ_2^n \to \ZZ_2^n$,
модифицировать каждую строку этой таблицы, начиная с первой, таким образом, чтобы она соответствовала строке в таблице
истинности тождественного булева преобразования. При этом каждой такой модификации строк ставится в соответствие
некоторая композиция обратимых \gate{} из множества $\Omega_n^2$.

Авторами доказывается, что алгоритм~\algref{alg_maslov_rm} всегда синтезирует обратимую схему для любого
заданного булева преобразования. Временн\'{а}я сложность алгоритма равна $O(n2^n)$ (просмотр всей таблицы истинности),
сложность синтезированной схемы $\frS$ равна $L(\frS) \lesssim 5n2^n$,
требуемый для синтеза объём памяти составляет порядка $O(n2^n)$ (хранение таблицы истинности).

\bigskip
\bigskip
В работе~\cite{miller_transform_based} был представлен непереборный алгоритм~\nextalg{alg_transform_based} синтеза обратимых
схем, состоящих из элементов NOT и $k$-CNOT. В качестве входа алгоритм принимает таблицу истинности для заданного обратимого
преобразования $f\colon \ZZ_2^n \to \ZZ_2^n$. Его работу можно описать следующим образом
($\vv 0 = \langle 0, \ldots, 0 \rangle$~--- нулевой вектор):
\begin{enumerate}
    \item
        Если $f(\vv 0) = \vv y \ne \vv 0$, то для каждого $y_j = 1$ добавить в схему элемент
        $N_j$ и инвертировать в таблице истинности $j$-й выходной столбец.

    \item
        Для каждого ненулевого вектора $\vv x \ne \vv 0$ такого,
        что $f(\vv x) = \vv y$, $\vv y \ne \vv x$:
        \begin{itemize}
            \item
                построить множество индексов $I_x = \{\,i\mid x_i = 1\,\}$;
            \item
                построить множество индексов $I_y = \{\,i\mid y_i = 1\,\}$;
            \item
                построить множество индексов $J_p = \{\,i\mid x_i = 1, y_i = 0\,\}$;
            \item
                построить множество индексов $J_q = \{\,i\mid x_i = 0, y_i = 1\,\}$;
            \item
                для каждого индекса $p \in J_p$ добавить в схему элемент $C_{I_y;p}$ и инвертировать соответствующие значения
                в таблице истинности;
            \item
                для каждого индекса $q \in J_q$ добавить в схему элемент $C_{I_x;q}$ и инвертировать соответствующие значения
                в таблице истинности;
        \end{itemize}
            
        В итоге для каждого значения вектора $\vv x$ получаем на каждом шаге функцию $f$ такую,
        что $f(\vv x') = \vv x'$ для всех векторов $\vv x'$, удовлетворяющих условию
        $$
            \sum_{k = 1}^n {x'_k 2^{k-1}} \leqslant \sum_{k = 1}^n {x_k 2^{k-1}} \; .
        $$
\end{enumerate}

В конце работы алгоритма $f$ представляет собой тождественное преобразование. Расположив \gate{} синтезированной схемы
в обратном порядке, можно получить обратимую схему, задающую преобразование $f$.

Время работы алгоритма составляет порядка $O(n2^n)$, результат синтеза гарантирован. Сложность синтезированной схемы,
по словам авторов, не превосходит $(n-1)2^n + 1$, для любого значения $n$ можно построить таблицу истинности таким образом,
что сложность синтезированной схемы будет равна $(n-1)2^n + 1$.
Требуемый для синтеза объём памяти составляет порядка $O(n2^n)$ (хранение таблицы истинности).

Для снижения сложности синтезированной схемы авторы предлагают четыре способа оптимизации алгоритма:
\begin{itemize}
    \item
        использование перестановки выходов схемы для минимизации суммы расстояний Хемминга для всех пар вход/выход
        (в этом случае, правда, синтезированная схема уже не задаёт искомое преобразование, а реализует его; см.~определение
        на с.~\pageref{def_scheme_with_additional_memory});

    \item
        уменьшение количества контролирующих входов получаемых элементов $k$-CNOT на каждом шаге алгоритма;
        критерием выбора этого количества является минимизация суммы расстояний Хемминга для всех пар вход/выход;

    \item
        проход не только от начала таблицы истинности преобразования к её концу, но и от её конца к началу на каждом шаге
        алгоритма; критерий выбора наилучшего прохода определяется также, как и в предыдущем пункте;

    \item
        использование таблицы эквивалентных замен композиций \gate{}
\end{itemize}

По заверениям авторов, алгоритм~\algref{alg_transform_based} с учётом всех способов его оптимизации
синтезирует схему со сложностью, близкой к минимальной, если количество входов схемы равно трём;
при большем количестве входов схемы алгоритм даёт очень хороший результат синтеза.
Тем не менее многие из перечисленных способов оптимизации алгоритма имеют существенный недостаток.
Так, к примеру, использование перестановки выходов (первый способ) влечёт за собой необходимость рассмотрения $n!$ подстановок,
что для очень больших значений $n$ потребует колоссальных временн\'{ы}х трудозатрат.
Во-вторых, уменьшение количества контролирующих входов элементов $k$-CNOT на каждом шаге алгоритма (второй способ)
влечёт за собой необходимость рассмотрения порядка $O(2^n)$ \gate, что в сумме на весь алгоритм может дать порядка
$O(2^{2n})$ операций.
И наконец, последний способ оптимизации требует как большое количество памяти для хранения таблицы эквивалентных замен
композиций \gate, так и большое время поиска этих композиций в синтезированной схеме.
В итоге, уже для схемы с 7-ю входами, содержащей 12 \gate, суммарное время синтеза
\label{sythesis_time_for_alg_transform_based} составляет порядка двух секунд~\cite{miller_transform_based}.

Ещё одним существенным недостатком алгоритма~\algref{alg_transform_based} является необходимость рассматривать таблицу из
$2^n$ значений, даже если эта таблица задаёт всего одну транспозицию.
Из-за этого при больших значениях $n$ данный алгоритм уже может быть не применим на практике.
В работе~\cite{alhagi_large_functions} было предложено усовершенствование данного алгоритма для больших значений $n$,
которое заключается в переупорядочивании строк в таблице. Это позволяет избавиться от хранения таблицы,
но не от её рассмотрения.

\bigskip
\bigskip
Похожий алгоритм~\nextalg{alg_novel} синтеза обратимых схем был представлен в работе~\cite{saeedi_novel}.
Как и в предыдущем случае, алгоритм принимает на вход таблицу истинности для заданного обратимого преобразования
$f\colon \ZZ_2^n \to \ZZ_2^n$. Однако синтез схемы происходит упорядочиванием в таблице истинности
не минтерм целиком (минимальных конъюктивных форм), а их разрядов. Псевдокод алгоритма следующий:
\vspace{\bigskipamount}\\
\phantom{}\quad \textbf{Вход}:
    таблица истинности для обратимого преобразования $f(\langle x_1, \ldots, x_n \rangle) = \langle y_1, \ldots, y_n \rangle$.\\
\phantom{}\quad \textbf{Выход}:
    обратимая схема из элементов $k$-CNOT, задающая преобразование $f$.\\
\phantom{}\quad \textbf{Обозначения}:
    $i$-я входная (выходная) переменная $j$-й минтермы обозначается $x_{i,mj}$ ($y_{i,mj}$);
    $i$-я минтерма $j$-й входной (выходной) переменной обозначается $m_{i,xj}$ ($m_{i,yj}$).
\\
\texttt{\small
    \phantom{}\quad $\mathtt{i = 1}$\\
    \phantom{}\quad повторить:\\
    \phantom{}\qquad пометить все $\mathtt{2^n}$ минтерм как непосещённые\\
    \phantom{}\qquad для каждой минтермы $\mathtt{m_j}$, $\mathtt{j = 1, \ldots, 2^n}$ делать:\\
    \phantom{}\quad\qquad если минтерма $\mathtt{m_j}$ не помечена, как посещённая, тогда:\\
    \phantom{}\qquad\qquad если $\mathtt{y_{i,mj} \ne x_{i,mj}}$, тогда:\\
    \phantom{}\quad\qquad\qquad начало блока\\
    \phantom{}\qquad\qquad\qquad пометить минтерму $\mathtt{m_{j,yi}}$, как посещённую\\
    \phantom{}\qquad\qquad\qquad найти минтерму $\mathtt{m_{k,yi}}$, отличающуюся от
        $\mathtt{m_{j,yi}}$ в $\mathtt{i}$-й переменной\\
    \phantom{}\qquad\qquad\qquad если $\mathtt{m_{k,yi}}$ находится ниже $\mathtt{m_{j,yi}}$, тогда:\\
    \phantom{}\quad\qquad\qquad\qquad поменять местами $\mathtt{m_{j,yi}}$ и $\mathtt{m_{k,yi}}$
        (получим $\mathtt{y_{i,mj} = x_{i,mj}}$)\\
    \phantom{}\quad\qquad\qquad\qquad пометить минтерму $\mathtt{m_{k,yi}}$, как посещённую\\
    \phantom{}\qquad\qquad\qquad иначе ($\mathtt{m_{k,yi}}$ находится выше $\mathtt{m_{j,yi}}$):\\
    \phantom{}\quad\qquad\qquad\qquad если $\mathtt{y_{p,mk} \ne y_{p,mk}}$, $\mathtt{p = 1, \ldots, n}$ хотя бы для одного
        $\mathtt{p}$, тогда:\\
    \phantom{}\qquad\qquad\qquad\qquad поменять местами $\mathtt{m_{j,yi}}$ и $\mathtt{m_{k,yi}}$
        (получим $\mathtt{y_{i,mj} = x_{i,mj}}$)\\
    \phantom{}\qquad\qquad\qquad\qquad пометить минтерму $\mathtt{m_{k,yi}}$, как посещённую\\
    \phantom{}\quad\qquad\qquad конец блока\\
    \phantom{}\qquad получить элементы $\mathtt{k}$-CNOT для произведённых перестановок минтерм\\
    \phantom{}\qquad $\mathtt{i = i+1 \text{ } mod \text{ } n}$\\
    \phantom{}\quad пока не достигнуто условие $\mathtt{y_i = x_i}$ для всех $\mathtt{i \in (1, \ldots, n)}$
}

\bigskip
Как видно из этого описания, алгоритм~\algref{alg_novel} также, как и алгоритм~\algref{alg_transform_based},
оперирует таблицей истинности, содержащей $2^n$ строк. Следовательно, при больших значениях $n$ данный алгоритм также может
быть не применим на практике.
Сложность $l$ синтезированной алгоритмом~\algref{alg_novel} схемы, по заверениям авторов, близка к минимальной;
временн\'{а}я сложность алгоритма равна $O(l 2^n)$.
Требуемый для синтеза объём памяти составляет порядка $O(n2^n)$ (хранение таблицы истинности).

\bigskip
\bigskip
В работе~\cite{group_based} авторами конструктивно доказывается, что любая чётная подстановка $h \in A(\ZZ_2^n)$ при $n > 3$
может быть задана обратимой схемой, состоящей из элементов NOT и 2-CNOT, без использования дополнительных входов.
На основе этого доказательства авторами также был разработан алгоритм~\nextalg{alg_group_based} синтеза обратимой схемы
с $n > 3$ входами, задающей какую-либо чётную подстановку $h \in A(\ZZ_2^n)$. Данный алгоритм основан на теории
групп подстановок.
Суть алгоритма следующая:
\begin{enumerate}
    \item
        Подстановка $h \in A(\ZZ_2^n)$ представляется в виде произведения циклов длины 3:
        $$
            h = \mcirc_i (u,s,t)_i \; .
        $$

    \item
        Каждый цикл $(u,s,t)_i$ представляется в виде произведения циклов длины 3 специального вида,
        называемых \textit{соседними} (терминология авторов):
        $$
            (u,s,t) = \mcirc_j (a_1,a_2,a_3)_j \; .
        $$

    \item
        Каждый \textit{соседний} цикл $(a_1,a_2,a_3)_j$ задаётся предвычисленной композицией элементов NOT (до $n-2$ штук)
        и либо элементов 2-CNOT ($3 \cdot 2^{n-4} - 2$ штук), либо одного элемента $(n-2)$-CNOT.
\end{enumerate}

Как видно из данного описания, алгоритм~\algref{alg_group_based} принимает на вход не таблицу истинности, а подстановку.
Следовательно, в случае если подстановка короткая, отпадает необходимость просматривать и/или хранить всю таблицу истинности
из $2^n$ значений. И только если в подстановке участвуют все $2^n$ двоичных векторов длины $n$, то временн\'{ы}е затраты
и затраты на память становятся такими же, как и для алгоритмов, работающих с таблицей истинности
(\algref{alg_transform_based} и~\algref{alg_novel}).
По словам авторов, временн\'{а}я сложность алгоритма~\algref{alg_group_based} в худшем случае равна $\frac{10}{3} n^2 2^n$,
верхняя оценка сложности синтезированной схемы равна
$\left (n+\lfloor \frac{n}{3} \rfloor\right )\left (3 \cdot 2^{2n-3} - 2^{n+2}\right )$
и $n\left (n+\lfloor \frac{n}{3} \rfloor\right )2^{n+2}$ для 2-CNOT и NOT соответственно;
требуемый для синтеза объём памяти зависит от вида входной подстановки, но не превышает по порядку $O(n2^n)$
(хранение всех элементов подстановки).
Отметим, что авторы данной работы рассматривали разложение элемента ($n-2$)-CNOT в композицию ($3 \cdot 2^{n-4} - 2$)
элементов 2-CNOT по формуле~\ref{formula_k_CNOT_recursive}.
Однако если воспользоваться утверждением~\ref{fast_realization_of_k_cnot}, то тогда верхняя оценка количества обратимых
элементов 2-CNOT снизится до $(n-5)\left (n+\lfloor \frac{n}{3} \rfloor\right )2^{n+4}$.

Достоинством алгоритма~\algref{alg_group_based} является существенно меньшее время синтеза схемы
по сравнению с алгоритмами~\algref{alg_khlopotine}--\algref{alg_fast_exact}
и меньшее количество требуемой для этого памяти по сравнению с алгоритмами~\algref{alg_miller_spectral}--\algref{alg_novel}.
Однако синтезированная схема имеет б\'{о}льшую сложность по сравнению со схемами, синтезированными при помощи
алгоритмов~\algref{alg_miller_spectral}--\algref{alg_novel}.

\subsection{Сравнение алгоритмов синтеза}

\forceindent Рассмотрим обратимое преобразование $f\colon\ZZ_2^n \to \ZZ_2^n$ и множество подвижных точек
этого преобразования $M = \{\,\vv x\mid f(\vv x) \ne \vv x\,\}$.
Обозначим через $m$ величину $\lceil \log_2|M| \rceil$. Очевидно, что $m \leqslant n$.

Преобразование $f$ задаёт какую-либо подстановку $h_f \in S(\ZZ_2^n)$, которую можно представить в виде произведения
транспозиций в количестве не более чем $2^m$ (см. с.~\pageref{formula_cycle_decomposition}).
Это свойство будет использоваться при сравнении основных характеристик алгоритмов синтеза.
Цель сравнения~--- показать их различие при $m = o(n)$.

Будем рассматривать только те преобразования $f$, для которых задаваемая ими подстановка $h_f$ является чётной.
В таблице~\ref{table_alg_comparison} приведено сравнение по основным характеристикам описанных в предыдущих разделах
алгоритмов синтеза обратимых схем, состоящих из элементов NOT и $k$-CNOT и задающих/реализующих преобразование $f$.
Обозначения:
$\mathrm T(A)$~--- временн\'{а}я сложность алгоритма синтеза;
$\mathrm M(A)$~--- требуемый для синтеза объём памяти;
$l=L(\frS_f)$~--- сложность синтезированной схемы;
$\Omega_n^2$~--- множество всех элементов NOT, CNOT и 2-CNOT с $n$ входами;
$\Omega_n$~--- множество всех элементов NOT и $k$-CNOT с $n$ входами ($\Omega_n^2 \subset \Omega_n$).

{
\renewcommand{\baselinestretch}{1.2}

\Table[ht]
    \small
    \centering
    \begin{tabular}{|m{0.95 cm}|m{1.75 cm}|m{1.9 cm}|m{1.9 cm}|m{2 cm}|m{6.2 cm}|}
        \hline
        \centering Алго\-ритм &
        \centering Результат &
        \centering $T(A)$ &
        \centering $M(A)$ &
        \centering $l=L(\frS_f)$ &
        \centering Примечание \tabularnewline
        \hline

        \centering \algref{alg_khlopotine} &
        \centering не гарантирован &
        \centering $\gg \left( N^2 \right )^l$ &
        \centering $N^2$ &
        \centering близкая к минимальной &
        \centering переборный, не универсальный;\\ \gate{} из $\Omega_n$;
            $N$~--- размер каскада \gate, задаётся вручную \tabularnewline
        \hline

        \centering \algref{alg_shende} &
        \centering не гарантирован &
        \centering $\gg N^l$ &
        \centering $O(kN)$ &
        \centering минималь\-ная &
        \centering переборный; \gate{} из $\Omega_n^2$ или $\Omega_n$;
            $N$~--- размер библиотеки минимальных схем со сложностью $< k$ \tabularnewline
        \hline

        \centering \algref{alg_fast_exact} &
        \centering гаранти\-рован &
        \centering неизвестно &
        \centering неизвестно &
        \centering минималь\-ная &
        \centering переборный; \gate{} из $\Omega_n$;\\ использует ПО \textit{GAP} \tabularnewline
        \hline

        \centering \algref{alg_miller_spectral} &
        \centering не гарантирован &
        \centering $O(2^{nl})$ &
        \centering $O(n2^n)$ &
        \centering близкая к минимальной &
        \centering непереборный; \gate{} из $\Omega_n$; использует спектральный метод Радемахера-Уолша \tabularnewline
        \hline

        \centering \algref{alg_maslov_rm} &
        \centering гаранти\-рован &
        \centering $O(n2^n)$ &
        \centering $O(n2^n)$ &
        \centering $O(n2^n)$ &
        \centering непереборный; \gate{} из $\Omega_n^2$; использует спектр Рида-Маллера \tabularnewline
        \hline

        \centering \algref{alg_transform_based} &
        \centering гаранти\-рован &
        \centering $O(n2^n)$ &
        \centering $O(n2^n)$ &
        \centering $O(n2^n)$ &
        \centering непереборный; \gate{} из $\Omega_n$;\\ использует таблицу истинности \tabularnewline
        \hline

        \centering \algref{alg_novel} &
        \centering гаранти\-рован &
        \centering $O(l2^n)$ &
        \centering $O(n2^n)$ &
        \centering близкая к минимальной &
        \centering непереборный; \gate{} из $\Omega_n$;\\ использует таблицу истинности \tabularnewline
        \hline

        \centering \algref{alg_group_based} &
        \centering гаранти\-рован &
        \centering $O(n^2 2^m)$ &
        \centering $O(n2^m)$ &
        \centering $O(n^2 2^m)$ &
        \centering непереборный; \gate{} из $\Omega_n^2$;\\ использует теорию групп подстановок \tabularnewline
        \hline

    \end{tabular}
    \caption{\small Сравнение алгоритмов синтеза обратимых схем.}\label{table_alg_comparison}
\end{table}

} 

Из таблицы~\ref{table_alg_comparison} видно, что только алгоритмы~\algref{alg_maslov_rm}~---~\algref{alg_group_based}
гарантированно дают результат синтеза схемы за приемлемое время. При этом лишь алгоритм~\algref{alg_novel} позволяет получить
схему с близкой к минимальной сложностью.

Тем не менее, для рассматриваемого частного случая $m = o(n)$, ни один из алгоритмов синтеза,
кроме алгоритма~\algref{alg_group_based}, не позволяет получить результат синтеза за время порядка $o(n2^n)$.
Это замечание верно и для объёма памяти, требуемого алгоритмами синтеза.

Ещё одним важным свойством алгоритма~\algref{alg_group_based} является использование в синтезированной схеме \gate{} только
из множества $\Omega_n^2$, но не из расширенного множества $\Omega_n$. В случае, когда в обратимой схеме невозможно использовать
\gate{} из $\Omega_n$, только этот алгоритм и алгоритм~\algref{alg_maslov_rm} позволят гарантированно синтезировать схему,
т.\,к. не всегда возможно привести без использования дополнительных входов схему, состоящую из \gate{} множества $\Omega_n$,
к схеме, состоящей из \gate{} множества $\Omega_n^2$.

Однако существенным недостатком алгоритма~\algref{alg_group_based} является избыточная сложность синтезированной
схемы по сравнению с другими алгоритмами, которая зависит только от $m$ и никак не зависит от вида подстановки $h$.

Далее будет представлен новый быстрый алгоритм синтеза обратимых схем, дающий результат синтеза за время порядка $o(n2^n)$
при $m = o(n)$ и позволяющий получить схему с меньшей сложностью по сравнению с алгоритмом~\algref{alg_group_based}.
Описание этого алгоритма было опубликовано автором в работе~\cite{my_fast_group_based_algorithm}.

\subsection{Новый быстрый алгоритм синтеза}\label{subsection_my_synthesis_algorithms}
\refstepcounter{synthalgchapter}

\forceindent
Рассмотрим произвольную подстановку $h \in S(\ZZ_2^n)$, $n > 3$, и множество подвижных точек этой подстановки
$M = \{\,\vv x\mid \vv x \in \ZZ_2^n, h(\vv x) \ne \vv x\,\}$. Обозначим $m = \lceil \log_2 |M| \rceil$.

Самый простой способ синтеза (алгоритм~\nextalg{alg_my_simple}) обратимой схемы, задающей подстановку $h$,
можно описать следующим образом:
\begin{enumerate}
    \item \label{simple_alg_group_theory_first_step}
        Найти представление заданной подстановки $h \in S(\ZZ_2^n)$ в виде произведения транспозиций.
    \item \label{simple_alg_group_theory_second_step}
        Каждую транспозицию $t = (\vv x, \vv y)$
        путём сопряжения привести к виду $t' = (\vv x', \vv y')$
        таким образом, чтобы нашлось такое значение $j$, что $x'_j = y'_j \oplus 1$ и $x'_i = y'_i = 1$ при $i \ne j$;
        $\vv x, \vv y, \vv x', \vv y' \in \ZZ_2^n$.
    \item \label{simple_alg_group_theory_third_step}
        Найти композиции обратимых \gate{} из множества $\Omega_n$, задающих транспозицию $t'$
        и действие сопряжением на транспозицию $t$.
\end{enumerate}

Подсчитаем максимальное количество транспозиций в представлении подстановки $h$
на шаге~\ref{simple_alg_group_theory_first_step}. 
В случае, когда подстановка $h$ представляет собой один длинный цикл, её можно представить в виде произведения не более
чем $2^m - 1$ транспозиций:
\begin{equation}
    (i_1, i_2, i_3, \ldots, i_{2^m})=(i_1,i_2) \circ (i_1, i_3) \circ \ldots \circ (i_1, i_{2^m}) \; .
    \label{formula_cycle_decomposition}
\end{equation}
В случае, когда $h$ представляет собой произведение нескольких циклов, каждый цикл можно представить в виде произведения
транспозиций по формуле~\eqref{formula_cycle_decomposition},
что при фиксированном значении $m$ даёт максимальное количество транспозиций в этом произведении не более $2^m - 1$.

Подсчитаем количество обратимых элементов, необходимых на шаге~\ref{simple_alg_group_theory_second_step}
и~\ref{simple_alg_group_theory_third_step}.

Действие сопряжением подстановкой $g$ на подстановку $h$ мы обозначили через $h^g = g^{-1} \circ h \circ g$.
На с.~\pageref{predicate_reverse_scheme_particular} было показано, что задаваемые элементами NOT и $k$-CNOT
подстановки являются обратными к самим себе.
Поэтому действие сопряжением подстановкой $g$, задаваемой таким \gate, выражается, как $h^g = g \circ h \circ g$,
что требует ровно 2 \gate{}
Действие сопряжением не меняет цикловой структуры подстановки, поэтому транспозиция $t$ в результате действия сопряжением
всегда будет оставаться одной транспозицией.

Для рассматриваемой транспозиции $t = (\vv x, \vv y)$ введём 4 множества:
$B_{00}$, $B_{01}$, $B_{10}$ и $B_{11}$, где $B_{XY} = \{\,i\mid x_i = X, y_i = Y\,\}$,
Мощности этих множеств обозначим через $b_{00}$, $b_{01}$, $b_{10}$, $b_{11}$ соответственно.
Очевидно, что $b_{00} + b_{01} + b_{10} + b_{11} = n$.

$\vv x \ne \vv y \Rightarrow$ либо $b_{01} \ne 0$, либо $b_{10} \ne 0$.
Рассмотрим 2 случая:\label{conjuntion_on_single_transp}
\begin{enumerate}
    \item
        $b_{01} \ne 0$, $b_{10} \ne 0$.
        
        В этом случае существуют индексы $j \in B_{10}$, $k \in B_{01}$.
        Для каждого $i \in B_{10}, i \ne j$ будем действовать сопряжением на $t$ подстановкой, задаваемой элементом $C_{k;i}$.
        Затем для каждого $i \in B_{01}$ будем действовать сопряжением на полученную транспозицию подстановкой,
        задаваемой элементом $C_{j;i}$.
        На последнем шаге для каждого $i \in B_{00}$ будем действовать сопряжением на полученную транспозицию подстановкой,
        задаваемой элементом $N_i$. В результате получим искомую транспозицию $t' = (\vv x', \vv y')$.

        Для сопряжения требуется $2(b_{10} - 1)$ элементов $C_{k;i}$, $2b_{01}$ элементов $C_{j;i}$ и $2b_{00}$ элементов $N_i$.
        В сумме всего требуется $2(b_{10} + b_{01} + b_{00} - 1)$ элементов NOT и CNOT.
        В худшем случае $b_{11} = 0 \Rightarrow b_{10} + b_{01} + b_{00} = n$.
        Следовательно, для получения транспозиции $t'$ при $b_{01} \ne 0$, $b_{10} \ne 0$ требуется
        не более $2(n - 1)$ элементов NOT и CNOT.

        Транспозиция $t$ задаётся следующей композицией \gate:
        $$
            \frS_t =
            \left( \compose_{\substack{i \in B_{10} \\ i \ne k}} C_{k;i} \right ) *
            \left( \compose_{i \in B_{01}} C_{j;i}                       \right ) *
            \left( \compose_{i \in B_{00}} N_i                           \right ) *
                   \frS_{t'}                                                      *
            \left( \compose_{i \in B_{00}} N_i                           \right ) *
            \left( \compose_{i \in B_{01}} C_{j;i}                       \right ) *
            \left( \compose_{\substack{i \in B_{10} \\ j \ne k}} C_{k;i} \right )  \; ,
        $$
        где схема $\frS_{t'}$ задаёт подстановку $t'$.

    \item
        $b_{01} = 0$ или $b_{10} = 0$.
        
        Без ограничения общности рассмотрим только случай $b_{01} = 0$, $b_{10} \ne 0$.
        В этом случае существует индекс $j \in B_{10}$.
        Сначала действуем сопряжением на $t$ подстановкой, задаваемой элементом $N_j$.
        Затем для каждого $i \in B_{10}, i \ne j$ будем действовать сопряжением на полученную транспозицию подстановкой,
        задаваемой элементом $C_{j;i}$.
        После этого вновь будем действовать сопряжением на полученную транспозицию подстановкой, задаваемой элементом $N_j$.
        На последнем шаге для каждого $i \in B_{00}$ будем действовать сопряжением на полученную транспозицию подстановкой,
        задаваемой элементом $N_i$.
        В результате получим искомую транспозицию $t' = (\vv x', \vv y')$.

        Для сопряжения требуется $2(b_{10} - 1)$ элементов $C_{j;i}$, $4$ элемента $N_j$ и $2b_{00}$ элементов $N_i$.
        В сумме всего требуется $2(b_{10} + b_{00} + 1)$ элементов NOT и CNOT.
        В худшем случае $b_{11} = 0 \Rightarrow b_{10} + b_{00} = n$.
        Следовательно, для получения транспозиции $t'$ при $b_{01} = 0$ или $b_{10} = 0$ требуется
        не более $2(n + 1)$ элементов NOT и CNOT.

        Транспозиция $t$ задаётся следующей композицией \gate:
        $$
            \frS_t =
            N_j *
            \left( \compose_{\substack{i \in B_{10} \\ i \ne j}} C_{j;i} \right ) *
            N_j *
            \left( \compose_{i \in B_{00}} N_i \right ) *
            \frS_{t'} *
            \left( \compose_{i \in B_{00}} N_i \right ) *
            N_j *
            \left( \compose_{\substack{i \in B_{10} \\ i \ne j}} C_{j;i} \right ) *
            N_j \; ,
        $$
        где схема $\frS_{t'}$ задаёт подстановку $t'$.
\end{enumerate}

Сама транспозиция $t'$ задаётся элементом $C_{I;j}$ с $(n-1)$ контролирующими входами:\\
$I = \{\,i\mid 1 \leqslant i \leqslant n, i \ne j\,\}$, $|I| = n - 1$, $j$~--- индекс,
для которого верно равенство $x'_j = y'_j \oplus 1$. Следовательно, для~\ref{simple_alg_group_theory_second_step}-го
и~\ref{simple_alg_group_theory_third_step}-го шага алгоритма~\algref{alg_my_simple}
требуется не более $(2n+3)$ \gate{} из множества $\Omega_n$.

Таким образом, умножая максимально возможное количество транспозиций на сложность реализации одной транспозиции, получаем,
что сложность схемы, синтезированной алгоритмом~\algref{alg_my_simple},
$L \leqslant (2^m - 1)(2(n+1)+1) \lesssim n2^{m+1}$.
Временн\'{а}я сложность алгоритма $T \lesssim n2^{m+1}$: для каждой транспозиции сначала необходимо построить множества
$B_{00}$, $B_{01}$, $B_{10}$ и $B_{11}$, а затем уже приступить к синтезу этой транспозиции.
При этом объём памяти, необходимый для синтеза обратимой схемы, равен $O(2^m)$ (хранение всех элементов подстановки).

По сравнению с алгоритмом~\algref{alg_group_based}, алгоритм~\algref{alg_my_simple} имеет на порядок меньшее время работы
($O(n2^m)$ против $O(n^2 2^m)$) и на порядок меньшую сложность синтезированной схемы
($O(n2^m)$ против $O(n^2 2^m)$), сохраняя при этом такой же объём памяти, требуемый для синтеза ($O(n2^m)$).
Тем не менее, главным недостатком алгоритма~\algref{alg_my_simple} является использование обратимых элементов $(n-1)$-CNOT,
а не только \gate{} из множества $\Omega_n^2$.
В некоторых случаях это является недопустимым, т.\,к. такой \gate{} нельзя заменить на композицию \gate{}
из множества $\Omega_n^2$ без использования дополнительных входов схемы~\cite{shende_synthesis}.
Чтобы избавиться от этого недостатка, необходимо использовать иной подход к синтезу.

\bigskip
\bigskip
Усовершенствованный итоговый алгоритм~\nextalg{alg_my_common} синтеза обратимых схем, состоящих из \gate{}
множества $\Omega_n^2$, предлагаемый в данной главе, основан на доказательстве Леммы~\ref{basis_on_a_n},
согласно которой множество подстановок, задаваемых \gate{} множества $\Omega_n^2$,
генерирует знакопеременную группу $A(\ZZ_2^n)$ при $n > 3$.

Алгоритм~\algref{alg_my_common} позволяет получить для любой чётной подстановки $h \in A(\ZZ_2^n)$ задающую её
обратимую схему, состоящую из элементов множества $\Omega_n^2$. Если задана нечётная подстановка $g \in S(\ZZ_2^n)$,
то можно найти такую чётную подстановку $h \in A(\ZZ_2^{n+1})$, для которой синтезированная алгоритмом~\algref{alg_my_common}
обратимая схема будет реализовывать подстановку $g$. Получить подстановку $h$ из $g$ можно, к примеру, способом,
описанным в утверждении~\ref{predicate_scheme_with_additional_memory}. Синтезированная схема в этом случае будет иметь
один дополнительный вход.

Произведение двух независимых циклов можно представить следующим образом:
\begin{equation}
    (i_1, i_2, \ldots, i_{k_1}) \circ (j_1, j_2, \ldots, j_{k_2}) = (i_1, i_2) \circ (j_1, j_2) \circ
    (i_1, i_3, \ldots, i_{k_1}) \circ (j_1, j_3, \ldots, j_{k_2}) \; .
    \label{formula_decompostion_of_two_cycles}
\end{equation}
Цикл длины $k \geqslant 5$ можно представить следующим образом:
\begin{equation}
    (i_1, i_2, \ldots, i_k) = (i_1, i_2) \circ (i_3, i_4) \circ (i_1, i_3, i_5, i_6, \ldots, i_k) \; .
    \label{formula_decompostion_of_k_cycle}
\end{equation}
Следовательно, имея исходное представление чётной подстановки $h$ в виде произведения независимых циклов и используя
формулы~\eqref{formula_decompostion_of_two_cycles} и~\eqref{formula_decompostion_of_k_cycle}, эту подстановку можно представить
в виде произведения пар транспозиций, из которых только одна будет парой зависимых транспозиций, остальные~--- независимых
транспозиций:
$$
    h = \mcirc_{\vv x_i, \vv y_i \in \ZZ_2^n}{((\vv x_1, \vv y_1) \circ (\vv x_2, \vv y_2))} \; .
$$
Как уже было сказано на с.~\pageref{formula_cycle_decomposition},
любую подстановку $h \in \ZZ_2^n$ можно представить в виде произведения не более чем $2^m - 1$ транспозиций
$\Rightarrow$ количество пар независимых транспозиций не превосходит $2^{m - 1}$.

Введём функцию $\phi\colon \ZZ_2^n \to \ZZ_{2^n}$ следующим образом:
$$
    \phi(\vv x) = \sum_{i = 1}^n {x_i 2^{i-1}} \; ,
$$
где $\vv x \in \ZZ_2^n$, $\phi(\vv x) \in \ZZ_{2^n}$.

Рассмотрим пару независимых транспозиций $p = (\vv x, \vv y) \circ (\vv z, \vv w)$.
Действие сопряжением не меняет цикловой структуры
подстановки, поэтому $p$ в результате действия сопряжением всегда будет оставаться парой независимых транспозиций.
Применяя такие же рассуждения, как и для транспозиции $t = (\vv x, \vv y)$ (см. с.~\pageref{conjuntion_on_single_transp})
\label{independent_pair_first_steps},
приведём пару $p$ действием сопряжения к виду $p^{(1)} = (\vv x^{(1)}, \vv y^{(1)}) \circ (\vv z^{(1)}, \vv w^{(1)})$,
где $\phi(\vv x^{(1)}) = 2^n - 1$, $\phi(\vv y^{(1)}) = 2^n - 1 - 2^{i_1 - 1}$, $i_1$~--- индекс разряда, в котором различаются
вектора $\vv x^{(1)}$ и $\vv y^{(1)}$, $y^{(1)}_{i_1} = 0$, $(\vv z^{(1)}, \vv w^{(1)})$~--- новая транспозиция,
получившаяся в результате действия сопряжением.
Для этого шага потребуется не более $2(n+1)$ \gate{} из множества $\Omega_n^2$.

Затем, применяя такой же подход для векторов $\vv x^{(1)}$ и $\vv z^{(1)}$ из пары $p^{(1)}$,
получаем в результате действия сопряжением новую пару $p^{(2)} = (\vv x^{(2)}, \vv y^{(2)}) \circ (\vv z^{(2)}, \vv w^{(2)})$,
где $\phi(\vv x^{(2)}) = 2^n - 1$, $\phi(\vv y^{(2)}) = 2^n - 1 - 2^{i_1 - 1}$, $\phi(\vv z^{(2)}) = 2^n - 1 - 2^{i_2 - 1}$,
$i_2$~--- индекс разряда, в котором различаются вектора $\vv x^{(2)}$ и $\vv z^{(2)}$, $z^{(2)}_{i_2} = 0$,
$\vv w^{(2)}$~--- новый вектор второй транспозиции в паре, получившийся в результате действия сопряжением.
Для этого шага также потребуется не более $2(n+1)$ \gate{} из множества $\Omega_n^2$.

Покажем, как можно действием сопряжения привести пару $p^{(2)}$ к виду
$q = (\vv x^{(3)}, \vv y^{(3)} ) \circ (\vv z^{(3)}, \vv w^{(3)})$, где
$\phi(\vv x^{(3)}) = 2^n - 1$, $\phi(\vv y^{(3)}) = 2^n - 1 - 2^{i_1 - 1}$, $\phi(\vv z^{(3)}) = 2^n - 1 - 2^{i_2 - 1}$,
$\phi(\vv w^{(3)}) = 2^n - 1 - 2^{i_1 - 1} - 2^{i_2 - 1}$. Рассмотрим два случая:
\begin{enumerate}
    \item\label{conjuct_step_one_in_alg_my_common}
        $w^{(2)}_{i_1} = w^{(2)}_{i_2} = 0$.

        В этом случае сначала действуем сопряжением на $p^{(2)}$ подстановками, задаваемыми элементами $N_{i_1}$ и $N_{i_2}$.
        Затем для каждого $i\colon w^{(2)}_i \ne 1, i \ne i_1, i_2$ действуем сопряжением на полученную пару транспозиций
        подстановкой, задаваемой элементом $C_{i_1,i_2;i}$.
        После этого вновь действуем сопряжением на полученную пару транспозиций подстановками,
        задаваемыми элементами $N_{i_1}$ и $N_{i_2}$.

        Для сопряжения требуется не более $2(n - 2)$ элементов $C_{i_1,i_2;i}$, $4$ элемента $N_{i_1}$ и 4 элемента $N_{i_2}$.
        Следовательно, для получения пары транспозиций $q$ при $w^{(2)}_{i_1} = w^{(2)}_{i_2} = 0$ требуется не более $2(n + 2)$
        элементов NOT и 2-CNOT.

    \item
        $w^{(2)}_{i_1} = 1$ или $w^{(2)}_{i_2} = 1$ (в том числе и одновременно).

        $\vv w^{(2)}$ не равно ни одному из остальных векторов пары транспозиций $p^{(2)}$, следовательно,
        существует такой индекс $i_3\colon w^{(2)}_{i_3} = 0$.
        Действуем сопряжением на $p^{(2)}$ подстановкой, задаваемой элементом $N_{i_3}$.
        Затем действуем сопряжением на полученную пару транспозиций подстановкой, задаваемой элементом $C_{i_3;i_1}$,
        если $w^{(2)}_{i_1} = 1$, и подстановкой, задаваемой элементом $C_{i_3;i_2}$, если $w^{(2)}_{i_2} = 1$.
        После этого вновь действуем сопряжением на полученную пару транспозиций подстановкой, задаваемой элементом $N_{i_3}$,
        и приходим к случаю~\ref{conjuct_step_one_in_alg_my_common}, описанному выше.

        Для сопряжения требуется не более четырёх элементов 2-CNOT ($C_{i_3;i_1}$ и $C_{i_3;i_2}$), $4$ элемента $N_{i_3}$ и
        не более $2(n + 2)$ элементов NOT и 2-CNOT (при переходе к случаю~\ref{conjuct_step_one_in_alg_my_common}).
        Следовательно, для получения пары транспозиций $q$ при $w^{(2)}_{i_1} = 1$ или $w^{(2)}_{i_2} = 1$
        требуется не более $2(n + 6)$ элементов NOT и 2-CNOT.
\end{enumerate}
Если действовать сопряжением на подстановку $q$ теми же подстановками, что и на $p$, но в обратном порядке,
то получится исходная подстановка $p$.

Пара независимых транспозиций $q$ задаётся обратимым элементом $C_{I;j}$,
где $I = \{\,i\mid 1 \leqslant i \leqslant n, i \ne i_1, i_2\,\}$.
Согласно утверждению~\ref{fast_realization_of_k_cnot}, элемент $k$-CNOT при $k < n - 1$ и $n > 3$
можно представить в виде композиции $8(k-3)$ элементов 2-CNOT.
Т.\,к. $|I| = n - 2$ $\Rightarrow$ элемент $C_{I;j}$ можно заменить
на композицию не более чем $8(n-5)$ элементов 2-CNOT без использования дополнительных входов схемы.

Таким образом, суммарную сложность реализации пары независимых транспозиций можно оценить как
\begin{equation}
    L_{\Omega_n^2}(p_{\mathrm{indep}}) \leqslant 4(n+1) + 2(n + 6) + 8(n-5) = 14n + O(1)
    \label{formula_independent_pair_realization_complexity} 
\end{equation}
при использовании \gate{} множества $\Omega_n^2$. Если же допускается использовать \gate{} множества $\Omega_n$,
то суммарная сложность реализации пары независимых транспозиций ограничена сверху как
$L_{\Omega_n}(p_{\mathrm{indep}}) \leqslant 6n + O(1)$.

\bigskip
\bigskip
Рассмотрим пару зависимых транспозиций $p = (\vv x, \vv y) \circ (\vv x, \vv z)$.
Такую пару можно представить в виде произведения двух пар независимых транспозиций
$$
    (\vv x, \vv y) \circ (\vv x, \vv z) = \left( (\vv x, \vv y) \circ (\vv a, \vv b) \right )
        \circ \left( (\vv a, \vv b) \circ (\vv x, \vv z) \right ) \; .
$$
Следовательно, суммарную сложность реализации пары зависимых транспозиций можно оценить как
$$
    L_{\Omega_n^2}(p_{\mathrm{dep}}) \leqslant 2 L_{\Omega_n^2}(p_{\mathrm{indep}}) \leqslant 28n + O(1)
$$
при использовании \gate{} множества $\Omega_n^2$. В случае же, если допускается использовать \gate{} множества $\Omega_n$,
то суммарная сложность реализации пары зависимых транспозиций
$L_{\Omega_n}(p_{\mathrm{dep}}) \leqslant 12n + O(1)$.
Однако пару зависимых транспозиций можно реализовывать другим способом, позволяющим получить лучшие оценки для
$L_{\Omega_n^2}(p_{\mathrm{dep}})$ и $L_{\Omega_n}(p_{\mathrm{dep}})$.

Применяя такие же рассуждения, как и для пары независимых транспозиций (см. с.~\pageref{independent_pair_first_steps}),
приведём пару $p$ действием сопряжения к виду $p^{(1)} = (\vv x^{(1)}, \vv y^{(1)}) \circ (\vv x^{(1)}, \vv z^{(1)})$,
где $\phi(\vv x^{(1)}) = 2^n - 1$, $\phi(\vv y^{(1)}) = 2^n - 1 - 2^{i_1 - 1}$, $i_1$~--- индекс разряда, в котором различаются
вектора $\vv x^{(1)}$ и $\vv y^{(1)}$, $y^{(1)}_{i_1} = 0$,
$\phi(\vv z^{(1)}) = 2^n - 1 - 2^{i_2 - 1}$, $i_2$~--- индекс разряда, в котором различаются
вектора $\vv x^{(1)}$ и $\vv z^{(1)}$, $z^{(1)}_{i_2} = 0$.
Для этого шага потребуется не более $4(n+1)$ \gate{} из множества $\Omega_n^2$.
    
Введём отображение $\psi\colon \ZZ_2^n \to \ZZ_2^2$ следующим образом:
$$
    \psi(\vv x) = \langle x_{i_1}, x_{i_2} \rangle \; .
$$
На данном этапе
$\psi(\vv x^{(1)}) = \langle 1,1 \rangle$,
$\psi(\vv y^{(1)}) = \langle 0,1 \rangle$,
$\psi(\vv z^{(1)}) = \langle 1,0 \rangle$.

Действуем сопряжением на $p^{(1)}$ подстановкой, задаваемой элементом $N_{i_1}$,
получаем пару $p^{(2)} = (\vv x^{(2)}, \vv y^{(2)}) \circ (\vv x^{(2)}, \vv z^{(2)})$,
где
$\psi(\vv x^{(2)}) = \langle 0,1 \rangle$,
$\psi(\vv y^{(2)}) = \langle 1,1 \rangle$,
$\psi(\vv z^{(2)}) = \langle 0,0 \rangle$.
Затем действуем сопряжением на $p^{(2)}$ подстановкой, задаваемой элементом $N_{i_2}$,
получаем пару $p^{(3)} = (\vv x^{(3)}, \vv y^{(3)}) \circ (\vv x^{(3)}, \vv z^{(3)})$,
где
$\psi(\vv x^{(3)}) = \langle 0,0 \rangle$,
$\psi(\vv y^{(3)}) = \langle 1,0 \rangle$,
$\psi(\vv z^{(3)}) = \langle 0,1 \rangle$.
После этого действуем сопряжением на $p^{(3)}$ подстановкой, задаваемой элементом $C_{i_2;i_1}$,
получаем пару $p^{(4)} = (\vv x^{(4)}, \vv y^{(4)}) \circ (\vv x^{(4)}, \vv z^{(4)})$,
где
$\psi(\vv x^{(4)}) = \langle 0,0 \rangle$,
$\psi(\vv y^{(4)}) = \langle 1,0 \rangle$,
$\psi(\vv z^{(4)}) = \langle 1,1 \rangle$.
В конце действуем сопряжением на $p^{(4)}$ подстановкой, задаваемой элементом $N_{i_2}$,
получаем пару $q = (\vv x^{(5)}, \vv y^{(5)}) \circ (\vv x^{(5)}, \vv z^{(5)})$,
где
$\psi(\vv x^{(5)}) = \langle 0,1 \rangle$,
$\psi(\vv y^{(5)}) = \langle 1,1 \rangle$,
$\psi(\vv z^{(5)}) = \langle 1,0 \rangle$.

Всего для получения транспозиции $q$ требуется ровно 8 \gate{} из множества $\Omega_n^2$.
Для этой пары зависимых транспозиций верны равенства:
$\phi(\vv x^{(5)}) = 2^n - 1 - 2^{i_1 - 1}$,
$\phi(\vv y^{(5)}) = 2^n - 1$,
$\phi(\vv z^{(5)}) = 2^n - 1 - 2^{i_2 - 1}$.
Если действовать сопряжением на подстановку $q$ теми же подстановками, что и на $p$, но в обратном порядке,
то получится исходная подстановка $p$.

Согласно утверждению~\ref{predicate_recursive_on_dependent_product}, подстановка $q$ задаётся следующей композицией \gate{}:
$$
    \frS_q = C_{i_2,j;i_1} * C_{I;i_2} * C_{i_2,j;i_1} * C_{I;i_2} \; ,
$$
где $j \ne i_1, i_2$, $I = \{\,i\mid 1 \leqslant i \leqslant n; i \ne j, i_2\,\}$.
Согласно утверждению~\ref{fast_realization_of_k_cnot}, элемент $k$-CNOT при $k < n - 1$ и $n > 3$
можно представить в виде композиции $8(k-3)$ элементов 2-CNOT.
Т.\,к. $|I| = n - 2$ $\Rightarrow$ элемент $C_{I;i_2}$ можно заменить
на композицию не более чем $8(n-5)$ элементов 2-CNOT без использования дополнительных входов схемы.

Отсюда следует, что суммарную сложность реализации пары зависимых транспозиций можно оценить как
$$
    L_{\Omega_n^2}(p_{\mathrm{dep}}) \leqslant 4(n+1) + 8 + 2(8(n-5) + 1) = 20n + O(1)
$$
при использовании \gate{} множества $\Omega_n^2$. Если же допускается использовать \gate{} множества $\Omega_n$,
то $L_{\Omega_n}(p_{\mathrm{dep}}) \leqslant 4n + O(1)$.

\bigskip
\bigskip
Теперь можно подсчитать суммарную сложность обратимой схемы $\frS_h$,
синтезированной при помощи алгоритма~\algref{alg_my_common},
умножая максимально возможное количество пар транспозиций одного типа (зависимых или независимых)
в представлении подстановки $h$ на сложность реализации этого типа пары транспозиций.
Сложность синтезированной схемы при использовании \gate{} множества $\Omega_n^2$
не превосходит следующей величины:
$$
    L_{\Omega_n^2}(\frS_h) \leqslant 2^{m-1}L_{\Omega_n^2}(p_{\mathrm{indep}}) + L_{\Omega_n^2}(p_{\mathrm{dep}})
        \leqslant 2^{m-1}(14n + O(1)) + 20n + O(1) \lesssim 7n 2^m \; .
$$
Если же допустимо использовать \gate{} множества $\Omega_n$, то сложность синтезированной схемы
не превосходит следующей величины:
$$
    L_{\Omega_n}(\frS_h) \leqslant 2^{m-1}L_{\Omega_n}(p_{\mathrm{indep}}) + L_{\Omega_n}(p_{\mathrm{dep}})
        \leqslant 2^{m-1}(6n + O(1)) + 4n + O(1) \lesssim 3n 2^m \; .
$$
Временн\'{а}я сложность алгоритма~\algref{alg_my_common} составляет порядка $O(n2^m)$,
а точнее $\lesssim 3n 2^m$: для каждой пары независимых транспозиций необходимо рассмотреть три транспозиции
$(\vv x, \vv y)$, $(\vv x, \vv z)$ и $(\vv x, \vv w)$ и применить к каждой из них $n$ действий для выполнения
действия сопряжением.
Требуемый для синтеза объём памяти равен $O(2^m)$ (хранение всех элементов подстановки).

По сравнению с алгоритмом~\algref{alg_group_based}, алгоритм~\algref{alg_my_common} имеет на порядок меньшее время работы
($O(n2^m)$ против $O(n^2 2^m)$) и на порядок меньшую сложность синтезированной схемы
($O(n2^m)$ против $O(n^2 2^m)$), сохраняя при этом такой же объём памяти, требуемый для синтеза ($O(n2^m)$).
Вместе с тем, алгоритм~\algref{alg_my_common} может всегда синтезировать схему из элементов множества $\Omega_n^2$
без использования дополнительных входов схемы, в отличие от алгоритма~\algref{alg_my_simple}.

\subsection{Сравнение быстрых алгоритмов синтеза, основанных на теории групп подстановок}

\forceindent Результаты сравнения алгоритмов~\algref{alg_group_based},~\algref{alg_my_simple} и~\algref{alg_my_common}
синтеза обратимой схемы $\frS$, задающей требуемую подстановку $h \in A(\ZZ_2^n)$,
приведены в таблице~\ref{table_alg_group_theory_based_comparison}. Обзначения:
$\mathrm T(A)$~--- временн\'{а}я сложность алгоритма синтеза;
$\mathrm M(A)$~--- требуемый для синтеза объём памяти;
$L_{\Omega_n}(\frS)$ и $L_{\Omega_n^2}(\frS)$~--- сложность синтезированной схемы
при использовании \gate{} из множества $\Omega_n$ и $\Omega_n^2$ соответственно;
$Q(\frS)$~--- количество дополнительных входов схемы;
$m = \lceil \log_2|M| \rceil$, $M = \{\,\vv x \in \ZZ_2^n\mid h(\vv x) \ne \vv x\,\}$~--- множество подвижных точек
подстановки $h$.

{
    \renewcommand{\baselinestretch}{1.2}

    \Table[ht]
        \small
        \centering
        \begin{tabular}{|m{3.5cm}|*{3}{c|}}
            \hline

            \multirow{2}{*}{
                \textbf{Характеристика}
            } &
            \multicolumn{3}{c|}{\textbf{Алгоритм}} \\
            \cline{2-4}

            & 
            \centering \algref{alg_group_based} &
            \centering \algref{alg_my_simple} &
            \centering \algref{alg_my_common} \tabularnewline
            \hline

            \centering $\mathrm T(A)$ &
            \centering $\lesssim \frac{10}{3}n^2 2^m$ &
            \centering $\lesssim 2n 2^m$ &
            \centering $\lesssim 3n 2^m$ \tabularnewline        
            \hline

            \centering $\mathrm M(A)$ &
            \centering $O(n2^m)$ &
            \centering $O(n2^m)$ &
            \centering $O(n2^m)$ \tabularnewline
            \hline

            \centering $L_{\Omega_n}(\frS)$ &
            \centering $\lesssim \frac{16}{3}n^2 2^m$ &
            \centering $\lesssim 2n 2^m$ &
            \centering $\lesssim 3n 2^m$ \tabularnewline        
            \hline

            \centering $L_{\Omega_n^2}(\frS)$ &
            \centering $\lesssim \frac{64}{3}n^2 2^m$ &
            \centering $\lesssim 10n 2^m$ &
            \centering $\lesssim 7n 2^m$ \tabularnewline        
            \hline

            \centering $\mathrm Q(\frS)$ &
            \centering 0 &
            \centering $\leqslant 1$ &
            \centering 0 \tabularnewline
            \hline

        \end{tabular}
        \caption{
            \small Сравнение быстрых алгоритмов синтеза обратимых схем, основанных на теории групп подстановок.
        }\label{table_alg_group_theory_based_comparison}
    \end{table}
    
} 

По результатам сравнения, приведённым в таблице~\ref{table_alg_group_theory_based_comparison}, можно сделать вывод,
что алгоритмы~\algref{alg_my_simple} и~\algref{alg_my_common} синтезируют обратимую схему со сложностью,
меньшей на порядок, за время, меньшее на порядок, по сравнению с алгоритмом~\algref{alg_group_based}.
Вместе с тем, алгоритм~\algref{alg_my_common} по сравнению с алгоритмом~\algref{alg_my_simple} позволяет получить схему
с меньшей примерно в 1.4 раза сложностью в базисе $\Omega_n^2$ за счёт увеличенного примерно в 1.5 раза
времени синтеза. При этом алгоритм~\algref{alg_my_common} гарантированно не использует дополнительные входы
в синтезированной схеме, в отличие от алгоритма~\algref{alg_my_simple}.

Однако, несмотря на все достоинства предложенных алгоритмов синтеза, у них есть один существенный недостаток:
сложность синтезированной схемы зависит только от $m$, но никак не зависит от вида конкретной подстановки $h$.
К примеру, рассмотрим преобразование следующего вида:
$$
    f\left ( \langle x_1, x_2, \ldots, x_n \rangle \right ) = \langle x_1, x_2 \oplus x_1, x_3, \ldots, x_n \rangle \; .
$$%
\phantomsection\label{formula_bad_case_for_my_synthesis_algorithm}
Соответствующую подстановку $h_f$ можно задать одним элементом $C_{1;2}$. При этом количество подвижных точек подстановки
$h_f$ будет равно половине от всех векторов множества $\ZZ_2^n$ (для которых $x_1 = 1$)
$\Rightarrow |M| = \frac{1}{2}|\ZZ_2^n| = 2^{n-1} \Rightarrow m = n - 1$. Отсюда следует, что для всех преобразований,
схожих с $f$, алгоритмы~\algref{alg_my_simple} и~\algref{alg_my_common} будут синтезировать схему
со сложностью порядка $O(n2^n)$.

В следующих главах будет рассматриваться вопрос снижения сложности синтезированных схем.

\sectionenumerated[section_complexity_minimization]{Снижение сложности обратимых схем}

\forceindent В данной главе будут рассмотрены способы снижения сложности обратимой схемы, синтезированной при помощи
алгоритма синтеза~\algref{alg_my_common}, рассмотренного в предыдущей главе.
Можно выделить два основных направления снижения сложности:
\begin{enumerate}
    \item Внесение изменений в сам алгоритм синтеза с целью получения обратимой схемы с меньшей сложностью.
    \item Снижение сложности уже синтезированной обратимой схемы.
\end{enumerate}
Первое направление зависит от конкретного алгоритма синтеза, поэтому предлагаемые для него способы снижения сложности
могут быть неприменимы к другим алгоритмам синтеза. Второе же направление может позволить получить <<\textit{универсальные}>>
способы снижения сложности обратимых схем, синтезированных произвольным алгоритмом.

Примером <<\textit{универсального}>> способа снижения сложности обратимых схем может служить применение
таблиц эквивалентных замен композиций \gate{} Такой способ применяется, к примеру, в алгоритмах синтеза обратимых
схем, описанных в работах~\cite{iwama_transform_rules, miller_transform_based}.
Правила эквивалентных замен были рассмотрены в работах~\cite{iwama_transform_rules, saeedi_rule_based} и частично были
описаны в разделе~\ref{subsection_optimization_algorithms} на с.~\pageref{opt_duplicates}.
В работе~\cite{iwama_transform_rules} эти правила рассматривались для \textit{классических} элементов $k$-CNOT:
для изменения значения на контролируемом выходе значения на всех контролирующих входах должны равняться 1.
В работе же~\cite{saeedi_rule_based} авторами рассматриваются правила эквивалентных замен для \textit{обобщённых}
элементов $k$-CNOT, у которых нет ограничений на значения на контролирующих входах. Авторами этой работы предложен
способ использования карт Карно для получения правил эквивалентных замен.

В следующем разделе будут описаны расширенные правила эквивалентных замен композиций обратимых \gate, основанные
на операциях на множествах. Также будет приведено доказательство корректности предлагаемых замен.
Результаты следующего раздела были опубликованы автором в работе~\cite{my_complexity_reduction}. В отличие от результатов,
полученных в работе~\cite{saeedi_rule_based}, в предлагаемом далее подходе получения правил эквивалентных замен
не используются карты Карно, а только описания множеств контролирующих входов обратимых элементов $k$-CNOT,
что является, на взгляд автора, более гибким подходом.

\subsection{%
    \texorpdfstring{Обобщённое представление элемента $k$-CNOT}%
       {Обобщённое представление элемента k-CNOT}}

\forceindent
Классический элемент $k$-CNOT $C_{i_1, \ldots,i_k;j}$ инвертирует значение на контролируемом $j$-м входе,
когда значение на всех контролирующих входах $i_1, \ldots, i_k$ равно 1.
В работе~\cite{saeedi_rule_based} было предложено обобщить представление элемента $k$-CNOT для случая нулевого значения
на некоторых контролирующих входах. Формальное определение~\ref{define_common_k_cnot} такого элемента было
дано на с.~\pageref{define_common_k_cnot}. Будем обозначать такой \gate{} через $C_{I;J;t}$ или $E(t,I,J)$,
где $t$~--- контролируемый выход, $I$~--- множество прямых контролирующих входов,
$J$~--- множество инвертированных контролирующих входов.

Элемент $E(t,I,J)$ инвертирует значение на контролируемом входе только тогда,
когда значение на всех прямых контролирующих входах равно 1, на всех инвертированных контролирующих входах~--- 0.
Графически прямые контролирующие входы будем обозначать символом {\large $\bullet$},
инвертированные~--- символом $\boldsymbol{\circ}$,
контролируемый вход~--- символом $\boldsymbol{\oplus}$;
элемент NOT~--- символом $\boldsymbol{\otimes}$ (рис.~\ref{pic_generalized_k_cnot}).

\Figure[ht]
    \centering
    \includegraphics[scale=1.2]{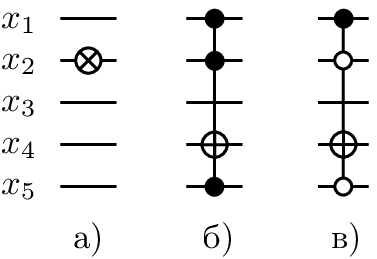}
    \caption
    {
        \small Графическое обозначение обратимых элементов ($n=5$):\\
        а) $E(2)$; б) $E(4, \{\,1, 2, 5\,\})$; в) $E(4,\{\,1\,\},\{\,2, 5\,\})$.
    }\label{pic_generalized_k_cnot}
\end{figure}

Элементы $E(t,I,J)$ могут применяться при описании алгоритмов синтеза~\cite{group_based} и~\cite{my_fast_group_based_algorithm}.
Любой элемент $E(t,I,J)$ можно представить в виде композиции классических элементов NOT и $k$-CNOT
(рис.~\ref{pic_generalized_k_cnot_representation}).

В общем случае элемент $E(t,I,J)$ можно заменить без изменения результирующего преобразования на композицию элементов
$\left ( \compose_{t \in J}{E(t)} \right ) * E\left(t, I \cup J\right) * \left( \compose_{t \in J}{E(t)} \right )$.

\Figure[ht]
    \centering
    \includegraphics[scale=1.2]{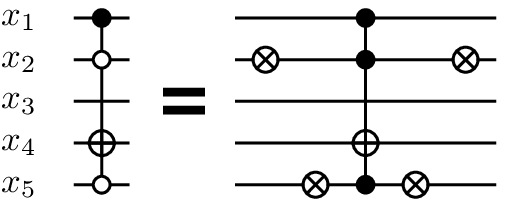}
    \caption
    {
        \small Представление элемента $E(4,\{\,1\,\},\{\,2,5\,\})$ в виде\\
        композиции элементов $E(2)$, $E(5)$ и $E(4,\{\,1,2,5\,\})$ ($n = 5$).
    }\label{pic_generalized_k_cnot_representation}
\end{figure}

\subsection{
    \texorpdfstring{Коммутирующие обратимые \gate}%
    {Коммутирующие обратимые Ф.Э.}}
\label{subsection_independend_gates}

\forceindent При решении задачи снижения сложности обратимой схемы часто необходимо выяснить,
можно ли два подряд идущих \gate{} поменять местами без изменения преобразования, реализуемого схемой.

\begin{define} Обратимые \gate{} $E_1$ и $E_2$ являются \textit{коммутирующими}, если их композицию $E_1 * E_2$ можно заменить
на композицию $E_2 * E_1$ без изменения результирующего преобразования. Иначе элементы $E_1$ и $E_2$ являются
\textit{некоммутирующими}.
\end{define}

В работе~\cite{iwama_transform_rules} были рассмотрены условия коммутируемости для классических элементов $k$-CNOT.
Рассмотрим, при каких условиях являются коммутирующими элементы $E(t,I,J)$.
Здесь и далее активно используется функция $\phi(\vv x, I, J)\colon \ZZ_2^n \to \ZZ_2$ следующего вида:
$$
    \phi(\vv x, I, J) = \left( \smallwedge_{i \in I}{x_i} \right) \wedge \left( \smallwedge_{i \in J}{\bar x_i} \right) \; .
$$
Будем считать, что $\phi(\vv x, \varnothing, \varnothing) = 1$.

\begin{lemma}\label{lemma_independend_gates}
    Элементы $E(t_1, I_1, J_1)$ и $E(t_2, I_2, J_2)$ являются коммутирующими тогда и только тогда,
    когда выполняется хотя бы одно из условий:
    \begin{enumerate}
        \item $t_1 \notin I_2 \cup J_2$ и $t_2 \notin I_1 \cup J_1$ (в частности, $t_1 = t_2$);
        \item $I_1 \cap J_2 \ne \varnothing$ или $I_2 \cap J_1 \ne \varnothing$.
    \end{enumerate}
\end{lemma}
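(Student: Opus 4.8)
The plan is a direct case analysis, and I would prove the two directions separately. Throughout I will use that a gate $E(t,I,J)$ sends $\vv x \in \ZZ_2^n$ to the vector obtained by replacing $x_t$ with $x_t \oplus \phi(\vv x, I, J)$ and leaving every other coordinate unchanged; in particular it is an involution, it alters only the single coordinate $t$, and its firing bit $\phi(\vv x, I, J)$ depends only on the coordinates indexed by $I \cup J$. Since the claimed criterion is symmetric under interchanging the two gates, in the parts below I may fix which gate plays which role.

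For the ``if'' direction, first suppose condition~1 holds, i.e. $t_1 \notin I_2 \cup J_2$ and $t_2 \notin I_1 \cup J_1$. Then the gate $E(t_2, I_2, J_2)$ changes only coordinate $t_2$, which lies outside $I_1 \cup J_1$, so it does not change $\phi(\cdot, I_1, J_1)$; symmetrically $E(t_1, I_1, J_1)$ does not change $\phi(\cdot, I_2, J_2)$. Writing $b_i = \phi(\vv x, I_i, J_i)$ and letting $\vv e_t$ denote the $t$-th unit vector, a one-line computation gives that applying $E(t_1,I_1,J_1)$ then $E(t_2,I_2,J_2)$ sends $\vv x$ to $\vv x \oplus b_1 \vv e_{t_1} \oplus b_2 \vv e_{t_2}$, and the reverse order gives the same thing (including the degenerate case $t_1 = t_2$, where both equal $\vv x \oplus (b_1\oplus b_2)\vv e_{t_1}$). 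Now suppose condition~2 holds, say $k \in I_1 \cap J_2$ (the case $I_2 \cap J_1 \ne \varnothing$ is symmetric). Then $\phi(\vv x, I_1, J_1) = 1$ forces $x_k = 1$ while $\phi(\vv x, I_2, J_2) = 1$ forces $x_k = 0$, and neither gate ever touches $x_k$ because $t_1, t_2 \ne k$; hence on every input at most one of the two gates fires, and the one that fires does so irrespective of the order, so the two compositions coincide.

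For the ``only if'' direction I would argue the contrapositive: assume both conditions fail, so $I_1 \cap J_2 = I_2 \cap J_1 = \varnothing$ and, after possibly swapping the gates, $t_1 \in I_2 \cup J_2$; note this already forces $t_1 \ne t_2$, since $t_2 \notin I_2 \cup J_2$. I then build a separating input $\vv x$: put $x_i = 1$ for $i \in I_1$ and $x_j = 0$ for $j \in J_1$ (so $E(t_1,I_1,J_1)$ fires), put $x_i = 1$ for $i \in I_2\setminus\{t_1\}$ and $x_j = 0$ for $j \in J_2\setminus\{t_1\}$, give $x_{t_1}$ the value $0$ if $t_1 \in I_2$ and $1$ if $t_1 \in J_2$, and set all remaining coordinates arbitrarily. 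The assumptions $I_1 \cap J_2 = \varnothing$, $I_2 \cap J_1 = \varnothing$ (together with $I_i \cap J_i = \varnothing$ and $t_1 \notin I_1 \cup J_1$) are exactly what makes these demands non-contradictory. For this $\vv x$ one has $\phi(\vv x, I_2, J_2) = 0$, but flipping $x_{t_1}$ makes it $1$. Hence applying $E(t_1,I_1,J_1)$ first flips $x_{t_1}$ and thereby enables $E(t_2,I_2,J_2)$ to fire, whereas in the other order $E(t_2,I_2,J_2)$ sees the original $x_{t_1}$, does not fire, and $E(t_1,I_1,J_1)$ merely flips $x_{t_1}$. The two outputs then differ exactly in coordinate $t_2$, so the gates do not commute.

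The genuinely delicate point is the consistency check in the ``only if'' construction --- verifying that no coordinate is asked to be both $0$ and $1$ --- and this is the only place the hypotheses $I_1 \cap J_2 = \varnothing$ and $I_2 \cap J_1 = \varnothing$ are actually needed; everything else is bookkeeping with unit vectors. As a sanity check, when $J_1 = J_2 = \varnothing$ the criterion must collapse to the classical commutation condition for $k$-CNOT gates, namely $t_1 \notin I_2$ and $t_2 \notin I_1$ (condition~2 being vacuous), which is indeed what it does.
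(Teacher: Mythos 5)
Your proof is correct and follows essentially the same route as the paper: sufficiency of condition~1 by noting that each gate's firing bit ignores the other's target, sufficiency of condition~2 by observing that a shared index in $I_1\cap J_2$ (resp. $I_2\cap J_1$) lets at most one gate fire, and necessity via a witness vector that is $1$ on the direct controls and $0$ on the inverted controls. The only difference is organizational — you phrase necessity as an explicit contrapositive with a fully specified separating input, whereas the paper derives $\phi(\vv a, I_2\setminus I_1, J_2\setminus J_1)=0$ from commutativity first and then exhibits the same kind of witness to reach a contradiction; both hinge on the identical consistency check that $I_1\cap J_2=I_2\cap J_1=\varnothing$ makes the assignment well-defined.
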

\begin{proof}
    Докажем необходимость.

    Пусть функция $f_i(\vv x)$ задаётся элементом $E(t_i, I_i, J_i)$. Тогда $f_i(\vv x) = \vv y$,
    где $y_j=x_j$ при $j \ne t_i$, $y_{t_i} = x_{t_i} \oplus f^{(k)}_i(\vv x)$,
    $f^{(k)}_i(\vv x) = \phi(\vv x, I_i, J_i)$.

    Пусть для некоторого $\vv a$ значение $f^{(k)}_1(\vv a) = 1$, $f_1(\vv a) = \vv b$.
    По условию, элементы $E(t_1, I_1, J_1)$ и $E(t_2, I_2, J_2)$ являются коммутирующими, следовательно,
    $f_2(f_1(\vv a)) = f_1(f_2(\vv a))$.
    Отсюда следует, что либо $t_1 = t_2$, либо $t_1 \neq t_2$ и $f^{(k)}_2(\vv a) = f^{(k)}_2(\vv b)$.
    Операция $\oplus$ является коммутативной, поэтому в любом случае, вне зависимости от значений $t_1$ и $t_2$,
    верно равенство $f^{(k)}_2(\vv a) = f^{(k)}_2(\vv b)$.
    Оно может выполняться только в двух случаях:
    \begin{enumerate}
        \item
            Функция $f^{(k)}_2(\vv x)$ фиктивно зависит от переменной $x_{t_1} \Rightarrow t_1 \notin I_2 \cup J_2$.
            Мы не делали различия между двумя \gate, следовательно, аналогичное условие верно и для другого \gate:
            $t_2 \notin I_1 \cup J_1$.
        
        \item
            Функция $f^{(k)}_2(\vv x)$ существенно зависит от переменной $x_{t_1}$,
            т.\,е. $t_1 \in I_2 \cup J_2$, тогда из условия $f^{(k)}_2(\vv a) = f^{(k)}_2(\vv b)$ можно вывести равенство
            $$
                \phi(\vv a, I_2 \setminus \{\,t_1\,\}, J_2 \setminus \{\,t_1\,\}) = 0 \; .
            $$
            Отсюда следует, что для всех значений $\vv a$, для которых $f^{(k)}_1(\vv a) = 1$, верно равенство
            $f^{(k)}_2(\vv a) = 0$, таким образом, получаем следующее равенство:
            $$
                \phi(\vv a, I_2 \setminus I_1, J_2 \setminus J_1) = 0 \; .
            $$
            Выше было условлено считать, что $\phi(\vv x, \varnothing, \varnothing) = 1$, следовательно, либо
            $I_2 \setminus I_1 \neq \varnothing$, либо $J_2 \setminus J_1 \neq \varnothing$.
            Пусть $I_2 \cap J_1 = \varnothing$ и $J_2 \cap I_1 = \varnothing$.
            Рассмотрим следующее значение $\vv a'$: $a'_i = 1$ при $i \in I_1 \cup I_2$, $a'_i = 0$ при $i \in J_1 \cup J_2$.
            В этом случае $f^{(k)}_1(\vv a') = 1$ и $\phi(\vv a', I_2 \setminus I_1, J_2 \setminus J_1) = 1$~--- противоречие.
            Следовательно, либо $I_2 \cap J_1 \ne \varnothing$, либо $J_2 \cap I_1 \ne \varnothing$.
    \end{enumerate}

    \medskip
    Необходимость обоих условий доказана. Теперь докажем их достаточность.

    Для каждого из двух условий рассмотрим функции $f(\vv x) = f_2(f_1(\vv x))$,
    соответствующую композиции $E(t_1, I_1, J_1) * E(t_2, I_2, J_2)$, и $g(\vv x) = f_1(f_2(\vv x))$,
    соответствующую композиции $E(t_2, I_2, J_2) * E(t_1, I_1, J_1)$, где $\vv x$~--- входной вектор значений.

    Элементы $E(t_1, I_1, J_1)$ и $E(t_2, I_2, J_2)$ будут коммутирующими, если $f(\vv x) = g(\vv x)$.
    \begin{enumerate}
        \item
            Рассмотрим условие $t_1 \notin I_2 \cup J_2$ и $t_2 \notin I_1 \cup J_1$ (в частности, $t_1 = t_2$).
            
            $f_1(\vv x) = \vv y$; $y_i = x_i$ при $i \ne t_1$;
            $y_{t_1} = x_{t_1} \oplus \phi(\vv x, I_1, J_1)$.
            \\
            $f(\vv x) = f_2(\vv y) = \vv z$; $z_i = y_i = x_i$ при $i \ne t_1, t_2$;
            $z_{t_1} = y_{t_1} = x_{t_1} \oplus \phi(\vv x, I_1, J_1)$.
            \\
            $z_{t_2} = y_{t_2} \oplus \phi(\vv y, I_2, J_2) =
            x_{t_2} \oplus \phi(\vv x, I_2, J_2)$.

            \bigskip
            
            $f_2(\vv x) = \vv y'$; $y'_i = x_i$ при $i \ne t_2$;
            $y'_{t_2} = x_{t_2} \oplus \phi(\vv x, I_2, J_2)$.
            \\
            $f(\vv x) = f_2(\vv y') = \vv z'$; $z'_i = y'_i = x_i$ при $i \ne t_1, t_2$;
            $z'_{t_2} = y_{t_2} = x_{t_2} \oplus \phi(\vv x, I_2, J_2)$.
            \\
            $z'_{t_1} = y_{t_1} \oplus \phi(\vv y, I_1, J_1) =
            x_{t_1} \oplus \phi(\vv x, I_1, J_1)$.

            \bigskip
            \qquad $\vv z = \vv z' \Rightarrow f(\vv x) = g(\vv x)$.

        \item
            Рассмотрим условие $I_1 \cap J_2 \ne \varnothing$ или $I_2 \cap J_1 \ne \varnothing$.
            
            Пусть $I_1 \cap J_2 \ne \varnothing$, тогда $\exists k \in I_1 \cap J_2$,
            $f_1(\vv x ) = \vv x$ при $x_k = 0$, $f_2(\vv x) = \vv x$ при $x_k = 1$.
            
            \qquad Рассмотрим случай $x_k = 0$:
            \\
            $f(\vv x) = f_2(f_1(\vv x)) = f_2(\vv x)$.
            \\
            $g(\vv x) = f_1(f_2(\vv x)) = f_1(\vv y)$.
            \\
            $k \in I_1 \cap J_2 \Rightarrow y_k = x_k = 0 \Rightarrow g(\vv x) = f_1(\vv y) = \vv y =
            f_2(\vv x) \Rightarrow f(\vv x ) = g(\vv y)$.
            
            \qquad Рассмотрим случай $x_k = 1$:
            \\
            $f(\vv x) = f_2(f_1(\vv x)) = f_2(\vv y)$.
            \\
            $k \in I_1 \cap J_2 \Rightarrow y_k = x_k = 1 \Rightarrow f(\vv x) = f_2(\vv y) = \vv y = f_1(\vv x)$.
            \\
            $g(\vv x) = f_1(f_2(\vv x)) = f_1(\vv x) \Rightarrow f(\vv x ) = g(\vv y)$.
            
            Таким образом, для всех $x_k$ верно $f(\vv x ) = g(\vv y)$.
            Аналогично для случая $I_2 \cap J_1 \ne \varnothing$.
    \end{enumerate}
\end{proof}

Условие коммутируемости 2 из Леммы~\ref{lemma_independend_gates} является новым результатом,
не выводимым из условий коммутируемости для классических элементов $k$-CNOT.
Примеры коммутирующих обратимых \gate{} показаны на рис.~\ref{pic_swappable}.
\Figure[ht]
    \centering
    \includegraphics[scale=1.2]{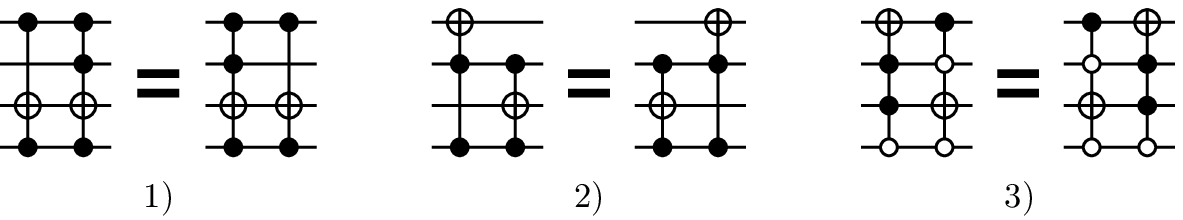}
    \caption{\small Примеры коммутирующих обратимых \gate}\label{pic_swappable}
\end{figure}

\subsection{
    \texorpdfstring{Эквивалентные замены композиций \gate}%
    {Эквивалентные замены композиций Ф.Э.}}

\forceindent В работе~\cite{iwama_transform_rules} было предложено несколько эквивалентных замен одной композиции элементов
$E(t,I)$ на другую. Эти замены были описаны в разделе~\ref{subsection_optimization_algorithms}
на с.~\pageref{opt_duplicates}.
В некоторых случаях такая замена снижает либо сложность схемы, либо количество контролирующих входов у элементов.
В работе~\cite{saeedi_rule_based} были рассмотрены некоторые эквивалентные замены композиций элементов $E(t,I,J)$.
Далее будет приведён расширенный и дополненный список таких замен.

Стоит отметить, что используемое ниже по тексту словосочетание <<композиция \gate{} может быть заменена>> означает,
что результат замены не меняет результирующего преобразования исходной композиции \gate{}


\begin{replacement}\label{replace_duplicates}
    Композиция элементов $E(t,I,J) * E(t,I,J)$ может быть исключена из схемы.
\end{replacement}
\noindent
Доказательство тривиально и следует из определения~\eqref{define_common_k_cnot}.
По сути, замена~\ref{replace_duplicates}~--- исключение дублирующих \gate{} из схемы.


\begin{replacement}\label{replace_merge} (\textit{слияние})
    Если $I_1 = I_2 \cup \{\,k\,\}$, $J_2 = J_1 \cup \{\,k\,\}$, $k \notin I_2 \cup J_1$,
    то композиция элементов $E(t,I_1,J_1) * E(t,I_2,J_2)$ может быть заменена одним элементом $E(t,I_2,J_1)$.
\end{replacement}
\begin{proof}
    Пусть функция $f_1(\vv x)$ задаётся элементом $E(t,I_1,J_1)$, а функция $f_2(\vv x)$~--- элементом $E(t,I_2,J_2)$.
    Рассмотрим функцию $f(\vv x) = f_2(f_1(\vv x))$.
    \\
    $f_1(\vv x) = \vv y$; $y_i = x_i$ при $i \ne t$;
    $y_t = x_t \oplus \phi(\vv x, I_1, J_1)$.
    \\
    $I_1 = I_2 \cup \{\,k\,\}, k \notin I_2 \Rightarrow y_t = x_t \oplus x_k \wedge \phi(\vv x, I_2, J_1)$.
    \\
    $f(\vv x) = f_2(\vv y) = \vv z$; $z_i = y_i = x_i$ при $i \ne t$;
    $z_t = y_t \oplus \phi(\vv x, I_2, J_2)$.
    \\
    $J_2 = J_1 \cup \{\,k\,\}, k \notin J_1 \Rightarrow
    z_t = y_t \oplus \bar x_k \wedge \phi(\vv x, I_2, J_1)$.
    \\
    $z_t = x_t \oplus x_k \wedge \phi(\vv x, I_2, J_1)
    \oplus \bar x_k \wedge \phi(\vv x, I_2, J_1)$.
    \\
    $z_t = x_t \oplus \left(x_k \oplus \bar x_k \right) \wedge \phi(\vv x, I_2, J_1) =
    x_t \oplus \phi(\vv x, I_2, J_1)$.

    \medskip
    Функция $f(\vv x) = \vv z$ задаётся элементом $E(t,I_2,J_1)$.
\end{proof}


\begin{replacement}\label{replace_reduce} (\textit{уменьшение количества контролирующих входов})
    Если существуют такие индексы $p$ и $q$, что $p \in I_1 \cap J_2$, $q \in J_1 \cap I_2$, 
    $I_2 = I_1 \setminus \{\,p\,\} \cup \{\,q\,\}$, $J_2 = J_1 \setminus \{\,q\,\} \cup \{\,p\,\}$,
    то композиция элементов $E(t,I_1,J_1) * E(t,I_2,J_2)$ может быть заменена композицией 
    $E(t,I_1,J_3) * E(t,I_2,J_3)$, где $J_3 = J_1 \setminus \{\,q\,\} = J_2 \setminus \{\,p\,\}$.
\end{replacement}
\begin{proof}
    Пусть функция $f_1(\vv x)$ задаётся элементом $E(t,I_1,J_1)$, а функция $f_2(\vv x)$~--- элементом $E(t,I_2,J_2)$.
    Рассмотрим функцию $f(\vv x) = f_2(f_1(\vv x))$.

    $f_1(\vv x) = \vv y$; $y_i = x_i$ при $i \ne t$;
    $y_t = x_t \oplus \phi(\vv x, I_1, J_1)$.
    \\
    $q \in J_1 \Rightarrow y_t = x_t \oplus (1 \oplus x_q) \wedge \phi(\vv x, I_1, J_1 \setminus \{\,q\,\})$.
    \\
    $q \notin I_1 \Rightarrow y_t = x_t \oplus \phi(\vv x, I_1, J_3) \oplus \phi(\vv x, I_1 \cup \{\,q\,\}, J_3)$.

    \medskip
    $f(\vv x) = f_2(\vv y) = \vv z$; $z_i = y_i = x_i$ при $i \ne t$;
    $z_t = y_t \oplus \phi(\vv x, I_2, J_2)$.
    \\
    $p \in J_2 \Rightarrow z_t = y_t \oplus (1 \oplus x_p) \wedge \phi(\vv x, I_2, J_2 \setminus \{\,p\,\})$.
    \\
    $p \notin I_2 \Rightarrow z_t = y_t \oplus \phi(\vv x, I_2, J_3) \oplus \phi(\vv x, I_2 \cup \{\,p\,\}, J_3)$.

    \medskip
    $z_t = x_t \oplus \phi(\vv x, I_1, J_3) \oplus \phi(\vv x, I_1 \cup \{\,q\,\}, J_3)
               \oplus \phi(\vv x, I_2, J_3) \oplus \phi(\vv x, I_2 \cup \{\,p\,\}, J_3)$.
    \\
    $I_2 = I_1 \setminus \{\,p\,\} \cup \{\,q\,\}$, $p \in I_1 \Rightarrow
    I_2 \cup \{\,p\,\} = I_1 \cup \{\,q\,\} \Rightarrow
    \phi(\vv x, I_1 \cup \{\,q\,\}, J_3) = \phi(\vv x, I_2 \cup \{\,p\,\}, J_3)$.
    \\
    $z_t = x_t \oplus \phi(\vv x, I_1, J_3) \oplus \phi(\vv x, I_2, J_3)$.

    \medskip
    Функция $f(\vv x) = \vv z$ задаётся композицией элементов $E(t,I_1,J_3) * E(t,I_2,J_3)$.
\end{proof}


\begin{replacement}\label{replace_dependend} (\textit{перестановка некоммутирующих \gate})
    Если $t_1 \in I_2 \cup J_2$, $t_2 \notin I_1 \cup J_1$, то композиция некоммутирующих элементов
    $E(t_1,I_1,J_1) * E(t_2,I_2,J_2)$ может быть заменена композицией
    $E(t_2, I_1 \cup I_2 \setminus \{\,t_1\,\},J_1 \cup J_2 \setminus \{\,t_1\,\}) * E(t_2, I_2, J_2) * E(t_1, I_1, J_1)$.
\end{replacement}
\begin{proof}
    $(I_1 \cup I_2) \cap (J_1 \cup J_2) = (I_1 \cap J_1) \cup (I_2 \cap J_1) \cup (I_1 \cap J_2) \cup (I_2 \cap J_2) = $
    \\
    $= (I_2 \cap J_1) \cup (I_1 \cap J_2) = \varnothing$, т.\,к. в противном случае элементы $E(t_1,I_1,J_1)$ и $E(t_2,I_2,J_2)$
    будут коммутирующими. Следовательно, элемент $E(t_2, I_1 \cup I_2 \setminus \{\,t_1\,\},J_1 \cup J_2 \setminus \{\,t_1\,\})$
    не нарушает требований, накладываемых на множества прямых и инвертированных контролирующих входов
    (см. определение~\ref{define_common_k_cnot}).

    Пусть $f_1(\vv x)$ задаётся элементом $E(t_1,I_1,J_1)$, $f_2(\vv x)$~--- элементом $E(t_2,I_2,J_2)$,
    $f_3(\vv x)$~--- элементом $E(t_2, I_1 \cup I_2 \setminus \{\,t_1\,\},J_1 \cup J_2 \setminus \{\,t_1\,\})$.
    Рассмотрим две функции: $f(\vv x) = f_2(f_1(\vv x))$ и $g(\vv x) = f_1(f_2(f_3(\vv x)))$.
    \\
    $f_1(\vv x) = \vv y$; $y_i = x_i$ при $i \ne t_1$;
    $y_{t_1} = x_{t_1} \oplus \phi(\vv x, I_1, J_1)$.
    \\
    $f(\vv x) = f_2(\vv y) = \vv z$; $z_i = y_i = x_i$ при $i \ne t_2$;
    $z_{t_2} = y_{t_2} \oplus \phi(\vv y, I_2, J_2)$.

    Пусть $t_1 \in I_2$, тогда $z_{t_2} = x_{t_2} \oplus y_{t_1} \wedge \phi(\vv x, I_2 \setminus \{\,t_1\,\}, J_2)$.
    \\
    $z_{t_2} = x_{t_2} \oplus (x_{t_1} \oplus \phi(\vv x, I_1, J_1)) \wedge \phi(\vv x, I_2 \setminus \{\,t_1\,\}, J_2)$.
    \\
    $z_{t_2} = x_{t_2} \oplus \phi(\vv x, I_2, J_2) \oplus \phi(\vv x, I_1 \cup I_2 \setminus \{\,t_1\,\}, J_1 \cup J_2)$.
    \\
    $t_1 \notin J_2 \Rightarrow z_{t_2} = x_{t_2} \oplus \phi(\vv x, I_2, J_2)
    \oplus \phi(\vv x, I_1 \cup I_2 \setminus \{\,t_1\,\}, J_1 \cup J_2 \setminus \{\,t_1\,\})$.
     
    Если же $t_1 \in J_2$, то $z_{t_2} = x_{t_2} \oplus \bar y_{t_1} \wedge \phi(\vv x, I_2, J_2 \setminus \{\,t_1\,\})$.
    \\
    $z_{t_2} = x_{t_2} \oplus ((1 \oplus x_{t_1}) \oplus \phi(\vv x, I_1, J_1)) \wedge
        \phi(\vv x, I_2, J_2 \setminus \{\,t_1\,\})$.
    \\
    $z_{t_2} = x_{t_2} \oplus \phi(\vv x, I_2, J_2) \oplus \phi(\vv x, I_1 \cup I_2, J_1 \cup J_2 \setminus \{\,t_1\,\})$.
    \\ 
    $t_1 \notin I_2 \Rightarrow z_{t_2} = x_{t_2} \oplus \phi(\vv x, I_2, J_2)
    \oplus \phi(\vv x, I_1 \cup I_2 \setminus \{\,t_1\,\}, J_1 \cup J_2 \setminus \{\,t_1\,\})$.

    Таким образом, во всех случаях:
    \\
    $z_{t_2} = x_{t_2} \oplus \phi(\vv x, I_2, J_2)
    \oplus \phi(\vv x, I_1 \cup I_2 \setminus \{\,t_1\,\}, J_1 \cup J_2 \setminus \{\,t_1\,\})$.

    \noindent $f_2(f_3(\vv x)) = \vv y'$; $y'_i = x_i$ при $i \ne t_2$;
    $y'_{t_2} = x_{t_2} \oplus \phi(\vv x, I_1 \cup I_2 \setminus \{\,t_1\,\}, J_1 \cup J_2 \setminus \{\,t_1\,\})
    \oplus \phi(\vv x, I_2, J_2)$.
    \\
    $g(\vv x) = f_1(\vv y') = \vv z'$; $z'_i = y'_i$ при $i \ne t_1$;
    $z'_{t_1} = y'_{t_1} \oplus \phi(\vv y', I_1, J_1)$.
    \\
    $t_2 \notin I_1 \cup J_1 \Rightarrow z'_{t_1} = x_{t_1} \oplus \phi(\vv x, I_1, J_1)$.

    $\vv z' = \vv z \Rightarrow f(\vv x) = g(\vv x) \Rightarrow$
    замена~\ref{replace_dependend} не меняет результирующего преобразования исходной композиции \gate{}
\end{proof}

\begin{replacement}\label{replace_dependend_by_swap} (\textit{следствие замены~\ref{replace_dependend}})
    Если в условии замены~\ref{replace_dependend} $I_1 \subseteq I_2$ и $J_1 \subseteq J_2$,
    то композиция некоммутирующих элементов $E(t_1,I_1,J_1) * E(t_2,I_2,J_2)$ может быть заменена композицией
    $E(t_2, I_2 \cup \{\,t_1\,\}, J_2 \setminus \{\,t_1\,\}) * E(t_1, I_1, J_1)$, если $t_1 \in J_2$,
    и композицией $E(t_2, I_2 \setminus \{\,t_1\,\}, J_2 \cup \{\,t_1\,\}) * E(t_1, I_1, J_1)$,
    если $t_1 \in I_2$.
\end{replacement}
\begin{proof}
    $ $

    Согласно условию замены~\ref{replace_dependend}, композиция элементов $E(t_1,I_1,J_1) * E(t_2,I_2,J_2)$
    может быть заменена композицией
    $E(t_2, I_1 \cup I_2 \setminus \{\,t_1\,\},J_1 \cup J_2 \setminus \{\,t_1\,\}) * E(t_2, I_2, J_2) * E(t_1, I_1, J_1)$.
    \\
    $I_1 \subseteq I_2$, $J_1 \subseteq J_2 \Rightarrow
    E(t_2, I_1 \cup I_2 \setminus \{\,t_1\,\},J_1 \cup J_2 \setminus \{\,t_1\,\}) =
    E(t_2, I_2 \setminus \{\,t_1\,\}, J_2 \setminus \{\,t_1\,\})$.

    Пусть $t_1 \in J_2$, тогда $E(t_2, I_2 \setminus \{\,t_1\,\}, J_2 \setminus \{\,t_1\,\}) =
        E(t_2, I_2, J_2 \setminus \{\,t_1\,\})$.
    Рассмотрим функцию $f(\vv x) = f_2(f_1(\vv x))$, задаваемой композицией
    $E(t_2, I_2, J_2 \setminus \{\,t_1\,\}) * E(t_2, I_2, J_2)$, где $f_1(\vv x)$ задаётся элементом
    $E(t_2, I_2, J_2 \setminus \{\,t_1\,\})$, а $f_2(\vv x)$~--- элементом $E(t_2, I_2, J_2)$.
    \\
    $f_1(\vv x) = \vv y$; $y_i = x_i$ при $i \ne t_2$;
    $y_{t_2} = x_{t_2} \oplus \phi(\vv x, I_2, J_2 \setminus \{\,t_1\,\})$.
    \\
    $f(\vv x) = f_2(\vv y) = \vv z$; $z_i = y_i = x_i$ при $i \ne t_2$;
    $z_{t_2} = y_{t_2} \oplus \phi(\vv x, I_2, J_2)$.
    \\
    $z_{t_2} = x_{t_2} \oplus \phi(\vv x, I_2, J_2 \setminus \{\,t_1\,\}) \oplus \phi(\vv x, I_2, J_2)
        = x_{t_2} \oplus \phi(\vv x, I_2, J_2 \setminus \{\,t_1\,\}) \oplus
            {\bar x_{t_1}} \wedge \phi(\vv x, I_2, J_2 \setminus \{\,t_1\,\})$.
    \\
    $z_{t_2} = x_{t_2} \oplus \phi(\vv x, I_2 \cup \{\,t_1\,\}, J_2 \setminus \{\,t_1\,\})$.

    Таким образом, при $t_1 \in J_2$ функция $f(\vv x)$ задаётся элементом
    $E(t_2, I_2 \cup \{\,t_1\,\}, J_2 \setminus \{\,t_1\,\})$.

    \bigskip
    Пусть $t_1 \in I_2$, тогда $E(t_2, I_2 \setminus \{\,t_1\,\}, J_2 \setminus \{\,t_1\,\}) =
        E(t_2, I_2 \setminus \{\,t_1\,\}, J_2)$. 
    Рассмотрим функцию $g(\vv x) = g_2(g_1(\vv x))$, задаваемой композицией
    $E(t_2, I_2 \setminus \{\,t_1\,\}, J_2) * E(t_2, I_2, J_2)$, где $g_1(\vv x)$ задаётся элементом
    $E(t_2, I_2 \setminus \{\,t_1\,\}, J_2)$, а $g_2(\vv x)$~--- элементом $E(t_2, I_2, J_2)$.
    \\
    $g_1(\vv x) = \vv y'$; $y'_i = x_i$ при $i \ne t_2$;
    $y'_{t_2} = x_{t_2} \oplus \phi(\vv x, I_2 \setminus \{\,t_1\,\}, J_2)$.
    \\
    $g(\vv x) = g_2(\vv y') = \vv z'$; $z'_i = y'_i = x_i$ при $i \ne t_2$;
    $z'_{t_2} = y'_{t_2} \oplus \phi(\vv x, I_2, J_2)$.
    \\
    $z'_{t_2} = x_{t_2} \oplus \phi(\vv x, I_2 \setminus \{\,t_1\,\}, J_2) \oplus \phi(\vv x, I_2, J_2) =
        x_{t_2} \oplus \phi(\vv x, I_2 \setminus \{\,t_1\,\}, J_2) \oplus
            x_{t_1} \wedge \phi(\vv x, I_2 \setminus \{\,t_1\,\}, J_2)$.
    \\
    $z_{t_2} = x_{t_2} \oplus \phi(\vv x, I_2 \setminus \{\,t_1\,\}, J_2 \cup \{\,t_1\,\})$.

    Таким образом, при $t_1 \in I_2$ функция $g(\vv x)$ задаётся элементом
    $E(t_2, I_2 \setminus \{\,t_1\,\}, J_2 \cup \{\,t_1\,\})$.
\end{proof}


\begin{replacement}\label{replace_dependend_mirrored} (\textit{зеркальное отображение замены~\ref{replace_dependend}})
    Если $t_2 \in I_1 \cup J_1$, $t_1 \notin I_2 \cup J_2$, то композиция некоммутирующих элементов
    $E(t_1, I_1, J_1) * E(t_2, I_2, J_2)$ может быть заменена композицией
    $E(t_2, I_2, J_2) * E(t_1, I_1, J_1) * E(t_1, I_1 \cup I_2 \setminus \{\,t_2\,\}, J_1 \cup J_2 \setminus \{\,t_2\,\})$.
\end{replacement}
\begin{proof}
    Доказательство аналогично доказательству для замены~\ref{replace_dependend}.
\end{proof}


\begin{replacement}\label{replace_dependend_by_swap_mirrored} (\textit{следствие замены~\ref{replace_dependend_mirrored}})
    Если в условии замены~\ref{replace_dependend_mirrored} $I_2 \subseteq I_1$ и $J_2 \subseteq J_1$,
    то композиция некоммутирующих элементов $E(t_1,I_1,J_1) * E(t_2,I_2,J_2)$ может быть заменена композицией
    $E(t_2, I_2, J_2) * E(t_1, I_1 \cup \{\,t_2\,\}, J_1 \setminus \{\,t_2\,\})$, если $t_2 \in J_1$,
    и композицией $E(t_2, I_2, J_2) * E(t_1, I_1 \setminus \{\,t_2\,\}, J_1 \cup \{\,t_2\,\})$, если $t_2 \in I_1$.
\end{replacement}
\begin{proof}
    Доказательство аналогично доказательству для замены~\ref{replace_dependend_by_swap}. 
\end{proof}


\begin{replacement}\label{replace_recursive_simple} Элемент $E(t, I, J)$ можно заменить на композицию \gate{} вида
    $$
        \left(\compose_{t \in J}{E(t)} \right) * E(t, I \cup J) * \left(\compose_{t \in J}{E(t)} \right) \; .
    $$
\end{replacement}
\begin{proof}
    См. рис.~\ref{pic_generalized_k_cnot_representation} и определение элемента $E(t, I, J)$. 
\end{proof}


\begin{replacement}\label{replace_recursive}
    Если $k \in J$, то элемент $E(t,I,J)$ можно заменить на композицию элементов
    $E(t, I \cup \{\,k\,\}, J \setminus \{\,k\,\}) * E(t, I, J \setminus \{\,k\,\})$.
\end{replacement}
\begin{proof}
    Пусть $f_1(\vv x)$ задаётся элементом $E(t, I \cup \{\,k\,\}, J \setminus \{\,k\,\})$,
    $f_2(\vv x)$~--- элементом $E(t, I, J \setminus \{\,k\,\})$.
    Рассмотрим функцию $f(\vv x) = f_2(f_1(\vv x))$.
    \\
    $f_1(\vv x) = \vv y$; $y_i = x_i$ при $i \ne t$;
    $y_t = x_t \oplus \phi(\vv x, I \cup \{\,k\,\}, J \setminus \{\,k\,\})$.
    \\
    $f(\vv x) = f_2(\vv y) = \vv z$; $z_i = y_i = x_i$ при $i \ne t$;
    $z_t = y_t \oplus \phi(\vv x, I, J \setminus \{\,k\,\})$.
    \\
    $z_t = x_t \oplus \phi(\vv x, I \cup \{\,k\,\}, J \setminus \{\,k\,\}) \oplus \phi(\vv x, I, J \setminus \{\,k\,\})$.
    \\
    $k \notin I \Rightarrow z_t = x_t \oplus x_k \wedge \phi(\vv x, I, J \setminus \{\,k\,\})
    \oplus \phi(\vv x, I, J \setminus \{\,k\,\}) = x_t \oplus \phi(\vv x, I, J)$.

    Функция $f(\vv x) = \vv z$ задаётся элементом $E(t,I,J)$.
\end{proof}

Стоит также отметить, что элемент $k$-CNOT, $k < n - 1$, может быть заменён композицией не более чем
$8(k-3)$ элементов 2-CNOT без использования дополнительных входов в схеме~\cite{barenco_elementary_gates}.
Замены \ref{replace_reduce}--\ref{replace_dependend_by_swap_mirrored} являются новым результатом, полученным автором.

Примеры эквивалентных замен показаны на рис.~\ref{pic_main_replacements}.

\Figure[ht]
    \centering
    \includegraphics[scale=1.2]{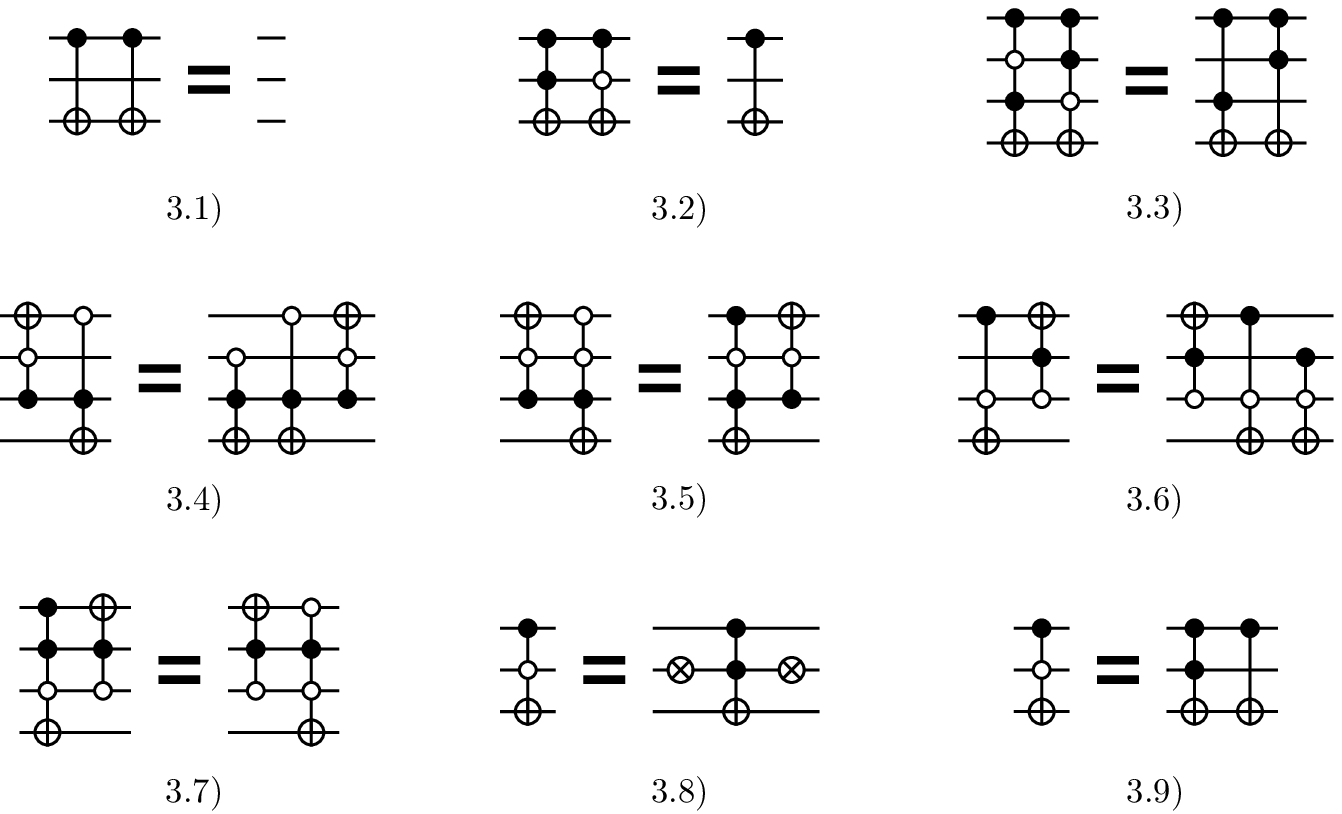}
    \caption{\small Примеры эквивалентных замен композиций \gate{}}\label{pic_main_replacements}
\end{figure}

Можно выделить ещё два частных случая замены <<слиянием>>.

\begin{replacement}\label{replace_recursive_mirrored} (\textit{обратная к замене~\ref{replace_recursive}})
    Если $I_1 = I_2 \cup \{\,k\,\}$, то композиция элементов $E(t, I_1, J) * E(t, I_2, J)$
    может быть заменена одним элементом $E(t, I_2, J \cup \{\,k\,\})$.
\end{replacement}

\begin{replacement}\label{replace_merge_additional}
    Если $J_1 = J_2 \cup \{\,k\,\}$, то композиция элементов $E(t, I, J_1) * E(t, I, J_2)$ может быть заменена одним
    элементом $E(t, I \cup \{\,k\,\}, J_2)$.
\end{replacement}

Доказательство корректности замен~\ref{replace_recursive_mirrored} и~\ref{replace_merge_additional}
вытекает из доказательства корректности замен~\ref{replace_duplicates} и~\ref{replace_recursive}.
Примеры замен~\ref{replace_recursive_mirrored} и~\ref{replace_merge_additional} показаны
на рис.~\ref{pic_merge_additional_example}.

\Figure[ht]
    \centering
    \includegraphics[scale=1.2]{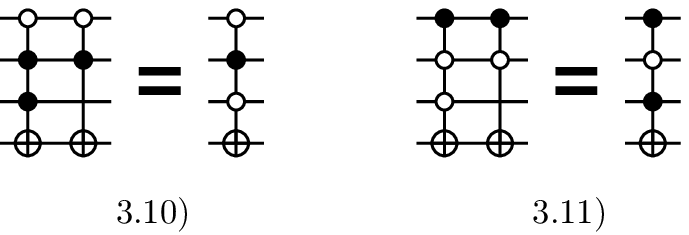}
    \caption{\small Частные случаи замены композиций \gate{} <<слиянием>>.}\label{pic_merge_additional_example}
\end{figure}

\subsection{Алгоритм снижения сложности обратимой схемы}
\label{subsection_common_scheme_complexity_reduction}

\forceindent Предложенные эквивалентные замены композиций \gate{} позволяют в некоторых случаях
снизить сложность обратимой схемы.
В основном для этого используется замена, исключающая дублирующие \gate~(\ref{replace_duplicates}),
и замены <<слиянием>> (\ref{replace_merge}, \ref{replace_recursive_mirrored} и~\ref{replace_merge_additional}).
Замены \ref{replace_reduce}--\ref{replace_dependend_by_swap_mirrored} позволяют получить новую обратимую схему
с новыми \gate, для которой можно снова попробовать использовать замены~\ref{replace_duplicates},
\ref{replace_merge}, \ref{replace_recursive_mirrored} и~\ref{replace_merge_additional}.
В случае, когда сложность уже невозможно снизить, можно использовать замены~\ref{replace_recursive_simple}
и~\ref{replace_recursive}, чтобы заменить все элементы $E(t,I,J)$ на классические элементы NOT и $k$-CNOT и получить
обратимую схему, не содержащую элементов $E(t,I,J)$.

Пусть обратимая схема представляет собой композицию элементов $\compose_{i = 1}^l {E_i}$, где $l$~--- сложность
схемы. Если композиция  элементов $E_i * E_j$ удовлетворяют условию какой-либо замены, $i < j$, и при этом существует такой
индекс $s$, $i \leqslant s < j$, что элементы $E_i$ и $E_k$ являются коммутирующими для всех $i < k \leqslant s$,
и элементы $E_j$ и $E_k$ являются коммутирующими для всех $s < k < j$, то элементы $E_i$ и $E_j$ можно исключить из схемы,
а результат замены композиции $E_i * E_j$ вставить в схему между элементами $E_s$ и $E_{s+1}$.

В качестве примера, на рис.~\ref{pic_minimization_process} показан процесс применения эквивалентных замен композиций \gate{}
для некоторой абстрактной схемы.

\Figure[ht]
    \centering
    \includegraphics[scale=1.2]{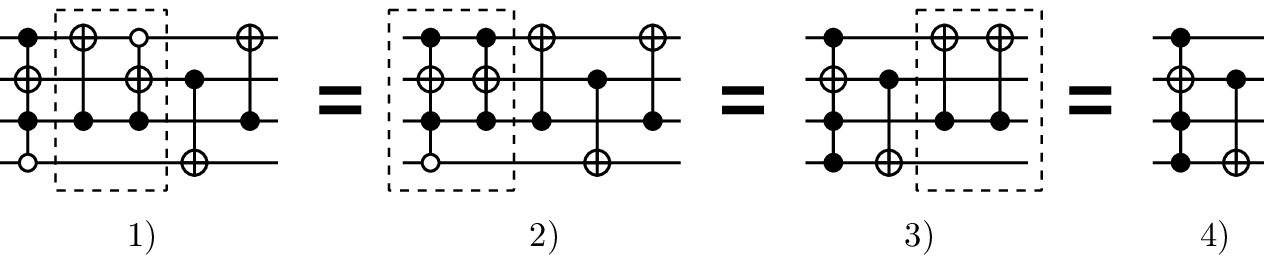}
    \caption
    {
        \small Процесс снижения сложности схемы:
        1) исходная схема; 2) схема после применения замены~\ref{replace_dependend_by_swap};
        3) схема после применения замены~\ref{replace_merge_additional};
        4) схема после применения замены~\ref{replace_duplicates}.
    }\label{pic_minimization_process}
\end{figure}

Оценим временн\'{у}ю сложность $T(A_{\text{reduction}})$ данного алгоритма снижения сложности обратимой схемы
$\frS^{(0)}$. Пусть в начале работы алгоритма $A_{\text{reduction}}$ сложность $L(\frS^{(0)}) = L_0$,
в конце $L(\frS^{(k)}) = L_k$, при этом обратимые схемы $\frS^{(0)}$ и $\frS^{(k)}$ задают одно и тоже преобразование,
$L_k \leqslant L_0$, $k$~--- количество шагов алгоритма $A_{\text{reduction}}$.

На $i$-м шаге алгоритма текущая обратимая схема $\frS^{(i)}$ имеет сложность $L_i$.
Оценим сложность работы алгоритма $T_i$ на данном шаге.\label{complexity_reduction_algorithm_description}
В начале необходимо найти два \gate{} в схеме, удовлетворяющих условию какой-либо замены.
В худшем случае для этого потребуется рассмотреть все возможные пары \gate{} в схеме $\frS^{(i)}$
($O(L_i^2)$ операций).
Затем для найденной пары \gate{} (если она существует) необходимо выяснить, могут ли рассматриваемые \gate{}
быть перемещены друг к другу в схеме без изменения её результирующего преобразования ($O(L_i)$ операций).
Следовательно, $T_i \gtrsim L_i^2$.
Заметим, что данная оценка является грубой, т.\,к. в случае, если рассматриваемая эквивалентная замена
не снижает сложность, а позволяет получить схему $\frS^{(i)'}$ с новыми \gate,
то алгоритм $A_{\text{reduction}}$ может быть применён ещё раз уже к этой схеме. И так до тех пор,
пока не будет получена схема $\frS^{(i+1)}$ сложности $L_{i+1} < L_i$;
если же такая обратимая схема не может быть получена, алгоритм заканчивает работу.
Данные требования позволяют говорить о сходимости описываемого алгоритма.
В зависимости от того, сколько раз будет применён этот рекурсивный поиск схемы $\frS^{(i+1)}$
для эквивалентных замен, не снижающих сложность, временн\'{а}я сложность $T_i$ может вырасти на несколько порядков
по сравнению с величиной $O(L_i^2)$.

Суммарная временн\'{а}я сложность $T(A_{\text{reduction}})$ может быть оценена следующим образом:
$$
    T(A_{\text{reduction}}) = \sum_{i=0}^k {T_i} \gtrsim \sum_{i=0}^k {L_i^2} \; .
$$
В худшем случае на каждом шаге алгоритма сложность снижается на 1, т.\,е. $L_0 - L_k = k$.
Таким образом, можно повысить нижнюю оценку для $T(A_{\text{reduction}})$:
\begin{equation}
    T(A_{\text{reduction}}) \gtrsim \sum_{i=L_k}^{L_0} {i^2} = \sum_{i=0}^{L_0} {i^2} - \sum_{i=0}^{L_k - 1} {i^2}
        = \frac{L_0(L_0 + 1)(2L_0 + 1) - (L_k - 1)L_k(2L_k - 1)}{6} \; .
    \label{formula_minimization_algorithm_time_complexity}
\end{equation}

Если $L_k = O(L_0)$, то формулу~\eqref{formula_minimization_algorithm_time_complexity} можно упростить:
$T(A_{\text{reduction}}) \gtrsim L_0^2$. Если же $L_k = o(L_0)$, то $T(A_{\text{reduction}}) \gtrsim L_0^3 \mathop / 3$.

Отсюда можно сделать следующий вывод: время работы описанного алгоритма снижения сложности обратимой схемы
нелинейно зависит от начального значения сложности этой схемы.
Поэтому в некоторых случаях может потребоваться синтезировать обратимую схему с минимально возможной сложностью,
чтобы уменьшить время, затрачиваемое на дальнейшее снижение сложности.
Этому вопросу будут посвящены следующие разделы данной главы.

\subsection{Снижение сложности схемы на этапе синтеза за счёт увеличения времени синтеза}
\label{subsection_complexity_reduction_during_synthesis}{}

\forceindent
Если в предыдущем разделе~\ref{subsection_common_scheme_complexity_reduction} предложенный алгоритм снижения
сложности обратимой схемы не зависел от алгоритма, при помощи которого данная схема была синтезирована,
то в данном разделе речь пойдёт о различных способах снижения сложности синтезируемой схемы на этапе синтеза
конкретно для алгоритма синтеза~\algref{alg_my_common}, описанного в разделе~\ref{subsection_my_synthesis_algorithms}.
Всего будет рассмотрено три таких способа: поиск грани булева куба, выбор между умножением подстановок справа либо слева,
поиск наилучшего представления цикла в виде произведения циклов меньшей длины.

\myparagraph[label_boolean_edge_search_paragraph]{Поиск грани булева куба}

\forceindent
На с.~\pageref{formula_bad_case_for_my_synthesis_algorithm} было описано обратимое преобразование $f$, которое может быть
задано одним элементом CNOT, но для которого алгорим~\algref{alg_my_common} синтезирует схему сложности $O(n2^n)$:
\begin{equation}
    f\left ( \langle x_1, x_2, \ldots, x_n \rangle \right ) = \langle x_1, x_2 \oplus x_1, x_3, \ldots, x_n \rangle \; .
    \label{formula_bad_case_for_my_synthesis_algorithm_second}
\end{equation}
Такой неоптимальный результат синтеза связан с тем, что алгоритм~\algref{alg_my_common} работает с представлением
подстановки $h_f$, задаваемой преобразованием $f$, в виде произведения пар независимых транспозиций. Каждая такая пара
транспозиций рассматривается независимо от других, что и приводит к избыточной сложности синтезированной схемы.
В качестве очевидного решения данной проблемы напрашивается переупорядочивание независимых транспозиций в представлении
подстановки $h_f$ по какому-либо признаку таким образом, чтобы для некоторой последовательности транспозиций в этом представлении
нашлась бы такая задающая её обратимая подсхема, использование которой в синтезированной схеме позволило бы снизить
суммарную сложность.

Для того, чтобы выделить такой <<объединяющий>> признак, введём некоторые обозначения.
Произвольную подстановку $h$ можно представить в виде произведения независимых циклов.
Такое представление единственно с точностью до порядка следования циклов в произведении и порядка следования элементов
в этих циклах. К примеру, $h = (ab) \circ (cde) = (dec) \circ (ba)$.
Для произвольной подстановки $h$ сумму длин циклов в её представлении в виде произведения
независимых циклов будем называть \textit{длиной} этой подстановки.

Будем говорить, что вектор $\vv x$ принадлежит циклу $c$, если $\vv x$ принадлежит множеству элементов этого цикла.
К примеру, $\vv y \in (\vv x, \vv y, \vv z)$, $\vv w \notin (\vv x, \vv y, \vv z)$. Аналогичным образом будем говорить,
что транспозиция $\tau = (\vv x, \vv y)$ принадлежит циклу $c$, если оба вектора этой транспозиции принадлежат циклу $c$.
В этом случае найдётся такая подстановка $h'$, что $c = \tau \circ h'$ и длина $h'$ на 1 меньше, чем длина цикла $c$:
\begin{enumerate} \label{formula_left_multiplications_simple}
    \item
        $c = (\vv x, \vv y, \vv a_1, \ldots, \vv a_k) = (\vv x, \vv y) \circ (\vv x, \vv a_1, \ldots, \vv a_k)$.
    \item
        $c = (\vv x, \vv a_1, \ldots, \vv a_k, \vv y, \vv b_1, \ldots, \vv b_p) = 
        (\vv x, \vv y) \circ (\vv x, \vv b_1, \ldots, \vv b_p) \circ (\vv y, \vv a_1, \ldots, \vv a_k)$.
\end{enumerate}
Если подстановка $h$ представляется в виде произведения независимых циклов $h = \mcirc_i {c_i}$, то запись $\tau \in h$
будет обозначать, что существует такой цикл $c_j$ в этом произведении, что $\tau \in c_j$.

Для двух векторов $\vv x, \vv y \in \ZZ_2^n$ обозначим через $\Delta(\vv x, \vv y)$ вектор,
равный $\Delta(\vv x, \vv y) = \langle x_1 \oplus y_1, \ldots, x_n \oplus y_n \rangle \in \ZZ_2^n$.
Будем называть $\Delta(\vv x, \vv y)$ \textit{вектором разницы} (или просто \textit{разницей}) векторов $\vv x$ и $\vv y$.
Для транспозиции $\tau = (\vv x, \vv y)$ разницу векторов $\vv x$ и $\vv y$ будем обозначать также через $\Delta(\tau)$.

Теперь для произвольного цикла $c$ можно ввести множество транспозиций $T(\vv d, c)$ следующим образом:
$$
    T(\vv d, c) \subseteq \{\,\tau=(\vv x_i, \vv y_i)\mid \tau \in c, \Delta(\tau)=\vv d\,\} \; .
$$
Другими словами, множество $T(\vv d, c)$~--- это множество транспозиций (не обязательно всех возможных),
принадлежащих циклу $c$, разница векторов для которых равняется заданному вектору $\vv d$.
Среди всех возможных множеств $T(\vv d, c)$ существует такое множество $T^*(\vv d, c)$, которое удовлетворяет двум условиям:
\begin{equation}
    \left \{\,
        \begin{aligned}
            c = \left( \mcirc_{\tau \in T^*(\vv d, c)}{\tau} \right ) \circ h' \; , \\
            \forall \tau' \in h' \colon \Delta(\tau') \neq \vv d\; .
        \end{aligned}
    \right .
\end{equation}
Таким образом, любое из возможных множеств $T^*(\vv d, c)$ позволяет задать представление цикла $c$ в виде произведения
транспозиций с $\Delta(\tau) = \vv d$ и, возможно, некоторой новой подстановки $h'$,
не имеющей транспозиций с разницей векторов, равной $\vv d$. При этом длина подстановки $h'$ меньше длины цикла $c$
на величину $|T^*(\vv d, c)|$.

Обозначим через $T_{\max}^*(\vv d, c)$ множество максимальной мощности среди всех возможных множеств $T^*(\vv d, c)$
(при этом возможна ситуация, когда таких множеств будет несколько).
Отметим\label{best_cycle_disjoint_question_presented},
что задача нахождения множества $T_{\max}^*(\vv d, c)$ не является тривиальной, способы его нахождения будут
исследоваться далее. На данном этапе будем считать, что нам подходит любое множество $T^*(\vv d, c)$.
Оно может быть получено следующим относительно простым способом:
\begin{enumerate}
    \item Пусть $T(\vv d, c) = \varnothing$.
    \item Найти какую-либо транспозицию $\tau \in c$, для которой $\Delta(\tau) = \vv d$, добавить её в множество $T(\vv d, c)$.
    \item Вычислить подстановку $h' = \tau \circ c$, найти представление подстановки $h'$ в виде произведения независимых циклов.
    \item Для каждого цикла из этого произведения повторить шаги (2)--(4).
    \item $T^*(\vv d, c) = T(\vv d, c)$.
\end{enumerate}

Для подстановки $h$, представимой в виде произведения независимых циклов $c_i$,
введём множество $T^*(\vv d, h)$ следующим образом:
$$
    T^*(\vv d, h) = \bigsqcup_i T^*(\vv d, c_i) \; .
$$
Тогда верно следующее равенство:
$$
    h = \left( \mcirc_{\tau \in T^*(\vv d, h)}{\tau} \right) \circ h' \; ,
$$
где $T^*(\vv d, h') = \varnothing$.

Теперь перейдём к рассмотрению множества векторов $\mathbb B_{\vv d, h} \subseteq \ZZ_2^n$, задаваемого следующим образом:
$$
    \mathbb B_{\vv d, h} = \left \{\,\vv x \in \ZZ_2^n\mid \exists \tau \in T^*(\vv d, h)\colon \vv x \in \tau \,\right\} \; .
$$
Другими словами, $\mathbb B_{\vv d, h}$~--- множество векторов, принадлежащих транспозициям из множества $T^*(\vv d, h)$.
Множество $\ZZ_2^n$ можно рассматривать как булев куб $\mathbb B^n$.
Будем искать грань этого куба $\mathbb B^{n,i_1, \ldots, i_k}_{\sigma_1, \ldots, \sigma_k} \subseteq \mathbb B_{\vv d, h}$,
удовлетворяющую условию:
\label{condition_for_boolean_edge_view}
$\vv d_{i_j} = 0$ для всех $i_1, \ldots, i_k$.
Другими словами, вектора транспозиций из множества $T^*(\vv d, h)$ не различаются в координатах $i_1, \ldots, i_k$.

Если такая грань $\mathbb B^{n,i_1, \ldots, i_k}_{\sigma_1, \ldots, \sigma_k} \subseteq \mathbb B_{\vv d, h}$
размерности $(n-k)$ существует, то для каждого вектора $\vv x \in \mathbb B^{n,i_1, \ldots, i_k}_{\sigma_1, \ldots, \sigma_k}$
найдётся парный ему вектор $\vv y \in \mathbb B^{n,i_1, \ldots, i_k}_{\sigma_1, \ldots, \sigma_k}$, такой что
транспозиция $(\vv x, \vv y) \in T^*(\vv d, h)$. Это следует из построения множества $\mathbb B_{\vv d, h}$ и ограничений,
которые мы наложили на грань $\mathbb B^{n,i_1, \ldots, i_k}_{\sigma_1, \ldots, \sigma_k}$.
Таким образом, данная грань включает в себя вектора $2^{n-k-1}$ транспозиций, т.\,к.
$\left | \mathbb B^{n,i_1, \ldots, i_k}_{\sigma_1, \ldots, \sigma_k} \right | = 2^{n-k}$.
Обозначим эти транспозиции $\tau_1, \ldots, \tau_{2^{n-k-1}}$.
Введём два множества, состоящих из $i_1, \ldots, i_k$:
\begin{enumerate}
    \item $I = \{\,i_j\mid \sigma_j = 1\,\}$~--- множество прямых контролирующих входов;
    \item $J = \{\,i_j\mid \sigma_j = 0\,\}$~--- множество инвертированных контролирующих входов;
\end{enumerate}
Тогда можно легко показать, что подстановка $g = \mcirc_{i=1}^{2^{n-k-1}}\tau_i$
задаётся обратимой схемой $\frS_g$:
\begin{equation}
    \frS_g = \compose_{i\colon \vv d_i = 1} {E(t_i,I,J)} \; ,
    \label{formula_scheme_for_boolean_edge}
\end{equation}
сложность которой $L(\frS_g) = w(\vv d)$, где $w(\vv d)$ обозначает вес Хэмминга вектора $\vv d$.
Следовательно, $L(\frS_g) \leqslant n$ (при использовании обобщённых элементов $E(t,I,J)$).
При этом выполняется равенство $h = g \circ h'$, где длина подстановки $h'$ меньше длины подстановки $h$ ровно на $2^{n-k-1}$.

С другой стороны, все транспозиции $\tau_1, \ldots, \tau_{2^{n-k-1}}$ независимы, потому что для них выполняется
условие $\Delta(\tau_i) = \vv d$. Согласно алгоритму~\algref{alg_my_common}, их можно разбить на $2^{n-k-2}$
пар независимых транспозиций, каждая из которых задаётся обратимой схемой сложности $O(n)$
(см. формулу~\eqref{formula_independent_pair_realization_complexity} на
с.~\pageref{formula_independent_pair_realization_complexity}). Следовательно, алгоритм~\algref{alg_my_common}
задал бы подстановку $g = \mcirc_{i=1}^{2^{n-k-1}}\tau_i$ обратимой схемой $\frS_g$
сложности $L(\frS_g) = 2^{n-k-2} \cdot O(n)$.

Отсюда следует, что для рассматриваемой грани $\mathbb B^{n,i_1, \ldots, i_k}_{\sigma_1, \ldots, \sigma_k}$
сложность обратимой подсхемы $\frS_g$, задающей подстановку $g = \mcirc_{i=1}^{2^{n-k-1}}\tau_i$,
можно уменьшить примерно в $2^{n-k-2}$ раз.

\bigskip
\bigskip
Опишем теперь кратко, как можно изменить алгоритм~\algref{alg_my_common}, чтобы применялся поиск грани булева куба.
Для заданной чётной подстановки $h \in A(\ZZ_2^n)$:
\begin{enumerate}
    \item \label{boolean_edge_algorithm_diff_search_step}
        Найти все вектора $\vv d_i \in \ZZ_2^n$, такие, что найдётся транспозиция $\tau_i \in h$, для которой
        $\Delta(\tau_i) = \vv d_i$.
    \item \label{boolean_edge_algorithm_sets_search_step}
        Для каждого найденного вектора $\vv d_i$ построить множества $T^*(\vv d_i, h)$ и $\mathbb B_{\vv d_i, h}$.
    \item \label{boolean_edge_algorithm_edges_search_step}
        Для каждого множества $\mathbb B_{\vv d_i, h}$ найти в булеве кубе $\mathbb B^n$
        грань $\mathbb B_i \subseteq \mathbb B_{\vv d_i, h}$ максимальной размерности, удовлетворяющую условиям,
        описанным на с.~\pageref{condition_for_boolean_edge_view}.
    \item \label{boolean_edge_algorithm_choose_edge_step}
        Среди всех найденных граней $\mathbb B_i$ выбрать грань $\mathbb B_j$ максимальной размерности $k$.
    \item \label{boolean_edge_algorithm_pair_implementation_step}
        Если $k < 2$ ($|\mathbb B_j| \leqslant 2$), то
        \begin{itemize}
            \item воспользоваться алгоритмом~\algref{alg_my_common} для синтеза подсхемы, задающей пару транспозиций $p$;
            \item получить новую подстановку $h' = p \circ h$.
        \end{itemize}
    \item \label{boolean_edge_algorithm_edge_implementation_step}
        Если $k \geqslant 2$ ($|\mathbb B_j| \geqslant 4$), то
        \begin{itemize}
            \item найти все $2^{k-1}$ транспозиций $\tau_1, \ldots, \tau_{2^{k-1}} \in T^*(\vv d_j, h)$,
                векторы которых принадлежат множеству $\mathbb B_j$;
            \item для подстановки $g = \mcirc_{i=1}^{2^{k-1}}\tau_i$ синтезировать схему $\frS_g$ согласно
                формуле~\eqref{formula_scheme_for_boolean_edge} и включить её в качестве подсхемы в итоговую схему;
            \item получить новую подстановку $h' = g \circ h$.
        \end{itemize}
    \item \label{boolean_edge_algorithm_reduction_step}
        Сделать замену $h = h'$ и повторить все шаги, начиная с 1-го.
        Полученная подстановка $h'$ будет иметь меньшую длину по сравнению с исходной подстановкой $h$,
        что позволяет говорить о сходимости описываемого алгоритма синтеза.
\end{enumerate}

Оценим примерную сложность каждого шага этого модифицированного алгоритма.
Обозначим длину подстановки $h$ через $l_h$. Для большинства подстановок $h \in \ZZ_2^n$ длина $l_h = O(2^n)$.
При выполнении шагов~\ref{boolean_edge_algorithm_diff_search_step} и~\ref{boolean_edge_algorithm_sets_search_step}
необходимо рассмотреть все возможные транспозиции, входящие в подстановку $h$. Для этого потребуется рассмотреть пары
векторов каждого с каждым, а значит, временн\'{а}я сложность выполнения этих шагов равна
$T_\text{1-2} = O(l_h^2) = O(2^{2n})$.

Всего существует $3^n$ различных граней булева куба $\mathbb B^n$.
На шаге~\ref{boolean_edge_algorithm_sets_search_step} может быть найдено не более $2^n$
различных множеств $\mathbb B_{\vv d_i, h}$, каждое из которых мощностью не более $2^n$ векторов.
Для того, чтобы понять, является ли множество векторов грани булева куба подмножеством какого-либо множества
$\mathbb B_{\vv d_i, h}$, необходимо просмотреть все векторы множества $\mathbb B_{\vv d_i, h}$.
Поэтому временн\'{у}ю сложность шага~\ref{boolean_edge_algorithm_edges_search_step}
можно грубо оценить сверху, как $T_3 \leqslant 12^n$.

Временн\'{а}я сложность шага~\ref{boolean_edge_algorithm_choose_edge_step} не превышает количества различных множеств
$\mathbb B_{\vv d_i, h}$, т.\,е. $T_4 \leqslant 2^n$.

Временн\'{а}я сложность шага~\ref{boolean_edge_algorithm_pair_implementation_step} равна $T_5 = O(n)$,
как было показано на с.~\pageref{formula_independent_pair_realization_complexity} для алгоритма~\algref{alg_my_common}.

На шаге~\ref{boolean_edge_algorithm_edge_implementation_step} необходимо получить все транспозиции подстановки $h$
из найденной грани булева куба $\mathbb B_j$, поэтому временн\'{а}я сложность этого шага $T_6 \leqslant 2^n$.

В самом худшем случае на шаге~\ref{boolean_edge_algorithm_reduction_step} длина подстановки будет сокращаться на 2,
если всё время будет выполняться шаг~\ref{boolean_edge_algorithm_pair_implementation_step}.
Таким образом, последний шаг будет выполняться не более $l_h \mathop / 2$ раз, а суммарная временн\'{а}я сложность
описанного алгоритма ограниченна сверху следующей величиной:
\begin{equation}
    T_\text{edge} \leqslant \frac{l_h}{2} \left( T_\text{1-2} + T_3 + T_4 + T_5 \right ) = O(24^n) \; .
\end{equation}
Шаг~\ref{boolean_edge_algorithm_diff_search_step} выполняется всегда, поэтому $T_\text{edge}$ ограничено снизу
временн\'{о}й сложностью этого шага
\begin{equation}
    T_\text{edge} \geqslant O(2^{2n}) \; .
\end{equation}

На первый взгляд, такое увеличение временн\'{о}й сложности алгоритма синтеза не оправдано снижением сложности
синтезированной схемы. Однако в случае, когда сложность синтезированной схемы равна $L = O(n2^n)$,
то, согласно формуле~\eqref{formula_minimization_algorithm_time_complexity}, время, затраченное на снижение
сложности такой схемы при помощи алгоритма, описанного на с.~\pageref{complexity_reduction_algorithm_description},
может превысить $O(L^3) = O(n^3 2^{3n})$. При этом такой алгоритм может дать в результате обратимую схему с б\'{о}льшей
сложностью, чем если бы её сложность была снижена на этапе синтеза приведённым выше способом.

\myparagraph{Умножение справа подстановки на произведение транспозиций}

\forceindent
На с.~\pageref{formula_left_multiplications_simple} было показано, что любой цикл длины больше двух можно представить
в виде произведения некоторой транспозиции и подстановки. Это было сделано при помощи умножения слева транспозиции
на подстановку. Легко показать, что любой цикл длины больше двух может быть также получен при помощи умножения справа
транспозиции на подстановку:
\begin{gather}
    (\vv x, \vv a_1, \ldots, \vv a_k, \vv y, \vv b_1, \ldots, \vv b_p) = 
        (\vv x, \vv y) \circ (\vv x, \vv b_1, \ldots, \vv b_p) \circ (\vv y, \vv a_1, \ldots, \vv a_k)
    \label{formula_left_multiplication} \; , \\
    (\vv x, \vv a_1, \ldots, \vv a_k, \vv y, \vv b_1, \ldots, \vv b_p) = 
        (\vv y, \vv b_1, \ldots, \vv b_p) \circ (\vv x, \vv a_1, \ldots, \vv a_k) \circ (\vv x, \vv y)
    \label{formula_right_multiplication} \; .
\end{gather}
Отсюда следует, что и произвольная подстановка может быть получена при помощи как произведения слева, так и произведения
справа транспозиции и некоторой новой подстановки.
По формулам~\eqref{formula_left_multiplication} и~\eqref{formula_right_multiplication} видно, что при этом новая
подстановка будет иметь различный вид.

На шагах~\ref{boolean_edge_algorithm_pair_implementation_step} и~\ref{boolean_edge_algorithm_edge_implementation_step}
на с.~\pageref{boolean_edge_algorithm_pair_implementation_step} использовалось умножение слева
транспозиции и произведения нескольких транспозиций на новую подстановку. Однако если использовать умножение справа,
то далее на шаге~\ref{boolean_edge_algorithm_diff_search_step} могут быть получены другие вектора $\vv d_i$,
а на шаге~\ref{boolean_edge_algorithm_sets_search_step}~--- другие множества $T^*(\vv d_i, h)$ и $\mathbb B_{\vv d_i, h}$,
имеющие, возможно, б\'{о}льшую мощность.
Всё это в итоге может повлиять на выбор грани $\mathbb B_j$ на шаге~\ref{boolean_edge_algorithm_choose_edge_step}
и на сложность синтезированной схемы.

Не представляется возможным однозначно выбирать левое или правое произведение на каждом этапе работы алгоритма синтеза,
чтобы в итоге получить обратимую схему с меньшей сложностью.
Однако можно использовать следующий подход:
\begin{itemize}
    \item
        вначале на шагах~\ref{boolean_edge_algorithm_pair_implementation_step}
        и~\ref{boolean_edge_algorithm_edge_implementation_step}
        применить левое произведение, получить новую подстановку $h_\text{left}$ и найти для неё грань булева куба
        $\mathbb B_\text{left}$ максимальной размерности на следующем этапе алгоритма;

    \item
        затем на шагах~\ref{boolean_edge_algorithm_pair_implementation_step}
        и~\ref{boolean_edge_algorithm_edge_implementation_step}
        применить правое произведение, получить новую подстановку $h_\text{right}$ и найти для неё грань булева куба
        $\mathbb B_\text{right}$ максимальной размерности на следующем этапе алгоритма;
        
    \item
        выбрать между $\mathbb B_\text{left}$ и $\mathbb B_\text{right}$ грань наибольшей размерности и соответствующее
        ей левое или правое произведение для данного этапа работы алгоритма синтеза;
        
    \item
        перейти к следующему этапу работы алгоритма синтеза.
\end{itemize}
Если грани $\mathbb B_\text{left}$ и $\mathbb B_\text{right}$ имеют одинаковую размерность, то для однозначности можно
всегда выбирать в таком случае левое произведение.

Данный подход увеличивает временн\'{у}ю сложность алгоритма синтеза в 2 раза, однако в некоторых случаях он позволяет
получить в результате работы такого алгоритма обратимую схему с меньшей сложностью, по сравнению с
алгоритмом синтеза, использующим только левое или только правое произведение.

\myparagraph{Увеличение мощности множества транспозиций с одинаковой разницей векторов}

\forceindent
На с.~\pageref{best_cycle_disjoint_question_presented} было сказано, что для фиксированного вектора $\vv d$ и цикла $c$
задача нахождения множества $T_{\max}^*(\vv d, c)$ не является тривиальной. Покажем это на простом примере.

Пусть $c = (\vv x_1, \vv x_2, \vv x_3,\vv y_1, \vv y_3, \vv y_2 )$, $\Delta(\vv x_i, \vv y_i) = \vv d$,
$\Delta(\vv x_i, \vv y_j) \neq \vv d$ при $i \neq j$,
$\Delta(\vv x_i, \vv x_j) \neq \vv d$, $\Delta(\vv y_i, \vv y_j) \neq \vv d$.
Тогда существует множество $T(\vv d, c) = \{\,(\vv x_1, \vv y_1), (\vv x_2, \vv y_2), (\vv x_3, \vv y_3)\,\}$.
Воспользовавшись формулой~\eqref{formula_left_multiplication}, мы можем представить цикл $c$ двумя способами:
\begin{enumerate}
    \item
        $c = (\vv x_1, \vv y_1) \circ (\vv x_1, \vv y_3, \vv y_2) \circ (\vv y_1, \vv x_2, \vv x_3)$;

    \item
        $c = (\vv x_2, \vv y_2) \circ (\vv x_3, \vv y_3) \circ (\vv x_1, \vv x_2) \circ
             (\vv x_3, \vv y_2) \circ (\vv y_1, \vv y_3)$.
\end{enumerate}
Таким образом, мы нашли два множества:
\begin{align*}
    T_1^*(\vv d, c) &= \{\,(\vv x_1, \vv y_1)\,\} \; , \\
    T_2^*(\vv d, c) &= \{\,(\vv x_2, \vv y_2), (\vv x_3, \vv y_3)\,\} \; .
\end{align*}
Очевидно, что в алгоритме синтеза желательно всегда получать такие множества, как $T_2^*(\vv d, c)$, а не $T_1^*(\vv d, c)$,
т.\,к. это, возможно, позволит найти грань булева куба большей размерности.

Для решения данной задачи можно использовать следующий алгоритм.
Пусть вначале $T(\vv d, c) = \varnothing$, тогда
\begin{enumerate}
    \item
        Зафиксировать представление цикла $c$, чтобы можно было пронумеровать элементы этого цикла слева направо от 1 до $l_c$,
        где $l_c$~--- длина цикла $c$. Сами элементы цикла будем обозначать через $\vv a_i$, где $1 \leqslant i \leqslant l_c$.
        В итоге цикл $c$ можно однозначно записать как $c = (\vv a_1, \ldots, \vv a_{l_c})$.
    \item
        Для заданного значения вектора разницы $\vv d$ посторить множество упорядоченных пар индексов:
        $$
            M = \{\,(i, j)\mid i < j, \Delta(\vv a_i, \vv a_j) = \vv d\,\} \; .
        $$
    \item
        Используя фильтрующую функцию $\phi(x, (i, j))$ вида
        $$
            \phi(x, (i, j)) =
                \begin{cases}
                1, &\text{если $i \leqslant x \leqslant j$} \; , \\
                0  &\text{иначе} \; .
                \end{cases}
        $$
        и весовую функцию $w(x)$ вида
        $$
            w(x) = \sum_{(i, j) \in M} \phi(x, (i, j)) \; ,
        $$
        найти такую пару индексов $(i, j) \in M$, для которой значение функции
        $f(i, j) = w(i) + w(j)$
        является минимальным (если таких пар несколько, выбрать любую).
        
    \item
        Добавить транспозицию $\tau = (\vv a_i, \vv a_j)$ в множество $T(\vv d, c)$.
        
    \item
        Получить новую подстановку $h' = \tau \circ c$ (либо $h' = c \circ \tau$,
        в зависимости от требуемого вида произведения).
        Найти представление подстановки $h'$ в виде произведения независимых циклов,
        для каждого из которых повторить данный алгоритм, начиная с самого первого шага.
        Отметим, что длина полученной подстановки $h'$ будет меньше, чем исходного цикла $c$,
        поэтому можно говорить о сходимости данного алгоритма.
\end{enumerate}
В итоге получим множество $T(\vv d, c) = T^*(\vv d, c)$, мощность которого близка к $T_\text{max}^*(\vv d, c)$.

Цикл $c$ можно рассматривать как отрезок, разделённый на части элементами множества $M$.
Эти элементы также можно рассматривать как отрезки, которые, возможно, частично перекрывают друг друга.
Тогда весовая функция $w(x)$ показывает, скольким отрезкам принадлежит точка $x$.
Таким образом, предложенный алгоритм представляет собой поиск отрезка, у которого концы принадлежат наименьшему количеству
отрезков. 

Поясним данный алгоритм на примере цикла, описанного в начале данного параграфа:
\begin{itemize}
    \item
        $c = (\vv x_1, \vv x_2, \vv x_3,\vv y_1, \vv y_3, \vv y_2 )$, $\Delta(\vv x_i, \vv y_i) = \vv d$,
        $\Delta(\vv x_i, \vv y_j) \neq \vv d$ при $i \neq j$,
        $\Delta(\vv x_i, \vv x_j) \neq \vv d$, $\Delta(\vv y_i, \vv y_j) \neq \vv d$;
        
    \item
        фиксируем представление цикла $c$: $\vv a_1 = \vv x_1, \ldots, \vv a_6 = \vv y_2$;
        
    \item
        строим множество $M = \{\,(1, 4), (2, 6), (3, 5)\,\}$;
        
    \item
        ищем значение функции $w(x)$: $w(1) = 1$, $w(2) = 2$, $w(3) = 3$, $w(4) = 3$, $w(5) = 2$, $w(6) = 1$;
        
    \item
        ищем значение функции $f(i, j)$: $f(1, 4) = 4$, $f(2, 6) = 3$, $f(3, 5) = 5$;
        
    \item
        добавляем в множество $T(\vv d, c)$ транспозицию $\tau_1 = (\vv a_2, \vv a_6) = (\vv x_2, \vv y_2)$;
        
    \item
        $h' = \tau_1 \circ c = (\vv x_1, \vv x_2) \circ (\vv x_3, \vv y_1, \vv y_3, \vv y_2)$;
        
    \item
        в подстановке $h'$ только для транспозиции $\tau_2 = (\vv x_3, \vv y_3)$ значение разницы
        $\Delta(\tau_2) = \vv d$, добавляем $\tau_2$ в множество $T(\vv d, c)$;
        
    \item
        $h'' = \tau_2 \circ h' = (\vv x_1, \vv x_2) \circ (\vv x_3, \vv y_2) \circ (\vv y_1, \vv y_3)$;
        
    \item
        для всех транспозиций $\tau \in h''$ значение разницы $\Delta(\tau) \neq \vv d$, закончить работу.
\end{itemize}
В этом примере получено множество $T(\vv d, c) = \{\,(\vv x_2, \vv y_2), (\vv x_3, \vv y_3)\,\} = T_2^*(\vv d, c)$,
что и требовалось от алгоритма.

Предложенный алгоритм по увеличению мощности множества транспозиций с одинаковой разницей векторов
позволяет в некоторых случаях находить грани булева куба большей размерности
(см. параграф~\ref{label_boolean_edge_search_paragraph}, шаг~\ref{boolean_edge_algorithm_sets_search_step}
на с.~\pageref{boolean_edge_algorithm_sets_search_step}) и синтезировать обратимую схему с меньшей
сложностью. Если длина входной подстановки $h$ равна $l_h$, то временн\'{а}я сложность данного алгоритма
равна $O(l_h \log_2 l_h)$: основное время занимает составление множества $M$, т.\,к. приходится сравнивать друг с другом
элементы циклов подстановки $h$; поиск минимального значения функции $f(i, j)$ делается за линейное время
при обходе всех элементов циклов подстановки $h$. При $l_h = O(2^n)$ временн\'{а}я сложность равна $O(n2^n)$.

\subsection{Экспериментальные результаты снижения сложности схемы}

\forceindent
Для того, чтобы проверить целесообразность применимости на практике различных способов снижения
сложности обратимой схемы, было разработано программное обеспечение,
реализующее алгоритм синтеза~\algref{alg_my_common}. На стадии синтеза схемы данным программным обеспечением применяются
все способы снижения сложности, описанные в разделе~\ref{subsection_complexity_reduction_during_synthesis}.
На заключительной стадии работы к схеме применяются правила эквивалентных замен по алгоритму, описанному
в разделе~\ref{subsection_common_scheme_complexity_reduction}.

В качестве тестовой функции было решено использовать функцию $\mathbf{rd53}\colon \ZZ_2^5\to\ZZ_2^3$, принимающую на вход
двоичный вектор размерности 5 и выдающую на выходе двоичное представление веса Хэмминга этого вектора.
По определению, такая функция не может быть обратимой и, как следствие, не может быть задана обратимой схемой.
Однако можно найти обратимую схему, реализующую \textbf{rd53}. Для этой цели авторами работы~\cite{miller_spectral}
была задана обратимая функция $f\colon \ZZ_2^7 \to \ZZ_2^7$, часть входов/выходов которой соответствует входам/выходам
функции \textbf{rd53}. В работе~\cite{miller_transform_based} авторами была найдена обратимая схема сложности 12,
задающая функцию $f$ и, как следствие, реализующая функцию \textbf{rd53} (рис.~\ref{pic_rd53_scheme}).

\Figure[ht]
    \centering
    \includegraphics[scale=1.2]{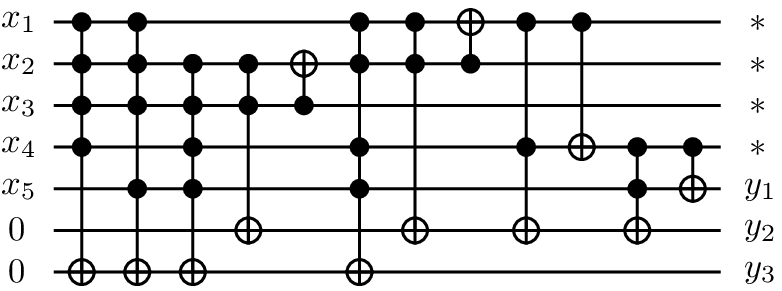}
    \caption
    {
        \small Обратимая схема, реализующая функцию \textbf{rd53}.
    }\label{pic_rd53_scheme}
\end{figure}

На сегодняшний день данная обратимая схема имеет минимальную сложность среди всех обратимых схем,
реализующих функцию \textbf{rd53}, имеющих 7 входов и состоящих из элементов $k$-CNOT.
Данная схема была принята в качестве образца для сравнения сложности при проведении эксперимента.
Обозначим эту схему $\frS_\mathbf{rd53}$, тогда $L(\frS_\mathbf{rd53}) = 12$.
Сам эксперимент заключался в следующем:
\begin{enumerate}
    \item
        Схема $\frS_\mathbf{rd53}$ представляется в виде композиции элементов $k$-CNOT:
        $$
            \frS_\mathbf{rd53} = \compose_{j=1}^{L(\frS_\mathbf{rd53})}{E(t_j, I_j)} \; .
        $$
    \item
        Для всех значений $l$ от 1 до $L(\frS_\mathbf{rd53})$ формируется подсхема
        $\frS_l = \compose_{j=1}^{l}{E(t_j, I_j)}$ (первые $l$ элементов схемы $\frS_\mathbf{rd53}$).
        
    \item
        Для каждой схемы $\frS_l$ находится задаваемая ей подстановка $h_l \in A(\ZZ_2^7)$.
        
    \item
        Каждая подстановка $h_l$ подаётся в качестве входа разработанному программному обеспечению по синтезу обратимых схем.
        
    \item
        Замеряется время синтеза схемы $\frS'_l$ и её сложность $L(\frS'_l)$, которая затем сравнивается
        со сложностью эталонной схемы $\frS_l$.
\end{enumerate}
Во время проведения эксперимента измерялись следующие величины:
\begin{itemize}
    \item
        $T_s$ и $L_s$~--- время синтеза схемы $\frS'_l$ без применения правил эквивалентных замен композиций \gate{}
        и её сложность соответственно (схема содержит обобщённые элементы $E(t, I, J)$);
    \item
        $T_{r,1}$ и $L_{r,1}$~--- время, затраченное на снижение сложности синтезированной схемы при помощи
        правил эквивалентных замен композиций \gate{} (за исключением замен~\ref{replace_recursive_simple}
        и~\ref{replace_recursive}), и сложность полученной схемы соответственно
        (схема содержит обобщённые элементы $E(t, I, J)$);
    \item
        $T_{r,2}$ и $L_{r,2}$~--- время, затраченное на снижение сложности синтезированной схемы при помощи
        всех правил эквивалентных замен композиций \gate, и сложность полученной схемы соответственно
        (схема содержит только элементы $k$-CNOT).
\end{itemize}

Результаты измерений представлены в таблице~\ref{table_rd53_synthesis_results}.
Отметим, что в данной таблице представлены наилучшие результаты, при которых достигалась наименьшая сложность
синтезированной схемы (за счёт увеличенного времени работы).
В некоторых случаях для этого требовалось менять набор правил эквивалентных замен композиций \gate{}
и последовательность их применения. Также стоит отметить, что алгоритм~\algref{alg_my_common} без применения
различных способов снижения сложности обратимой схемы синтезирует все схемы $\frS'_l$ со сложностью
порядка $7 \cdot 2^7 = 896$ элементов.

{
    \renewcommand{\baselinestretch}{1.2}

    \Table[ht]
        \small
        \centering
        \begin{tabular}{|*{7}{c|}}
            \hline
            $l$ & $T_s$, мс & $L_s$ & $T_{r,1}$, мс & $L_{r,1}$ & $T_{r,2}$, мс & $L_{r,2}$ \tabularnewline
            \hline
            
             1 & 0,09 &  1 &  0,00 &  1 & 0,01 &  1 \tabularnewline
            \hline
             2 & 0,10 &  2 &  0,00 &  2 & 0,02 &  2 \tabularnewline
            \hline
             3 & 0,27 &  3 &  0,00 &  3 & 0,01 &  3 \tabularnewline
            \hline
             4 & 0,31 &  5 &  0,00 &  4 & 0,01 &  4 \tabularnewline
            \hline
             5 & 1,11 &  6 &  0,01 &  5 & 0,02 &  5 \tabularnewline
            \hline
             6 & 1,03 &  7 &  0,01 &  6 & 0,02 &  6 \tabularnewline
            \hline
             7 & 1,54 &  9 &  0,02 &  7 & 0,03 &  7 \tabularnewline
            \hline
             8 & 2,98 & 10 &  0,02 &  8 & 0,04 &  8 \tabularnewline
            \hline
             9 & 3,85 & 16 &  3,25 &  9 & 0,05 &  9 \tabularnewline
            \hline
            10 & 6,39 & 16 &  5,13 & 10 & 0,07 & 13 \tabularnewline
            \hline
            11 & 6,83 & 24 & 13,46 & 11 & 0,07 & 12 \tabularnewline
            \hline
            12 & 14,2 & 30 & 29,75 & 16 & 0,08 & 13 \tabularnewline
            \hline

        \end{tabular}
        \caption{
            \small Экспериментальные результаты работы разработанного программного обеспечения
                по синтезу подсхем схемы $\frS_\mathbf{rd53}$.
        }\label{table_rd53_synthesis_results}
    \end{table}

} 
Проанализировав данные таблицы~\ref{table_rd53_synthesis_results}, можно сделать вывод, что применение
правил эквивалентных замен композиций \gate~\ref{replace_recursive_simple} и~\ref{replace_recursive} позволяет
в некоторых случаях ($l = 12$) снизить сложность синтезированной схемы. Этот вывод не является очевидным,
т.\,к. по данным правилам замен все обобщённые элементы $E(t, I, J)$ в схеме заменяются на композицию элементов $E(t, I)$,
что должно было бы увеличить, а не снизить сложность.

Из таблицы~\ref{table_rd53_synthesis_results} видно, что разработанное программное обеспечение, реализующее
алгоритм синтеза~\algref{alg_my_common} с применением различных способов снижения сложности обратимой схемы,
позволяет синтезировать схемы, имеющие сложность, близкую к минимальной. При этом время синтеза очень мало.
Для сравнения, в работе~\cite{miller_transform_based} схема $\frS_\mathbf{rd53}$ была получена авторами за 1,84 секунды
на процессоре Pentium III 750 MHz, в то время как схема $\frS'_{12}$, имеющая всего на 1 элемент $k$-CNOT больше,
получена автором за 15 миллисекунд на процессоре Core i7 2,40 GHz.

В работе~\cite{maslov_rm_synthesis} был описан достаточно простой и быстрый алгоритм синтеза обратимых схем
без дополнительной памяти, использующий спектр Рида-Маллера (полином Жегалкина). Данный алгоритм эффективен для синтеза
булевых преобразований, описываемых полиномами Жегалкина малой степени, но становится неэффективным для полиномов
высокой степени. В разработанном программном обеспечении~\cite{my_program} были объединены алгоритм~\algref{alg_my_common}
и алгоритм из работы~\cite{maslov_rm_synthesis}, а также различные способы снижения сложности обратимой схемы,
описанные в данной главе и в Главе~\ref{section_asymptotic_bounds}. Алгоритм~\algref{alg_my_common} в данном
программном обеспечении предназначен для работы с остаточными полиномами Жегалкина, имеющими высокую степень.
При помощи данного программного обеспечения удалось получить новую обратимую схему, реализующую функцию \textbf{rd53}.
Полученная схема имеет 7 входов, её сложность равна 11, квантовый вес равен 96 (см. рис.~\ref{pic_smallest_rd53}).
На сегодняшний день эта обратимая схема имеет минимальную сложность среди всех известных обратимых схем,
состоящих из обобщённых элементов Тоффоли с инвертированными входами и реализующих функцию \textbf{rd53}
с 7-ю входами~\cite{maslov_benchmarks,revlib}. Более того, все остальные схемы имеют сложность 12 или выше.

\Figure[ht]
    \centering
    \includegraphics[scale=1.2]{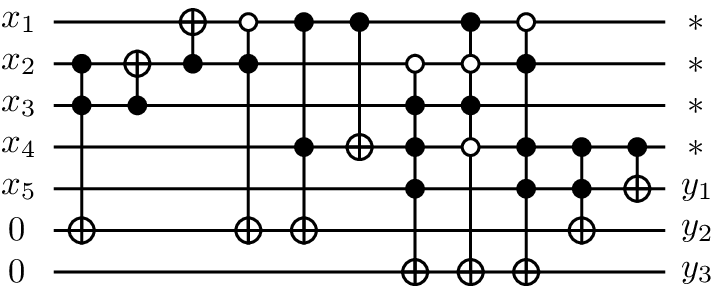}
    \caption
    {
        \small Полученная обратимая схема с наименьшей известной сложностью, реализующая функцию \textbf{rd53}
        и имеющая 7 входов.
    }\label{pic_smallest_rd53}
\end{figure}

Также был проведён ряд экспериментов по синтезу обратимых схем, реализующих различные булевы отображения, при помощи
разработанного программного обеспечения~\cite{my_program}. Спецификация функций, их название и уже построенные обратимые
схемы для них брались с сайтов~\cite{maslov_benchmarks,revlib}.
Удалось получить более 40 новых обратимых схем, имеющих либо меньшее количество входов, либо меньшую сложность,
либо меньший квантовый вес, по сравнению с известными результатами.
Измерение характеристик всех обратимых схем (полученных и известных) производилось
при помощи программного обеспечения RCViewer+~\cite{rcviewer}. Результаты экспериментов сведены в следующих таблицах.

В таблице~\ref{table_exp_less_input_count} приведены характеристики полученных обратимых схем, имеющих меньшее количество
входов по сравнению с известными результатами;
в таблице~\ref{table_exp_less_gate_complexity}~--- имеющих меньшую сложность по сравнению с известными результатами;
в таблице~\ref{table_exp_less_quantum_cost}~--- имеющих меньший квантовый вес по сравнению с известными результатами.
В первой колонке всех таблиц приведено имя функции, взятое с сайта~\cite{maslov_benchmarks}. Далее таблицы разбиваются на две
части: в левой части указаны характеристики полученных схем, в правой части~--- наилучших известных схем.
Обозначения: $n$~--- количество входов схемы, $L(\frS)$~--- сложность схемы, $W(\frS)$~--- квантовый вес схемы.
Нижний индекс <<$\min$>> приписывается к той характеристике, по которой идёт сравнение.
К примеру, в таблице~\ref{table_exp_less_gate_complexity} идёт сравнение по сложности схемы, поэтому колонка сложности
имеет обозначение $L_{\min}(\frS)$. Все остальные характеристики указаны для этой же схемы с данной минимальной характеристикой.

\vspace{-0.3cm}
{
    \renewcommand{\baselinestretch}{1.2}
    \begin{center}
        \small
        \Longtable{|c||c|c|c||c|c|c|}
            \hline
            
            \multirow{2}{*}{
                \textbf{Название функции}
            } &
            \multicolumn{3}{c||}{\textbf{Полученные схемы}} &
            \multicolumn{3}{c|}{\textbf{Существующие схемы}} \\
            \cline{2-7}
            
            & $n_{\min}$ & $L(\frS)$ & $W(\frS)$ & $n_{\min}$ & $L(\frS)$ & $W(\frS)$\\
            \hline
            \endhead
            
            gf2\verb"^"3mult & \textbf{7} & 73 & 740 &
                \multirow{3}{*}{\textbf{9}} & \multirow{3}{*}{11} & \multirow{3}{*}{47} \\
            \cline{1-4}
            gf2\verb"^"3mult & \textbf{7} & 79 & 712 & & & \\
            \cline{1-4}
            gf2\verb"^"3mult & \textbf{7} & 145 & 704 & & & \\
            \hline
            gf2\verb"^"4mult & \textbf{9} & 415 & 47649 &
                \multirow{2}{*}{\textbf{12}} & \multirow{2}{*}{19} & \multirow{2}{*}{83} \\
            \cline{1-4}
            gf2\verb"^"4mult & \textbf{9} & 1834 & 5914 & & & \\
            \hline
            nth\_prime9\_inc & \textbf{9} & 3942 & 19313 & \textbf{10} & 7522 & 17975 \\
            \hline
            rd73 & \textbf{9} & 296 & 43421 & \multirow{2}{*}{\textbf{10}} & \multirow{2}{*}{20} & \multirow{2}{*}{64} \\
            \cline{1-4}
            rd73 & \textbf{9} & 835 & 4069 & & & \\
            \hline            
            rd84 & \textbf{11} & 679 & 359384 & \multirow{2}{*}{\textbf{15}} & \multirow{2}{*}{28} & \multirow{2}{*}{98} \\
            \cline{1-4}
            rd84 & \textbf{11} & 2560 & 12397 & & & \\
            \hline

            \caption{
                \small Характеристики обратимых схем, полученных при помощи разработанного программного обеспечения%
                ~\cite{my_program}, имеющих меньшее количество входов по сравнению с известными результатами.
            }\label{table_exp_less_input_count}
        \end{longtable}
    \end{center}
} 

\clearpage
{
    \renewcommand{\baselinestretch}{1.2}
    \begin{center}
        \small
        \centering
        \Longtable{|c||c|c|c||c|c|c|}
            \hline
            
            \multirow{2}{*}{
                \textbf{Название функции}
            } &
            \multicolumn{3}{c||}{\textbf{Полученные схемы}} &
            \multicolumn{3}{c|}{\textbf{Существующие схемы}} \\
            \cline{2-7}
            
            & $n$ & $L_{\min}(\frS)$ & $W(\frS)$ & $n$ & $L_{\min}(\frS)$ & $W(\frS)$\\
            \hline
            \endhead

            2of5 & 6 & \textbf{9} & 268 & \multirow{2}{*}{6} & \multirow{2}{*}{\textbf{15}} & \multirow{2}{*}{107} \\
            \cline{1-4}
            2of5 & 6 & \textbf{10} & 118 & & & \\
            \hline
            2of5 & 7 & \textbf{11} & 32 & 7 & \textbf{12} & 32 \\
            \hline
            3\_17 & 3 & \textbf{4} & 14 & \multirow{2}{*}{3} & \multirow{2}{*}{\textbf{6}} & \multirow{2}{*}{12} \\
            \cline{1-4}
            3\_17 & 3 & \textbf{5} & 13 & & & \\
            \hline
            4b15g\_2 & 4 & \textbf{12} & 57 & 4 & \textbf{15} & 31 \\
            \hline
            4b15g\_4 & 4 & \textbf{12} & 49 & \multirow{2}{*}{4} & \multirow{2}{*}{\textbf{15}} & \multirow{2}{*}{35} \\
            \cline{1-4}
            4b15g\_4 & 4 & \textbf{14} & 47 & & & \\
            \hline
            4b15g\_5 & 4 & \textbf{14} & 72 & 4 & \textbf{15} & 29 \\
            \hline
            4mod5 & 5 & \textbf{4} & 13 & 5 & \textbf{5} & 7 \\
            \hline
            5mod5 & 6 & \textbf{7} & 429 & 6 & \textbf{8} & 84 \\
            \hline
            6sym & 7 & \textbf{14} & 1308 & \multirow{2}{*}{7} & \multirow{2}{*}{\textbf{36}} & \multirow{2}{*}{777} \\
            \cline{1-4}
            6sym & 7 & \textbf{15} & 825 & & & \\
            \hline
            9sym & 10 & \textbf{73} & 61928 & \multirow{2}{*}{10} & \multirow{2}{*}{\textbf{129}} & \multirow{2}{*}{6941} \\
            \cline{1-4}
            9sym & 10 & \textbf{74} & 31819 & & & \\
            \hline
            ham7 & 7 & \textbf{19} & 77 & 7 & \textbf{25} & 49 \\
            \hline
            hwb12 & 12 & \textbf{42095} & 134316 & 12 & \textbf{55998} & 198928 \\
            \hline
            nth\_prime7\_inc & 7 & \textbf{427} & 10970 &
                \multirow{3}{*}{7} & \multirow{3}{*}{\textbf{1427}} & \multirow{3}{*}{3172} \\
            \cline{1-4}
            nth\_prime7\_inc & 7 & \textbf{474} & 10879 & & & \\
            \cline{1-4}
            nth\_prime7\_inc & 7 & \textbf{824} & 2269 & & & \\
            \hline
            nth\_prime8\_inc & 8 & \textbf{977} & 10218 &
                \multirow{2}{*}{8} & \multirow{2}{*}{\textbf{3346}} & \multirow{2}{*}{7618} \\
            \cline{1-4}
            nth\_prime8\_inc & 8 & \textbf{1683} & 6330 & & & \\
            \hline
            nth\_prime9\_inc & 10 & \textbf{2234} & 22181 & 10 & \textbf{7522} & 17975 \\
            \hline
            nth\_prime10\_inc & 11 & \textbf{5207} & 50152 & 11 & \textbf{16626} & 40299 \\
            \hline
            nth\_prime11\_inc & 12 & \textbf{11765} & 124408 & 12 & \textbf{35335} & 95431 \\
            \hline
            rd53 & 7 & \textbf{11} & 96 & 7 & \textbf{12} & 120 \\
            \hline
            
            \caption{
                \small Характеристики обратимых схем, полученных при помощи разработанного программного обеспечения%
                ~\cite{my_program}, имеющих меньшую сложность по сравнению с известными результатами.
            }\label{table_exp_less_gate_complexity}
        \end{longtable}
    \end{center}
} 

\vspace{-1cm}
Полученные экспериментальные результаты позволяют утверждать, что предложенные в данной и последующей главах
способы снижения сложности обратимых схем являются применимыми на практике и в некоторых случаях позволяют синтезировать
обратимые схемы, имеющие лучшие характеристики, по сравнению с известными результатами.

\clearpage
{
    \renewcommand{\baselinestretch}{1.2}
    \begin{center}
        \small
        \centering
        \Longtable{|c||c|c|c||c|c|c|}
            \hline
            
            \multirow{2}{*}{
                \textbf{Название функции}
            } &
            \multicolumn{3}{c||}{\textbf{Полученные схемы}} &
            \multicolumn{3}{c|}{\textbf{Существующие схемы}} \\
            \cline{2-7}
            
            & $n$ & $L(\frS)$ & $W_{\min}(\frS)$ & $n$ & $L(\frS)$ & $W_{\min}(\frS)$\\
            \hline
            \endhead

            2of5 & 7 & 12 & \textbf{31} & 7 & 12 & \textbf{32} \\
            \hline
            6sym & 7 & 41 & \textbf{206} & 7 & 36 & \textbf{777} \\
            \hline
            9sym & 10 & 347 & \textbf{1975} & 10 & 210 & \textbf{4368} \\
            \hline
            hwb7 & 7 & 603 & \textbf{1728} & 7 & 331 & \textbf{2611} \\
            \hline
            hwb8 & 8 & 1594 & \textbf{4852} & 8 & 2710 & \textbf{6940} \\
            \hline
            hwb9 & 9 & 3999 & \textbf{12278} & 9 & 6563 & \textbf{16173} \\
            \hline
            hwb10 & 10 & 8247 & \textbf{26084} & 10 & 12288 & \textbf{35618} \\
            \hline
            hwb11 & 11 & 21432 & \textbf{69138} & 11 & 32261 & \textbf{90745} \\
            \hline
            hwb12 & 12 & 42095 & \textbf{134316} & 12 & 55998 & \textbf{198928} \\
            \hline
            nth\_prime7\_inc & 7 & 824 & \textbf{2269} & 7 & 1427 & \textbf{3172} \\
            \hline
            nth\_prime8\_inc & 8 & 1683 & \textbf{6330} & 8 & 3346 & \textbf{7618} \\
            \hline
            rd53 & 7 & 12 & \textbf{82} & \multirow{3}{*}{7} & \multirow{3}{*}{12} & \multirow{3}{*}{\textbf{120}} \\
            \cline{1-4}
            rd53 & 7 & 12 & \textbf{95} & & & \\
            \cline{1-4}
            rd53 & 7 & 11 & \textbf{96} & & & \\
            \hline

            \caption{
                \small Характеристики обратимых схем, полученных при помощи разработанного программного обеспечения%
                ~\cite{my_program}, имеющих меньший квантовый вес по сравнению с известными результатами.
            }\label{table_exp_less_quantum_cost}
        \end{longtable}
    \end{center}
} 

\sectionenumerated[section_asymptotic_bounds]{Оценка сложности, глубины и квантового веса обратимых схем}

\forceindent
Теория схемной сложности берёт своё начало с работы К.~Шеннона~\cite{shannon}. В ней он предложил в качестве меры сложности
булевой функции рассматривать сложность минимальной контактной схемы, реализующей эту функцию.
О.\,Б.~Лупановым была установлена~\cite{lupanov_one_method} асимптотика сложности $L(n) \sim \rho 2^n \mathop / n$
булевой функции от $n$ переменных в произвольном конечном полном базисе элементов с произвольными положительными весами,
где $\rho$ обозначает минимальный приведённый вес элементов базиса.

В работе~\cite{karpova} рассматривается вопрос о вычислениях с ограниченной памятью. Н.\,А.~Карповой было доказано,
что в базисе классических \gate, реализующих все \mbox{$p$-местные} булевы функции,
асимптотическая оценка функции Шеннона сложности схемы с тремя и более регистрами памяти зависит от значения $p$,
но не изменяется при увеличении количества используемых регистров памяти. Также было показано, что существует булева функция,
которая не может быть реализована в маломестных базисах с использованием менее, чем двух регистров памяти.

О.\,Б.~Лупановым были рассмотрены схемы из \gate{} с задержками~\cite{lupanov_delay}. Было доказано, 
что в регулярном базисе \gate{} любая булева функция может быть реализована схемой,
имеющей задержку $T(n) \sim \tau n$, где $\tau$~--- минимум приведённых задержек всех элементов базиса, при сохранении
асимптотически оптимальной сложности. Однако не рассматривался вопрос зависимости $T(n)$
от количества используемых регистров памяти. Хотя задержка и глубина схемы в некоторых работах определяется по-разному%
~\cite{hrapchenko_new}, в рассматриваемой модели обратимой схемы их, по мнению автора, можно отождествить.

Вопрос асимптотической сложности обратимых схем рассматривался в работе~\cite{maslov_thesis}. В ней было показано,
что сложность обратимой схемы $\frS$, состоящей из элементов $E(t,I,J)$, удовлетворяет неравенству $L(\frS) \leqslant n2^n$.
При этом было показано, что существует обратимая схема $\frS$ сложности $L(\frS) \geqslant 2^n \mathop / \ln 3 + o(2^n)$.
Для обратимых схем, состоящих из элементов mEXOR (терминология авторов), в той же работе~\cite{maslov_thesis} были
доказаны следующие оценки сложности схемы по Шеннону: $2^n \mathop / \ln 3 + o(2^n) \leqslant L(n)
\leqslant 2^{n+1} - 4$.
В работе~\cite{vinokurov} была доказана нижняя оценка сложности $L(\frS) \geqslant n2^n \mathop / (n + \log_2 n -1)$
обратимых схем, состоящих из обобщённых элеметов Тоффоли $k$-CNOT.
В работе~\cite{shende_synthesis} была доказана нижняя асимптотическая оценка $\Omega(n2^n \mathop / \log n)$
сложности обратимой схемы, состоящей из \gate{} множества $\Omega_n^2$ и не имеющей дополнительных входов.
В работе~\cite{maslov_rm_synthesis} была доказана наилучшая известная на сегодняшний день верхняя асимптотическая оценка
сложности $L(\frS) \lesssim 5n2^n$ обратимой схемы $\frS$, состоящей из \gate{} множества $\Omega_n^2$
и не имеющей дополнительных входов.
Однако стоит отметить, что всеми авторами рассматривались только обратимые схемы без дополнительной памяти.

В данной главе рассматривается вопрос асимптотической сложности, глубины и квантового веса обратимых схем,
реализующих некоторое преобразование $\ZZ_2^n \to \ZZ_2^n$ и состоящих из элементов NOT, CNOT и 2-CNOT.
Вводится множество $F(n,q)$ всех отображений $\ZZ_2^n \to \ZZ_2^n$, которые могут быть реализованы такими
обратимыми схемами с $(n+q)$ входами. Оцениваются сложность, глубина и квантовый вес обратимой схемы,
реализующей отображение $f \in F(n,q)$ с использованием $q$ дополнительных входов (дополнительной памяти).
Определяются функции Шеннона сложности $L(n,q)$, глубины $D(n,q)$ и квантового веса $W(n,q)$ обратимой схемы
как функции от $n$ и количества дополнительных входов схемы $q$.

Использование дополнительных входов должно, по-видимому, снижать сложность обратимых схем, но никаких конкретных
оценок до настоящего времени известно не было. Удалось получить такие оценки, согласно которым
сложность и глубина обратимой схемы, в отличие от схем из классических \gate, существенно
зависят от количества дополнительных входов (аналог регистров памяти, см.~\cite{karpova}).

\subsection{Функции Шеннона сложности, глубины и квантового веса}

\forceindent
Обозначим через $P_2(n,n)$ множество всех булевых отображений $\ZZ_2^n \to \ZZ_2^n$.
Обозначим через $F(n,q) \subseteq P_2(n,n)$ множество всех отображений $\ZZ_2^n \to \ZZ_2^n$, которые могут быть реализованы
обратимыми схемами с $(n+q)$ входами, состоящими из \gate{} множества $\Omega_{n+q}^2$.
Множество подстановок из $S(\ZZ_2^n)$, задаваемых всеми \gate{} множества $\Omega_n^2$,
генерирует знакопеременную $A(\ZZ_2^n)$ и симметрическую $S(\ZZ_2^n)$ группы подстановок при $n > 3$ и $n \leqslant 3$,
соответственно~\cite{shende_synthesis, my_lemma_prove}.
Отсюда следует, что $F(n,0)$ совпадает с множеством отображений, задаваемых всеми подстановками из $A(\ZZ_2^n)$
и $S(\ZZ_2^n)$ при $n > 3$ и $n \leqslant 3$, соответственно.
С другой стороны, несложно показать, что при $q \geqslant n$ верно равенство $F(n,q) = P_2(n,n)$.
Другими словами, имея $n$ или более дополнительных входов, можно реализовать с помощью обратимой схемы любое отображение
из $P_2(n,n)$. Это следует из утверждения~\ref{predicate_max_additional_outputs}
на с.~\pageref{predicate_max_additional_outputs}.

Рассмотрим произвольное отображение $f \in F(n,q)$. Среди всех обратимых схем, состоящих из \gate{} множества $\Omega_{n+q}^2$
и реализующих отображение $f$ с использованием $q$ дополнительных входов, мы можем найти схему $\frS_l$ с минимальной
сложностью, схему $\frS_d$ с минимальной глубиной и схему $\frS_w$ с минимальным квантовым весом.
Обозначим через $L(f,q)$, $D(f,q)$ и $W(f,q)$ минимальную сложность, минимальную глубину и минимальный квантовый вес,
соответственно, обратимой схемы, состоящей из \gate{} множества $\Omega_{n+q}^2$ и реализующей отображение $f \in F(n,q)$
с использованием $q$ дополнительных входов. Тогда $L(f, q) = L(\frS_l)$, $D(f,q) = D(\frS_d)$ и $W(f,q) = W(\frS_w)$.
Определим функции Шеннона сложности $L(n,q)$, глубины $D(n,q)$ и квантового веса $W(n,q)$ обратимой схемы:
\begin{gather*}
    L(n,q) = \max_{f \in F(n,q)} {L(f,q)} \; , \\
    D(n,q) = \max_{f \in F(n,q)} {D(f,q)} \; , \\
    W(n,q) = \max_{f \in F(n,q)} {W(f,q)} \; .
\end{gather*}
Из неравенства~\eqref{formula_quantum_weight_connection_with_complexity_common_case} следует, что
\begin{equation}\label{formula_quantum_weight_connection_with_complexity}
    W(n,q) \geqslant \left( \min_{E \in \Omega_{n+q}^2} {W(E)} \right) \cdot L(n,q) \; .
\end{equation}

Как было сказано на с.~\pageref{page_quantum_weight}, чем меньше в обратимой схеме элементов $k$-CNOT с $k > 1$,
тем проще её реализовать при помощи квантовых технологий. Связано это с тем, что
$W(\text{NOT}) = W(\text{CNOT}) = 1$ и $W(\text{2-CNOT}) = 5$~\cite{barenco_elementary_gates}.
Следовательно, представляет собой интерес подсчёт в обратимой схеме отдельно количества различных \gate{}
Для этого введём соответствующие сложности:
$\LC(\frS)$~--- количество элементов NOT и CNOT в схеме $\frS$;
$\LT(\frS)$~--- количество элементов 2-CNOT (Тоффоли) в схеме $\frS$.
Аналогичным образом введём величины $\LC(f,q)$, $\LT(f,q)$, $\LC(n,q)$ и $\LT(n,q)$.
Также обозначим через $\WC$ величину $W(\text{CNOT})$, а через $\WT$~--- величину $W(\text{2-CNOT})$.
Тогда можно считать, что верно следующее равентсво:
\begin{equation}
    W(n,q) = \WC \cdot \LC(n,q) + \WT \cdot \LT(n,q) \; .
    \label{formula_quantum_weigth_as_sum}
\end{equation}

Для двух вещественных положительных функций $f(n)$ и $g(n)$ от натуральной переменной $n$
мы будем использовать следующие обозначения~\cite[с.~355]{yablonsky}:
\begin{enumerate}
    \item
        Функция $f(n)$ \textit{асимптотически больше или равна} функции $g(n)$,
        т.\,е. $f(n) \gtrsim g(n)$, если для любого $\varepsilon > 0$ найдётся $N = N(\varepsilon)$ такое,
        что при любом $n \geqslant N$ верно неравенство $(1- \varepsilon) \cdot g(n) \leqslant f(n)$.
    
    \item
        В случае, если $f(n) \gtrsim g(n)$ и $g(n) \gtrsim f(n)$, говорят, что $f(n)$ и $g(n)$
        \textit{асимптотически равны} (\textit{эквивалентны}) и пишут $f(n) \sim g(n)$.
        В этом случае предел $f(n) \mathop / g(n)$ существует и равен 1.
    
    \item
        Если $0 < c_1 < f(n) \mathop / g(n) < c_2$, то говорят, что функции $f(n)$ и~$g(n)$
        \textit{эквивалентны с точностью до порядка}. В этом случае пишут $f(n) \asymp g(n)$.    
\end{enumerate}

При помощи мощностного метода Риордана--Шеннона в следующих разделах будут доказаны общие нижние оценки
для функций $L(n,q)$ и $D(n,q)$. Будет дано описание двух алгоритмов синтеза, позволяющих получить обратимую схему
с наилучшей асимптотической сложностью и глубиной в случае отсутствия дополнительных входов в схеме и в случае их использования.
Будет доказано, что $L(n,0) \asymp n2^n \mathop / \log_2 n$ и $D(n,0) \gtrsim 2^n \mathop / (3\log_2 n)$.
Будет также описан аналогичный методу Лупанова~\cite{lupanov_delay} подход к синтезу обратимой схемы, для которого 
сложность синтезированной схемы $L(n, q_0) \asymp 2^n$ при использовании $q_0 \sim n 2^{n- o(n)}$
дополнительных входов, а глубина $D(n, q_1) \lesssim 3n$ при использовании $q_1 \sim 2^n$ дополнительных входов.
На основании полученных оценок будут даны некоторые оценки для функций $\LC(n,q)$ и $\LT(n,q)$, а также для $W(n,q)$.

\subsection{Нижние оценки}

\forceindent
Как было сказано в предыдущем разделе, множество подстановок из $S(\ZZ_2^n)$, задаваемых всеми \gate{} множества $\Omega_n^2$,
генерирует знакопеременную $A(\ZZ_2^n)$ при $n > 3$.
В работе~\cite{gluhov} было показано, что длина $L(G,M)$ группы подстановок $G$ относительно системы образующих $M$
удовлетворяет неравенству
\begin{equation}\label{formula_gluhov_lower_bound}
    L(G,M) \geqslant \left \lceil \log_{|M|} |G| \right \rceil \; .
\end{equation}
В нашем случае $G = A(\ZZ_2^n)$, $|G| = (2^n)! \mathop / 2$, $|M| = |\Omega_n^2|$. Поскольку мощность множества $\Omega_n^2$ равна
\begin{equation}\label{formula_size_of_set_omega_n_2}
    |\Omega_n^2| = \sum_{k = 0}^2 {(n-k)\binom{n}{k}} = \frac{n^3}{2} \left( 1 + o(1) \right) \; ,
\end{equation}
то мы можем вывести простую нижнюю оценку для $L(n,0)$:
\begin{gather}
    L(n,0) \gtrsim \frac{\log_2 ((2^n)! \mathop / 2)}{\log_2 (n^3 \mathop / 2)}
           \gtrsim \frac{\log_2 2^{n2^n} - \log_2 e^{2^n}}{3 \log_2 n} \notag  \; , \\
    L(n,0) \gtrsim \frac{n2^n}{3 \log_2 n} \; .
    \label{formula_simple_lower_bound}
\end{gather}

Нижняя оценка~\eqref{formula_gluhov_lower_bound} в работе~\cite{gluhov} строго доказана не была и, по мнению автора,
основывается на не совсем верном предположении, что достаточно рассмотреть только все возможные произведения подстановок из $M$
длины ровно $L(G,M)$, чтобы получить все элементы группы подстановок $G$. Данное предположение верно только для системы
образующих $M$, содержащей тождественную подстановку. В противном случае, необходимо рассматривать все возможные произведения
подстановок из $M$ длины менее $L(G,M)$ в том числе. Очевидно, что множество подстановок,
задаваемых всеми \gate{} множества $\Omega_n^2$, не содержит тождественной подстановки.

Для того, чтобы получить общую нижнюю оценку $L(n,q)$, также необходимо учитывать те булевы отображения,
которые могут быть реализованы обратимой схемой с $(n+q)$ входами. Таких отображений не более $A_{n+q}^n$
(количество размещений из $(n+q)$ по $n$ без повторений).

Перейдём к первой теореме данной главы.
\begin{theorem}[нижняя оценка сложности обратимой схемы]\label{theorem_complexity_lower_bound}
    $$
        L(n,q) \geqslant \frac{2^n(n-2)}{3\log_2(n+q)} - \frac{n}{3} \; .
    $$
\end{theorem}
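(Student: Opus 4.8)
Замысел состоит в том, чтобы применить мощностной метод Риордана--Шеннона: оценить сверху число различных булевых отображений $\ZZ_2^n \to \ZZ_2^n$, реализуемых обратимыми схемами с $(n+q)$ входами и сложностью не более $L$, а затем, сравнив эту оценку с числом отображений, которые \emph{обязаны} быть реализованы (то есть $|F(n,q)|$), получить нижнюю границу для $L$. Отправная точка уже почти готова в виде неравенства~\eqref{formula_simple_lower_bound} для случая $q=0$; нужно аккуратно повторить подсчёт для произвольного $q$ и привести его к требуемому замкнутому виду.

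Первым шагом я бы зафиксировал схему $\frS$ с $(n+q)$ входами, состоящую из $l$ элементов множества $\Omega_{n+q}^2$. Каждый элемент задаётся выбором контролируемого выхода и не более чем двух контролирующих входов, так что число способов выбрать один элемент равно $|\Omega_{n+q}^2|$, которое по формуле~\eqref{formula_size_of_set_omega_n_2} (применённой к $n+q$ вместо $n$) не превосходит $(n+q)^3 \mathop / 2 \cdot (1 + o(1))$, и уж во всяком случае не больше $(n+q)^3$. Значит, число различных схем из не более чем $l$ элементов не превосходит $(n+q)^{3l}$ (с точностью до суммирования по длинам $\le l$, что даёт лишний множитель, поглощаемый оценкой). Каждая такая схема задаёт подстановку на $\ZZ_2^{n+q}$, а реализуемое ею отображение $\ZZ_2^n \to \ZZ_2^n$ дополнительно определяется выбором того, какие $n$ из $n+q$ выходных линий объявляются значимыми и в каком порядке, то есть множителем $A_{n+q}^n \le (n+q)^n$. Таким образом, число отображений из $F(n,q)$, реализуемых со сложностью $\le L$, не превосходит $(n+q)^{3L} \cdot (n+q)^n = (n+q)^{3L+n}$.

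Второй шаг --- оценка снизу для $|F(n,q)|$. Здесь я воспользуюсь тем, что $F(n,q)$ содержит по крайней мере все отображения, задаваемые чётными подстановками из $A(\ZZ_2^n)$ (эти отображения реализуются уже схемами с $n$ входами, то есть без использования дополнительной памяти, по Лемме~\ref{basis_on_a_n} и следствию~\ref{corollary_parity_of_scheme}; при $q>0$ они тем более реализуемы). Следовательно, $|F(n,q)| \ge |A(\ZZ_2^n)| = (2^n)! \mathop / 2$. Приравнивая $(n+q)^{3L+n} \ge (2^n)! \mathop / 2$ и логарифмируя по основанию $2$, получаю $(3L+n)\log_2(n+q) \ge \log_2((2^n)!) - 1$. Для нижней оценки факториала применю $\log_2((2^n)!) \ge \log_2(2^{n2^n}\mathop / e^{2^n}) = n2^n - 2^n\log_2 e$, откуда $\log_2((2^n)!) \ge 2^n(n - \log_2 e) > 2^n(n-2)$ (поскольку $\log_2 e < 1{,}443 < 2$). Отсюда $3L + n \ge \dfrac{2^n(n-2)}{\log_2(n+q)}$, и, разделив на $3$ и перенеся $n\mathop/3$, получаю в точности $L(n,q) \ge \dfrac{2^n(n-2)}{3\log_2(n+q)} - \dfrac{n}{3}$.

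Главным препятствием --- вернее, главным местом, где нужна аккуратность, а не изобретательность --- будет точный учёт констант: оценка числа элементов величиной $(n+q)^3$ (а не $|\Omega_{n+q}^2| \sim (n+q)^3\mathop/2$) и выбор нижней границы $n-2$ для $n - \log_2 e$ подобраны именно так, чтобы в знаменателе получилось ровно $3\log_2(n+q)$, а в числителе ровно $n-2$; любое более грубое округление испортит заявленную форму. Кроме того, стоит оговорить пограничный случай $n \le 3$, когда $F(n,q)$ порождается всей $S(\ZZ_2^n)$, а не только $A(\ZZ_2^n)$ --- это лишь усиливает оценку снизу для $|F(n,q)|$, так что неравенство остаётся верным. Наконец, надо проследить, чтобы суммирование по всем длинам схемы $\le L$ (дающее геометрическую прогрессию со знаменателем $(n+q)^3 \ge 8$) действительно поглощалось в уже имеющемся запасе оценки; это рутинно и не влияет на итоговую форму.
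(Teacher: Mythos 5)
Ваше доказательство корректно и по существу совпадает с доказательством в работе: тот же мощностной метод Риордана--Шеннона с оценкой числа схем через $|\Omega_{n+q}^2|\leqslant(n+q)^3\mathop/2$, тем же множителем $A_{n+q}^n$ за выбор значимых выходов, той же нижней границей $|F(n,q)|\geqslant|A(\ZZ_2^n)|=(2^n)!\mathop/2$ и той же оценкой $(2^n)!>(2^n\mathop/e)^{2^n}$. Отличия сводятся лишь к бухгалтерии констант (вы поглощаете множитель $1\mathop/2$ и геометрическую сумму сразу, а в работе они протаскиваются до финального неравенства), что на итоговую форму оценки не влияет.
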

\begin{proof}
    Доказательство основано на использовании мощностного метода Риор\-дана-Шеннона.    
    Пусть $r = |\Omega_n^2|$. Из формулы~\eqref{formula_size_of_set_omega_n_2} следует, что
    \begin{gather}
        r = \sum_{k=0}^2{(n-k)\binom{n}{k}} = \frac{n^3 - n^2 + 2n}{2} \notag  \; , \\
        \frac{n^2(n-1)}{2} + 1 < r \leqslant \frac{n^3}{2} \text{ \,при\, } n \geqslant 2 \; .
            \label{formula_cardinality_of_omega_n_2_bounds}
    \end{gather}

    Обозначим через $\EuScript C^*(n,s) = r^s$ и $\EuScript C(n,s)$ количество всех обратимых схем,
    которые состоят из \gate{} множества $\Omega_n^2$ и сложность которых равна $s$ и не превышает $s$, соответственно.
    Тогда
    \begin{align*}
        \EuScript C(n,s) &= \sum_{i=0}^s {\EuScript C^*(n,i)} = \frac{r^{s+1} - 1}{r-1}
            \leqslant \left( \frac{n^3}{2} \right)^{s+1} \cdot \frac{2}{n^2(n-1)}  \; , \\
        \EuScript C(n,s) &\leqslant \left( \frac{n^3}{2} \right)^s \cdot \left(1 + \frac{1}{n-1}\right)
            \text{ \,при\, } n \geqslant 2 \; .
    \end{align*}

    Как было сказано выше, каждой обратимой схеме с $(n+q)$ входами
    соответствует не более $A_{n+q}^n$ различных булевых отображений $\ZZ_2^n \to \ZZ_2^n$.
    Следовательно, верно следующее неравенство:
    $$
       \EuScript C(n+q,L(n,q)) \cdot A_{n+q}^n \geqslant |F(n,q)| \; .
    $$
    Поскольку $|F(n,q)| \geqslant |A(\ZZ_2^n)| = (2^n)! \mathop / 2$ и $A_{n+q}^n \leqslant (n+q)^n$, то
    $$
        \left( \frac{(n+q)^3}{2} \right)^{L(n,q)} \cdot \left(1 + \frac{1}{n+q-1}\right)
            \cdot (n+q)^n \geqslant (2^n)! \mathop / 2 \; .
    $$
    Несложно убедиться, что при $n > 0$ верно неравенство $(2^n)! > (2^n \mathop / e)^{2^n}$.
    Следовательно,
    $$
        L(n,q) \cdot (3\log_2(n+q) - 1) + \log_2 \left(1 + \frac{1}{n+q-1}\right) +
            n \log_2(n+q) \geqslant 2^n(n - \log_2 e) \; .
    $$    
    Отсюда следует неравенство из условия теоремы
    $$
        L(n,q) \geqslant \frac{2^n(n-2)}{3\log_2(n+q)} - \frac{n}{3} \; .
    $$
\end{proof}

\begin{theorem}[нижняя оценка глубины обратимой схемы]\label{theorem_depth_lower_bound}
    $$
        D(n,q) \geqslant \frac{2^n(n-2)}{3(n+q)\log_2(n+q)} - \frac{n}{3(n+q)} \; .
    $$
\end{theorem}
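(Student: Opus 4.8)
The plan is to reduce this to Theorem~\ref{theorem_complexity_lower_bound} by means of the elementary inequality~\eqref{formula_complexity_and_depth_connection} relating complexity and depth of a single reversible circuit. Recall that a reversible circuit $\frS$ on $n+q$ lines satisfies $L(\frS)/(n+q) \leqslant D(\frS)$: a subcircuit of depth $1$ consists of gates acting on pairwise disjoint sets of lines, so it contains at most $n+q$ gates, whence a circuit of depth $d$ contains at most $d(n+q)$ gates.

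First I would fix an arbitrary $f \in F(n,q)$ and let $\frS_d$ be a circuit of minimal depth among those composed of \gate{} of $\Omega_{n+q}^2$ and realizing $f$ with $q$ additional inputs, so that $D(f,q) = D(\frS_d)$. Since $L(f,q)$ is the minimal complexity over all such circuits, we have $L(\frS_d) \geqslant L(f,q)$, and therefore
$$
    D(f,q) = D(\frS_d) \geqslant \frac{L(\frS_d)}{n+q} \geqslant \frac{L(f,q)}{n+q} \; .
$$
Taking the maximum over $f \in F(n,q)$ and using $L(n,q) = \max_{f} L(f,q)$ gives $D(n,q) \geqslant L(n,q)/(n+q)$. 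Substituting the bound of Theorem~\ref{theorem_complexity_lower_bound} then yields
$$
    D(n,q) \geqslant \frac{1}{n+q}\left( \frac{2^n(n-2)}{3\log_2(n+q)} - \frac{n}{3} \right)
        = \frac{2^n(n-2)}{3(n+q)\log_2(n+q)} - \frac{n}{3(n+q)} \; ,
$$
which is exactly the claimed inequality.

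There is essentially no serious obstacle here; the only point requiring a little care is that the depth-minimal circuit $\frS_d$ and a complexity-minimal circuit for $f$ may differ, so one must argue along the chain $D(f,q) = D(\frS_d) \geqslant L(\frS_d)/(n+q) \geqslant L(f,q)/(n+q)$ rather than comparing $D(f,q)$ and $L(f,q)$ directly. An alternative derivation, producing the same constant, would be to rerun the Riordan--Shannon counting of Theorem~\ref{theorem_complexity_lower_bound} with $\EuScript C(n+q,\, d(n+q))$ in place of $\EuScript C(n+q, L(n,q))$, using that a depth-$d$ circuit on $n+q$ lines has at most $d(n+q)$ gates; this is more laborious and gains nothing, so the reduction above is the preferable route.
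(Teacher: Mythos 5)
Your proof is correct and follows essentially the same route as the paper, which derives the depth bound from Theorem~\ref{theorem_complexity_lower_bound} together with the inequality $L(\frS)/n \leqslant D(\frS)$ of~\eqref{formula_complexity_and_depth_connection} applied to circuits on $n+q$ lines. Your explicit chain $D(f,q) = D(\frS_d) \geqslant L(\frS_d)/(n+q) \geqslant L(f,q)/(n+q)$ just spells out the one-line argument the paper leaves implicit.
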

\begin{proof}
    Следует из Теоремы~\ref{theorem_complexity_lower_bound} и неравенства~\eqref{formula_complexity_and_depth_connection}.
\end{proof}

\begin{theorem}[нижняя оценка квантового веса обратимой схемы]\label{theorem_quantum_weight_lower_bound}
    $$
        W(n,q) \geqslant \min \left( \WC, \WT \right) \cdot \left( \frac{2^n(n-2)}{3\log_2(n+q)} - \frac{n}{3} \right) \; .
    $$
\end{theorem}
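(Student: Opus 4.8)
The plan is to derive this bound immediately from the complexity lower bound of Theorem~\ref{theorem_complexity_lower_bound} together with the elementary relation between quantum weight and gate complexity. First I would invoke inequality~\eqref{formula_quantum_weight_connection_with_complexity}, which asserts
$$
    W(n,q) \geqslant \left( \min_{E \in \Omega_{n+q}^2} {W(E)} \right) \cdot L(n,q) \; ,
$$
and observe that the set $\Omega_{n+q}^2$ consists only of \gate{} of the three types NOT, CNOT and 2-CNOT. Since the quantum weight of a NOT and of a CNOT equals $\WC$ while the quantum weight of a 2-CNOT equals $\WT$, the minimum in the displayed inequality is exactly $\min(\WC, \WT)$; equivalently this follows directly from the sum decomposition~\eqref{formula_quantum_weigth_as_sum} together with $\LC(n,q) + \LT(n,q) = L(n,q)$.

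The remaining step is to substitute the lower bound on $L(n,q)$ established in Theorem~\ref{theorem_complexity_lower_bound},
$$
    L(n,q) \geqslant \frac{2^n(n-2)}{3\log_2(n+q)} - \frac{n}{3} \; ,
$$
into the inequality above. Because $\min(\WC,\WT) > 0$, multiplying the complexity bound through by this constant preserves the inequality and yields precisely the claimed estimate
$$
    W(n,q) \geqslant \min \left( \WC, \WT \right) \cdot \left( \frac{2^n(n-2)}{3\log_2(n+q)} - \frac{n}{3} \right) \; .
$$

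There is essentially no obstacle here: the whole content of the statement is carried by the cardinality (Riordan--Shannon) argument already completed in Theorem~\ref{theorem_complexity_lower_bound}, and the present theorem is a one-line corollary. The only point worth stating carefully is that the reduction of $\min_{E \in \Omega_{n+q}^2} W(E)$ to $\min(\WC,\WT)$ uses the fact, recalled on p.~\pageref{page_quantum_weight}, that $\Omega_{n+q}^2$ contains no \gate{} other than NOT, CNOT and 2-CNOT, so that no gate of weight strictly smaller than $\min(\WC,\WT)$ can occur in an admissible circuit.
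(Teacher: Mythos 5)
Ваше рассуждение верно и полностью совпадает с доказательством в работе: теорема действительно получается подстановкой нижней оценки из Теоремы~\ref{theorem_complexity_lower_bound} в неравенство~\eqref{formula_quantum_weight_connection_with_complexity}, причём $\min_{E \in \Omega_{n+q}^2} W(E) = \min(\WC, \WT)$, поскольку $\Omega_{n+q}^2$ содержит только элементы NOT, CNOT и 2-CNOT. Дополнительных замечаний нет.
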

\begin{proof}
    Следует из Теоремы~\ref{theorem_complexity_lower_bound}
    и неравенства~\eqref{formula_quantum_weight_connection_with_complexity}.
\end{proof}

В работе~\cite{my_complexity_bounds} была сделана попытка снизить константу 3 в знаменателе нижней оценки сложности $L(n,q)$
из Теоремы~\ref{theorem_complexity_lower_bound} при помощи свойства эквивалентности некоторых обратимых схем с точки зрения
задаваемых ими преобразований. Для этой цели была выдвинута следующая гипотеза о структуре обратимых схем из \gate{}
множества $\Omega_n^2$:
\begin{hypothesis}\label{hypothesis_k_independent_gates_in_circuit}
    Почти каждая обратимая схема, состоящая из элементов NOT, CNOT и 2-CNOT и имеющая $n \to \infty$ входов,
    может быть представлена в виде композиции подсхем сложности $k = o(n)$
    (кроме последней, у которой сложность $L \leqslant k$),
    таких что в каждой подсхеме все элементы являются попарно коммутирующими.
    Количество обратимых схем, для которых это неверно, пренебрежимо мало.
\end{hypothesis}
Покажем, что данная гипотеза неверна для обратимых схем со сложностью $O(n)$ и выше.

Пусть $r = |\Omega_n^2|$ (см. формулу~\eqref{formula_cardinality_of_omega_n_2_bounds}).
Обозначим через $M(n,k)$ множество обратимых схем с $n$ входами из Гипотезы~\ref{hypothesis_k_independent_gates_in_circuit}.
Очевидно, что $|M(n,2)| \geqslant |M(n,k)|$. Будем строить схему $\frS \in M(n,2)$ следующим способом:
сначала выбираем $r$ способами первый элемент в паре, затем выбираем второй элемент, коммутирующий с предыдущим;
так делаем $\lfloor L(\frS) \mathop / 2 \rfloor$ раз. Оставшийся элемент в случае нечётной сложности схемы $\frS$ выбираем
$r$ способами.

В работе~\cite{my_complexity_bounds} использовалось следующее свойство коммутируемости: два элемента
$E(t_1, I_1)$ и $E(t_2, I_2)$ из множества $\Omega_n^2$ являются коммутирующими, если $t_1 \notin I_2$ и $t_2 \notin I_1$.
Если рассматривать только элементы 2-CNOT из множества $\Omega_n^2$,
то второй элемент в паре можно выбрать не менее $\binom{n-2}{1} \cdot \binom{n-2}{2}$ способами.
Следовательно, верно следующее неравенство:
$$
    |M(n,2)| \geqslant \left( r \cdot \binom{n-2}{1} \cdot \binom{n-2}{2}
        \right)^{\lfloor s \mathop / 2 \rfloor} \; ,
$$
где $s = L(\frS)$. Оценим величину $Q(n,s) = |M(n,2)| \mathop / r^s$:
\begin{gather}\label{formula_q_n_s_lower_bound}
    Q(n,s) = \frac{|M(n,2)|}{r^s} \geqslant \left(
        \frac{n-2}{n} \cdot \frac{n-2}{n} \cdot \frac{n-3}{n} \right)^{(s-1)/2}
        \geqslant \left(1- \frac{3}{n}\right)^{2(s-1)} \; .
\end{gather}

Таким же образом можно оценить величину $Q(n,s)$ сверху: при выборе второго \gate{} пары
не будут рассматриваться как минимум те элементы 2-CNOT, у которых один из контролирующих входов равен контролируемому
выходу предыдущего элемента. Таких элементов Тоффоли $\binom{n-1}{2}$ штук. Следовательно, верно неравенство
$$
    Q(n,s) = \frac{|M(n,2)|}{r^s} \leqslant \left(
        \frac{r}{r} \cdot \frac{r - \binom{n-1}{2}}{r} \right)^{s/2} =
        \left( 1 - \frac{n-2}{n^2} \right)^{s/2} \; .
$$

Отсюда следует, что
\begin{equation}\label{formula_q_n_s_general}
    \lim_{n \to \infty} e^{-6s \mathop / n} \leqslant \lim_{n \to \infty} Q(n,s)
        \leqslant \lim_{n \to \infty} e^{-s \mathop / (2n)} \; .
\end{equation}
Таким образом, $r^s e^{-6s \mathop / n} \lesssim |M(n,2)| \lesssim r^s e^{-s \mathop / (2n)}$.

Из неравенства, аналогичного неравенству~\eqref{formula_q_n_s_lower_bound},
в работе~\cite{my_complexity_bounds} был сделан ошибочный вывод, что $|M(n,2)| \sim |M(n,k)| \sim r^s$
для всех значений $s$, если $k = o(n)$.
Из~\eqref{formula_q_n_s_general} видно, что $|M(n,2)| \sim r^s$ только при $s = o(n)$.
И уже при $s \geqslant n$ количество обратимых схем, не соответствующих утверждению гипотезы, перестаёт быть пренебрежимо малым.
Следовательно, при оценке сложности обратимых схем нельзя опираться на Гипотезу~\ref{hypothesis_k_independent_gates_in_circuit}.

На практике б\'{о}льший интерес представляют верхние оценки сложности, глубины и квантового веса обратимых схем.
Этот вопрос будет изучаться в следующих двух разделах.

\subsection{Верхние оценки для схем без дополнительной памяти}\label{subsection_upper_bound_no_memory}
\refstepcounter{synthalgchapter}

\forceindent
Как уже неоднократно было сказано ранее, обратимая схема без дополнительных входов может реализовывать только чётные подстановки.
В работе~\cite{maslov_rm_synthesis} был описан алгоритм синтеза таких схем, позволяющий получить обратимую схему
$\frS$, реализующую заданную чётную подстановку $h \in A(\ZZ_2^n)$ со сложностью $L(\frS) \lesssim 5n 2^n$.
Эту оценку можно считать простейшей верхней оценкой для $L(n,0)$:
\begin{equation}\label{formula_upper_bound_not_my_simple}
    L(n,0) \lesssim 5n 2^n \; .
\end{equation}

В разделе~\ref{subsection_my_synthesis_algorithms} был описан алгоритм~\algref{alg_my_common} синтеза обратимых схем,
состоящих из \gate{} множества $\Omega_n^2$.
Было доказано, что для синтезированной данным алгоритмом схемы $\frS$ её сложность $L(\frS) \lesssim 7n 2^n$.
Алгоритм~\algref{alg_my_common} основывается на синтезе пары независимых транспозиций. Однако если обобщить данный подход
для синтеза б\'{о}льшего количества независимых транспозиций, то верхнюю оценку~\eqref{formula_upper_bound_not_my_simple}
можно существенно улучшить.

\begin{theorem}[о сложности обратимой схемы без дополнительных входов]\label{theorem_complexity_upper_no_memory}
    $$
        L(n, 0) \leqslant \frac{3n2^{n+4}}{\log_2 n - \log_2 \log_2 n - \log_2 \phi(n)}
            \left( 1 + \varepsilon(n) \right)  \; ,
    $$
    где $\phi(n)$~--- любая сколь угодно медленно растущая функция такая, что $\phi(n) < n \mathop / \log_2 n$,
    $$
        \varepsilon(n) = \frac{1}{6\phi(n)} + \left(\frac{8}{3} - o(1)\right)
            \frac{\log_2 n \cdot \log_2 \log_2 n}{n} \; .
    $$
\end{theorem}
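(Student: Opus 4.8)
The plan is to generalize algorithm~\algref{alg_my_common} by replacing the decomposition of an even permutation into \emph{pairs} of independent transpositions with a decomposition into \emph{groups of $K$ independent transpositions}, where $K = K(n)$ is a parameter to be chosen at the end. First I would show, exactly as in formulas~\eqref{formula_decompostion_of_two_cycles} and~\eqref{formula_decompostion_of_k_cycle}, that any even permutation $h \in A(\ZZ_2^n)$ can be written as a product of $O(2^n \mathop / K)$ such groups (plus at most one ``leftover'' group of dependent transpositions that is handled separately at cost $O(nK)$ or, more carefully, the way the dependent pair was handled in the original algorithm). The key structural step is then: a group of $K$ independent transpositions $\tau_1 \circ \cdots \circ \tau_K$ with a \emph{common difference vector} $\vv d$ lying inside a single face of the Boolean cube can be realized by one generalized Toffoli gate $E(t,I,J)$ with $n - \log_2 K$ controlling inputs together with $O(\log_2 K)$ or so NOT/CNOT/2-CNOT gates used for conjugation (this is precisely the ``face of the Boolean cube'' construction of section~\ref{label_boolean_edge_search_paragraph}, formula~\eqref{formula_scheme_for_boolean_edge}). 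The one big Toffoli gate $C_{I;t}$ with $|I| = n - \log_2 K \leqslant n - 2$ is then expanded, via Utterance~\ref{fast_realization_of_k_cnot}, into at most $8(|I|-3) \sim 8n$ elements of CNOT and 2-CNOT, \emph{without} extra inputs.

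Next I would bring an arbitrary group of $K$ independent transpositions to this special ``common-difference, single-face'' form by conjugation. Since conjugation preserves cycle structure, the group stays a group of $K$ independent transpositions throughout; the argument mirrors the step-by-step conjugation used on a single transposition (see p.~\pageref{conjuntion_on_single_transp}) and on a pair (see p.~\pageref{independent_pair_first_steps}), but now applied to $K$ vectors simultaneously: roughly $K$ rounds, each costing $O(n)$ gates from $\Omega_n^2$, so $O(nK)$ gates to normalize one group. Adding the $\sim 8n$ gates for the expanded $(n-\log_2 K)$-CNOT and the $O(nK)$ gates for the inverse conjugation, the cost of one group is $O(nK)$, hence the cost per transposition is $O(n)$ — but the point is to track constants precisely, because the $8(k-3)$ expansion contributes the dominant $8n$ per group, which is \emph{amortized over $K$ transpositions}. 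So the cost per transposition is $\approx (16n \mathop / K) + O(n)$ for the conjugation part... no: I must be careful. The whole group costs $\approx 8n + O(nK)$; to make the $O(nK)$ conjugation term negligible relative to the $8n$ Toffoli-expansion term we do \emph{not} want $K$ large, but to amortize $8n$ we \emph{do}. The resolution, as in the sketch of Chapter~4 in the introduction, is that $K$ is chosen $\sim$ a slowly growing function, giving total complexity $\sum \text{(groups)} \cdot (\text{cost per group}) \sim (2^n \mathop / K)\cdot 8n \cdot K/(\text{something})$; I would optimize $K$ against the number of faces of dimension $\log_2 K$ available, which is where the $\log_2 n$ in the denominator and the function $\phi(n)$ enter.

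The delicate counting step — and the main obstacle — is the combinatorial lemma guaranteeing that, when one greedily pulls out all transpositions with a fixed difference $\vv d$ from $h$ and looks at the resulting vector set $\mathbb B_{\vv d, h}$, there is \emph{always} a face of dimension at least $\log_2 K$ with $K$ chosen as large as $\psi(n) = \log_2 n - \log_2\log_2 n - \log_2\phi(n)$ permits; equivalently, that the transpositions of $h$ can be partitioned into $O(2^n \mathop / K)$ groups each of which fits in a $(\log_2 K)$-dimensional face with common difference vector. This is essentially a covering argument: there are $\sim 2^n$ possible differences and $3^n$ faces, and one must show that a covering with the claimed efficiency exists for \emph{every} even permutation (not just generic ones). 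I expect to prove it by a pigeonhole/averaging argument over the choice of which $\log_2 K$ coordinates to fix, showing that for the worst-case $h$ one can still guarantee faces of dimension $\asymp \log_2 n$, and then feeding this bound — together with Stirling-type estimates for the number of groups and the precise constant $8$ from Utterance~\ref{fast_realization_of_k_cnot} and the constant $3$ coming from the three vectors $\vv x, \vv x', \vv z$ one conjugates per group — into the arithmetic. Finally I would collect the error terms: the $\tfrac{1}{6\phi(n)}$ comes from the slack in choosing $K$ versus $2^{\psi(n)}$, and the $\bigl(\tfrac{8}{3}-o(1)\bigr)\tfrac{\log_2 n \cdot \log_2\log_2 n}{n}$ term comes from the ratio of the $O(nK)$ conjugation overhead to the $\sim 8n$ main term, with $K \sim 2^{\psi(n)} \sim n \mathop/(\log_2 n \cdot \phi(n))$. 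Assembling these gives the stated bound
$$
    L(n,0) \leqslant \frac{3n2^{n+4}}{\psi(n)}\bigl(1 + \varepsilon(n)\bigr),
    \qquad \psi(n) = \log_2 n - \log_2\log_2 n - \log_2\phi(n),
$$
and I would remark that the constant $3 \cdot 2^4$ here is exactly the one flagged as improvable in the concluding chapter.
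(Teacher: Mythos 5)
Your high-level plan --- replace pairs of independent transpositions by groups of $K$ of them, realize each group by one generalized Toffoli gate plus conjugating gates, and amortize the $\approx 8n$ cost of expanding that Toffoli over the $K$ transpositions --- is exactly the skeleton of the paper's proof. But the mechanism you propose for realizing a single group is different from the paper's, and it contains a gap that you yourself flag as ``the main obstacle'' without resolving it. You require each group of $K$ transpositions to have a \emph{common difference vector} $\vv d$ and to lie inside one face of the Boolean cube, and you then need a covering lemma asserting that \emph{every} even permutation decomposes into $O(2^n \mathop / K)$ such structured groups. That lemma is false in the worst case: there are $2^n-1$ possible difference vectors and at most $2^{n-1}$ transpositions in the decomposition, so one cannot even guarantee by pigeonhole that two transpositions share a difference vector, let alone that $K$ of them do and additionally sit in a common face. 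The face-search idea works in Chapter~3 as a heuristic optimization precisely because it only fires when such structure happens to be present; it cannot carry a worst-case upper bound.

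The paper's proof needs no common difference vector. It takes an \emph{arbitrary} group of $K$ independent transpositions, writes the $k=2K$ vectors involved as rows of a $k\times n$ matrix, and observes that when $k\leqslant\log_2 n$ the $n$ columns take at most $2^k\leqslant n$ distinct values; conjugating by CNOTs collapses every duplicate column to zero at cost $2(n-d)$ gates, after which the remaining $d\leqslant 2^k$ columns are driven to a canonical form by $O(k(d+\log_2 k))$ further conjugations, leaving a permutation realized by a single $(n-\log_2 k)$-CNOT, which is then expanded into $8(n-\log_2 k)$ elements 2-CNOT. Note also that the group size is $K\asymp\log_2 n$ (the paper takes $k=2^{\lfloor\log_2 m\rfloor}$ with $m=\log_2 n-\log_2\log_2 n-\log_2\phi(n)$), not $K\sim 2^{\psi(n)}$ as you write; the per-group cost is $12n+k2^{k+1}+32k\log_2 k$, the number of groups is $2^{n+1}\mathop / k$, and the two error terms in $\varepsilon(n)$ are exactly the ratios $k2^{k+1}\mathop/(12n)\leqslant 1\mathop/(6\phi(n))$ and $32k\log_2 k\mathop/(12n)$, not the quantities you attribute them to. Without the duplicate-column collapse (or some equally general normalization of an arbitrary group) your argument does not close.
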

\begin{proof}

    Опишем новый алгоритм синтеза~\nextalg{alg_asymp_no_mem}, являющийся обобщением алгоритма~\algref{alg_my_common}.
    Рассмотрим произвольную чётную подстановку $h \in A(\ZZ_2^n)$ и задаваемое ей булево преобразование
    $f_h\colon \ZZ_2^n \to \ZZ_2^n$.
    Основная идея алгоритма синтеза заключается в представлении подстановки $h$ в виде произведения
    транспозиций таким образом, чтобы их можно было разбить на группы по $K$ независимых транспозиций в каждой группе:
    \begin{equation}
        h = G_1 \circ G_2 \circ \ldots \circ G_t \circ h' \; ,
        \label{formula_permutation_decomposition_main}
    \end{equation}
    где $G_i = (\vv x_{i,1}, \vv y_{i,1}) \circ \ldots \circ (\vv x_{i,K}, \vv y_{i,K})$~--- $i$-я группа
    $K$ независимых транспозиций, $\vv x_{i,j}, \vv y_{i,j} \in \ZZ_2^n$; $h'$~--- некоторая остаточная подстановка.
    Далее мы покажем, что любая группа $G_i$ может быть задана композицией одного элемента $k$-CNOT с большим количеством
    контролирующих входов $k$ и множества элементов CNOT и 2-CNOT.

    Подстановка $h \in A(\ZZ_2^n)$ может быть представлена в виде произведения независимых циклов,
    сумма длин которых не превышает $2^n$ (см. с.~\pageref{formula_cycle_decomposition}).
    Имея это представление, мы можем получать независимые транспозиции из циклов по формулам
    \eqref{formula_decompostion_of_two_cycles}--\eqref{formula_decompostion_of_k_cycle}
    следующим образом:
    \begin{gather*}
        (i_1, i_2, \ldots, i_{l_1}) \circ (j_1, j_2, \ldots, j_{l_2}) = (i_1, i_2) \circ (j_1, j_2) \circ
            (i_1, i_3, \ldots, i_{l_1}) \circ (j_1, j_3, \ldots, j_{l_2})  \; , \\
        (i_1, i_2, \ldots, i_l) = (i_1, i_2) \circ (i_3, i_4) \circ (i_1, i_3, i_5, i_6, \ldots, i_l) \; .
    \end{gather*}
    
    Если посмотреть на представление подстановки~\eqref{formula_permutation_decomposition_main}
    и на формулы \eqref{formula_decompostion_of_two_cycles}--\eqref{formula_decompostion_of_k_cycle},
    то можно сделать вывод, что $K$ независимых транспозиций нельзя получить из остаточной подстановки $h'$,
    только если она в своём представлении имеет строго меньше $K$ независимых циклов, и каждый из этих циклов имеет длину
    строго меньше 5. Таким образом, сумма длин циклов в представлении подстановки $h'$ не превосходит $4(K-1)$.

    Обозначим через $M_g$ множество подвижных точек подстановки $g \in S(\ZZ_2^n)$:
    $$
        M_g = \{\,\vv x \in \ZZ_2^n\mid g(\vv x) \ne \vv x\,\} \; .
    $$
    Тогда можно утверждать, что $|M_h| \leqslant 2^n$ и $|M_{h'}| \leqslant 4(K-1)$.
    
    Из формул~\eqref{formula_decompostion_of_two_cycles} и~\eqref{formula_permutation_decomposition_main}
    следует, что в представлении подстановки $h$ в виде произведения траспозиций можно получить не более
    $|M_h| \mathop / K$ групп, в каждой из которых $K$ независимых транспозиций,
    а в представлении подстановки $h'$ в виде произведения траспозиций
    можно получить не более $|M_{h'}| \mathop / 2$ пар независимых транспозиций и не более одной пары зависимых транспозиций.
    Пара зависимых транспозиций $(i,j) \circ (i, k)$ выражается через произведение двух пар независимых транспозиций:
    $$
        (i,j) \circ (i, k) = ((i,j) \circ (r, s)) \circ ((r,s) \circ (i,k)) \; .
    $$

    Обозначим через $g^{(i)}$ подстановку, представляющую собой произведение $i$ независимых транспозиций,
    а через $f_{g^{(i)}}$~--- задаваемое ей булево преобразование.
    Тогда можно оценить сверху величину $L(f_h,0)$ следующим образом:
    \begin{gather}
        L(f_h,0) \leqslant \frac{|M_h|}{K} \cdot L(f_{g^{(K)}},0) + \left(\frac{|M_{h'}|}{2} + 2 \right )
            \cdot L(f_{g^{(2)}},0) \notag  \; , \\
        L(f_h,0) \leqslant \frac{2^n}{K} \cdot L(f_{g^{(K)}},0) + 2K \cdot L(f_{g^{(2)}},0)
            \label{formula_upper_bound_of_L_h_common} \; .
    \end{gather}
    Всё, что нам остаётся сделать,~--- оценить сверху величину $L(f_{g^{(K)}},0)$.

    Рассмотрим произвольную подстановку $g^{(K)}$.
    Обозначим через $k$ величину $|M_{g^{(K)}}|$, тогда $k = 2K$.
    Суть описываемого алгоритма заключается в действии сопряжением на подстановку $g^{(K)}$
    таким образом, чтобы получить некоторую новую подстановку, соответствующую одному обобщённому элементу Тоффоли.
    Напомним, что действие сопряжением не меняет цикловой структуры подстановки, поэтому подстановка $g^{(K)}$ в результате
    действия сопряжением всегда будет оставаться произведением $K$ независимых транспозиций.
    Любой элемент $E$ из множества $\Omega_n^2$ задаёт подстановку $h_{E}$ на множестве двоичных векторов $\ZZ_2^n$.
    Для этой подстановки верно равенство $h^{-1}_{E} = h_{E}$.
    Следовательно, применение к $g^{(K)}$ действия сопряжением подстановкой $h_{E}$,
    записываемое как $h^{-1}_{E} \circ g^{(K)} \circ h_{E}$, соответствует
    присоединению элемента $E$ к началу и к концу текущей обратимой подсхемы.

    Пусть $g^{(K)} = (\vv x_1, \vv y_1) \circ \ldots \circ (\vv x_K, \vv y_K)$.
    Составим матрицу $A$ следующим образом:
    \begin{equation}
        A =
            \left(
                \begin{matrix}
                    \vv x_1 \\
                    \vv y_1 \\
                    \ldots \\
                    \vv x_K \\
                    \vv y_K
                \end{matrix}
            \right )
          =
            \left(
                \begin{matrix}
                    a_{1,1}   & \ldots & a_{1,n}   \\
                    a_{2,1}   & \ldots & a_{2,n}   \\
                    \hdotsfor{3}                   \\
                    a_{k-1,1} & \ldots & a_{k-1,n} \\
                    a_{k,1}   & \ldots & a_{k,n}   \\
                \end{matrix}
            \right ) \; .
        \label{formula_matrix_for_permutation}
    \end{equation}

    Наложим на значение $k$ следующее ограничение: $k$ должно быть степенью двойки, $2^{\lfloor \log_2 k \rfloor} = k$.
    Если $k \leqslant \log_2 n$, то в матрице $A$ существует не более $2^k$ и не менее $\log_2 k$ попарно различных столбцов.
    Без ограничения общности будем считать, что такими столбцами являются первые $d \leqslant 2^k$ столбцов матрицы.
    Тогда для любого $j$-го столбца, $j > d$, найдётся равный ему $i$-й столбец, $i \leqslant d$.
    Следовательно, применив к подстановке $g^{(K)}$ действие сопряжением подстановкой,
    задаваемой элементом $C_{i;j}$,
    можно обнулить $j$-й столбец в матрице $A$ (для этого потребуется 2 элемента CNOT).
    Обнуляя таким образом все столбцы с индексами больше $d$, использовав $L_1 = 2(n-d)$ элементов CNOT%
    \phantomsection\label{page_l1_complexity},
    мы получим новую подстановку $g_1^{(K)}$ и соответствующую ей матрицу $A_1$ следующего вида:
    $$
        A_1 =
            \left(\phantom{\begin{matrix} 0\\0\\0\\0\\0\\ \end{matrix}} \right.
            \begin{matrix}
                a_{1,1}   & \ldots & a_{1,d}   \\
                a_{2,1}   & \ldots & a_{2,d}   \\
                \hdotsfor{3}                     \\
                a_{k-1,1} & \ldots & a_{k-1,d} \\
                a_{k,1}   & \ldots & a_{k,d}   \\
            \end{matrix}
            \phantom{\begin{matrix} 0\\0\\0\\0\\0\\ \end{matrix}}
            \overbrace{
                \begin{matrix}
                    0 &\ldots & 0   \\
                    0 &\ldots & 0   \\
                    \hdotsfor{3}    \\
                    0 &\ldots & 0   \\
                    0 &\ldots & 0   \\
                \end{matrix}
            }^{n - d}
            \left.\phantom{\begin{matrix} 0\\0\\0\\0\\0\\ \end{matrix}} \right) \; .
    $$

    Теперь для всех $a_{1,i} = 1$ применяем к $g_1^{(K)}$ действие сопряжением подстановкой,
    задаваемой элементом $N_i$. Для этого потребуется $L_2 \leqslant 2d$ элементов NOT%
    \phantomsection\label{page_l2_complexity}.
    В итоге получим подстановку $g_2^{(K)}$ и соответствующую ей матрицу
    $A_2$ (элементы матрицы обозначены через $b_{i,j}$, чтобы показать их возможное отличие от элементов матрицы $A_1$):
    $$
        A_2 =
            \left(\phantom{\begin{matrix} 0\\0\\0\\0\\0\\ \end{matrix}} \right.
            \begin{matrix}
                0         & \ldots & 0         \\
                b_{2,1}   & \ldots & b_{2,d}   \\
                \hdotsfor{3}                   \\
                b_{k-1,1} & \ldots & b_{k-1,d} \\
                b_{k,1}   & \ldots & b_{k,d}   \\
            \end{matrix}
            \phantom{\begin{matrix} 0\\0\\0\\0\\0\\ \end{matrix}}
            \overbrace{
                \begin{matrix}
                    0 &\ldots & 0   \\
                    0 &\ldots & 0   \\
                    \hdotsfor{3}    \\
                    0 &\ldots & 0   \\
                    0 &\ldots & 0   \\
                \end{matrix}
            }^{n - d}
            \left.\phantom{\begin{matrix} 0\\0\\0\\0\\0\\ \end{matrix}} \right) \; .
    $$

    Следующим шагом является приведение матрицы $A_2$ к \textit{каноническому виду}, где каждая строка, если её записать
    в обратном порядке, представляет собой запись в двоичной системе счисления числа <<номер строки минус 1>>.

    Все строки матрицы $A_2$ различны. Первая строка уже имеет канонический вид, поэтому мы последовательно будем приводить
    оставшиеся строки к каноническому виду, начиная со второй.
    Предположим, что текущая строка имеет номер $i$, и все строки с номерами от $1$ до $(i-1)$ имеют канонический вид.
    Возможны два случая:
    \begin{enumerate}
        \item
            Существует ненулевой элемент в $i$-й строке с индексом $j > \log_2 k$: $b_{i,j} = 1$.
            В этом случае для всех элементов матрицы $b_{i, j'}$, $j' \ne j$, $j' \leqslant d$,
            не равных $j'$-ой цифре в двоичной записи числа $(i-1)$, мы применяем к $g_2^{(K)}$ действие сопряжением
            подстановкой, задаваемой элементом $C_{j;j'}$. Для этого потребуется не более $2d$ элементов CNOT.
            После этого нам остаётся только обнулить $j$-й элемент текущей строки.
            Для этого мы применяем к $g_2^{(K)}$ действие сопряжением подстановкой, задаваемой
            элементом $C_{I;j}$, где $I$~--- множество индексов ненулевых цифр в двоичной записи числа $(i-1)$.
            К примеру, если $i = 6$, то $I = \{\,1, 3\,\}$.
            Поскольку $|I| \leqslant \log_2 k$, мы можем заменить данный элемент $C_{I;j}$
            композицией не более $8 \log_2 k$ элементов 2-CNOT~\cite{barenco_elementary_gates}.
            Следовательно, для данного действия сопряжением нам потребуется не более $16 \log_2 k$ элементов 2-CNOT.
            
            Итак, суммируя количество используемых \gate{}, мы получаем, что для приведения $i$-й строки
            к каноническому виду в данном случае требуется $L_3^{(i)} \leqslant 2d + 16 \log_2 k$
            элементов из множества $\Omega_n^2$.

        \item
            Не существует ненулевого элемента в $i$-й строке с индексом $j > \log_2 k$: $b_{i,j} = 0$ для всех $j > \log_2 k$.
            В этом случае мы применяем к $g_2^{(K)}$ действие сопряжением подстановкой,
            задаваемой элементом $C_{I;\log_2 k + 1}$, где $I$~--- множество индексов ненулевых элементов
            текущей строки.
            Т.\,к. все строки матрицы различны и при этом все предыдущие строки находятся в каноническом виде,
            мы можем утверждать, что значение элемента матрицы $b_{j,\log_2 k + 1}$ после данного действия сопряжением
            будет изменено только в случае, если $j \geqslant i$.
            Поскольку $|I| \leqslant \log_2 k$, мы можем заменить данный элемент $C_{I;j}$
            композицией не более $8 \log_2 k$ элементов 2-CNOT~\cite{barenco_elementary_gates}.
            Следовательно, для данного действия сопряжением нам потребуется не более $16 \log_2 k$ элементов 2-CNOT.
            После этого мы можем перейти к предыдущему случаю.

            Итак, суммируя количество используемых \gate{}, мы получаем, что для приведения $i$-й строки
            к каноническому виду в данном случае требуется $L_3^{(i)} \leqslant 2d + 32 \log_2 k$
            элементов из множества $\Omega_n^2$.
    \end{enumerate}

    После приведения матрицы $A_2$ к каноническому виду, мы получим новую подстановку $g_3^{(K)}$
    и соответствующую ей матрицу $A_3$ следующего вида:
    $$
        A_3 =
            \left(\phantom{\begin{matrix} 0\\0\\0\\0\\0\\ \end{matrix}} \right.
            \overbrace{
                \begin{matrix}
                    0 & 0 & 0 & \ldots & 0 \\
                    1 & 0 & 0 & \ldots & 0 \\
                    \hdotsfor{5}           \\
                    0 & 1 & 1 & \ldots & 1 \\
                    1 & 1 & 1 & \ldots & 1 \\
                \end{matrix}
            }^{\log_2 k}        
            \phantom{\begin{matrix} 0\\0\\0\\0\\0\\ \end{matrix}}
            \overbrace{
                \begin{matrix}
                    0 &\ldots & 0   \\
                    0 &\ldots & 0   \\
                    \hdotsfor{3}    \\
                    0 &\ldots & 0   \\
                    0 &\ldots & 0   \\
                \end{matrix}
            }^{n - \log_2 k}
            \left.\phantom{\begin{matrix} 0\\0\\0\\0\\0\\ \end{matrix}} \right) \; .
    $$
    Для этого в сумме потребуется $L_3$ \gate{} множества $\Omega_n^2$%
    \phantomsection\label{page_l3_complexity}:
    $$
        L_3 = \sum_{i=2}^k {L_3^{(i)}} \leqslant k(2d + 32 \log_2 k) \; .
    $$
    При этом мы получили ещё одно ограничение на значение $k$: значение $\log_2 k$ должно быть строго меньше $n$,
    иначе не всегда будет возможно привести матрицу $A_2$ к каноническому виду.

    На последнем шаге для каждого $i > \log_2 k$ мы применяем к $g_3^{(K)}$ действие сопряжением подстановкой,
    задаваемой элементом $N_i$. Для этого нам потребуется $L_4 = 2(n - \log_2 k)$ элементов NOT%
    \phantomsection\label{page_l4_complexity}.
    В итоге получим подстановку $g_4^{(K)}$ и соответствующую ей матрицу $A_4$ следующего вида:
    $$
        A_4 =
            \left(\phantom{\begin{matrix} 0\\0\\0\\0\\0\\ \end{matrix}} \right.
            \overbrace{
                \begin{matrix}
                    0 & 0 & 0 & \ldots & 0 \\
                    1 & 0 & 0 & \ldots & 0 \\
                    \hdotsfor{5}           \\
                    0 & 1 & 1 & \ldots & 1 \\
                    1 & 1 & 1 & \ldots & 1 \\
                \end{matrix}
            }^{\log_2 k}        
            \phantom{\begin{matrix} 0\\0\\0\\0\\0\\ \end{matrix}}
            \overbrace{
                \begin{matrix}
                    1 &\ldots & 1   \\
                    1 &\ldots & 1   \\
                    \hdotsfor{3}    \\
                    1 &\ldots & 1   \\
                    1 &\ldots & 1   \\
                \end{matrix}
            }^{n - \log_2 k}        
            \left.\phantom{\begin{matrix} 0\\0\\0\\0\\0\\ \end{matrix}} \right) \; .
    $$

    Подстановка $g_4^{(K)}$ задаётся одним элементом $C_{n,n-1, \ldots, \log_2 k + 1; 1}$.
    Этот элемент имеет $(n - \log_2 k)$ контролирующих входов, поэтому он может быть заменён композицией не более
    $L_5 \leqslant 8(n - \log_2 k)$ элементов 2-CNOT~\cite{barenco_elementary_gates}%
    \phantomsection\label{page_l5_complexity}.
    
    Мы получили подстановку $g_4^{(K)}$, применяя к $g^{(K)}$ действие сопряжением подстановками определённого вида.
    Если мы применим к $g_4^{(K)}$ действие сопряжением в точности теми же подстановками, но в обратном порядке,
    мы получим $g^{(K)}$. В терминах синтеза обратимой логики это означает, что мы должны присоединить
    ко входу и выходу элемента $C_{n,n-1, \ldots, \log_2 k + 1; 1}$ все те \gate{},
    что мы использовали в наших преобразованиях исходной матрицы $A$, но в обратном порядке,
    и как результат, мы получим обратимую схему $\frS_K$, задающую подстановку $g^{(K)}$.
    
    Таким образом, можно утверждать, что $L(g^{(K)},0) \leqslant L(\frS_K)$ и
    \begin{gather}
        L(g^{(K)},0) \leqslant \sum_{i=1}^5 {L_i} \leqslant 2(n-d) + 2d + k(2d + 32 \log_2 k) + 2(n-\log_2 k) + 8(n - \log_2 k)
            \notag  \; , \\
        L(g^{(K)},0) \leqslant 12n + k2^{k+1} + 32k\log_2 k - 10 \log_2 k
            \label{formula_upper_bound_in_synthesis_algorithm} \; .
    \end{gather}
    Отсюда также следует, что $L(g^{(2)},0) \leqslant 12n + 364$.

    Подставляя полученные верхние оценки в формулу~\eqref{formula_upper_bound_of_L_h_common},
    мы получаем следующую верхнюю оценку для $L(f_h,0)$:
    \begin{equation}\label{formula_L_h_upper_bound_general_case_with_k}
        L(f_h,0) \leqslant \frac{2^{n+1}}{k}(12n + k2^{k+1} + 32k\log_2 k - 10 \log_2 k) + k(12n + 364) \; .
    \end{equation}

    Описанным алгоритмом требуется, чтобы $k$ было степенью двойки и чтобы $\log_2 k$ было строго меньше $n$.
    Пусть $m = \log_2 n - \log_2 \log_2 n - \log_2 \phi(n)$ и $k = 2^{\lfloor \log_2 m \rfloor}$,
    где $\phi(n)$~--- любая сколь угодно медленно растущая функция такая, что $\phi(n) < n \mathop / \log_2 n$.
    Тогда $m / 2 \leqslant k \leqslant m$ и
    \begin{gather*}
        L(f_h,0) \leqslant \frac{2^{n+2}}{m}(12n + 2m2^m + (32 - o(1))m\log_2 m)  \; , \\
        L(f_h,0) \leqslant \frac{3n2^{n+4}}{m} 
            \left( 1 + \frac{2^m\log_2 n}{6n} + \left(\frac{8}{3} - o(1)\right)
            \frac{\log_2 n \cdot \log_2 \log_2 n}{n} \right) \; .
    \end{gather*}

    Отсюда следует итоговая верхняя оценка для $L(f_h,0)$:
    $$
        L(f_h,0) \leqslant \frac{3n2^{n+4}}{\log_2 n - \log_2 \log_2 n - \log_2 \phi(n)}
            \left( 1 + \varepsilon(n) \right)  \; ,
    $$
    где функция $\varepsilon(n)$ равна
    $$
        \varepsilon(n) = \frac{1}{6\phi(n)} +\left(\frac{8}{3} - o(1)\right)
            \frac{\log_2 n \cdot \log_2 \log_2 n}{n} \; .
    $$

    Поскольку мы описали алгоритм синтеза обратимой схемы $\frS$ для произвольной подстановки $h \in A(\ZZ_2^n)$,
    то функция $L(n,0)$ ограничена сверху также, как и функция $L(f_h,0)$ в формуле выше.
\end{proof}

\begin{theorem}\label{theorem_complexity_no_memory_common}
    $$
        L(n,0) \asymp \frac{n2^n}{\log_2n} \; .
    $$
\end{theorem}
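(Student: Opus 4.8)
The plan is to obtain the order equality by combining the two bounds already established in this section, so that Theorem~\ref{theorem_complexity_no_memory_common} becomes essentially a corollary of Theorems~\ref{theorem_complexity_lower_bound} and~\ref{theorem_complexity_upper_no_memory}. First I would handle the lower bound: applying Theorem~\ref{theorem_complexity_lower_bound} with $q = 0$ gives
$$
    L(n,0) \geqslant \frac{2^n(n-2)}{3\log_2 n} - \frac{n}{3} \; .
$$
Since the subtracted term $n/3$ is negligible and $n - 2 \sim n$, the right-hand side is asymptotically equal to $n2^n \mathop / (3\log_2 n)$, so $L(n,0) \gtrsim n2^n \mathop / (3\log_2 n)$; in particular $L(n,0) \geqslant c_1 \cdot n2^n \mathop / \log_2 n$ for some constant $c_1 > 0$ and all sufficiently large $n$.

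For the upper bound I would invoke Theorem~\ref{theorem_complexity_upper_no_memory} with a concrete admissible choice of the slowly growing function, say $\phi(n) = \log_2 \log_2 n$, which satisfies $\phi(n) < n \mathop / \log_2 n$ for all large $n$. Then the denominator equals $\log_2 n - \log_2 \log_2 n - \log_2 \log_2 \log_2 n$, and since both $\log_2 \log_2 n$ and $\log_2 \log_2 \log_2 n$ are $o(\log_2 n)$, it is asymptotically equal to $\log_2 n$. Simultaneously the error term
$$
    \varepsilon(n) = \frac{1}{6\log_2 \log_2 n} + \left(\frac{8}{3} - o(1)\right)\frac{\log_2 n \cdot \log_2 \log_2 n}{n} \to 0 \; ,
$$
so $1 + \varepsilon(n) \to 1$. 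Hence Theorem~\ref{theorem_complexity_upper_no_memory} yields $L(n,0) \lesssim 3 \cdot 2^4 \cdot n2^n \mathop / \log_2 n$, i.e. $L(n,0) \leqslant c_2 \cdot n2^n \mathop / \log_2 n$ for some constant $c_2$ and all sufficiently large $n$. Combining the two estimates gives $c_1 \cdot n2^n \mathop / \log_2 n \leqslant L(n,0) \leqslant c_2 \cdot n2^n \mathop / \log_2 n$ with $0 < c_1 < c_2$, which is exactly the assertion $L(n,0) \asymp n2^n \mathop / \log_2 n$.

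Because both ingredient theorems are already proved, there is no genuine obstacle in this argument; the only work is the routine verification that, for the chosen $\phi(n)$, the denominator in Theorem~\ref{theorem_complexity_upper_no_memory} behaves like $\log_2 n$ and that $\varepsilon(n) \to 0$. The one point worth noting is that this proof establishes the statement only up to the constant factor: the gap between the coefficient $1/3$ coming from the lower bound and the coefficient $3 \cdot 2^4$ coming from the upper bound is not closed here, and indeed tightening it (in particular improving the constant $3 \cdot 2^4$ in the upper bound) is exactly one of the open problems mentioned in the introduction; for the order-of-growth statement $\asymp$, however, any fixed positive constants suffice.
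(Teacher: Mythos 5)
Your proposal is correct and follows exactly the same route as the paper, whose entire proof of this theorem is the single line that it follows from Theorems~\ref{theorem_complexity_lower_bound} and~\ref{theorem_complexity_upper_no_memory}; you have merely spelled out the routine verification (setting $q=0$ in the lower bound and picking an admissible $\phi(n)$ so that the denominator in the upper bound is asymptotic to $\log_2 n$ and $\varepsilon(n)\to 0$) that the paper leaves implicit. Your closing remark about the unclosed constant gap between $1/3$ and $3\cdot 2^4$ is also consistent with the open problem stated in the paper's conclusion.
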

\begin{proof}
    Следует из Теорем~\ref{theorem_complexity_lower_bound} и~\ref{theorem_complexity_upper_no_memory}.
\end{proof}

Стоит отметить, что если представлять подстановку $h \in A(\ZZ_2^n)$ в виде произведения пар независимых транспозиций,
то в этом случае задающая её обратимая схема $\frS_h$ согласно формуле~\eqref{formula_L_h_upper_bound_general_case_with_k}
будет иметь сложность $L(\frS_h) \lesssim 6n 2^n$.
Данная сложность схемы асимптотически ниже, чем сложность $L(\frS) \lesssim 7n 2^n$ обратимой схемы,
синтезируемой алгоритмом~\algref{alg_my_common}, но выше, чем сложность $L(\frS) \lesssim 5n 2^n$ обратимой схемы,
синтезируемой алгоритмом из работы~\cite{maslov_rm_synthesis}.

Для того, чтобы пояснить основную часть алгоритма синтеза~\algref{alg_asymp_no_mem}, рассмотрим подстановку
$g^{(2)} = (\langle 1,0,0,1 \rangle, \langle 0, 0, 0, 0 \rangle) \circ (\langle 1,1,1,1 \rangle, \langle 0, 1, 1, 0 \rangle)$.
Данная подстановка задаётся обратимой схемой
$\frS = C_{1;4} * C_{2;3} * N_1 * N_3 * N_4 * C_{3,4;1} * N_4 * N_3 * N_1 * C_{2;3} * C_{1;4}$.
Процесс получения схемы $\frS$ показан на рис.~\ref{pic_conjugation_process}.

\Figure[ht]
    \centering
    \begin{tabular}{ccccccc}
        \smallskip &
        $\frS$ & &
        $C_{1;4} * \frS * C_{1;4}$ & &
        $C_{2;3} * \frS_1 * C_{2;3}$ &
        \\

        $A =$ &
        $
        \left(
            \begin{matrix}
                1 & 0 & 0 & 1 \\
                0 & 0 & 0 & 0 \\
                1 & 1 & 1 & 1 \\
                0 & 1 & 1 & 0 \\
            \end{matrix}
        \right )$ &
        $\Rightarrow$ &
        $\left(
            \begin{matrix}
                1 & 0 & 0 & 0 \\
                0 & 0 & 0 & 0 \\
                1 & 1 & 1 & 0 \\
                0 & 1 & 1 & 0 \\
            \end{matrix}
        \right )$ &
        $\Rightarrow$ &
        $\left(
            \begin{matrix}
                1 & 0 & 0 & 0 \\
                0 & 0 & 0 & 0 \\
                1 & 1 & 0 & 0 \\
                0 & 1 & 0 & 0 \\
            \end{matrix}
        \right )$ &
        $\Rightarrow$
    \end{tabular}

    \bigskip    
    \begin{tabular}{ccccccc}
        \smallskip &
        $N_1 * \frS_2 * N_1$ & &
        $N_3 * \frS_3 * N_3$ & &
        $N_4 * \frS_4 * N_4$ &
        \\
        
        $\Rightarrow$ &
        $
        \left(
            \begin{matrix}
                0 & 0 & 0 & 0 \\
                1 & 0 & 0 & 0 \\
                0 & 1 & 0 & 0 \\
                1 & 1 & 0 & 0 \\
            \end{matrix}
        \right )$ &
        $\Rightarrow$ &
        $
        \left(
            \begin{matrix}
                0 & 0 & 1 & 0 \\
                1 & 0 & 1 & 0 \\
                0 & 1 & 1 & 0 \\
                1 & 1 & 1 & 0 \\
            \end{matrix}
        \right )$ &
        $\Rightarrow$ &
        $
        \left(
            \begin{matrix}
                0 & 0 & 1 & 1 \\
                1 & 0 & 1 & 1 \\
                0 & 1 & 1 & 1 \\
                1 & 1 & 1 & 1 \\
            \end{matrix}
        \right )$ &
    \end{tabular}

    \smallskip
    \bigskip

    $
        N_4 * N_3 * N_1 * C_{2;3} * C_{1;4} * \frS * C_{1;4} * C_{2;3} * N_1 * N_3 * N_4 = C_{3,4;1}
    $
    \caption
    {
        Процесс получения обратимой схемы $\frS$, задающей подстановку
        $g^{(2)} = (\langle 1,0,0,1 \rangle, \langle 0, 0, 0, 0 \rangle)
        \circ (\langle 1,1,1,1 \rangle, \langle 0, 1, 1, 0 \rangle)$.
    }\label{pic_conjugation_process}
\end{figure}

Для глубины обратимой схемы мы можем получить похожую, однако не асимптотически оптимальную верхнюю оценку.
\begin{theorem}[о глубине обратимой схемы без дополнительных входов]\label{theorem_depth_upper_no_memory}
    $$
        D(n, 0) \leqslant \frac{n2^{n+5}}{\log_2 n - \log_2 \log_2 n - \log_2 \phi(n)}
            \left( 1 + \varepsilon(n) \right) \;  ,
    $$
    где $\phi(n)$~--- любая сколь угодно медленно растущая функция такая, что $\phi(n) < n \mathop / \log_2 n$,
    $$
        \varepsilon(n) = \frac{1}{4\phi(n)} +(4 + o(1))\frac{\log_2 n \cdot \log_2 \log_2 n}{n} \; .
    $$
\end{theorem}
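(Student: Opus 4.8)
The plan is to run the synthesis algorithm~\algref{alg_asymp_no_mem} of Theorem~\ref{theorem_complexity_upper_no_memory}, but instead of counting gates, to estimate the depth of the circuit it produces, exploiting the fact that several of its blocks consist of gates acting on pairwise disjoint lines and hence collapse to a constant or logarithmic number of layers. Recall that the algorithm writes an arbitrary even $h\in A(\ZZ_2^n)$ as $h=G_1\circ\cdots\circ G_t\circ h'$ with $t\leqslant 2^n\mathop / K$ groups $G_i$ of $K$ independent transpositions and a residual $h'$ with at most $4(K-1)$ moved points, realising each $G_i$ by a subcircuit
\[
 \frS_K=(\text{forward conjugating part})*(\text{decomposition of one }(n-\log_2 k)\text{-CNOT})*(\text{backward conjugating part}),
\]
where the conjugating part consists of: a block $L_1$ of $\leqslant n-d$ controlled-NOT gates zeroing the redundant columns of the matrix~\eqref{formula_matrix_for_permutation}; a block $L_2$ of $\leqslant d$ NOT gates on pairwise distinct lines; the canonical-form reduction $L_3$, of complexity $\leqslant k(2d+32\log_2 k)$; and a block $L_4$ of $\leqslant n-\log_2 k$ NOT gates on pairwise distinct lines. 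The single $(n-\log_2 k)$-CNOT is, as before, replaced by $\leqslant 8(n-\log_2 k)$ elements 2-CNOT~\cite{barenco_elementary_gates}. Since $D(n,0)$ is a minimum over circuits, we are free to reorder gates inside these blocks as long as the intermediate matrices are unchanged.

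The first step is to bound $D(\frS_K)$. By Definition~\ref{def_scheme_depth_on_one_input} a set of NOT gates on pairwise distinct lines has depth $1$, so each forward or backward copy of $L_2$ and $L_4$ contributes only $O(1)$ to the depth, instead of its gate count of order $n$. The gates of $L_1$ can be rearranged without affecting the intermediate matrix: grouping the redundant columns by the value they carry, inside each group the zeroing gates are applied in a binary fan-in/cancellation pattern of depth $\lceil\log_2(n-d)\rceil+O(1)$, while different groups act on disjoint lines, so $L_1$ has depth $O(\log n)$. The reduction $L_3$ has depth $\leqslant L_3=o(n)$ by~\eqref{formula_complexity_and_depth_connection}, since $d\leqslant 2^k$ and for the chosen $k$ one has $k2^k=o(n)$. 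The decomposition of the large Toffoli has depth $\leqslant 8(n-\log_2 k)$, again by~\eqref{formula_complexity_and_depth_connection}. Summing, $D(\frS_K)\leqslant 8(n-\log_2 k)+2L_3+O(\log n)=8n+o(n)$.

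The second step repeats the counting of Theorem~\ref{theorem_complexity_upper_no_memory} with depth in place of complexity. As consecutive groups are composed, $D(f_h,0)\leqslant t\cdot D(\frS_K)+2K\cdot D_{g^{(2)}}$, where a circuit for one pair $g^{(2)}$ has depth $D_{g^{(2)}}\leqslant L(g^{(2)},0)\leqslant 12n+364$ by~\eqref{formula_complexity_and_depth_connection}. Taking, exactly as there, $m=\log_2 n-\log_2\log_2 n-\log_2\phi(n)$ and $k=2^{\lfloor\log_2 m\rfloor}\in[m/2,m]$, so $K=k/2$ and $t\leqslant 2^n\mathop / K\leqslant 2^{n+2}\mathop / m$, the residual term $2K\,D_{g^{(2)}}=O(n\log_2 n)=o(n2^n\mathop / m)$ is negligible, whence
\[
 D(n,0)\leqslant\frac{2^{n+2}}{m}\,(8n+o(n))=\frac{n2^{n+5}}{m}\,(1+o(1)).
\]
A finer bookkeeping of the sub-dominant blocks (the $O(\log n)$- and $L_3$-contributions together with the $-\log_2 k$ savings in the Toffoli decomposition), carried out exactly as in the complexity proof, sharpens the $o(1)$ into $\varepsilon(n)=\frac1{4\phi(n)}+(4+o(1))\frac{\log_2 n\cdot\log_2\log_2 n}{n}$.

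The only place where this improves on the trivial bound $D(n,0)\leqslant L(n,0)$, and hence the point that needs real care, is the collapse of the blocks $L_1$, $L_2$, $L_4$ — together roughly $4n$ of the about $12n$ gates of $\frS_K$ — to sublinear depth; for $L_2$ and $L_4$ this is immediate, but for $L_1$ one must verify that the fan-in cancellation scheme actually reproduces the matrix $A_1$ and that distinct column-classes do not interfere. Everything else, in particular the decomposition of the one large Toffoli, keeps its linear-in-$n$ depth, which is exactly why the resulting estimate is not asymptotically optimal: the genuine bottleneck is that this construction realises an $(n-\log_2 k)$-controlled Toffoli in depth $\Theta(n)$, and a different synthesis method would be required to do better.
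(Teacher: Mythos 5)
Ваше рассуждение по существу совпадает с доказательством в работе: берётся тот же алгоритм синтеза~\algref{alg_asymp_no_mem}, блоки сопряжения элементами NOT сворачиваются до константной глубины, обнуление дублирующих столбцов ($L_1$) выполняется с логарифмической глубиной (в работе это рис.~\ref{pic_zeroing_out_with_logarithmic_depth}), а для приведения к каноническому виду и разложения большого элемента Тоффоли глубина оценивается сложностью, что даёт $D(g^{(K)},0)\leqslant 8n+o(n)$ и далее итоговую оценку тем же подсчётом, что и в Теореме~\ref{theorem_complexity_upper_no_memory}. Предложение корректно и не содержит существенных отличий от авторского доказательства.
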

\begin{proof}
    Из описания алгоритма синтеза~\algref{alg_asymp_no_mem} следует, что некоторые операции можно делать с логарифмической
    или линейной глубиной. К примеру, обнуление столбцов матрицы может быть произведено с логарифмической глубиной
    (см. рис.~\ref{pic_zeroing_out_with_logarithmic_depth}). Также действие сопряжением подстановками,
    задаваемыми элементами NOT, может быть реализовано обратимой подсхемой с константной глубиной.
    
    \Figure[ht]
        \centering
        \includegraphics[scale=1.2]{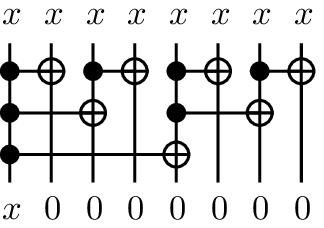}
        \caption
        {
            \small Обнуление дублирующих входов с логарифмической глубиной (входы схемы сверху).
        }\label{pic_zeroing_out_with_logarithmic_depth}
    \end{figure}

    Отсюда следует, что $D_1 = 2 \lceil \log_2(n-d) \rceil$ (против $L_1 = 2(n-d)$, см. с.~\pageref{page_l1_complexity}),
    $D_2 \leqslant 2$ (против $L_2 \leqslant 2d$, см. с.~\pageref{page_l2_complexity}) и $D_4 = 2$
    (против $L_4 = 2(n-\log_2 k)$, см. с.~\pageref{page_l4_complexity}).
    Для других этапов алгоритма синтеза~\algref{alg_asymp_no_mem} получаются подсхемы, у которых глубина равна сложности:
    $D_3 = L_3 \leqslant k(2d + 32 \log_2 k)$, $D_5 = L_5 \leqslant 8(n - \log_2 k)$
    (см. с.~\pageref{page_l3_complexity}).

    Используя данные значения глубин подсхем, можно вывести следующую оценку сверху:
    $$
        D(g^{(K)},0) \leqslant \sum_{i=1}^5 {D_i} \leqslant 2 \log_2 n + k(2^{k+1} + 32 \log_2 k)
            + 8(n - \log_2 k) + 6 \; .
    $$
    Отсюда также следует, что $D(g^{(2)},0) \leqslant 8n + 2\log_2 n + 374$.

    Подставляя данные значения в формулу для глубины обратимой схемы,
    аналогичную формуле~\eqref{formula_upper_bound_of_L_h_common}, мы получаем следующую верхнюю оценку для $D(f_h,0)$:
    $$
        D(f_h,0) \leqslant \frac{2^{n+1}}{k}(8n + (2+o(1))\log_2 n + k2^{k+1} + 32k\log_2 k - 8\log_2 k) \; .
    $$

    При $m = \log_2 n - \log_2 \log_2 n - \log_2 \phi(n)$ и $k = 2^{\lfloor \log_2 m \rfloor}$,
    где $\phi(n)$~--- любая сколь угодно медленно растущая функция такая, что $\phi(n) < n \mathop / \log_2 n$,
    верно следующее неравенство для $D(f_h,0)$:
    \begin{gather*}
        D(f_h,0) \leqslant \frac{n2^{n+5}}{m} 
            \left( 1 + \frac{m 2^m}{4n} + (4 + o(1))\frac{m \log_2 m}{n} \right)  \; , \\
        D(f_h,0) \leqslant \frac{n2^{n+5}}{\log_2 n - \log_2 \log_2 n - \log_2 \phi(n)}
            \left( 1 + \varepsilon(n) \right) \; ,
    \end{gather*}
    где функция $\varepsilon(n)$ равна
    $$
        \varepsilon(n) = \frac{1}{4\phi(n)} + (4 + o(1)) \frac{\log_2 n \cdot \log_2 \log_2 n}{n} \; .
    $$

    Поскольку описанный алгоритм синтеза~\algref{alg_asymp_no_mem} позволяет получить обратимую схему $\frS$
    для произвольной подстановки $h \in A(\ZZ_2^n)$, то функция $D(n,0)$ ограничена сверху также,
    как и функция $D(f_h,0)$ в формуле выше.
\end{proof}

Стоит отметить, что полученные верхние и нижние оценки глубины обратимой схемы без дополнительных входов
достаточно неточны: они не являются эквивалентными с точностью до порядка, в отличие от
оценок для сложности такой схемы (см. Теорему~\ref{theorem_complexity_no_memory_common}).

Оценим теперь квантовый вес обратимой схемы без дополнительной памяти.
\begin{theorem}[о квантовом весе обратимой схемы без дополнительных входов]\label{theorem_quantum_weight_upper_no_memory}
    $$
        W(n, 0) \leqslant \frac{n2^{n+4} \left( \WC(1 + \varepsilon_C(n)) + 2\WT(1 + \varepsilon_T(n)) \right)}
            {\log_2 n - \log_2 \log_2 n - \log_2 \phi(n)} \;  ,
    $$
    где $\phi(n)$~--- любая сколь угодно медленно растущая функция такая, что $\phi(n) < n \mathop / \log_2 n$,
    \begin{align*}
        \varepsilon_C(n) &= \frac{1}{2\phi(n)} - \left( \frac{1}{2} - o(1) \right) \cdot \frac{\log_2 \log_2 n }{n}  \; , \\
        \varepsilon_T(n) &= (4 - o(1))\frac{\log_2 n \cdot \log_2 \log_2 n}{n} \; .
    \end{align*}
\end{theorem}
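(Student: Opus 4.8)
The plan is to track separately the number of NOT/CNOT elements and the number of 2-CNOT elements produced by the synthesis algorithm \algref{alg_asymp_no_mem} described in the proof of Theorem~\ref{theorem_complexity_upper_no_memory}, and then combine them via the identity~\eqref{formula_quantum_weigth_as_sum}, namely $W(n,0) = \WC \cdot \LC(n,0) + \WT \cdot \LT(n,0)$. First I would go back through the five stages $L_1,\ldots,L_5$ of the construction of the subcircuit $\frS_K$ realizing a group $g^{(K)}$ of $K$ independent transpositions, and classify each contribution: the column-zeroing step $L_1 = 2(n-d)$ uses only CNOT elements; the steps $L_2 = 2d$ and $L_4 = 2(n-\log_2 k)$ use only NOT elements; in the canonicalization step $L_3$, the part that rewrites the first $\log_2 k$ columns ($\le 2d$ per row) uses only CNOT, while the part that clears the distinguished column (each $C_{I;j}$ with $|I|\le\log_2 k$, replaced by $\le 16\log_2 k$ or $\le 32\log_2 k$ Toffoli elements per row via \predicate{fast_realization_of_k_cnot}) uses only 2-CNOT; and the final element $C_{n,\ldots,\log_2 k+1;1}$ is replaced by $L_5 \le 8(n-\log_2 k)$ elements 2-CNOT.

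Collecting these gives $\LC(g^{(K)},0) \le 2(n-d) + 2d + 2(n-\log_2 k) + 2dk = 4n + 2dk - 2\log_2 k$ and $\LT(g^{(K)},0) \le 32k\log_2 k + 8(n-\log_2 k)$, with $d \le 2^k$. Then, exactly as in~\eqref{formula_upper_bound_of_L_h_common}, for an arbitrary even permutation $h\in A(\ZZ_2^n)$ one has $\LC(f_h,0) \le \frac{2^n}{K}\LC(f_{g^{(K)}},0) + 2K\LC(f_{g^{(2)}},0)$ and likewise for $\LT$. Substituting $k = 2^{\lfloor \log_2 m\rfloor}$ with $m = \log_2 n - \log_2\log_2 n - \log_2\phi(n)$ (so $m/2 \le k \le m$ and $k2^k \le m\cdot n/(\log_2 n\cdot\phi(n))$), the dominant term in $\LC$ is $\frac{2^n}{k}\cdot 2dk = 2^{n+1}d$, which after the choice of $k$ contributes $\frac{n2^{n+4}}{m}$ up to lower-order corrections collected into $\varepsilon_C(n) = \frac{1}{2\phi(n)} - (\frac12 - o(1))\frac{\log_2\log_2 n}{n}$; the dominant term in $\LT$ is $\frac{2^n}{k}\cdot 8n = 2^{n+3}n/k$, giving $\frac{2n2^{n+4}}{m}$ up to corrections $\varepsilon_T(n) = (4-o(1))\frac{\log_2 n\cdot\log_2\log_2 n}{n}$. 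Multiplying by $\WC$ and $\WT$ respectively and adding yields the stated bound on $W(n,0)$.

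The main obstacle I expect is bookkeeping the lower-order terms carefully enough that they genuinely collapse into the claimed $\varepsilon_C$ and $\varepsilon_T$ — in particular verifying that the $k2^k$ term (which in the full complexity estimate of Theorem~\ref{theorem_complexity_upper_no_memory} sat inside $\varepsilon(n)$ with coefficient $1/(6\phi(n))$) reappears here with coefficient $1/(2\phi(n))$ in $\varepsilon_C$ after being divided only by the $\WC$-weighted part, and that the $32k\log_2 k$ Toffoli term and the $8n$ Toffoli term combine to give the $\varepsilon_T$ correction of the stated shape. A secondary check is that the $2K\LC(f_{g^{(2)}},0)$ and $2K\LT(f_{g^{(2)}},0)$ residual-permutation terms are $O(n\log n) = o(n2^n/\log n)$ and hence absorbed. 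Since the circuit $\frS$ is built for an arbitrary $h\in A(\ZZ_2^n)$ and $F(n,0)$ consists exactly of maps given by such permutations (for $n>3$), the bound on $W(f_h,0)$ is a bound on $W(n,0)$, completing the proof.
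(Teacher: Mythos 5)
Your proposal follows essentially the same route as the paper's own proof: the paper likewise splits each of the five stages $L_1,\ldots,L_5$ of algorithm \algref{alg_asymp_no_mem} into its NOT/CNOT and 2-CNOT contributions, obtains exactly your bounds $\LC(g^{(K)},0)\leqslant 4n+k2^{k+1}-2\log_2 k$ and $\LT(g^{(K)},0)\leqslant 8n+32k\log_2 k-8\log_2 k$, propagates them through~\eqref{formula_upper_bound_of_L_h_common}, and combines via~\eqref{formula_quantum_weigth_as_sum}. One small correction to your bookkeeping: the dominant term in $\LC$ is $\frac{2^{n+1}}{k}\cdot 4n\leqslant\frac{n2^{n+4}}{m}$, not the $2dk$ term — the latter contributes $\frac{2^{n+1}}{k}\cdot k2^{k+1}=2^{n+k+2}$, which relative to the main term is exactly the $\frac{1}{2\phi(n)}$ correction in $\varepsilon_C(n)$ (three times the $\frac{1}{6\phi(n)}$ of $\varepsilon_L$, because the main coefficient drops from $12n$ to $4n$), so the final bound comes out as stated.
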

\begin{proof}
    Согласно формуле~\eqref{formula_quantum_weigth_as_sum}, нам необходимо оценить величины $\LC(n,0)$ и $\LT(n,0)$.
    Из описания алгоритма синтеза~\algref{alg_asymp_no_mem} следует, что
    \begin{align*}
        \LC_1 &= 2(n-d) \; ,            & \LT_1 &= 0 \; ,\\
        \LC_2 &\leqslant 2d \; ,        & \LT_2 &= 0 \; ,\\
        \LC_3 &\leqslant 2kd \; ,       & \LT_3 &\leqslant 32k \log_2 k \; , \\
        \LC_4 &= 2(n - \log_2 k) \; ,   & \LT_4 &= 0 \; ,\\
        \LC_5 &= 0 \; ,                 & \LT_5 &\leqslant 8(n - \log_2 k) \; .
    \end{align*}
    
    Суммируя эти величины по отдельности, мы получаем следующие оценки сверху:
    \begin{align*}
        \LC(g^{(K)},0) &\leqslant \sum_{i=1}^5 {\LC_i} \leqslant 2(n-d) + 2d + 2kd + 2(n - \log_2 k) \; , \\
        \LT(g^{(K)},0) &\leqslant \sum_{i=1}^5 {\LT_i} \leqslant 32k \log_2 k + 8(n - \log_2 k) \; , \\
        \LC(g^{(K)},0) &\leqslant 4n + k 2^{k+1} - 2 \log_2 k \; , \\
        \LT(g^{(K)},0) &\leqslant 8n + 32k \log_2 k - 8 \log_2 k \; .
    \end{align*}
    Отсюда также следует, что $\LC(g^{(2)},0) \leqslant 4n + 124$ и $\LT(g^{(2)},0) \leqslant 8n + 240$.

    Подставляя данные значения в формулу~\eqref{formula_upper_bound_of_L_h_common},
    мы получаем следующие верхние оценки для $\LC(f_h,0)$ и $\LT(f_h,0)$:
    \begin{align*}
        \LC(f_h,0) &\leqslant \frac{2^{n+1}}{k} (4n + k 2^{k+1} - 2 \log_2 k) + k(4n + 124) \; , \\
        \LT(f_h,0) &\leqslant \frac{2^{n+1}}{k} (8n + 32k \log_2 k - 8 \log_2 k) + k(8n + 240) \; .
    \end{align*}

    При $m = \log_2 n - \log_2 \log_2 n - \log_2 \phi(n)$ и $k = 2^{\lfloor \log_2 m \rfloor}$,
    где $\phi(n)$~--- любая сколь угодно медленно растущая функция такая, что $\phi(n) < n \mathop / \log_2 n$,
    верны следующие неравенства для $\LC(f_h,0)$ и $\LT(f_h,0)$:
    \begin{align*}
        \LC(f_h,0) &\leqslant \frac{n2^{n+4}}{\log_2 n - \log_2 \log_2 n - \log_2 \phi(n)}
            \left( 1 + \varepsilon_C(n) \right)  \; , \\
        \LT(f_h,0) &\leqslant \frac{n2^{n+5}}{\log_2 n - \log_2 \log_2 n - \log_2 \phi(n)}
            \left( 1 + \varepsilon_T(n) \right)  \; ,
    \end{align*}
    где функции $\varepsilon_C(n)$ и $\varepsilon_T(n)$ равны
    \begin{align*}
        \varepsilon_C(n) &= \frac{1}{2\phi(n)} - \left( \frac{1}{2} - o(1) \right) \cdot \frac{\log_2 \log_2 n }{n} \; , \\
        \varepsilon_T(n) &= (4 - o(1))\frac{\log_2 n \cdot \log_2 \log_2 n}{n} \; .
    \end{align*}

    Поскольку описанный алгоритм синтеза~\algref{alg_asymp_no_mem} позволяет получить обратимую схему $\frS$
    для произвольной подстановки $h \in A(\ZZ_2^n)$, то функция $W(n,0)$ ограничена сверху также,
    как и функция $W(f_h,0)$. Из верхних оценок $\LC(f_h,0)$ и $\LT(f_h,0)$,
    а также из формулы~\eqref{formula_quantum_weigth_as_sum} следует верхняя оценка для
    $W(n,0)$ из условия теоремы.
\end{proof}

\subsection{Верхние оценки для схем с дополнительной памятью}\label{subsection_upper_bound_with_memory}

\forceindent
Элемент $k$-CNOT при $k < (n-1)$, где $n$~--- количество значимых входов схемы,
можно заменить композицией не более $8k$ элементов 2-CNOT~\cite{barenco_elementary_gates},
если не использовать дополнительные входы. Однако если использовать $(k-2)$ дополнительных входов, то элемент $k$-CNOT
при любом значении $k < n$ можно заменить композицией $(2k-3)$ элементов 2-CNOT с уборкой вычислительного мусора
(см. рис.~\ref{pic_reducing_complexity}).
При этом после такой замены на всех незначимых выходах будет значение 0, поэтому их можно будет использовать в дальнейшем.
Если же элемент $k$-CNOT заменить композицией $(k-1)$ элементов 2-CNOT с использованием $(k-2)$
дополнительных входов, то на незначимых выходах после замены могут быть значения, отличные от 0. Как следствие,
эти незначимые выходы нельзя будет использовать в дальнейшем.

\Figure[ht]
    \centering
    \includegraphics[scale=1.2]{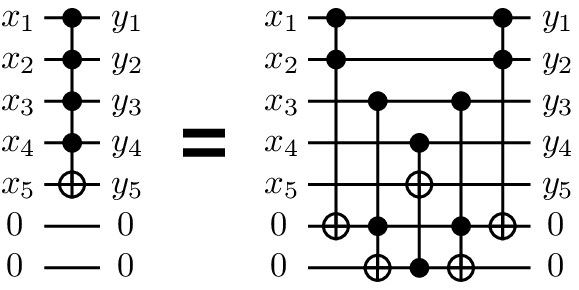}
    \caption
    {
        \small Замена одного элемента 4-CNOT композицией элементов 2-CNOT при помощи двух дополнительных входов.
    }\label{pic_reducing_complexity}
\end{figure}

Таким образом, если в алгоритме синтеза, описанном в предыдущем разделе, использовать ровно $(n-3)$ дополнительных входов,
то в формуле~\eqref{formula_upper_bound_in_synthesis_algorithm} слагаемое $12n = 4n + 8n$ можно заменить на $6n = 4n + 2n$.
В этом случае из формулы~\eqref{formula_L_h_upper_bound_general_case_with_k} следует, что
$L(n,n-3) \leqslant 3n2^{n+3}(1+o(1)) \mathop / \log_2n$.
Если же в описанном алгоритме синтеза использовать $q_0 \geqslant (n-3) 2^{n+2} / (\log_2 n - \log_2 \log_2 n - \log_2 \phi(n))$
дополнительных входов, где $\phi(n)$~--- любая сколь угодно медленно растущая функция такая,
что $\phi(n) < n \mathop / \log_2 n$,
то в формуле~\eqref{formula_upper_bound_in_synthesis_algorithm} слагаемое $12n = 4n + 8n$ можно заменить на $5n = 4n + n$.
В этом случае из формулы~\eqref{formula_L_h_upper_bound_general_case_with_k} следует, что
$L(n,q_0) \leqslant 5n 2^{n+2} / \log_2 n$. Однако можно получить существенно меньшую верхнюю оценку для $L(n,q)$ при
использовании гораздо меньшего количества дополнительных входов, что и будет показано далее.

О.\,Б. Лупановым был предложен~\cite{lupanov_delay} асимптотически оптимальный алгоритм синтеза для произвольной
булевой функции в базисе \gate{} $\{\,\neg, \wedge, \vee\,\}$. Им было доказано, что любая булева функция от $n$ переменных
может быть реализована в схеме из \gate{} данного базиса со сложностью, эквивалентной $2^n \mathop / n$,
и с задержкой, эквивалентной с точностью до порядка $n$.
Воспользуемся данным результатом и применим аналогичный подход для синтеза обратимых схем, состоящих из \gate{}
множества $\Omega_n^2$ и реализующих булево отображение $f \in F(n,q)$ с использованием $q$ дополнительных входов.

Базис $\{\,\neg, \oplus, \wedge\,\}$ является функционально полным, следовательно, в нём можно реализовать любое
булево отображение $f \in F(n,q)$. Выразим каждый элемент этого базиса через композицию элементов
NOT, CNOT и 2-CNOT. Из рис.~\ref{pic_basis} видно,
что это может быть сделано при помощи не более двух \gate{} и с глубиной не выше 2 при использовании
одного дополнительного входа.
\Figure[ht]
    \centering
    \includegraphics[scale=1.2]{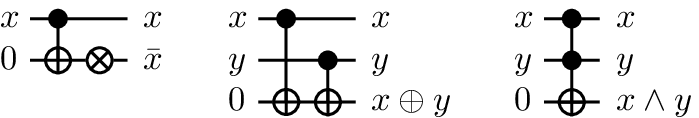}
    \caption
    {
        \small Выражение \gate{} базиса $\{\,\neg, \oplus, \wedge\,\}$ через\\
        композицию элементов NOT, CNOT и 2-CNOT.
    }\label{pic_basis}
\end{figure}

На рис.~\ref{pic_scheme_realization} на с.~\pageref{pic_scheme_realization} схематично изображена обратимая схема,
реализующая некоторое булево отображение. 
На данном рисунке все незначимые выходы, которые могут содержать или не содержать вычислительный мусор, были помечены символом~*.
В большинстве случаев, значения на этих выходах не обнуляются. Очистить вычислительный мусор, очевидно,
можно только в том случае, когда булево отображение $f$, реализуемое обратимой схемой $\frS$,
биективно.
В этом случае, продублировав часть оригинальной схемы (обозначим её через $\frS_*$), можно обнулить значения на всех
незначимых выходах, за исключением выходов, соответствующих входам отображения $f$.
Затем можно присоединить обратимую схему $\frS^{-1}$, реализующую обратное отображение $f^{-1}$,
которая позволит обнулить значения на незначимых выходах, соответствующих входам отображения $f$,
но которая также, возможно, породит свой собственный вычислительный мусор на незначимых выходах.
Этот вычислительный мусор может быть убран дублированием части схемы $\frS^{-1}$ (обозначим её через $\frS^{-1}_*$).
Таким образом, итоговая обратимая схема $\frS_{\text{res}}$ без вычислительного мусора на незначимых выходах,
имеет сложность $L(\frS_{\text{res}}) \leqslant 4 \cdot \max(L(\frS), L(\frS^{-1}))$
и глубину $D(\frS_{\text{res}}) \leqslant 4 \cdot \max(D(\frS), D(\frS^{-1}))$ (см. рис.~\ref{pic_clearing_out_garbage}).

\Figure[ht]
    \centering
    \includegraphics[scale=1.2]{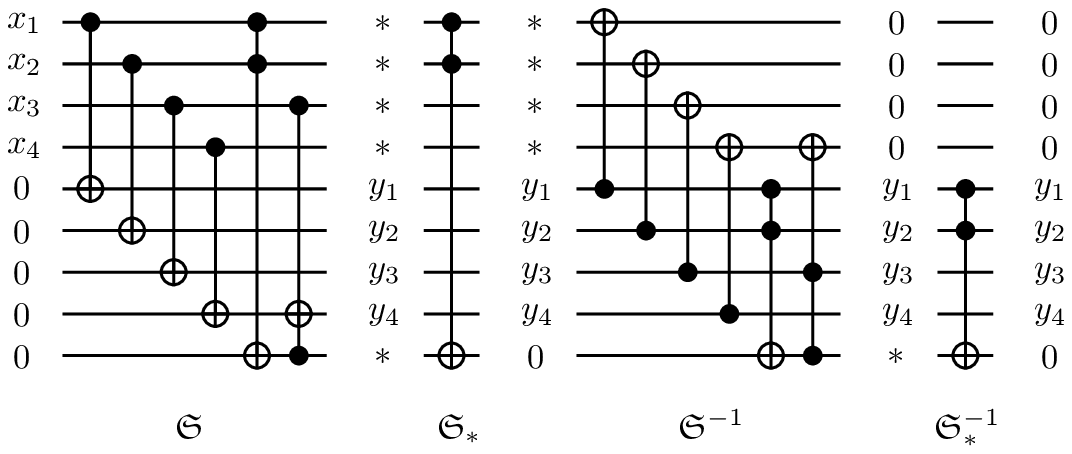}
    \caption
    {
        \small Пример обратимой схемы $\frS_{\text{res}} = \frS * (\frS_*) * \frS^{-1} * (\frS^{-1}_*)$
        с уборкой вычислительного мусора на незначимых выходах.
    }\label{pic_clearing_out_garbage}
\end{figure}

Исходя из этих соображений, все асимптотические оценки для сложности и глубины будут даваться далее по тексту
для обратимых схем с вычислительным мусором на незначимых выходах. Умножив эти оценки на 4, можно получить аналогичные оценки уже для
обратимых схем без вычислительного мусора на незначимых выходах.

\myparagraph{Снижение сложности схемы}

\forceindent
В дальнейшем нам потребуется следующая лемма о сложности обратимой схемы, реализующей все конъюнкции $n$ переменных
вида $x_1^{a_1} \wedge \ldots \wedge x_n^{a_n}$, $a_i \in \ZZ_2$.
\begin{lemma}\label{lemma_complexity_of_all_conjunctions_of_n_variables}
    Все конъюнкции $n$ переменных вида $x_1^{a_1} \wedge \ldots \wedge x_n^{a_n}$, $a_i \in \ZZ_2$,
    можно реализовать обратимой схемой $\frS_n$, состоящей из \gate{} множества $\Omega_{n+q}^2$
    и имеющей сложность $L(\frS_n) \sim 2^n$ при использовании $Q(\frS_n) \sim 2^n$ дополнительных входов.
\end{lemma}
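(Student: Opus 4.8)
План состоит в том, чтобы построить схему $\frS_n$ рекурсивно, разбивая множество переменных на две части. Сформулируем индуктивное утверждение: существует обратимая схема $\frS_n$ из \gate{} множества $\Omega_{n+q}^2$, которая использует первые $n$ линий в качестве значимых входов, оставляет значения на этих $n$ линиях неизменными и на $q = Q(\frS_n)$ дополнительных (первоначально нулевых) линиях выдаёт все $2^n$ конъюнкций $x_1^{a_1} \wedge \ldots \wedge x_n^{a_n}$, $a \in \ZZ_2^n$, по одной на линию. При $n = 1$ это очевидно: линия $x_1$ несёт конъюнкцию $x_1$, а на одной дополнительной линии значение $\bar x_1$ получается композицией элемента CNOT и элемента NOT, так что $L(\frS_1) = 2$, $Q(\frS_1) = 1$.

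Пусть $n \geqslant 2$. Положим $A = \{\,1, \ldots, \lceil n/2 \rceil\,\}$, $B = \{\,\lceil n/2 \rceil + 1, \ldots, n\,\}$. Сначала применим копию схемы $\frS_{|A|}$ к линиям с номерами из $A$ (вместе с блоком из $Q(\frS_{|A|})$ нулевых дополнительных линий), получив на отдельных линиях все $2^{|A|}$ конъюнкций переменных $\{\,x_i \mid i \in A\,\}$; затем применим копию схемы $\frS_{|B|}$ к линиям с номерами из $B$ (вместе с новым блоком дополнительных линий), получив все $2^{|B|}$ конъюнкций переменных $\{\,x_j \mid j \in B\,\}$. Поскольку блоки линий, соответствующих $A$ и $B$, не пересекаются, эти две подсхемы не влияют друг на друга, и по индуктивному предположению значения $x_1, \ldots, x_n$ сохраняются. Теперь для каждой пары $(c_1, c_2)$, где $c_1$~--- конъюнкция над $A$, а $c_2$~--- конъюнкция над $B$, добавим один элемент 2-CNOT, контролирующими входами которого служат линии, несущие $c_1$ и $c_2$, а контролируемым выходом~--- очередная нулевая дополнительная линия; он записывает туда значение $c_1 \wedge c_2$. Когда пара $(c_1, c_2)$ пробегает все $2^{|A|} \cdot 2^{|B|} = 2^n$ возможных значений, произведение $c_1 \wedge c_2$ пробегает все $2^n$ конъюнкций переменных $x_1, \ldots, x_n$, причём каждую ровно один раз, что завершает построение схемы $\frS_n$.

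Подсчёт даёт $L(\frS_n) \leqslant L(\frS_{\lceil n/2 \rceil}) + L(\frS_{\lfloor n/2 \rfloor}) + 2^n$ и точно такое же рекуррентное неравенство для $Q(\frS_n)$. Простая индукция показывает, что $L(\frS_{\lceil n/2 \rceil}) + L(\frS_{\lfloor n/2 \rfloor}) = o(2^n)$, откуда $L(\frS_n) \leqslant 2^n(1 + o(1))$, и аналогично $Q(\frS_n) \leqslant 2^n(1 + o(1))$. С другой стороны, каждая из дополнительных линий, несущих неконстантную конъюнкцию (а таких не менее $2^n - n - 1$), начинает работу со значения $0$ и потому обязана быть контролируемым выходом хотя бы одного \gate, так что $L(\frS_n) \geqslant 2^n - n - 1$; а так как реализуемое отображение имеет $2^n$ координат, требуется не менее $2^n - n$ дополнительных линий. Следовательно, $L(\frS_n) \sim 2^n$ и $Q(\frS_n) \sim 2^n$.

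Основная тонкость~--- именно в том, чтобы рекурсия была <<делением пополам>>, а не <<отделением одной переменной>>: если строить полный дешифратор уровень за уровнем (по одной переменной за шаг), то каждый шаг удвоения стоит $\Theta(2^k)$ \gate{} и суммарная сложность оказывается $\sim 2^{n+1}$, а не $\sim 2^n$. Разбиение переменных на две (почти) равные части ограничивает дорогой <<склеивающий>> слой единственным слоем из $2^n$ элементов, тогда как рекурсивная часть вносит лишь $O(n \cdot 2^{n/2}) = o(2^n)$; именно это фиксирует константу, равную $1$. Нужно также следить за тем, чтобы подсхемы $\frS_{|A|}$ и $\frS_{|B|}$ размещались на непересекающихся блоках дополнительных линий и каждая сохраняла свои входы, чтобы линии, несущие конъюнкции, оставались нетронутыми до склеивающего слоя,~--- оба этих свойства включены в сформулированное выше индуктивное утверждение.
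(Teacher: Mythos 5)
Ваше доказательство верно и по существу совпадает с доказательством в работе: та же рекурсия с разбиением множества переменных на две почти равные половины и «склеивающим» слоем из $2^n$ элементов 2-CNOT, дающим рекуррентность $L(\frS_n) \leqslant L(\frS_{\lceil n/2 \rceil}) + L(\frS_{\lfloor n/2 \rfloor}) + 2^n$. Отличия лишь косметические: вы реализуете отрицания в базе индукции, а не отдельным слоем из $2n$ элементов, и явно добавляете нижнюю оценку для обоснования знака $\sim$, которая в тексте работы лишь подразумевается.
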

\begin{proof}
    Сперва мы реализуем все инверсии $\bar x_i$, $1 \leqslant i \leqslant n$.
    Это может быть сделано при помощи $L_1 = 2n$ элементов NOT и CNOT при использовании $q_1 = n$ дополнительных входов.

    Искомую обратимую схему $\frS_n$ мы строим следующим образом (см. рис.~\ref{pic_lemma_complexity_of_all_conjunctions}):
    с помощью обратимых схем $\frS_{\lceil n \mathop / 2 \rceil}$ и $\frS_{\lfloor n \mathop / 2 \rfloor}$
    мы реализуем все конъюнкции $\lceil n \mathop / 2 \rceil$ первых и $\lfloor n \mathop / 2 \rfloor$ последних
    переменных. Затем мы реализуем конъюнкции значимых выходов этих двух схем каждого с каждым.
    Для этого потребуется $L_2 = 2^n$ элементов 2-CNOT и $q_2 = 2^n$ дополнительных входов.

    Отсюда следует, что
    $$
        L(\frS_n) = Q(\frS_n) = 2^n(1+o(1)) + L(\frS_{\lceil n \mathop / 2 \rceil}) + L(\frS_{\lfloor n \mathop / 2 \rfloor})
            = 2^n(1 + o(1)) \; .
    $$
\end{proof}

\Figure[ht]
    \centering
    \includegraphics[scale=1.2]{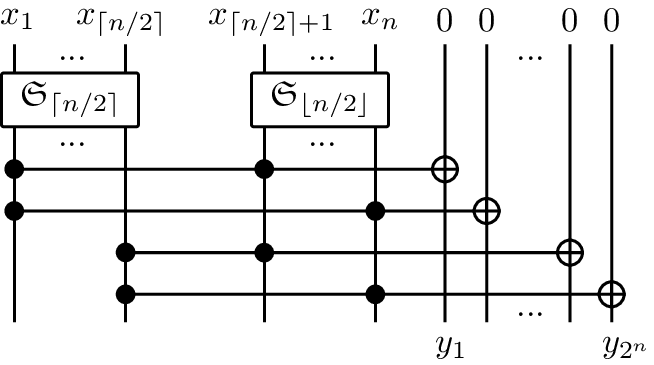}
    \caption
    {
        \small Структура обратимой схемы, реализующей все конъюнкции от $n$ переменных\\
        с минимальной сложностью (входы схемы сверху).
    }\label{pic_lemma_complexity_of_all_conjunctions}
\end{figure}

Перейдём теперь непосредственно к теореме данного параграфа.
\begin{theorem}[о сложности обратимой схемы с дополнительными входами]\label{theorem_complexity_upper_with_memory}
    $$
        L(n,q_0) \lesssim 2^n \text{ \,при\, } q_0 \sim n 2^{n-\lceil n \mathop / \phi(n)\rceil} \; ,
    $$
    где $\phi(n)$ и $\psi(n)$~--- любые сколь угодно медленно растущие функции такие,
    что $\phi(n) \leqslant n \mathop / (\log_2 n + \log_2 \psi(n))$.
\end{theorem}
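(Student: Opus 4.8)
Proof proposal.

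The plan is to render O.\,B.~Lupanov's asymptotically optimal method for Boolean functions~\cite{lupanov_delay} as a reversible circuit and to track the ancillary inputs it consumes. Since $q_0 \gtrsim n$ we have $F(n,q_0)=P_2(n,n)$, so it is enough to realize an arbitrary $f=(f_1,\ldots,f_n)\colon\ZZ_2^n\to\ZZ_2^n$ by computing its $n$ coordinate functions $f_j$ on $n$ dedicated output lines. The basis $\{\,\neg,\oplus,\wedge\,\}$ is functionally complete, and by Fig.~\ref{pic_basis} each of its gates is simulated by at most two elements of $\Omega_{n+q}^2$ at the cost of one extra line; hence any classical $\{\,\neg,\oplus,\wedge\,\}$-circuit turns into a reversible one with the gate count multiplied by a constant and with one ancilla per classical gate whose value is not immediately discarded. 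I would realize each $f_j$ by its own Lupanov sub-circuit; laying these $n$ sub-circuits out side by side (which is also what one wants for the depth bound stated afterwards) makes the total gate count and the total ancilla count $n$ times the per-coordinate counts, up to blocks that can be shared.

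The concrete construction is this. Fix the split $x=(x',x'')$, $|x'|=k=n-b$, $b=\lceil n/\phi(n)\rceil$. By Lemma~\ref{lemma_complexity_of_all_conjunctions_of_n_variables} applied to $x'$ and, separately, to $x''$, build all $2^{k}$ conjunctions $x_1^{a_1}\wedge\cdots\wedge x_k^{a_k}$ and all $2^{b}$ conjunctions of the trailing variables, with $\sim 2^{k}+2^{b}$ gates and the same number of ancillas; these two conjunction banks are shared by all $n$ coordinate sub-circuits. Within a coordinate, run Lupanov's $(k,s)$-decomposition: partition the $2^{k}$ values of $x'$ into strips of size $\le s$ and write $f_j=\bigvee_{i,w}g_{i,w}(x')\wedge h^{(j)}_{i,w}(x'')$, where $g_{i,w}$ is the disjunction of the conjunctions $[\,x'=\alpha\,]$ over the rows $\alpha$ of strip $i$ in which the column pattern $w$ carries a $1$, and $h^{(j)}_{i,w}$ is the indicator of the $x''$ whose column pattern in strip $i$ for $f_j$ equals $w$. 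For a fixed strip and coordinate the occurring patterns partition $\ZZ_2^{b}$, so all the lines $h^{(j)}_{i,w}$ of that strip are set together by one controlled inverter per value of $x''$ routed through the precomputed bank of $x''$; the $g_{i,w}$ are set likewise from the bank of $x'$; the assembly $\bigvee_{i,w}$ contributes $O(1)$ elements per occurring term (an AND into scratch, an XOR onto the output, an uncomputation). Lupanov's standard two-level balancing of the parameters (including a secondary split of $x'$ so that the cascade producing the $g_{i,w}$ is also kept small) keeps every cascade of a single coordinate sub-circuit within $O(2^{n}/n)$ elements, whence $L(\frS)\lesssim 2^{n}$ once the constants in the parameter choice are tightened; re-checking this balancing in the reversible rendering, where disjunctions cost extra gates and extra lines, is one routine-but-delicate part of the argument.

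The hard part will be the ancilla accounting, i.e.\ showing that the peak number of simultaneously live auxiliary lines is $\sim n2^{n-b}$ and not something larger. The plan here is to process the strips of each coordinate sub-circuit serially and to uncompute, as soon as it has been XOR-ed onto the output line, every block of $h^{(j)}_{i,w}$-lines and $g_{i,w}$-lines belonging to the current strip, so that only one strip's worth of such lines — $O(\mathrm{poly}(n)\cdot 2^{b})$ of them per coordinate — is live at any moment. The permanently live memory of a coordinate sub-circuit is then its share of the conjunction banks, $\sim 2^{n-b}$, plus lower-order scratch, so each of the $n$ sub-circuits costs $\sim 2^{n-b}$ ancillas and the total is $\sim n2^{n-b}=q_0$. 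The hypothesis $\phi(n)\le n/(\log_2 n+\log_2\psi(n))$, equivalently $2^{b}\ge 2^{\,n/\phi(n)}\ge n\psi(n)$, is exactly what makes this go through: it forces $b=o(n)$, so $b2^{b}=o(2^{n-b})$ and the conjunction banks dominate the memory of a sub-circuit, and at the same time it keeps the declared budget within target, $q_0\sim n2^{n-b}\le 2^{n}/\psi(n)=o(2^{n})$, so that none of the $o(2^{n})$ lower-order terms spoils the estimate $L(\frS)\lesssim 2^{n}$. Comparing with Theorem~\ref{theorem_complexity_lower_bound}, which for this $q_0$ gives $L(n,q_0)\gtrsim 2^{n}/3$, then immediately upgrades the result to the order equality $L(n,q_0)\asymp 2^{n}$ recorded later.
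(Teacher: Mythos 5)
Your top-level plan --- a reversible rendering of Lupanov's method in which the extra memory is spent on stored intermediate results --- is the same as the paper's, but your decomposition is transposed relative to the one that actually balances, and this is a genuine gap rather than a bookkeeping detail. You set $|x'|=k=n-b$ with $b=\lceil n/\phi(n)\rceil$ and then run the strips over the $2^{n-b}$ values of the \emph{large} block $x'$. Count the two competing cascades. The functions $h^{(j)}_{i,w}$ form, for each strip and each coordinate, a partition of the $2^{b}$ columns, so realizing them costs $n\cdot\lceil 2^{n-b}/s\rceil\cdot 2^{b}\approx n2^{n}/s$ gates, which forces $s\gtrsim n$. But then the pattern functions $g_{i,w}$ are unaffordable: precomputing all $2^{s}$ of them per strip costs $(2^{n-b}/s)\cdot 2^{s}\gg 2^{n}$, while building only the occurring ones still requires, for a generic $f$, up to $n2^{b}$ distinct functions per strip at up to $s$ gates each, i.e.\ of order $n2^{n}$ overall (and at least of order $2^{n}$ even at one gate per distinct $g$). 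No value of $s$ satisfies both constraints: Lupanov's balancing needs $(\text{row-block size})+s\leqslant n$, which forces the strips to partition the minterms of the \emph{small} block. That is what the paper's algorithm does: it takes the $k=\lceil n/\phi(n)\rceil=o(n)$ \emph{first} variables, splits their $2^{k}$ minterms into $p=\lceil 2^{k}/s\rceil$ groups of $s=n-2k$ minterms, precomputes all $2^{s}$ XOR-combinations within each group for $p2^{s+1}=2^{n-k+1}/s=o(2^{n})$ gates, obtains each of the $n2^{n-k}$ residual functions $f_i(x_1,\ldots,x_k,a_{k+1},\ldots,a_n)$ as an XOR of $p$ stored group representatives (the dominant term being $pn2^{n-k}=n2^{n}/s\sim 2^{n}$), and spends the memory $q_0\sim n2^{n-k}$ precisely on storing those residual functions.

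A second, smaller defect concerns the constant: in this paper $\lesssim$ means $\leqslant(1+o(1))$, and the lower bound for this $q_0$ is about $2^{n}/3$, so constants matter. Your plan of uncomputing each strip's $h$- and $g$-lines right after they are XOR-ed onto the output doubles the dominant cascade, and your assembly spends three gates (an AND into scratch, an XOR onto the output, an uncomputation) per occurring term; either of these already gives only $c\cdot 2^{n}$ with $c\geqslant 2$. The paper never uncomputes the dominant subcircuit --- its outputs are exactly what the $q_0$ ancillas hold --- and uses disjointness of the terms to replace every disjunction by $\oplus$, so that each term costs a single CNOT or 2-CNOT.
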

\begin{proof}
    Опишем алгоритм синтеза~\nextalg{alg_asymp_with_mem_complexity_min}, который основан на методе Лупанова,
    и основная цель которого заключается в снижении сложности обратимой схемы при помощи использования
    дополнительных входов в схеме.
    
    Произвольное булево отображение $f\colon \ZZ_2^n \to \ZZ_2^n$ можно представить в виде некоторых $n$ булевых функций
    $f_i\colon \ZZ_2^n \to \ZZ_2$ от $n$ переменных
    \begin{equation}\label{formula_function_decomposition_by_n_functions}
        f(\vv x) = \langle f_1(\vv x), f_2(\vv x), \ldots, f_n(\vv x) \rangle \; .
    \end{equation}
    Каждую функцию $f_i(\vv x)$ можно разложить по последним $(n-k)$ переменным:
    \begin{equation}\label{formula_function_decomposition_by_last_variables}
        f_i(\vv x) = \bigoplus_{a_{k+1}, \ldots, a_n \in \ZZ_2} {x_{k+1}^{a_{k+1}} \wedge \ldots \wedge x_n^{a_n}}
            \wedge f_i(\langle x_1, \ldots, x_k, a_{k+1}, \ldots, a_n \rangle) \; .
    \end{equation}

    Каждая из $n2^{n-k}$ булевых функций $f_i(\langle x_1, \ldots, x_k, a_{k+1}, \ldots, a_n \rangle)$, $1 \leqslant i \leqslant n$,
    является функцией от $k$ переменных $x_1, \ldots, x_k$, её можно получить при помощи аналога
    СДНФ, в котором дизъюнкции заменяются на сложение по модулю два:
    \begin{equation}\label{formula_analog_sdnf}
        f_i(\langle x_1, \ldots, x_k, a_{k+1}, \ldots, a_n \rangle) = f_{i,j} = \bigoplus_{
            \substack{\boldsymbol \sigma \in \ZZ_2^k \\f_{i,j}(\boldsymbol \sigma) = 1}}
            x_1^{\sigma_1} \wedge \ldots \wedge x_k^{\sigma_k} \; .        
    \end{equation}
    
    Все $2^k$ конъюнкций вида $x_1^{\sigma_1} \wedge \ldots \wedge x_k^{\sigma_k}$ можно разделить на группы,
    в каждой из которых будет не более $s$ конъюнкций. Обозначим через $p = \lceil 2^k \mathop / s \rceil$ количество таких групп.
    Используя конъюнкции одной группы, мы можем реализовать не более $2^s$ булевых функций по формуле~\eqref{formula_analog_sdnf}.
    Обозначим через $G_i$ множество булевых функций, которые могут быть реализованы при помощи конъюнкций $i$-й группы,
    $1 \leqslant i \leqslant p$. Тогда $|G_i| \leqslant 2^s$.
    Следовательно, мы можем переписать формулу~\eqref{formula_analog_sdnf} следующим образом:
    \begin{equation}\label{formula_analog_sdnf_improved}
        f_i(\langle x_1, \ldots, x_k, a_{k+1}, \ldots, a_n \rangle) = \bigoplus_{
            \substack{t=1 \ldots p\\ g_{j_t} \in G_t\\ 1 \leqslant j_t \leqslant |G_t|}} g_{j_t}(\langle x_1, \ldots, x_k\rangle) \; .
    \end{equation}
    
    Отсюда следует, что
    \begin{equation}
        f_i(\vv x) = \bigoplus_{a_{k+1}, \ldots, a_n \in \ZZ_2} x_{k+1}^{a_{k+1}} \wedge \ldots \wedge x_n^{a_n} \wedge
            \left( \bigoplus_{ \substack{t=1 \ldots p\\ g_{j_t} \in G_t\\ 1 \leqslant j_t \leqslant |G_t|}}
            g_{j_t}(\langle x_1, \ldots, x_k\rangle) \right) \; .
        \label{formula_f_i_with_braces_first}
    \end{equation}

    Отметим, что все булевы функции множества $G_i$ можно реализовать, используя такой же подход, что и в Лемме%
    ~\ref{lemma_complexity_of_all_conjunctions_of_n_variables}.
    Из рис.~\ref{pic_lemma_complexity_of_all_conjunctions} видно, что в этом случае каждый элемент 2-CNOT просто заменяется
    композицией двух элементов CNOT.
    Суммарно нам потребуется $L \sim 2^{s+1}$ элементов CNOT и $q \sim 2^s$ дополнительных входов.
    
    Алгоритм синтеза~\algref{alg_asymp_with_mem_complexity_min} конструирует обратимую схему $\frS$, реализующую
    булево отображение $f$~\eqref{formula_function_decomposition_by_n_functions}, при помощи следующих подсхем
    (см. рис.~\ref{pic_five_schemes}):
    \begin{enumerate}
        \item\label{item_first_subcircuit_min_complexity}
            Подсхема $\frS_1$, реализующая все конъюнкции $k$ первых переменных $x_i$, согласно Лемме%
            ~\ref{lemma_complexity_of_all_conjunctions_of_n_variables},
            со сложностью $L_1 \sim 2^k$ и $q_1 \sim 2^k$ дополнительными входами.
            Подсхема $\frS_1$ почти вся состоит из элементов 2-CNOT (количество остальных элементов пренебрежимо мало).

        \item
            Подсхема $\frS_2$, реализующая все булевы функции $g \in G_i$ для всех $i \in \ZZ_p$
            по формуле~\eqref{formula_analog_sdnf} со сложностью $L_2 \sim p2^{s+1}$ и
            $q_2 \sim p2^s$ дополнительными входами (см. замечание выше про реализацию всех булевых функций множества $G_i$).
            Подсхема $\frS_2$ состоит только из элементов CNOT.
        
        \item
            Подсхема $\frS_3$, реализующая все $n2^{n-k}$ координатных функций $f_{i,j}(\vv x)$,
            $i \in \ZZ_{2^{n-k}}$, $j \in \ZZ_n$, по формуле~\eqref{formula_analog_sdnf_improved}
            со сложностью $L_3 \leqslant pn 2^{n-k}$ и $q_3 = n 2^{n-k}$ дополнительными входами.
            Подсхема $\frS_3$ состоит только из элементов CNOT.
            
        \item
            Подсхема $\frS_4$, реализующая все конъюнкции $(n-k)$ последних переменных $x_i$, согласно Лемме%
            ~\ref{lemma_complexity_of_all_conjunctions_of_n_variables},
            со сложностью $L_4 \sim 2^{n-k}$ и $q_4 \sim 2^{n-k}$ дополнительными входами.
            Подсхема $\frS_4$ почти вся состоит из элементов 2-CNOT (количество остальных элементов пренебрежительно мало).

        \item
            Подсхема $\frS_5$, реализующая булево отображение $f$
            по формулам~\eqref{formula_function_decomposition_by_n_functions} и~\eqref{formula_function_decomposition_by_last_variables}
            со сложностью $L_5 \leqslant n 2^{n-k}$ и $q_5 = n$ дополнительными входами.
            Подсхема $\frS_5$ состоит только из элементов 2-CNOT.
    \end{enumerate}
    
    \Figure[ht]
        \centering
        \includegraphics[scale=1.2]{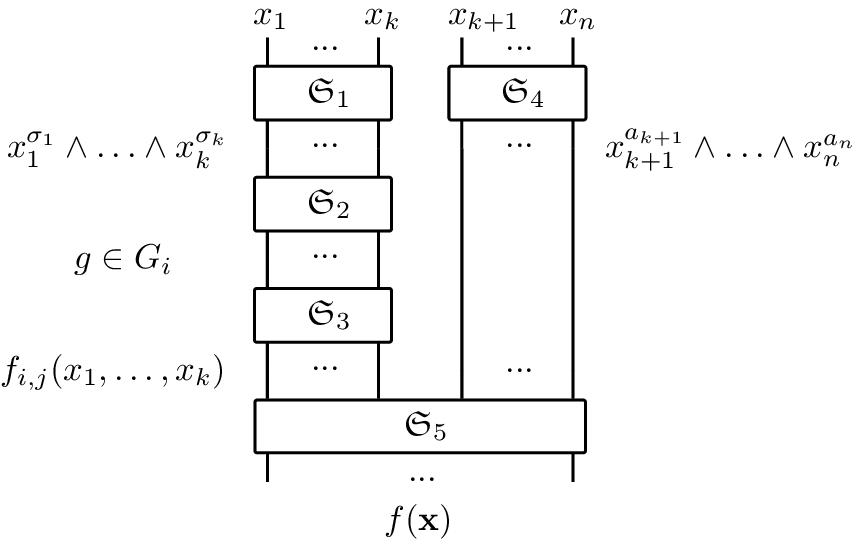}
        \caption
        {
            \small Структура обратимой схемы $\frS$, синтезируемой алгоритмом~\algref{alg_asymp_with_mem_complexity_min}
            (входы схемы сверху).
        }\label{pic_five_schemes}
    \end{figure}

    Будем искать значения параметров $k$ и $s$, удовлетворяющие следующим условиям:
    $$
        \left\{
            \begin{array}{lr}
                s = n - 2k  \; , & \\
                k = \lceil n \mathop / \phi(n) \rceil \;, & \text{где $\phi(n)$~--- некоторая растущая функция}\; , \\
                1 \leqslant s < n  \; , &\\
                1 \leqslant k < n \mathop / 2  \; , &\\
                \frac{2^k}{s} \geqslant \psi(n) \;, & \text{где $\psi(n)$~--- некоторая растущая функция} \; .
            \end{array}
        \right.
    $$
    В этом случае $p = \lceil 2^k \mathop / s \rceil \sim 2^k \mathop / s$
    и $2^{\lceil n \mathop / \phi(n) \rceil} \geqslant s\psi(n)$,
    откуда следует, что при $\phi(n) \leqslant n \mathop / (\log_2n + \log_2 \psi(n))$ параметры $k$ и $s$ будут удовлетворять
    условиям выше.
    
    Суммируя сложности обратимых подсхем $\frS_1$--$\frS_5$ и количество используемых ими дополнительных входов,
    мы получаем следующие оценки для искомой обратимой схемы $\frS$:
    \begin{gather*}
        L(\frS) \sim 2^k + p2^{s+1} + pn 2^{n-k} + 2^{n-k} + n 2^{n-k}
            \sim 2^k + \frac{2^{n-k+1}}{s} + \frac{n2^n}{s} \; , \\
        Q(\frS) \sim 2^k + p2^s + n2^{n-k} + 2^{n-k} + n \sim 2^k + \frac{2^{n-k}}{s} + n 2^{n-k} \; .
    \end{gather*}
    
    Следовательно, при $k = \lceil n \mathop / \phi(n) \rceil$ и $s = n - 2k$,
    где $\phi(n) \leqslant n \mathop / (\log_2n + \log_2 \psi(n))$ и $\psi(n)$~--- некоторые растущие функции,
    верны следующие соотношения:
    \begin{gather*}
        L(\frS) \sim 2^{\lceil n \mathop / \phi(n) \rceil} + \frac{2^{n+1}}{n(1-o(1))2^{\lceil n \mathop / \phi(n) \rceil}}
            + \frac{n2^n}{n(1-o(1))} \sim 2^n \; , \\
        Q(\frS) \sim 2^{\lceil n \mathop / \phi(n) \rceil} + \frac{2^n}{n(1-o(1))2^{\lceil n \mathop / \phi(n) \rceil}}
            + \frac{n 2^n}{2^{\lceil n \mathop / \phi(n) \rceil}} \sim \frac{n 2^n}{2^{\lceil n \mathop / \phi(n) \rceil}} \; .
    \end{gather*}
    
    Поскольку мы описали алгоритм синтеза обратимой схемы $\frS$ для произвольного булева отображения $f \in F(n,q)$,
    то $L(n,q_0) \leqslant L(\frS) \sim 2^n$, где $q_0 \sim n2^{n - \lceil n \mathop / \phi(n) \rceil}$.
\end{proof}

\begin{theorem}\label{theorem_complexity_with_memory_common}
    $$
        L(n,q_0) \asymp 2^n \text{ \,при\, } q_0 \sim n 2^{n-\lceil n \mathop / \phi(n)\rceil} \; ,
    $$
    где $\phi(n)$ и $\psi(n)$~--- любые сколь угодно медленно растущие функции такие,
    что $\phi(n) \leqslant n \mathop / (\log_2 n + \log_2 \psi(n))$.
\end{theorem}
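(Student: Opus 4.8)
Доказательство этой теоремы я построю как сопоставление уже имеющейся верхней оценки с простым следствием из общей нижней оценки. Неравенство $L(n,q_0) \lesssim 2^n$ при $q_0 \sim n 2^{n-\lceil n \mathop / \phi(n)\rceil}$ уже доказано в Теореме~\ref{theorem_complexity_upper_with_memory}, поэтому останется установить нижнюю оценку $L(n,q_0) \gtrsim 2^n$ с точностью до постоянного множителя.

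Для нижней оценки я воспользуюсь Теоремой~\ref{theorem_complexity_lower_bound}, подставив в неё $q = q_0$:
$$
    L(n,q_0) \geqslant \frac{2^n(n-2)}{3\log_2(n+q_0)} - \frac{n}{3} \; .
$$
Ключевой шаг~--- показать, что $\log_2(n+q_0) \sim n$. Так как $q_0 \sim n 2^{n-\lceil n \mathop / \phi(n)\rceil}$ и $q_0 \gg n$, имеем $n + q_0 = q_0(1+o(1))$, откуда
$$
    \log_2(n+q_0) = \log_2 q_0 + o(1) = n - \lceil n \mathop / \phi(n) \rceil + \log_2 n + o(1) \; .
$$
Поскольку $\phi(n)$~--- сколь угодно медленно растущая функция, $\phi(n) \to \infty$, а значит $\lceil n \mathop / \phi(n) \rceil = o(n)$; кроме того, $\log_2 n = o(n)$. Следовательно, $\log_2(n+q_0) = n(1+o(1))$.

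Подставляя это в неравенство выше и учитывая, что $n \mathop / 3 = o(2^n)$, я получаю
$$
    L(n,q_0) \geqslant \frac{2^n(n-2)}{3n(1+o(1))} - \frac{n}{3} \sim \frac{2^n}{3} \; ,
$$
то есть $L(n,q_0) \gtrsim 2^n \mathop / 3$. Вместе с верхней оценкой $L(n,q_0) \lesssim 2^n$ это показывает, что при достаточно больших $n$ выполнено $c_1 < L(n,q_0) \mathop / 2^n < c_2$ для некоторых положительных констант $c_1, c_2$, а это по определению и означает $L(n,q_0) \asymp 2^n$.

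Существенных трудностей в доказательстве не предвидится: вся содержательная работа уже проделана в Теоремах~\ref{theorem_complexity_lower_bound} и~\ref{theorem_complexity_upper_with_memory}. Единственное, за чем нужно аккуратно проследить,~--- корректность асимптотического упрощения $\log_2(n+q_0) \sim n$, опирающегося на то, что добавочное слагаемое $\lceil n \mathop / \phi(n) \rceil$ растёт медленнее $n$ при любом (сколь угодно медленном) росте $\phi(n)$; заметим также, что условие $\phi(n) \leqslant n \mathop / (\log_2 n + \log_2 \psi(n))$ из формулировки как раз обеспечивает применимость Теоремы~\ref{theorem_complexity_upper_with_memory}, так что обе используемые оценки действительно применимы к значению $q = q_0$.
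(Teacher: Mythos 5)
Ваше доказательство верно и по существу совпадает с доказательством в работе: там теорема также выводится непосредственно из Теорем~\ref{theorem_complexity_lower_bound} и~\ref{theorem_complexity_upper_with_memory}, а вы лишь аккуратно выписали опущенную выкладку $\log_2(n+q_0) \sim n$. Замечаний нет.
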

\begin{proof}
    Следует из Теорем~\ref{theorem_complexity_lower_bound} и~\ref{theorem_complexity_upper_with_memory}.
\end{proof}

Оценим теперь квантовый вес обратимой схемы с дополнительной памятью.
\begin{theorem}[о квантовом весе обратимой схемы с дополнительными входами]\label{theorem_quantum_weight_upper_with_memory}
    $$
        W(n,q_0) \lesssim \WC \cdot 2^n + \WT \cdot n2^{n - \lceil n \mathop / \phi(n)\rceil},
            \text{ \,при\, } q_0 \sim n 2^{n-\lceil n \mathop / \phi(n)\rceil}  \;  ,
    $$
    где $\phi(n)$ и $\psi(n)$~--- любые сколь угодно медленно растущие функции такие,
    что $\phi(n) \leqslant n \mathop / (\log_2 n + \log_2 \psi(n))$.
\end{theorem}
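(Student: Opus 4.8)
The plan is to reuse the synthesis algorithm~\algref{alg_asymp_with_mem_complexity_min} from the proof of Theorem~\ref{theorem_complexity_upper_with_memory} and to bookkeep the two gate types separately. By formula~\eqref{formula_quantum_weigth_as_sum} it suffices to bound $\LC(n,q_0)$ and $\LT(n,q_0)$ under the same parameter choice $k=\lceil n\mathop/\phi(n)\rceil$, $s=n-2k$, $p=\lceil 2^k\mathop/s\rceil\sim 2^k\mathop/s$, with $\phi(n)\leqslant n\mathop/(\log_2 n+\log_2\psi(n))$, for which that theorem already establishes $L(n,q_0)\lesssim 2^n$ and $Q(\frS)\sim n2^{n-\lceil n\mathop/\phi(n)\rceil}=q_0$.

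First I would revisit the five subcircuits $\frS_1,\ldots,\frS_5$ of that construction and record their gate content. The subcircuits $\frS_2$ (realizing all $g\in G_i$) and $\frS_3$ (realizing all coordinate functions $f_{i,j}$) consist only of CNOT gates; $\frS_1$ and $\frS_4$ (realizing the conjunctions of the first $k$ and the last $n-k$ variables via Lemma~\ref{lemma_complexity_of_all_conjunctions_of_n_variables}) are, up to a negligible number of NOT/CNOT gates, built from 2-CNOT; and $\frS_5$ (the final assembly by~\eqref{formula_function_decomposition_by_n_functions} and~\eqref{formula_function_decomposition_by_last_variables}) consists only of 2-CNOT. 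Hence, up to lower-order terms, $\LC(\frS)\sim L_2+L_3=p2^{s+1}+pn2^{n-k}$ and $\LT(\frS)\sim L_1+L_4+L_5=2^k+2^{n-k}+n2^{n-k}$.

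Then, substituting $p\sim 2^k\mathop/s$ and $s=n-2k$ with $k=o(n)$ (so $s\sim n$), I get $\LC(\frS)\sim 2^{n-k+1}\mathop/s+n2^n\mathop/s\sim 2^n$, the summand $n2^n\mathop/s$ being dominant, and $\LT(\frS)\sim n2^{n-k}=n2^{n-\lceil n\mathop/\phi(n)\rceil}$, dominated by $L_5$. Plugging into~\eqref{formula_quantum_weigth_as_sum} yields $W(\frS)=\WC\cdot\LC(\frS)+\WT\cdot\LT(\frS)\lesssim\WC\cdot 2^n+\WT\cdot n2^{n-\lceil n\mathop/\phi(n)\rceil}$; since the algorithm applies to every $f\in F(n,q)$, the same bound holds for $W(n,q_0)$ with $q_0\sim n2^{n-\lceil n\mathop/\phi(n)\rceil}$, which is the assertion.

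The main obstacle I anticipate is verifying that the ``negligible'' contributions really are negligible in both coordinates: the NOT gates inside $\frS_1$ and $\frS_4$, the $2n$ NOT/CNOT gates preparing the negations $\bar x_i$, and the recursive overhead in Lemma~\ref{lemma_complexity_of_all_conjunctions_of_n_variables} must not perturb the asymptotics of either $\LC$ or $\LT$; and, more delicately, the constant in $n2^n\mathop/s$ must tend to $1$, which forces $k=o(n)$ and hence requires $\phi(n)\to\infty$ — available since $\phi(n)$ is assumed arbitrarily slowly growing. A secondary check is that, when the 2-CNOT gates of the conjunction circuits of Lemma~\ref{lemma_complexity_of_all_conjunctions_of_n_variables} are replaced by pairs of CNOT gates (as is done for $\frS_2$), this does not silently inflate $\LT$ elsewhere.
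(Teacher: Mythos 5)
Ваше предложение верно и по существу совпадает с доказательством в работе: там точно так же по отдельности подсчитываются $\LC_i$ и $\LT_i$ для пяти подсхем алгоритма~\algref{alg_asymp_with_mem_complexity_min}, после чего при $k = \lceil n \mathop / \phi(n)\rceil$, $s = n-2k$ получаются $\LC(n,q_0) \lesssim 2^n$ и $\LT(n,q_0) \lesssim n2^{n-\lceil n \mathop / \phi(n)\rceil}$, и оценка следует из формулы~\eqref{formula_quantum_weigth_as_sum}. Отмеченные вами проверки пренебрежимости вкладов $O(k)$ и $O(n-k)$ элементов NOT/CNOT в $\frS_1$ и $\frS_4$ в работе проведены и действительно не меняют асимптотик.
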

\begin{proof}
    Согласно формуле~\eqref{formula_quantum_weigth_as_sum}, нам необходимо оценить величины $\LC(n,q_0)$ и $\LT(n,q_0)$.
    Из описания алгоритма синтеза~\algref{alg_asymp_with_mem_complexity_min} видно, что
    \begin{align*}
        \LC_1 & = O(k) \; ,                 & \LT_1 &\sim 2^k \; , \\
        \LC_2 & \sim p2^{s+1} \; ,          & \LT_2 &= 0 \; , \\
        \LC_3 & \leqslant pn 2^{n-k} \; ,   & \LT_3 &= 0 \; , \\
        \LC_4 & = O(n-k) \; ,               & \LT_4 &\sim 2^{n-k} \; , \\
        \LC_5 & = 0 \; ,                    & \LT_5 &\leqslant n 2^{n-k} \; .
    \end{align*}
    При $k = \lceil n \mathop / \phi(n) \rceil$ и $s = n - 2k$,
    где $\phi(n)$ и $\psi(n)$~--- любые сколь угодно медленно растущие функции такие,
    что $\phi(n) \leqslant n \mathop / (\log_2n + \log_2 \psi(n))$,
    верны следующие соотношения:
    \begin{align*}
        \LC(n,q_0) &\lesssim O(k) + p2^{s+1} + pn 2^{n-k} + O(n-k) \sim \frac{2^{k+s+1}}{s} + \frac{n2^n}{s} \; , \\
        \LT(n,q_0) &\lesssim 2^k + 2^{n-k} + n 2^{n-k} \sim 2^k + n2^{n-k} \; ,
    \end{align*}
    \begin{align*}
        \LC(n,q_0) &\lesssim \frac{2^{n+1}}{(n-o(n))2^{\lceil n \mathop / \phi(n) \rceil}} + \frac{n2^n}{n - o(n)}
            \sim 2^n \; , \\
        \LT(n,q_0) &\lesssim 2^{\lceil n \mathop / \phi(n) \rceil} + \frac{n2^n}{2^{\lceil n \mathop / \phi(n) \rceil}}
            \sim \frac{n2^n}{2^{\lceil n \mathop / \phi(n) \rceil}} \; .
    \end{align*}

    Из этих верхних оценок и из формулы~\eqref{formula_quantum_weigth_as_sum} следует верхняя
    оценка для функции $W(n,q_0)$ из условия теоремы.
\end{proof}

Стоит отметить, что в случае $\WT = O(\WC) = const$ верно соотношение
$$
    W(n, q_0) \asymp \LC(n, q_0) \sim L(n, q_0) \; ,
$$
где $q_0 \sim n 2^{n-\lceil n \mathop / \phi(n)\rceil}$, $\phi(n)$ и $\psi(n)$~--- любые сколь угодно медленно растущие функции
такие, что $\phi(n) \leqslant n \mathop / (\log_2 n + \log_2 \psi(n))$.
Другими словами, количество элементов 2-CNOT в обратимой схеме,
синтезируемой алгоритмом~\algref{alg_asymp_with_mem_complexity_min}, пренебрежимо мало по сравнению со сложностью этой схемы.
В случае же когда использование дополнительных входов запрещено, количество элементов 2-CNOT в обратимой схеме,
согласно Теоремам \ref{theorem_quantum_weight_lower_bound}, \ref{theorem_complexity_no_memory_common}
и~\ref{theorem_quantum_weight_upper_no_memory}, эквивалентно с точностью до порядка сложности этой схемы.

Для булева отображения $f\colon \ZZ_2^n \to \ZZ_2^m$ можно получить аналогичные верхние оценки сложности реализующей его
обратимой схемы.
\begin{theorem}\label{theorem_arbitrary_boolean_transformation_complexity}
    Любое булево отображение $f\colon \ZZ_2^n \to \ZZ_2^m$ можно реализовать с помощью обратимой схемы $\frS$,
    имеющей сложность $L(\frS) \lesssim m2^n \mathop / n$, при использовании
    $q \sim (m+1) 2^{n-\lceil n \mathop / \phi(n)\rceil}$
    дополнительных входов, где $\phi(n)$ и $\psi(n)$~--- любые сколь угодно медленно растущие функции такие,
    что $\phi(n) \leqslant n \mathop / (\log_2 n + \log_2 \psi(n))$.
\end{theorem}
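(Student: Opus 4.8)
The plan is to mimic the proof of Theorem~\ref{theorem_complexity_upper_with_memory}, but replacing the $n$ output coordinate functions $f_1, \ldots, f_n$ by the $m$ coordinate functions $f_1, \ldots, f_m$ of the mapping $f\colon \ZZ_2^n \to \ZZ_2^m$. First I would write $f(\vv x) = \langle f_1(\vv x), \ldots, f_m(\vv x) \rangle$, decompose each $f_i$ by the last $(n-k)$ variables exactly as in formula~\eqref{formula_function_decomposition_by_last_variables}, and then represent each of the $m2^{n-k}$ resulting functions of $k$ variables by the $\oplus$-analogue of the perfect DNF~\eqref{formula_analog_sdnf}, grouping the $2^k$ conjunctions into $p = \lceil 2^k \mathop / s \rceil$ blocks of size at most $s$ as in~\eqref{formula_analog_sdnf_improved}.

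Next I would reuse the five-subcircuit construction of algorithm~\algref{alg_asymp_with_mem_complexity_min} verbatim, with $n$ replaced by $m$ everywhere the number of output functions enters: subcircuit $\frS_1$ realizes all conjunctions of the first $k$ variables (Lemma~\ref{lemma_complexity_of_all_conjunctions_of_n_variables}), $\frS_2$ realizes all the functions in the groups $G_i$, $\frS_3$ realizes all $m2^{n-k}$ intermediate functions $f_{i,j}$, $\frS_4$ realizes all conjunctions of the last $(n-k)$ variables, and $\frS_5$ assembles $f$ by~\eqref{formula_function_decomposition_by_last_variables}. Summing complexities gives $L(\frS) \sim 2^k + p2^{s+1} + pm2^{n-k} + 2^{n-k} + m2^{n-k} \sim 2^k + 2^{n-k+1}\mathop / s + m2^n \mathop / s$, and the number of additional inputs is $Q(\frS) \sim 2^k + p2^s + m2^{n-k} + 2^{n-k} + m \sim 2^k + (m+1)2^{n-k}$, where the $m 2^{n-k}$ term comes from $\frS_3$ and the $2^{n-k}$ from $\frS_4$.

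Then I would choose the parameters as before: $k = \lceil n \mathop / \phi(n) \rceil$ and $s = n - 2k$, with $\phi(n)$ and $\psi(n)$ slowly growing functions satisfying $\phi(n) \leqslant n \mathop / (\log_2 n + \log_2 \psi(n))$, so that $p \sim 2^k \mathop / s$ and the required inequalities $1 \leqslant s < n$, $1 \leqslant k < n \mathop / 2$, $2^k \mathop / s \geqslant \psi(n)$ hold. With $s = n - 2k \sim n$ this yields $L(\frS) \sim 2^{\lceil n \mathop / \phi(n)\rceil} + 2^{n+1}\mathop /(n\, 2^{\lceil n\mathop / \phi(n)\rceil}) + m2^n\mathop / n \sim m2^n \mathop / n$ (the first two terms being negligible against $m2^n\mathop / n$ since $m \geqslant 1$ and $k < n\mathop/2$), and $Q(\frS) \sim (m+1)2^{n - \lceil n \mathop / \phi(n)\rceil}$, which is exactly the statement.

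The only genuinely new point compared with Theorem~\ref{theorem_complexity_upper_with_memory} is bookkeeping: one must check that the dominant term in $L(\frS)$ is $m2^n\mathop / n$ rather than $2^k$ or $2^{n-k+1}\mathop/s$, i.e. that $m2^n\mathop / n$ dominates both $2^{n/\phi(n)}$ and $2^{n-k}\mathop / s$; since $k = \lceil n\mathop/\phi(n)\rceil = o(n)$ and $\phi(n)\to\infty$, both competing terms are $o(2^n\mathop / n) = o(m2^n\mathop / n)$, so this holds. The main obstacle, such as it is, is simply to state the construction cleanly for a non-square mapping and to verify the asymptotic bounds on $L(\frS)$ and $Q(\frS)$ carry through with $m$ in place of $n$; there is no conceptual difficulty beyond what was already done for the square case, so the proof is essentially a corollary of the method of algorithm~\algref{alg_asymp_with_mem_complexity_min}.
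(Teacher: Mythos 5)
Ваше предложение совпадает по существу с доказательством в работе: там Теорема~\ref{theorem_arbitrary_boolean_transformation_complexity} доказывается именно как модификация алгоритма~\algref{alg_asymp_with_mem_complexity_min}, в которой отображение представляется системой $m$ (а не $n$) координатных функций, обновляются только оценки подсхем $\frS_3$ ($L_3 \leqslant pm2^{n-k}$, $q_3 = m2^{n-k}$) и $\frS_5$ ($L_5 \leqslant m2^{n-k}$, $q_5 = m$), после чего пересчитываются суммарные $L(\frS)$ и $Q(\frS)$ с проверкой случая $m=1$. Ваш подсчёт доминирующих членов $pm2^{n-k} \sim m2^n \mathop / s$ и $(m+1)2^{n-k}$ дополнительных входов корректен и воспроизводит выкладку статьи.
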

\begin{proof}
    Алгоритм синтеза~\algref{alg_asymp_with_mem_complexity_min} можно модифицировать таким образом, чтобы он мог
    синтезировать обратимую схему, реализующую заданное булево отображение $f\colon \ZZ_2^n \to \ZZ_2^m$.
    
    Булево отображение $f(\vv x)$ в формуле~\eqref{formula_function_decomposition_by_n_functions}
    теперь будет представляться системой не $n$,
    а $m$ булевых функций. Следовательно, подсхема $\frS_3$ будет уже иметь сложность $L_3 \leqslant pm 2^{n-k}$
    и использовать $q_3 = m 2^{n-k}$ дополнительных входов, а подсхема $\frS_5$ иметь сложность
    $L_5 \leqslant m 2^{n-k}$ и использовать $q_5 = m$ дополнительных входов.
    
    Используя эти обновлённые значения при подсчёте суммарной сложности схемы $\frS$ и количества дополнительных входов,
    можно убедиться, что даже при $m = 1$ верна верхняя оценка $L(\frS) \lesssim m2^n \mathop / n$ из условия теоремы
    при использовании $q \sim (m+1) 2^{n-\lceil n \mathop / \phi(n)\rceil}$ дополнительных входов,
    где $\phi(n)$ и $\psi(n)$~--- любые сколь угодно медленно растущие функции такие,
    что $\phi(n) \leqslant n \mathop / (\log_2 n + \log_2 \psi(n))$.
\end{proof}

\myparagraph{Снижение глубины схемы}

\forceindent
Мы описали алгоритм синтеза~\algref{alg_asymp_with_mem_complexity_min}, основная цель которого была снижение
сложности синтезированной обратимой схемы при помощи использования дополнительных входов в схеме.
Однако мы можем использовать аналогичный подход для снижения глубины обратимой схемы.
Обозначим такой алгоритм синтеза через~\nextalg{alg_asymp_with_mem_depth_min}.
Важнейшей особенностью данного алгоритма является копирование значений с некоторых выходов на дополнительных входы
с логарифмической глубиной (см. рис.~\ref{pic_parallel_copy}). В результате все эти дополнительные входы могут
быть использованы независимо друг от друга, что позволяет достичь глубины 1 для следующей операции.
Всё, что нам нужно сделать для этого,~--- скопировать значение достаточное количество раз.
\Figure[ht]
    \centering
    \includegraphics[scale=1.2]{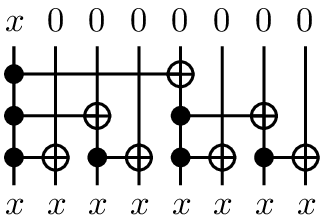}
    \caption
    {
        \small Копирование значения $x$ на дополнительные входы\\
        с логарифмической глубиной (входы схемы сверху).
    }\label{pic_parallel_copy}
\end{figure}

Докажем теперь лемму, схожую с Леммой~\ref{lemma_complexity_of_all_conjunctions_of_n_variables},
о глубине обратимой схемы, реализующей все конъюнкции $n$ переменных вида
$x_1^{a_1} \wedge \ldots \wedge x_n^{a_n}$, $a_i \in \ZZ_2$.
\begin{lemma}\label{lemma_depth_of_all_conjunctions_of_n_variables}
    Все конъюнкции $n$ переменных вида $x_1^{a_1} \wedge \ldots \wedge x_n^{a_n}$, $a_i \in \ZZ_2$,
    могут быть реализованы обратимой схемой $\frS_n$, состоящей из \gate{} множества $\Omega_{n+q}^2$ и
    имеющей глубину $D(\frS_n) \sim n$ при использовании $Q(\frS_n) \sim 3 \cdot 2^n$ дополнительных входов.
    Сложность такой схемы $L(\frS_n) \sim 3 \cdot 2^n$.
\end{lemma}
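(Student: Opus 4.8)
The plan is to mirror the recursive doubling construction used in the proof of Lemma~\ref{lemma_complexity_of_all_conjunctions_of_n_variables}, but to reorganise the wiring so that it is the \emph{depth}, rather than the gate count, that stays under control. The essential new ingredient is the logarithmic-depth fan-out of a single line onto many fresh zero lines depicted in Fig.~\ref{pic_parallel_copy}: by copying a value enough times in advance, we can make the subsequent layer of gates act on pairwise disjoint triples of lines, hence (by Definition~\ref{def_scheme_depth_on_one_input}) have depth $1$.

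First I would prepare, on fresh zero lines, copies of all literals $x_1,\ldots,x_n$ together with their negations $\bar x_1,\ldots,\bar x_n$; since each literal sits on its own line this preparation has depth $O(1)$ and costs $O(n)$ gates and $O(n)$ additional inputs. Then, exactly as in the complexity lemma, I would build $\frS_{\lceil n/2\rceil}$ and $\frS_{\lfloor n/2\rfloor}$ on disjoint groups of lines, realising respectively all conjunctions of the first $\lceil n/2\rceil$ and of the last $\lfloor n/2\rfloor$ variables; these two subcircuits run in parallel, so they contribute depth $D(\frS_{\lceil n/2\rceil})$. For the combining stage, every one of the $2^{\lceil n/2\rceil}$ outputs of the first subcircuit must serve as a control in $2^{\lfloor n/2\rfloor}$ different 2-CNOT gates, and symmetrically for the second subcircuit; to make that combining layer have depth $1$ I would first fan out each such output onto the required number of fresh lines by the binary-tree copying of Fig.~\ref{pic_parallel_copy}, which has depth $\lceil\log_2 2^{\lfloor n/2\rfloor}\rceil=\lfloor n/2\rfloor$ and can be performed for all outputs simultaneously because the target lines are pairwise disjoint and the significant outputs of the recursive subcircuits are preserved. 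A single layer of $2^n$ mutually commuting 2-CNOT gates then produces all $2^n$ conjunctions of the $n$ variables.

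Counting resources for this stage: $\sim 2^n$ additional inputs for the copies of the first-half outputs, $\sim 2^n$ for the copies of the second-half outputs, and $2^n$ for the results, i.e. $\sim 3\cdot 2^n$ additional inputs; the gate count is $\sim 2\cdot 2^n$ CNOT gates for the copies plus $2^n$ 2-CNOT gates, i.e. $\sim 3\cdot 2^n$. This yields the recurrences
\begin{gather*}
    Q(\frS_n)=3\cdot 2^n(1+o(1))+Q(\frS_{\lceil n/2\rceil})+Q(\frS_{\lfloor n/2\rfloor})\;, \\
    L(\frS_n)=3\cdot 2^n(1+o(1))+L(\frS_{\lceil n/2\rceil})+L(\frS_{\lfloor n/2\rfloor})\;, \\
    D(\frS_n)=D(\frS_{\lceil n/2\rceil})+\lfloor n/2\rfloor+O(1)\;.
\end{gather*}
Since the two half-size contributions sum to $O(2^{n/2})=o(2^n)$, the first two recurrences give $L(\frS_n)\sim Q(\frS_n)\sim 3\cdot 2^n$, and unrolling the depth recurrence gives $D(\frS_n)\sim n/2+n/4+\cdots\sim n$, with the accumulated $O(\log n)$ additive terms of order $O(1)$ being $o(n)$.

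The routine parts are the base case ($n=O(1)$, where every quantity is a constant) and the floor/ceiling bookkeeping needed to make the asymptotic constants come out exactly; I would dispose of them as in Lemma~\ref{lemma_complexity_of_all_conjunctions_of_n_variables}. The point deserving most care — and the one I expect to be the main, though still modest, obstacle — is justifying that the combining layer genuinely has depth $1$: this requires checking that enough disjoint copies of every half-conjunction have been produced beforehand, and that the fan-out stage can be scheduled without disturbing the still-needed significant outputs of the two recursive subcircuits, which is legitimate precisely because a reversible circuit preserves its significant outputs and all copying is done onto fresh zero lines.
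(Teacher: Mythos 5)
Your proposal is correct and follows essentially the same route as the paper's own proof: recursive halving into $\frS_{\lceil n/2\rceil}$ and $\frS_{\lfloor n/2\rfloor}$ run in parallel, logarithmic-depth fan-out of each half-conjunction onto $\sim 2^{n/2}$ fresh zero lines so that the final layer of $2^n$ 2-CNOT gates acts on pairwise disjoint line sets, and the same resource recurrences yielding $L(\frS_n)\sim Q(\frS_n)\sim 3\cdot 2^n$ and $D(\frS_n)\sim n/2+n/4+\cdots\sim n$. The accounting of the three blocks of $\sim 2^n$ additional inputs and of $2^{n+1}$ CNOT plus $2^n$ 2-CNOT gates matches the paper exactly.
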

\begin{proof}
    Сперва реализуем все инверсии $\bar x_i$, $1 \leqslant i \leqslant n$.
    Это можно сделать с глубиной $D_1 = 2$ при использовании $L_1 = 2n$ элементов NOT и CNOT
    и $q_1 = n$ дополнительных входов.

    Искомую обратимую схему $\frS_n$ мы строим таким же образом, что и в Лемме%
    ~\ref{lemma_complexity_of_all_conjunctions_of_n_variables}, используя подсхемы $\frS_{\lceil n \mathop / 2 \rceil}$
    и $\frS_{\lfloor n \mathop / 2 \rfloor}$ (см. рис.~\ref{pic_lemma_depth_of_all_conjunctions}).
    Каждый значимый выход этих двух подсхем будет использован не более чем в $2 \cdot 2^{n \mathop / 2}$ конъюнкциях со значимыми выходами
    другой подсхемы, поэтому все конъюнкции могут быть реализованы с глубиной
    $D_2 \leqslant 2 + n \mathop / 2$ при использовании $2^{n+1}$ элементов CNOT, $2^n$ элементов 2-CNOT и
    $q_2 = 3 \cdot 2^n$ дополнительных входов.

    Отсюда следует, что
    \begin{gather*}
        D(\frS_n) \sim 2 + \frac{n}{2} + \max(D(\frS_{\lceil n \mathop / 2 \rceil}), D(\frS_{\lfloor n \mathop / 2 \rfloor}))
            \sim n \; , \\
        L(\frS_n) \sim 3 \cdot 2^n + L(\frS_{\lceil n \mathop / 2 \rceil}) + L(\frS_{\lfloor n \mathop / 2 \rfloor})
            \sim 3 \cdot 2^n \; , \\
        Q(\frS_n) \sim 3 \cdot 2^n + Q(\frS_{\lceil n \mathop / 2 \rceil}) + Q(\frS_{\lfloor n \mathop / 2 \rfloor})
            \sim 3 \cdot 2^n \; .
    \end{gather*}
\end{proof}

\Figure[ht]
    \centering
    \includegraphics[scale=1.2]{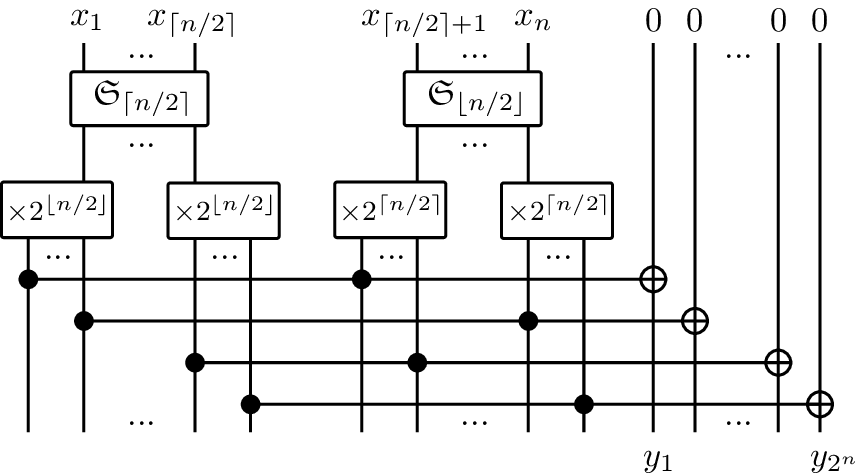}
    \caption
    {
        \small Структура обратимой схемы, реализующей все конъюнкции от $n$ переменных\\
        с минимальной глубиной (входы схемы сверху).
    }\label{pic_lemma_depth_of_all_conjunctions}
\end{figure}

Перейдём теперь к первой теореме данного параграфа.
\begin{theorem}\label{theorem_depth_upper_with_memory_3n}
    $$
        D(n,q_1) \lesssim 3n \text{ \,при\, } q_1 \sim 2^n \; .
    $$
    Обратимая схема $\frS$, реализующая отображение $f \in F(n, q_1)$ с глубиной $D(\frS) \sim 3n$,
    имеет сложность $L(\frS) \sim 2^{n+1}$
    и квантовый вес $W(\frS) \sim \WC \cdot 2^{n+1} + \WT \cdot n2^{n - \lceil n \mathop / \phi(n)\rceil}$,
    где $\phi(n)$ и $\psi(n)$~--- любые сколь угодно медленно растущие функции такие,
    что $\phi(n) \leqslant n \mathop / (\log_2 n + \log_2 \psi(n))$.
\end{theorem}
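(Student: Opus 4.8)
The plan is to describe a synthesis algorithm $A_{\ref{alg_asymp_with_mem_depth_min}}$ that is the depth‑oriented counterpart of algorithm $A_{\ref{alg_asymp_with_mem_complexity_min}}$ from Theorem~\ref{theorem_complexity_upper_with_memory}. It keeps the same arithmetic skeleton — decompose $f$ into the $n$ coordinate functions~\eqref{formula_function_decomposition_by_n_functions}, expand each by the last $n-k$ variables~\eqref{formula_function_decomposition_by_last_variables}, and assemble the $k$‑variable functions~\eqref{formula_analog_sdnf_improved} from the $2^k$ elementary conjunctions split into $p=\lceil 2^k/s\rceil$ groups of at most $s$ conjunctions — but every time a value must feed many $k$‑CNOT controls or many additions, it is first broadcast to the required number of fresh additional inputs with logarithmic depth (рис.~\ref{pic_parallel_copy}), so that the layer consuming the copies has depth $1$. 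The subcircuits $\frS_1$ (all conjunctions of $x_1,\ldots,x_k$) and $\frS_4$ (all conjunctions of $x_{k+1},\ldots,x_n$) are produced by Lemma~\ref{lemma_depth_of_all_conjunctions_of_n_variables}, giving depths $\sim k$ and $\sim n-k$ at the cost of $\sim 3\cdot 2^{k}$ and $\sim 3\cdot 2^{n-k}$ additional inputs; the subcircuits $\frS_2$ (all functions $g\in G_i$, $i\in\ZZ_p$), $\frS_3$ (all $n2^{n-k}$ functions $f_{i,j}$) and $\frS_5$ (assembly of $f$ from the $f_{i,j}$ and the last‑variable conjunctions) are built from broadcast copies followed by balanced XOR‑trees, and the five subcircuits are composed in the order $\frS_1*\frS_2*\frS_3*\frS_4*\frS_5$.

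Next I would fix the parameters exactly as in Theorem~\ref{theorem_complexity_upper_with_memory}: $k=\lceil n/\phi(n)\rceil$, $s=n-2k$, with $\phi(n)\leqslant n/(\log_2 n+\log_2\psi(n))$ and $\phi(n),\psi(n)$ arbitrarily slowly growing, so that $p\sim 2^{k}/s$ and $1\leqslant s<n$, $1\leqslant k<n/2$. Adding depths along the composition, one checks that $D(\frS_1)\sim k$ and $D(\frS_3)\sim k+\log_2 n$ are $o(n)$, while each of $D(\frS_2)$, $D(\frS_4)$, $D(\frS_5)$ is of the order $n$: indeed $D(\frS_4)\sim n-k$ and $D(\frS_5)\sim n-k$ come from Lemma~\ref{lemma_depth_of_all_conjunctions_of_n_variables} and from the XOR‑tree over $2^{n-k}$ products respectively, and $D(\frS_2)\sim s\sim n$ because realizing all $\leqslant 2^{s}$ functions of a group forces each of its $\leqslant s$ conjunctions to be broadcast to up to $2^{s-1}$ lines, costing depth $\sim s$ (the $p$ groups being handled in parallel). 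Hence $D(\frS)\sim 3n$, and since $f$ was an arbitrary element of $F(n,q_1)$ we get $D(n,q_1)\lesssim 3n$.

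For the complexity and quantum‑weight parts I would sum $L(\frS_i)$, $\LC(\frS_i)$ and $\LT(\frS_i)$ over the five subcircuits. The terms $L(\frS_1)\sim 3\cdot 2^{k}$, $L(\frS_2)\asymp 2^{n-k}$, $L(\frS_4)\sim 3\cdot 2^{n-k}$, $L(\frS_5)\asymp n2^{n-k}$ are all $o(2^{n})$ — for the last one because $\phi(n)\leqslant n/(\log_2 n+\log_2\psi(n))$ forces $n\,2^{-\lceil n/\phi(n)\rceil}=o(1)$ — so $L(\frS)\sim L(\frS_3)$, and the broadcasts plus balanced XOR‑trees of $\frS_3$ cost $\sim 2^{n+1}$ gates, all of them NOT/CNOT, giving $L(\frS)\sim 2^{n+1}$ and $\LC(\frS)\sim 2^{n+1}$; the same count shows $Q(\frS)\sim 2^{n}$. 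The $2$‑CNOT gates occur only in $\frS_1$, $\frS_4$ and the product layer of $\frS_5$, whence $\LT(\frS)\sim 2^{k}+2^{n-k}+n2^{n-k}\sim n2^{\,n-\lceil n/\phi(n)\rceil}$; substituting into~\eqref{formula_quantum_weigth_as_sum} yields $W(\frS)\sim\WC\cdot 2^{n+1}+\WT\cdot n2^{\,n-\lceil n/\phi(n)\rceil}$.

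The main obstacle I expect is the bookkeeping of the copying sub‑circuits: one must simultaneously guarantee that (i) each broadcast is deep enough to leave depth $1$ for the consuming layer yet shallow enough ($\log_2$ of the fan‑out) that exactly the three stages $\frS_2,\frS_4,\frS_5$ contribute of the order $n$ to the depth and all others are $o(n)$, (ii) the extra copies do not break the $\sim 2^{n+1}$ complexity bound, which hinges on the estimate $n2^{n-k}=o(2^{n})$ under the constraint on $\phi$, and (iii) the number of additional inputs stays $\sim 2^{n}$. Once these estimates are pinned down, the rest is the same routine summation as in Theorem~\ref{theorem_complexity_upper_with_memory}.
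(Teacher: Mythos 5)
Твоя общая архитектура совпадает с авторской (лупановская декомпозиция, копирование значений с логарифмической глубиной, лемма о конъюнкциях с глубиной $\sim n$), и итоговый подсчёт сложности и квантового веса ($L \sim 2^{n+1}$, $\LT \sim n2^{n-\lceil n/\phi(n)\rceil}$) получается верным. Однако в учёте глубины есть реальный пробел: подсхема, собирающая все $n2^{n-k}$ координатных функций $f_{i,j}$ по формуле~\eqref{formula_analog_sdnf_improved}, требует сначала размножить значение каждой функции $g \in G_t$ на число её потребителей, которое может достигать $n2^{n-k}$; это копирование без ветвления проводов стоит глубины $\log_2(n2^{n-k}) = n-k+\log_2 n \sim n$, а не заявленных тобой $k+\log_2 n = o(n)$ (величина $\log_2 p \approx k$ --- это лишь глубина последующего XOR-дерева из $p$ слагаемых, а не глубина разветвления). После исправления этой оценки твоя чисто последовательная композиция пяти подсхем даёт суммарную глубину $\approx s + n + n + n \sim 4n$, то есть теорема не доказана.

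В статье оценка $3n$ достигается за счёт наблюдения, которого у тебя нет: цепочка подсхем, работающих с переменными $x_1,\ldots,x_k$ (конъюнкции, функции групп $G_i$, сборка $f_{i,j}$), и цепочка, работающая с $x_{k+1},\ldots,x_n$ (конъюнкции последних переменных и их размножение), действуют на непересекающихся подмножествах входов и потому выполняются параллельно; итоговая глубина равна $\max(k+s+(n-k+\log_2 p),\; n-k+\log_2 n) + (n-k) \sim 2n+s \sim 3n$ при $k=\lceil n/\phi(n)\rceil$, $s=n-k$. Без этого параллелизма (или иного способа совместить этап размножения выходов $\frS_2$ с чем-то ещё) константа $3$ не получается. Отмечу также два мелких расхождения, не влияющих на асимптотику: автор берёт $k+s=n$, а не $s=n-2k$, и выделяет копирование выходов блока конъюнкций последних переменных в отдельную подсхему глубины $\log_2 n$; кроме того, твоя оценка $L(\frS_2)\asymp 2^{n-k}$ должна быть $\sim 3p2^s \asymp 2^{n-k}/s$, что по-прежнему $o(2^n)$.
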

\begin{proof}    
    Опишем алгоритм синтеза~\algref{alg_asymp_with_mem_depth_min}.
    Он схож с алгоритмом~\algref{alg_asymp_with_mem_complexity_min}, поэтому мы опустим описание
    представления булева отображения $f \in P_2(n,n)$
    по формулам~\eqref{formula_function_decomposition_by_n_functions}--\eqref{formula_function_decomposition_by_last_variables}.

    Отметим, что все булевы функции множества $G_i$ можно реализовать с глубиной $s$, используя такой же подход, что и в Лемме%
    ~\ref{lemma_depth_of_all_conjunctions_of_n_variables}. Для этого нам потребуется $L \sim 3 \cdot 2^s$ элементов CNOT,
    из которых $2^{s+1}$ \gate{} нужны для копирования значений на дополнительные входы,
    и $q \sim 2^{s+1}$ дополнительных входов (см. рис.~\ref{pic_structure_of_s_3_depth}).
    \Figure[ht]
        \centering
        \includegraphics[scale=1.2]{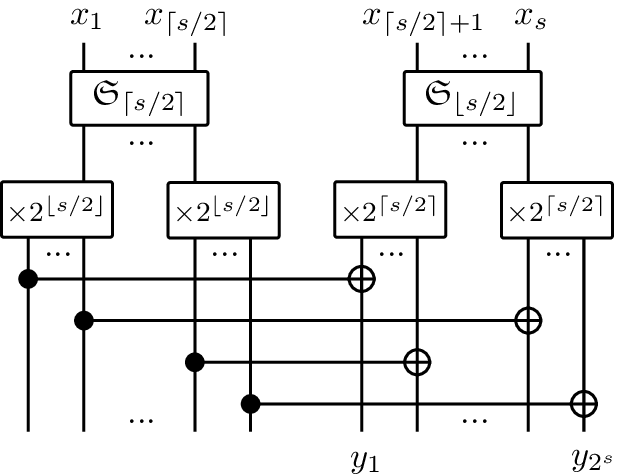}
        \caption
        {
            \small Структура обратимой подсхемы, реализующей все булевы функции $g \in G_i$\\
            с минимальной глубиной (входы схемы сверху).
        }\label{pic_structure_of_s_3_depth}
    \end{figure}

    Алгоритм~\algref{alg_asymp_with_mem_depth_min} конструирует обратимую схему $\frS$, реализующую
    булево отображение $f$~\eqref{formula_function_decomposition_by_n_functions},
    при помощи следующих подсхем (см. рис.~\ref{pic_six_schemes}):

    \Figure[ht]
        \centering
        \includegraphics[scale=1.2]{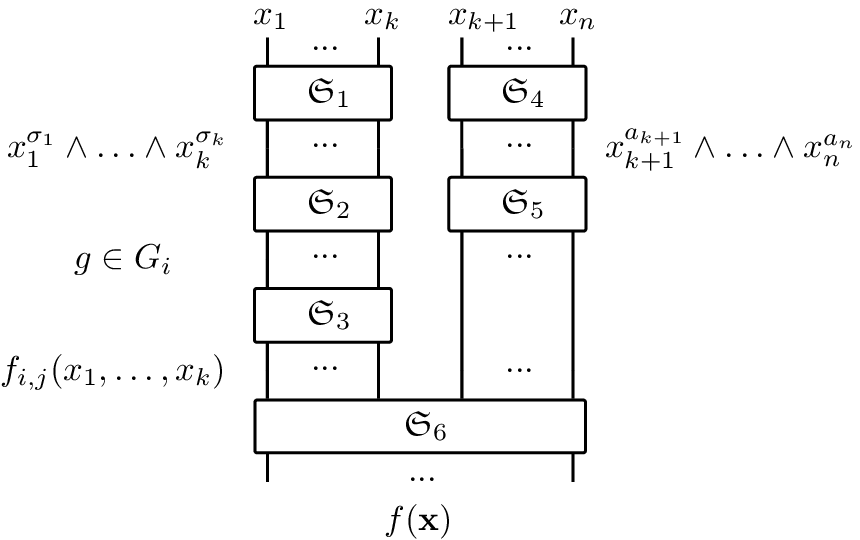}
        \caption
        {
            \small Структура обратимой схемы $\frS$, синтезируемой алгоритмом~\algref{alg_asymp_with_mem_depth_min}
            (входы схемы сверху).
        }\label{pic_six_schemes}
    \end{figure}

    \begin{enumerate}
        \item\label{item_first_subcircuit_min_depth}
            Подсхема $\frS_1$, реализующая все конъюнкции $k$ первых переменных $x_i$, согласно Лемме%
            ~\ref{lemma_depth_of_all_conjunctions_of_n_variables},
            с глубиной $D_1 \sim k$, сложностью $L_1 \sim 3 \cdot 2^k$ и $q_1 \sim 3 \cdot 2^k$
            дополнительными входами. Подсхема $\frS_1$ содержит $2^k$ элементов 2-CNOT.

        \item
            Подсхема $\frS_2$, реализующая все булевы функции $g \in G_i$ для всех $i \in \ZZ_p$
            по формуле~\eqref{formula_analog_sdnf} с глубиной $D_2 \sim s$, сложностью $L_2 \sim 3p2^s$
            и $q_2 \sim p2^{s+1}$ дополнительными входами
            (см. замечание выше про реализацию всех булевых функций множества $G_i$).
            Подсхема $\frS_2$ состоит только из элементов CNOT.
        
        \item
            Подсхема $\frS_3$, реализующая все $n2^{n-k}$ координатных функций $f_{i,j}(\vv x)$,
            $i \in \ZZ_{2^{n-k}}$, $j \in \ZZ_n$, по формуле~\eqref{formula_analog_sdnf_improved}.
            Особенностью данной подсхемы является то, что некоторая булева функция $g \in G_t$ может использоваться больше
            одного раза. Максимальное количество использования функции $g$ не превосходит $n2^{n-k}$.
            Следовательно, сперва нам необходимо скопировать значения со значимых выходов подсхемы $\frS_2$
            для всех таких булевых функций.
            Это можно сделать с глубиной $(n-k +\log_2 n)$ при использовании не более $pn2^{n-k}$ \gate{}
            и $pn2^{n-k}$ дополнительных входов (см. рис.~\ref{pic_parallel_copy}).
            Затем производится сложение по модулю 2 полученных выходов с глубиной $\log_2 p$,
            сложностью $(p-1)n2^{n-k}$
            и без использования дополнительных входов (см. рис.~\ref{pic_logarithmic_xor}).
            Таким образом, подсхема $\frS_3$ имеет глубину $D_3 \sim n-k + \log_2 p$,
            сложность $L_3 \sim (2p-1)n 2^{n-k}$ и $q_3 \sim pn 2^{n-k}$ дополнительных входов.
            Подсхема $\frS_3$ состоит только из элементов CNOT.
            
        \item
            Подсхема $\frS_4$, реализующая все конъюнкции $(n-k)$ последних переменных $x_i$, согласно Лемме%
            ~\ref{lemma_depth_of_all_conjunctions_of_n_variables},
            с глубиной $D_4 \sim (n-k)$, сложностью $L_4 \sim 3 \cdot 2^{n-k}$ и $q_4 \sim 3 \cdot 2^{n-k}$
            дополнительными входами. Подсхема $\frS_4$ содержит $2^{n-k}$ элементов 2-CNOT.
            
        \item
            Подсхема $\frS_5$, необходимая для копирования $(n-1)$ раз значения каждого значимого выхода подсхемы $\frS_4$.
            Это можно сделать с глубиной $D_5 \sim \log_2 n$, сложностью $L_5 = (n-1) \cdot 2^{n-k}$
            и $q_5 = (n-1)2^{n-k}$ дополнительными входами.
            Подсхема $\frS_5$ состоит только из элементов CNOT.

        \item
            Подсхема $\frS_6$, реализующая булево отображение $f$
            по формулам~\eqref{formula_function_decomposition_by_n_functions} и~\eqref{formula_function_decomposition_by_last_variables}.
            Структура данной подсхемы следующая: все $n2^{n-k}$ координатных функций $f_{i,j}(\vv x)$ группируются
            по $2^{n-k}$ функций (всего $n$ групп, соответствующих $n$ выходам отображения $f$).
            Функции одной группы объединяются по две.
            В каждой паре функций производится конъюнкция соответствующих значимых выходов подсхем $\frS_3$ и $\frS_5$
            при помощи двух элементов 2-CNOT. При этом для каждой пары функций используется один дополнительный вход
            для хранения промежуточного результата (см. рис.~\ref{pic_s_6_logarithmic_depth}).
            Таким образом, данный этап требует глубины 2, $n2^{n-k}$ элементов 2-CNOT и $n2^{n-k-1}$ дополнительных входов.
            Затем в каждой из $n$ групп полученных значений происходит суммирование по подулю 2 при помощи элементов CNOT
            с логарифмической глубиной (см. рис.~\ref{pic_logarithmic_xor} и~\ref{pic_s_6_logarithmic_depth}).
            Следовательно, этот этап требует глубины $(n-k-1)$ при использовании $n(2^{n-k-1} - 1)$ элементов CNOT
            и без использования дополнительных входов,
            т.\,к. можно обойтись уже существующими выходами для суммирования по модулю 2.
            
            В итоге получаем подсхему $\frS_6$ с глубиной $D_6 \sim (n-k)$, сложностью $L_6 \sim 3n 2^{n-k-1}$
            и $q_6 \sim n 2^{n-k-1}$ дополнительными входами.
    \end{enumerate}
    \noindent
    Отметим, что подсхемы $\frS_1$--$\frS_3$ и $\frS_4$--$\frS_5$ могут работать параллельно,
    т.\,к. они работают с непересекающимися подмножествами множества входов $x_1, \ldots, x_n$
    обратимой схемы $\frS$ (см. рис.~\ref{pic_six_schemes}).

    \Figure[ht]
        \centering
        \includegraphics[scale=1.2]{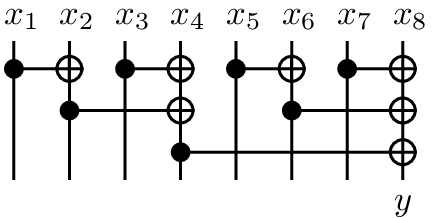}
        \caption
        {
            \small Реализация с помощью обратимой схемы функции $y = x_1 \oplus \ldots \oplus x_8$\\
            с логарифмической глубиной (часть подсхемы $\frS_3$ алгоритма синтеза~\algref{alg_asymp_with_mem_depth_min};\\
            входы схемы сверху).
        }\label{pic_logarithmic_xor}
    \end{figure}

    \Figure[ht]
        \centering
        \includegraphics[scale=1.2]{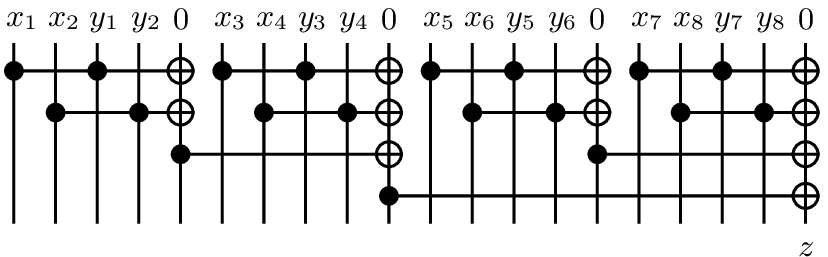}
        \caption
        {
            \small Реализация с помощью обратимой схемы функции $z = \bigoplus_{i=1}^8{x_i \wedge y_i}$\\
            с логарифмической глубиной (часть подсхемы $\frS_6$ алгоритма синтеза~\algref{alg_asymp_with_mem_depth_min};\\
            входы схемы сверху).
        }\label{pic_s_6_logarithmic_depth}
    \end{figure}

    Будем искать значения параметров $k$ и $s$, удовлетворяющие следующим условиям:
    $$
        \left\{
            \begin{array}{lr}
                k + s = n \; , & \\
                1 \leqslant k < n \; , & \\
                1 \leqslant s < n \; , & \\
                2^k \mathop / s \geqslant \psi(n) \;, & \text{где $\psi(n)$~--- некоторая растущая функция} \; .
            \end{array}
        \right.
    $$
    В этом случае $p = \lceil 2^k \mathop / s \rceil \sim 2^k \mathop / s$.
                
    Суммируя глубины, сложности и количество дополнительных входов всех подсхем $\frS_1$--$\frS_6$,
    мы получаем следующие оценки для характеристик обратимой схемы $\frS$.
    
    Глубина
    \begin{gather}
        D(\frS) \sim \max(k + s + n - k + \log_2 p \;;\; n-k + \log_n) + n - k  \; ,  \notag \\
        D(\frS) \sim 2n + s \label{formula_depth_general_upper_bound} \; .
    \end{gather}
    
    Cложность
    \begin{gather}
        L(\frS) \sim 3 \cdot 2^k + 3p2^s + (2p-1)n2^{n-k} + 3 \cdot 2^{n-k} + n2^{n-k} + 3n2^{n-k-1}  \; , \notag \\
        L(\frS) \sim 3 \cdot \frac{2^n}{2^s} + \frac{3 \cdot 2^n}{s} + \frac{n2^{n+1}}{s} \sim \frac{n2^{n+1}}{s}  \; .
            \label{formula_complexity_for_depth_general_upper_bound}
    \end{gather}
    
    Количество дополнительных входов
    \begin{gather}
        Q(\frS) \sim 3 \cdot 2^k + p2^{s+1} + pn2^{n-k} + 3 \cdot 2^{n-k} + n2^{n-k} + n2^{n-k-1}  \; , \notag \\
        Q(\frS) \sim 3 \cdot \frac{2^n}{2^s} + \frac{2^{n+1}}{s} + \frac{n2^n}{s} \sim \frac{n2^n}{s}  \; .
            \label{formula_memory_for_depth_general_upper_bound}
    \end{gather}
    
    Из описания алгоритма~\algref{alg_asymp_with_mem_depth_min} следует, что
    \begin{align*}
        \LC_1 & \sim 2^{k+1} \; ,         & \LT_1 &\sim 2^k \; , \\
        \LC_2 & \sim 3p2^s \; ,           & \LT_2 &= 0 \; , \\
        \LC_3 & \sim pn 2^{n-k+1} \; ,    & \LT_3 &= 0 \; , \\
        \LC_4 & \sim 2^{n-k+1} \; ,       & \LT_4 &\sim 2^{n-k} \; , \\
        \LC_5 & \sim n2^{n-k} \; ,        & \LT_5 &= 0 \; , \\
        \LC_6 & \sim n2^{n-k-1} \; ,      & \LT_6 &= n 2^{n-k} \; .
    \end{align*}
    Отсюда следует, что
    \begin{align}
        \LC(\frS) &\sim 2^{k+1} + \frac{n2^{n+1}}{s} \sim \frac{n2^{n+1}}{s}
            \label{formula_lc_complexity_for_depth_general_upper_bound} \; , \\
        \LT(\frS) &\sim 2^k + n2^{n-k}
            \label{formula_lt_complexity_for_depth_general_upper_bound} \; .
    \end{align}

    Пусть $k = \lceil n \mathop / \phi(n) \rceil$, где $\phi(n)$~--- любая сколь угодно медленно растущая функция такая,
    что $\phi(n) = o(n)$.
    Тогда $s = n - \lceil n \mathop / \phi(n)\rceil$ и
    $$
        2^k \geqslant s\psi(n) \Rightarrow k \geqslant \log_2 s + \log_2 \psi(n) \Rightarrow
            \phi(n) \leqslant \frac{n}{\log_2 s + \log_2 \psi(n) - 1} \; .
    $$
    Мы всегда можем выбрать любые сколь угодно медленно растущие функции $\phi(n)$ и $\psi(n)$ такие,
    что $\phi(n) \leqslant n \mathop / (\log_2 n + \log_2 \psi(n))$.
    Таким образом, мы получаем следующие оценки для характеристик обратимой схемы $\frS$:
    \begin{gather*}
        D(\frS) \sim 2n + n - \lceil n \mathop / \phi(n)\rceil \sim 3n \; , \\
        L(\frS) \sim \LC(\frS) \sim \frac{n2^{n+1}}{n - \lceil n \mathop / \phi(n)\rceil} \sim 2^{n+1} \; , \\
        \LT(\frS) \sim 2^{\lceil n \mathop / \phi(n)\rceil} + n2^{n - \lceil n \mathop / \phi(n)\rceil}
            \sim n2^{n - \lceil n \mathop / \phi(n)\rceil} \; , \\
        Q(\frS) \sim \frac{n2^n}{n - \lceil n \mathop / \phi(n)\rceil} \sim 2^n \; .
    \end{gather*}
    \noindent
    Поскольку описанный алгоритм~\algref{alg_asymp_with_mem_depth_min} позволяет синтезировать обратимую схему $\frS$
    для любого отображения $f \in F(n,q)$, можно утверждать, что $D(n,q_1) \leqslant D(\frS) \sim 3n$, где $q_1 \sim 2^n$.
    
    Также, синтезируемая схема $\frS$ имеет сложность $L(\frS) \sim 2^{n+1}$
    и квантовый вес $W(\frS) \sim \WC \cdot 2^{n+1} + \WT \cdot n2^{n - \lceil n \mathop / \phi(n)\rceil}$,
    где $\phi(n)$ и $\psi(n)$~--- любые сколь угодно медленно растущие функции такие,
    что $\phi(n) \leqslant n \mathop / (\log_2 n + \log_2 \psi(n))$.
\end{proof}

В следующей теореме показывается, что можно снизить асимптотическую глубину обратимой схемы с $3n$ до $2n$.
\begin{theorem}[о глубине обратимой схемы с дополнительными входами]\label{theorem_depth_upper_with_memory_2n}
    $$
        D(n,q_2) \lesssim 2n \text{ \,при\, } q_2 \sim \phi(n)2^n  \; ,
    $$
    где $\phi(n)$~--- любая сколь угодно медленно растущая функция такая, что $\phi(n) = o(n)$.
    Обратимая схема $\frS$, реализующая отображение $f \in F(n, q_2)$ с глубиной $D(\frS) \sim 2n$,
    имеет сложность $L(\frS) \sim \phi(n)2^{n+1}$
    и квантовый вес $W(\frS) \sim \WC \cdot \phi(n)2^{n+1} + \WT \cdot 2^{n - \lceil n \mathop / \phi(n)\rceil}$.
\end{theorem}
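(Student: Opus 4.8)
The plan is to re-use the synthesis algorithm \algref{alg_asymp_with_mem_depth_min} from the proof of Theorem~\ref{theorem_depth_upper_with_memory_3n}, with its two free parameters $k$ and $s$ rebalanced. There the choice $k=\lceil n\mathop/\phi(n)\rceil$, $s=n-k$ forced $s\sim n$ and hence $D(\frS)\sim 2n+s\sim 3n$; here I would instead put $s=\lceil n\mathop/\phi(n)\rceil$ and $k=n-s$, so that $s=o(n)$ (since $\phi(n)\to\infty$) while $k\sim n$. The estimates~\eqref{formula_depth_general_upper_bound}, \eqref{formula_complexity_for_depth_general_upper_bound}, \eqref{formula_memory_for_depth_general_upper_bound}, \eqref{formula_lc_complexity_for_depth_general_upper_bound} and~\eqref{formula_lt_complexity_for_depth_general_upper_bound} were derived for an arbitrary admissible pair $(k,s)$ with $k+s=n$, $1\leqslant k<n$, $1\leqslant s<n$, $2^k\mathop/s\geqslant\psi(n)$, so the argument should reduce to verifying admissibility and substituting.

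First I would check that the new parameters are admissible: for all large $n$ one has $1\leqslant s=\lceil n\mathop/\phi(n)\rceil<n$ and $1\leqslant k=n-s<n$, and since $n-s\to\infty$ the ratio $2^k\mathop/s=2^{n-s}\mathop/s$ grows superpolynomially, so $2^k\mathop/s\geqslant\psi(n)$ holds with, e.g., $\psi(n)=n$, and $p=\lceil 2^k\mathop/s\rceil\sim 2^k\mathop/s$. Substituting $s=\lceil n\mathop/\phi(n)\rceil=o(n)$ into~\eqref{formula_depth_general_upper_bound} then gives $D(\frS)\sim 2n+s\sim 2n$, and into~\eqref{formula_memory_for_depth_general_upper_bound} gives $Q(\frS)\sim n2^n\mathop/s\sim\phi(n)2^n$ (using $\lceil n\mathop/\phi(n)\rceil\sim n\mathop/\phi(n)$). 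Hence $D(n,q_2)\leqslant D(\frS)\sim 2n$ with $q_2\sim\phi(n)2^n$.

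The remaining quantities follow the same way: \eqref{formula_complexity_for_depth_general_upper_bound} and~\eqref{formula_lc_complexity_for_depth_general_upper_bound} give $L(\frS)\sim\LC(\frS)\sim n2^{n+1}\mathop/s\sim\phi(n)2^{n+1}$, and~\eqref{formula_lt_complexity_for_depth_general_upper_bound} gives $\LT(\frS)\sim 2^k+n2^{n-k}=2^{n-\lceil n\mathop/\phi(n)\rceil}+n2^{\lceil n\mathop/\phi(n)\rceil}$; since $\lceil n\mathop/\phi(n)\rceil=o(n)$, for large $n$ we have $n-\lceil n\mathop/\phi(n)\rceil\geqslant\lceil n\mathop/\phi(n)\rceil+\log_2 n$, so the first summand dominates and $\LT(\frS)\sim 2^{n-\lceil n\mathop/\phi(n)\rceil}$. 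By~\eqref{formula_quantum_weigth_as_sum} this yields $W(\frS)=\WC\cdot\LC(\frS)+\WT\cdot\LT(\frS)\sim\WC\cdot\phi(n)2^{n+1}+\WT\cdot 2^{n-\lceil n\mathop/\phi(n)\rceil}$, the claimed value. As before, these bounds are for circuits that may leave computational garbage on non-significant outputs; passing to a garbage-free circuit of the form $\frS*\frS_**\frS^{-1}*\frS^{-1}_*$ multiplies depth, complexity and weight by at most $4$ and so preserves the asymptotics.

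The one step I expect to require genuine care, rather than mechanical substitution, is re-examining the depth expression $\max\bigl(k+s+(n-k)+\log_2 p\,;\,(n-k)+\log_2 n\bigr)+(n-k)$ under $k\sim n$: one must confirm that $\log_2 p\sim k-\log_2 s\sim n-s$, so that the first argument equals $n+s+\log_2 p\sim 2n-\log_2 s\sim 2n$ and still exceeds the second argument $s+\log_2 n=o(n)$, and that the trailing term $(n-k)=s$ contributes only $o(n)$. This is exactly the place where swapping the roles of $k$ and $s$ relative to Theorem~\ref{theorem_depth_upper_with_memory_3n} could a priori change which branch of the maximum dominates; once this is settled, the rest is a routine re-derivation inside the framework already established.
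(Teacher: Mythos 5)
Your proposal is correct and follows essentially the same route as the paper: the paper's proof likewise sets $s=\lceil n\mathop/\phi(n)\rceil$, $k=n-s$, verifies that $\psi(n)\leqslant 2^k\mathop/s$ can be satisfied, and reads off $D(\frS)\sim 2n+s\sim 2n$, $Q(\frS)\sim n2^n\mathop/s\sim\phi(n)2^n$, $L(\frS)\sim\LC(\frS)\sim\phi(n)2^{n+1}$ and $\LT(\frS)\sim 2^{n-\lceil n\mathop/\phi(n)\rceil}$ from formulas \eqref{formula_depth_general_upper_bound}--\eqref{formula_lt_complexity_for_depth_general_upper_bound}, which were already derived for an arbitrary admissible pair $(k,s)$. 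Your extra check that the first branch of the maximum in the depth expression still dominates when $k\sim n$ is a sound (and slightly more careful) verification of what the paper takes for granted in citing $D(\frS)\sim 2n+s$.
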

\begin{proof}
    Основано на результатах из доказательства Теоремы~\ref{theorem_depth_upper_with_memory_3n}.

    Пусть $s = \lceil n \mathop / \phi(n) \rceil$, где $\phi(n)$~--- любая сколь угодно медленно растущая функция
    такая, что $\phi(n) = o(n)$.
    В этом случае $k = n - \lceil n \mathop / \phi(n)\rceil$ и
    $$
        \psi(n) \leqslant \frac{2^k}{s} \leqslant \frac{\phi(n)2^{n-o(n)}}{n} \; .
    $$
    Из этого неравенства видно, что для любой сколь угодно медленно растущей функции $\phi(n)$ такой, что $\phi(n) = o(n)$,
    мы всегда можем подобрать растущую функцию $\psi(n)$.
    Из формул \eqref{formula_depth_general_upper_bound}--\eqref{formula_lt_complexity_for_depth_general_upper_bound}
    следует, что для таких значений параметров $k$ и $s$ верны следующие оценки:
    \begin{gather*}
        D(\frS) \sim 2n + \lceil n \mathop / \phi(n) \rceil \sim 2n \; , \\
        L(\frS) \sim \LC(\frS) \sim \frac{n2^{n+1}}{\lceil n \mathop / \phi(n) \rceil} \sim \phi(n)2^{n+1} \; , \\
        \LT(\frS) \sim 2^{n - \lceil n \mathop / \phi(n)\rceil} + n2^{\lceil n \mathop / \phi(n) \rceil}
            \sim 2^{n - \lceil n \mathop / \phi(n)\rceil} \; , \\
        Q(\frS) \sim \frac{n2^n}{\lceil n \mathop / \phi(n) \rceil} \sim \phi(n)2^n \; .
    \end{gather*}

    Поскольку описанный алгоритм~\algref{alg_asymp_with_mem_depth_min} позволяет синтезировать обратимую схему $\frS$
    для любого отображения $f \in F(n,q)$, можно утверждать, что $D(n,q_2) \leqslant D(\frS) \sim 2n$,
    где $q_2 \sim \phi(n)2^n$.

    Также, синтезируемая схема $\frS$ имеет сложность $L(\frS) \sim \phi(n)2^{n+1}$
    и квантовый вес $W(\frS) \sim \WC \cdot \phi(n)2^{n+1} + \WT \cdot 2^{n - \lceil n \mathop / \phi(n)\rceil}$.
\end{proof}

\myparagraph{Общая верхняя оценка сложности и глубины}

В разделe~\ref{subsection_upper_bound_no_memory} и предыдущих параграфах данного раздела были получены
верхние оценки для функции $L(n,q)$ в двух частных случаях: $q = 0$ (Теорема~\ref{theorem_complexity_upper_no_memory})
и $q \sim n2^{n-o(n)}$ (Теорема~\ref{theorem_complexity_upper_with_memory}).
Из Теоремы~\ref{theorem_complexity_lower_bound} следует, что нижняя оценка для функции $L(n,q)$ уменьшается с увеличением значения $q$.
Следовательно, можно предположить, что и верхняя оценка для функции $L(n,q)$ также уменьшается с увеличением значения $q$.
Доказательству этого утверждения и посвящён данный параграф.

Рассмотрим обратимую схему $\frS_n$ на рис.~\ref{pic_lemma_complexity_of_all_conjunctions}
на с.~\pageref{pic_lemma_complexity_of_all_conjunctions},
реализующую все конъюнкции $n$ переменных вида
$x_1^{a_1} \wedge \ldots \wedge x_n^{a_n}$, $a_i \in \ZZ_2$. Схема имеет $n$ значимых входов и $2^n$ значимых выходов;
в неё входят подсхемы $\frS_{\lceil n \mathop / 2 \rceil}$ и $\frS_{\lfloor n \mathop / 2 \rfloor}$,
реализующие конъюнкции от меньшего числа переменных.
Если все $2^n$ конъюнкций на значимых выходах основной схемы реализовать одновременно, а не по мере необходимости,
то $L(\frS_n) \sim 2^n$ и $Q(\frS_n) \sim 2^n$,
как было доказано в Лемме~\ref{lemma_complexity_of_all_conjunctions_of_n_variables}.
С другой стороны, мы можем конструировать конъюнкции по мере необходимости по одной, а не все сразу, используя только лишь значимые выходы
подсхем $\frS_{\lceil n \mathop / 2 \rceil}$ и $\frS_{\lfloor n \mathop / 2 \rfloor}$ и всего один дополнительный вход,
который и будет хранить значение нужной нам конъюнкции. После того, как все необходимые операции с этим значимым выходом будут
осуществлены, мы можем его обнулить, применив те же функциональные элементы, что и для его получения, но в обратном порядке.
Таким образом, в рассматриваемом нами случае для получения каждой конъюнкции потребуется не более двух элементов 2-CNOT,
а для получения $t$ конъюнкций (последовательно, по мере необходимости)~--- не более $2t$ элементов 2-CNOT.
Следовательно, $L(\frS_n) \lesssim O(2^{n \mathop / 2}) + 2t$, $Q(\frS_n) \lesssim O(2^{n \mathop / 2}) + 1$.
Такой же подход можно применить к подсхемам
$\frS_{\lceil n \mathop / 2 \rceil}$ и $\frS_{\lfloor n \mathop / 2 \rfloor}$, а также к подсхемам этих подсхем.
Если вообще не хранить промежуточных значений, а конструировать конъюнкции по мере необходимости, имея лишь входы
$x_1, \ldots, x_n, \bar x_1, \ldots, \bar x_n$, то для получения каждой конъюнкции $x_1^{a_1} \wedge \ldots \wedge x_n^{a_n}$,
очевидно, потребуется не более $2(n-1)$ элементов 2-CNOT, а дополнительных входов потребуется всего $(n-1)$ на все конъюнкции.
К примеру, на рис.~\ref{pic_construct_conjunctions_on_demand} показан пример конструирования 
конъюнкции $\bar x_1 \bar x_2 \bar x_3 x_4 x_5 \bar x_6 x_7 x_8$ с
использованием промежуточных значений $\bar x_1 \bar x_2$, $\bar x_3 x_4$, $x_5 \bar x_6$
и $x_7 x_8$ и последующим обнулением значений на незначимых выходах.

\Figure[ht]
    \centering
    \includegraphics[scale=1.2]{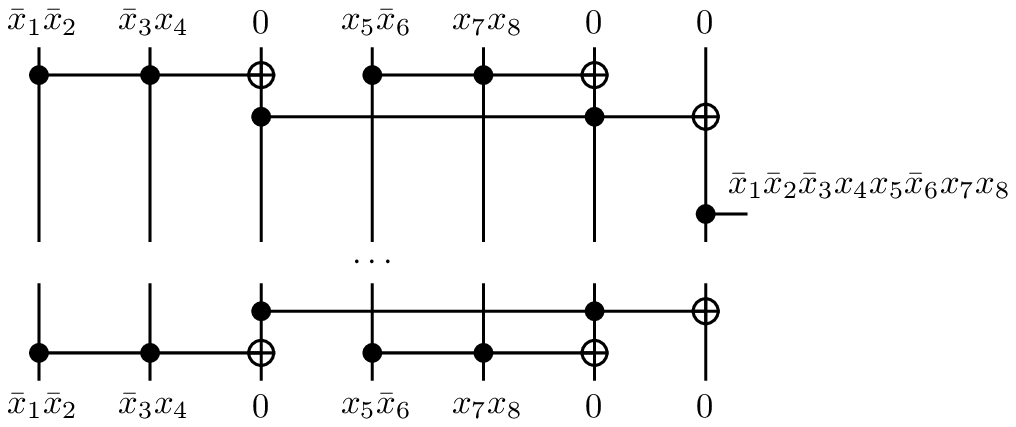}
    \caption
    {
        \small Пример конструирования конъюнкции с использованием промежуточных значений
        и последующим обнулением значений на незначимых выходах (входы схемы сверху).
    }\label{pic_construct_conjunctions_on_demand}
\end{figure}

Рассмотрим в общем случае обратимую схему $\frS_{CONJ(n,q)}$, которая реализует конъюнкции
$n$ переменных вида $x_1^{a_1} \wedge \ldots \wedge x_n^{a_n}$, $a_i \in \ZZ_2$,
при условии, что для хранения промежуточных значений отведено $q$ дополнительных входов,
а значения $\bar x_1, \ldots, \bar x_n$ уже получены ранее.
Обозначим через $L_{CONJ}(n, q, t)$ сложность схемы $\frS_{CONJ(n,q)}$,
реализующей по мере необходимости $t$ конъюнкций, не обязательно различных, причём значение $t$ может быть любым,
в том числе больше $2^n$.
Также обозначим через $Q_{CONJ}(n, q, t)$ общее количество необходимых дополнительных входов для такой обратимой схемы.
Из рассуждений выше можно вывести следующие простые оценки:
\begin{align}
    L_{CONJ}(n, 0, t) & \leqslant 2(n-1)t
        \label{formula_L_CONJ_0}   \; , \\
    Q_{CONJ}(n, 0, t) & = n-1
        \label{formula_Q_CONJ_0}   \; .
\end{align}
Для $q_0 \sim 2^{\lceil n \mathop / 2 \rceil} + 2^{\lfloor n \mathop / 2 \rfloor}$ верны соотношения
\begin{gather*}
    L_{CONJ}(n, q_0, t) \leqslant q_0 + 2t  \; , \\
    Q_{CONJ}(n, q_0, t) \leqslant q_0 + 1  \; .
\end{gather*}

Выведем зависимость значения функции $L_{CONJ}(n, q, t)$ от значения $q$.
\begin{lemma}\label{lemma_L_CONJ_bound}
    Для любого значения $q$ такого, что $2n < q < 2n2^{4n}$, верны соотношения
    \begin{align}
        L_{CONJ}(n, q, t) & \leqslant q + \frac{8nt}{\log_2 q - \log_2 n - 1}
            \label{formula_L_CONJ_q}   \; , \\
        Q_{CONJ}(n, q, t) & \leqslant q + n - 1
            \label{formula_Q_CONJ_q}   \; .
    \end{align}
\end{lemma}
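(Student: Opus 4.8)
The plan is to realise $\frS_{CONJ(n,q)}$ by a block--decomposition scheme, exactly in the spirit of Figure~\ref{pic_construct_conjunctions_on_demand}. I would partition the $n$ variables into $b$ consecutive blocks $B_1,\dots,B_b$, each of size at most $\lceil n/b\rceil$, where the integer $b$ is to be pinned down in terms of $q$ only at the very end. Applying Lemma~\ref{lemma_complexity_of_all_conjunctions_of_n_variables} to each block separately (the inversions $\bar x_1,\dots,\bar x_n$ being already available, so the cost $2n$ for them in that lemma is not incurred), I would build once, and keep throughout, all $2^{|B_i|}$ conjunctions of the variables of $B_i$ taken with arbitrary signs. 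This ``precompute'' part uses $\sim 2^{|B_i|}$ elements and $\sim 2^{|B_i|}$ additional inputs per block, hence at most $c\,b\,2^{\lceil n/b\rceil}$ elements and additional inputs in total, for an absolute constant $c$; the choice of $b$ will guarantee that this does not exceed $q$.

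The $t$ requested conjunctions are then produced on demand. Any conjunction $x_1^{a_1}\wedge\dots\wedge x_n^{a_n}$ is the conjunction of the $b$ block--conjunctions it induces on $B_1,\dots,B_b$, and each of those already sits on a significant output of the precompute part. To assemble it on a fresh line I form the partial products $c_1\wedge c_2,\ c_1\wedge c_2\wedge c_3,\ \dots$ by $b-1$ elements $2$-CNOT, use the last of them, and then undo those $b-1$ elements in reverse order to clear the lines again. The same $b-1\le n-1$ additional inputs serve all $t$ requests, so this part costs $2(b-1)t$ elements and $b-1$ additional inputs. Adding the precompute part gives $L_{CONJ}(n,q,t)\le q+2(b-1)t$ and $Q_{CONJ}(n,q,t)\le q+n-1$, which is already~\eqref{formula_Q_CONJ_q}.

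It remains to fix $b$. Given $q$ in the stated range, I would take the block size to be about $\log_2 q-\log_2 n-1$, i.e.\ $b=\bigl\lceil n/(\log_2 q-\log_2 n-1)\bigr\rceil$, clamped into $[1,n]$; at the upper end ($q$ close to $2n2^{4n}$) this degenerates to $b=1$, where the precompute already realises all $2^n$ conjunctions and the on-demand cost vanishes, and at the lower end ($q$ close to $2n$) it degenerates to $b=n$, i.e.\ unit blocks, no precompute at all, recovering~\eqref{formula_L_CONJ_0}. A short computation using $\lceil n/b\rceil\le n/b+1$ and $n/b\le\log_2 q-\log_2 n-1$ shows $2^{\lceil n/b\rceil}\le q/n$, whence $c\,b\,2^{\lceil n/b\rceil}\le q$ (the precompute fits), while $2(b-1)\le 8n/(\log_2 q-\log_2 n-1)$; in the residual small--$q$ subcases the crude bound $2(n-1)t$ of~\eqref{formula_L_CONJ_0} is itself dominated by $8nt/(\log_2 q-\log_2 n-1)$. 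Putting the cases together yields~\eqref{formula_L_CONJ_q}.

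I expect the only real difficulty to be this last, purely arithmetical, step: making the constant $8$ and the denominator $\log_2 q-\log_2 n-1$ valid uniformly over the whole interval $2n<q<2n2^{4n}$. This forces one to fix the non-asymptotic constant $c$ hidden in Lemma~\ref{lemma_complexity_of_all_conjunctions_of_n_variables}, to keep track of the ceilings in $\lceil n/b\rceil$ and of the clamping of the integer $b$ (in particular checking $b\le n$ always, so that the ``$n-1$'' in~\eqref{formula_Q_CONJ_q} is legitimate), and to glue together the ``one big block'' regime, the generic regime, and the ``many unit blocks'' regime. Note that the hypothesis $q>2n$ is exactly what makes the denominator positive, and $q<2n2^{4n}$ is exactly what keeps $\log_2 q-\log_2 n-1<4n$, so that the ideal block count $n/(\log_2 q-\log_2 n-1)$ stays above $1/4$ and the clamp to $b\ge 1$ is harmless.
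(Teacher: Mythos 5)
Your plan is correct and rests on the same core mechanism as the paper's proof: spend the $q$ spare inputs on precomputing and permanently storing all signed conjunctions over small groups of variables of size about $\log_2 q-\log_2 n$, then build each of the $t$ requested conjunctions on demand out of the stored group-conjunctions and immediately uncompute, so that a single workspace of at most $n-1$ lines serves all $t$ requests; this is exactly what produces the shape $q+O(b)\cdot t$ and the bound~\eqref{formula_Q_CONJ_q}. Where you differ is in the bookkeeping. The paper keeps the recursive binary-tree structure of $\frS_{CONJ}$ (levels $k=1,\dots,K$ with $K=\lceil\log_2 n\rceil$), bounds the cost of level $k$ by $\Delta_{s}=\frac{2n}{2^{s}}\cdot 2^{2^{s+1}}$ with $s=K-k$, locates the given $q$ in the window $\frac{3n}{2^s}2^{2^{s+1}}\leqslant q<\frac{3n}{2^{s+1}}2^{2^{s+2}}$ to decide how many top levels to store, and assembles the remaining $r$ levels on demand as a balanced tree of $2^r-1$ partial products; the inequality $2^{r+1}<8n/(\log_2 q-\log_2 n-1)$ then falls out of the right-hand end of that window. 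You flatten this into a one-shot partition into $b$ equal blocks with $b$ chosen directly from $q$, invoke Lemma~\ref{lemma_complexity_of_all_conjunctions_of_n_variables} per block as a black box, and assemble by a chain of $b-1$ partial products. For the complexity bound the two are equivalent and your version is arithmetically more transparent; note, however, that the paper's tree-shaped assembly has depth $r\approx\log_2 b$ rather than $b-1$, and it is precisely this logarithmic depth that is reused verbatim in Lemma~\ref{lemma_D_CONJ_bound}, so a chain would cost you there. The one step you must not wave away is the implication ``$2^{\lceil n/b\rceil}\leqslant q/n$, whence $c\,b\,2^{\lceil n/b\rceil}\leqslant q$'': it additionally needs $cb\leqslant n$, which fails when $\log_2 q-\log_2 n-1$ is small and $b$ creeps up towards $n$. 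Your fallback to the unit-block bound $2(n-1)t$ covers exactly that regime (it is dominated by $8nt/(\log_2 q-\log_2 n-1)$ once the denominator is at most about $4$), and the two regimes overlap, so the glue does exist --- but it has to be written out, as you yourself anticipate.
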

\begin{proof}
    Соотношение $Q_{CONJ}(n, q, t) \leqslant q + n - 1$ следует из соотношения~\eqref{formula_Q_CONJ_0}.

    \Figure[ht]
        \centering
        \includegraphics[scale=1.2]{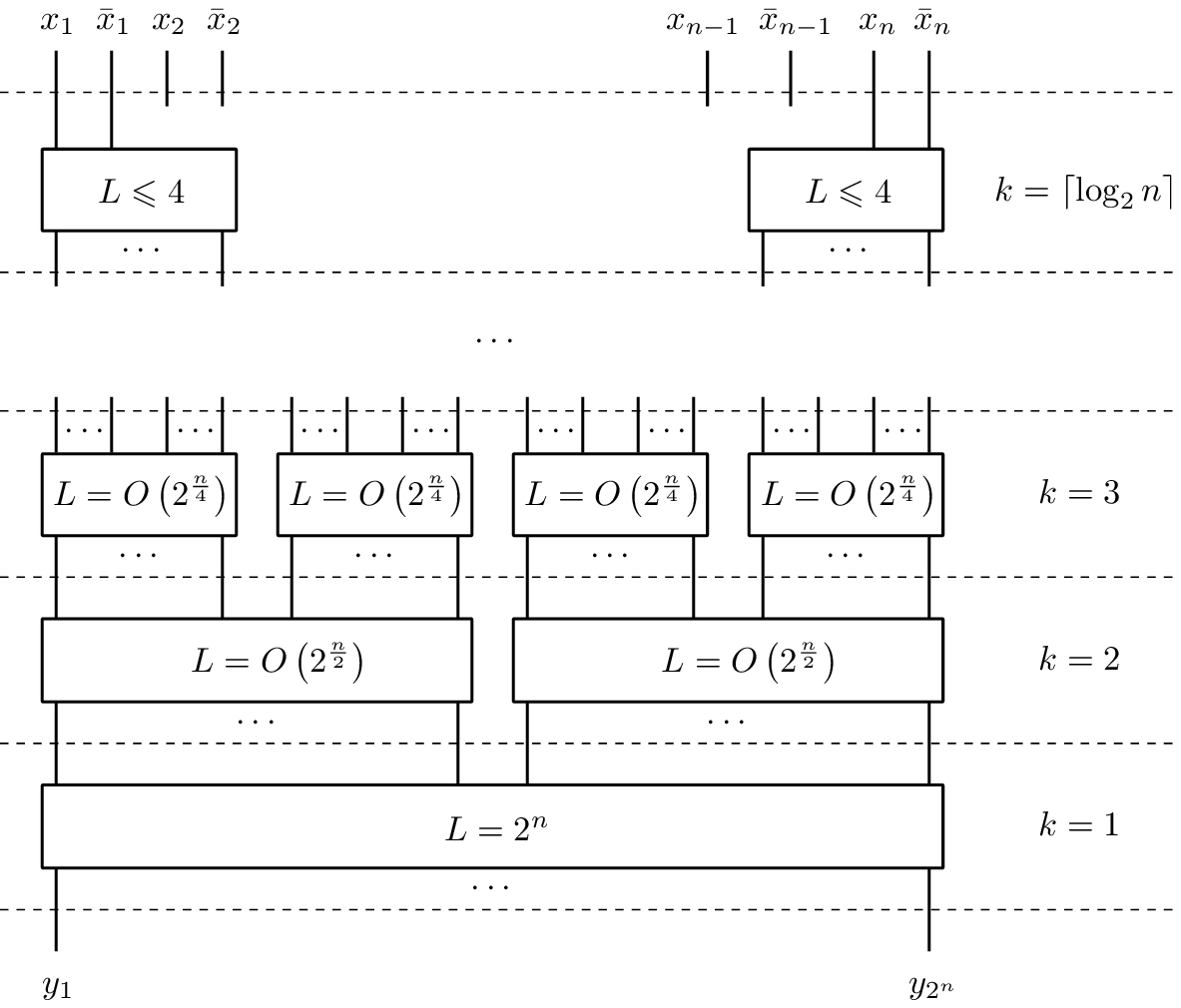}
        \caption
        {
            \small Общая структура обратимой схемы $\frS_{CONJ(n,q)}$ (входы схемы сверху).
        }\label{pic_S_CONJ_structure}
    \end{figure}

    Рассмотрим структуру искомой обратимой схемы $\frS_{CONJ(n,q)}$ на рис.~\ref{pic_S_CONJ_structure}:
    она разбита на $K = \lceil \log_2 n \rceil$ уровней, нумерация ведётся снизу вверх.
    На уровне номер $k$ расположены $2^{k-1}$ обратимых подсхем, все они имеют примерно одинаковое количество значимых входов и выходов
    и реализуют все конъюнкции от некоторого подмножества переменных $x_1, \ldots, x_n$, причём подмножества для разных схем
    одного уровня не пересекаются, их объединение равно всему множеству $\{\,x_1, \ldots, x_n\,\}$, а мощности
    данных подмножеств примерно равны.

    \Table[ht]
        \centering
        \begin{tabular}{|m{1.7 cm}|m{2 cm}|m{3 cm}|m{2.4 cm}|m{5.5 cm}|}
            \hline
            \centering \textbf{Уровень} &
            \centering \textbf{Подсхема} &
            \centering \textbf{Количество входов} &
            \centering \textbf{Количество выходов} &
            \centering \textbf{Подмножество переменных, для которых реализованы все конъюнкции} \tabularnewline
            \hline

            \centering 1 &
            \centering $\frS_{1;1}$ &
            \centering $2^3 + 2^4$ &
            \centering $2^7$ &
            \centering $\{\,x_1, \ldots, x_7\,\}$ \tabularnewline
            \hline

            \multirow{2}{1.7 cm}{\centering 2} &
            \centering $\frS_{2;1}$ &
            \centering $2^1 + 2^2$ &
            \centering $2^3$ &
            \centering $\{\,x_1, x_2, x_3\,\}$ \tabularnewline
            \cline{2-5}

            &
            \centering $\frS_{2;2}$ &
            \centering $2^2 + 2^2$ &
            \centering $2^4$ &
            \centering $\{\,x_4, x_5, x_6, x_7\,\}$ \tabularnewline
            \hline

            \multirow{4}{1.7 cm}{\centering 3} &
            \centering $\frS_{3;1}$ &
            \centering 2 ($x_1$, $\bar x_1$) &
            \centering $2^1$ &
            \centering $\{\,x_1\,\}$ \tabularnewline
            \cline{2-5}

            &
            \centering $\frS_{3;2}$ &
            \centering 4 ($x_2$, $\bar x_2$, $x_3$, $\bar x_3$) &
            \centering $2^2$ &
            \centering $\{\,x_2, x_3\,\}$ \tabularnewline
            \cline{2-5}

            &
            \centering $\frS_{3;3}$ &
            \centering 4 ($x_4$, $\bar x_4$, $x_5$, $\bar x_5$) &
            \centering $2^2$ &
            \centering $\{\,x_4, x_5\,\}$ \tabularnewline
            \cline{2-5}

            &
            \centering $\frS_{3;4}$ &
            \centering 4 ($x_6$, $\bar x_6$, $x_7$, $\bar x_7$) &
            \centering $2^2$ &
            \centering $\{\,x_6, x_7\,\}$ \tabularnewline
            \hline
        \end{tabular}
        \caption
        {
            Описание структуры обратимой схемы $\frS_{CONJ}$ при $n=7$.
        }\label{table_S_CONJ_example}
    \end{table}
        
    Для пояснения структуры схемы $\frS_{CONJ}$ рассмотрим частный её случай для $n = 7$.
    Схема имеет $K = 3$ уровня, каждый её уровень расписан в Таблице~\ref{table_S_CONJ_example}.
    Из данной таблицы видно, что если некоторая подсхема $\frS_{k;i}$ на уровне $k$ имеет $2^m$ значимых выходов,
    то на уровне $(k+1)$ есть ровно две подсхемы $\frS_{k+1;j}$ и $\frS_{k+1;j+1}$, подключённые к ней,
    первая из которых имеет $2^{\lfloor m \mathop / 2 \rfloor}$ значимых выходов, а вторая~---
    $2^{\lceil m \mathop / 2 \rceil}$ значимых выходов.
    Структура подсхемы $\frS_{k;i}$ проста: она реализует конъюнкции каждого значимого выхода подсхемы $\frS_{k+1;j}$ с каждым значимым
    выходом подсхемы $\frS_{k+1;j+1}$ (см. рис.~\ref{pic_sub_S_CONJ_structure}). Следовательно, сложность такой подсхемы будет равна
    $2^{\lfloor m \mathop / 2 \rfloor} \cdot 2^{\lceil m \mathop / 2 \rceil} = 2^m$ (используются только элементы 2-CNOT).
    
    \Figure[ht]
        \centering
        \includegraphics[scale=1.2]{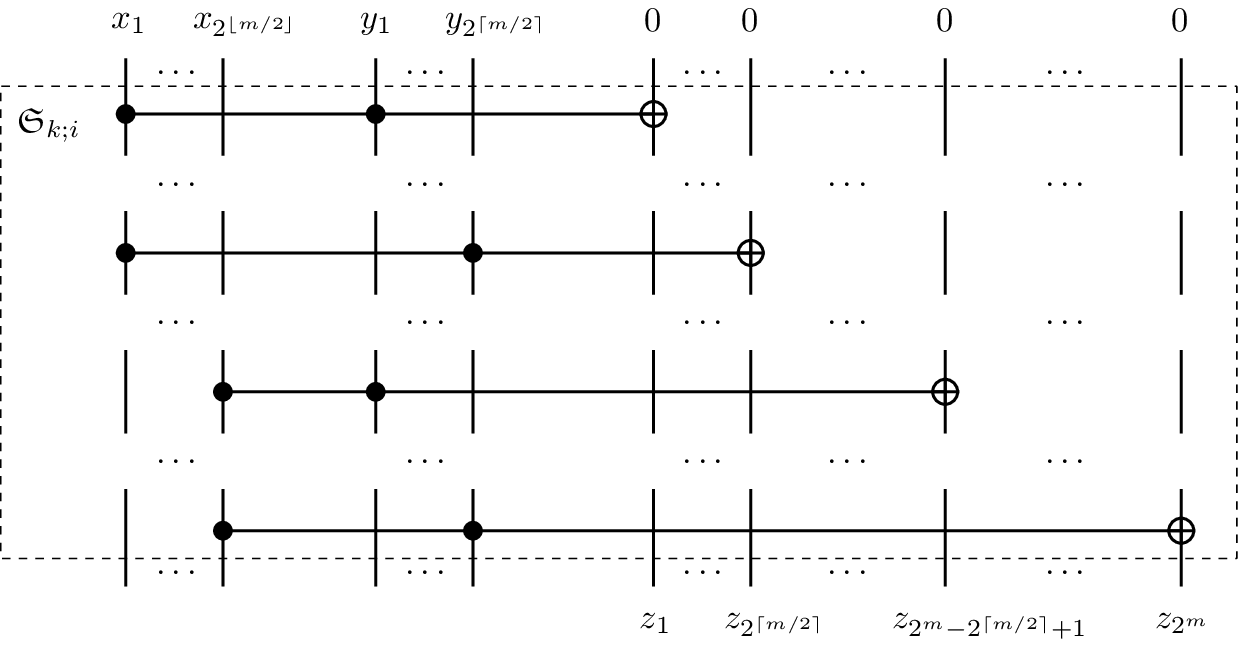}
        \caption
        {
            \small Структура подсхемы $\frS_{k;i}$ обратимой схемы $\frS_{CONJ}$ (входы схемы сверху).
        }\label{pic_sub_S_CONJ_structure}
    \end{figure}

    Вернёмся к общей схеме $\frS_{CONJ}$. Нам дано $q$ дополнительных входов для хранения промежуточных значений.
    Разумнее всего потратить их для хранения значений на выходах подсхем самых высоких уровней, поскольку видно, что чем меньше
    уровень схемы $\frS_{CONJ}$, тем больше требуется дополнительных входов для хранения промежуточных значений.
    
    Рассмотрим случай, когда мы имеем возможность хранить все промежуточные значения.
    Обозначим через $L_k$ количество элементов на $k$-м уровне схемы. К примеру, $L_1 = 2^n$,
    $L_2 = 2^{\lfloor n \mathop / 2 \rfloor} + 2^{\lceil n \mathop / 2 \rceil}$.

    Оценим значение $L_k$. Поскольку
    $$
        \left\lceil \frac{n}{2} \right\rceil = \left\lfloor \frac{n + 1}{2} \right\rfloor \leqslant \frac{n + 1}{2}  \; ,
    $$
    то
    $$
        L_2 \leqslant 2 \cdot \max\left(2^{\left\lfloor \frac{n}{2} \right\rfloor}, 2^{\left\lceil \frac{n}{2} \right\rceil} \right) =
            2 \cdot 2^{\left\lceil \frac{n}{2} \right\rceil} \leqslant 2 \cdot 2 ^ {\frac{n}{2} + \frac{1}{2}}  \; .
    $$
    Отсюда следует, что
    \begin{gather*}
        L_3 \leqslant 4 \cdot 2 ^ {\frac{n}{4} + \frac{1}{4} + \frac{1}{2}}  \; ,  \\
        L_4 \leqslant 8 \cdot 2 ^ {\frac{n}{8} + \frac{1}{8} + \frac{1}{4} + \frac{1}{2}}  \; ,  \\
        L_k \leqslant 2^k \cdot 2 ^ {n \mathop / 2^{k-1}}  \; .
    \end{gather*}
    Обозначим $\delta_k = 2^k \cdot 2 ^ {n \mathop / 2^{k-1}}$. Значение переменной $k$ лежит в диапазоне
    $[1, \ldots, K]$, $K = \lceil \log_2 n \rceil$, $k \in \mathbb N$.
    Сделаем переобозначение переменной: $k = K - s$, тогда $s = s(k) = K - k$. Если $k$ обозначает номер уровня схемы
    при нумерации от выходов ко входам (снизу вверх), то $(s+1)$ будет означать номер уровня схемы при нумерации от входов к выходам
    (сверху вниз).
    Значение переменной $s$ лежит в диапазоне $[0, \ldots, K - 1]$, $s \in \mathbb Z_+$.
    В этом случае
    $$
        \delta_k = \frac{2^K}{2^s} \cdot 2^{(2n \cdot 2^s) \mathop / 2^K}
            \leqslant \frac{2n}{2^s} \cdot 2^{2^{s+1}} = \Delta_s  \; .
    $$
    Следовательно, мы получили цепочку неравенств
    $$
        L_k \leqslant \delta_k \leqslant \Delta_{s(k)} = \frac{2n}{2^s} \cdot 2^{2^{s+1}}  \; .
    $$
    
    Выпишем первые члены ряда $\{\,\Delta_{s(k)}\,\}$: $\{\,8n, 16n, 128n, \ldots\,\}$. Видно, что с ростом $s$ значение
    $\Delta_s$ растёт всё быстрее. Более того, можно утверждать, что для любого $s \geqslant 1$ верно соотношение
    $$
        \sum_{i=0}^{s-1} {\Delta_i} \leqslant \frac{\Delta_s}{2}  \; .
    $$
    Отсюда следует, что
    \begin{gather}
        \sum_{i=0}^s {\Delta_i} \leqslant \frac{3\Delta_s}{2}  \; , \notag \\
        \sum_{i=K}^{K-s} L_i \leqslant \frac{3n}{2^s} \cdot 2^{2^{s+1}}  \; .
            \label{formula_complexity_of_last_layers}
    \end{gather}
    Другими словами, сумма сложностей всех подсхем на последних $(s+1)$ уровнях (при нумерации снизу вверх) не превышает
    $3n \cdot 2^{-s + 2^{s+1}}$.
    
    Вернёмся снова к общей схеме $\frS_{CONJ}$.
    Из рис.~\ref{pic_construct_conjunctions_on_demand} видно,
    что для конструирования по мере необходимости одного значимого выхода схемы $\frS_{CONJ}$ на первых $r$ уровнях
    будет использовано не более $(1 + 2 + 4 + \ldots + 2^{r-1})=(2^r - 1)$ элементов 2-CNOT. Столько же \gate{} потребуется для обнуления
    значений на незначимых выходах. Следовательно, при условии, что количество уровней, для которых подсхемы
    надо конструировать по мере необходимости, не превышает $r$, верно соотношение
    \begin{equation}
        L_{CONJ}(n, q, t) \leqslant q + 2(2^r - 1)\cdot t \leqslant q + t \cdot 2^{r+1}  \; .
            \label{formula_bound_for_L_CONJ_with_r}
    \end{equation}
    Наличие слагаемого $q$ в данном соотношении, очевидно, следует из того факта, что для получения значения на одном выходе
    любой подсхемы $\frS_{k;i}$ требуется ровно один элемент 2-CNOT.
    Если мы можем хранить не более $q$ промежуточных значений на выходах подсхем,
    то для их хранения потребуется не более $q$ элементов 2-CNOT.

    Нам требуется оценить значение $r$. Пусть для данного по условию задачи значения $q$ выполняется неравенство
    \begin{equation}
        \frac{3n}{2^s} \cdot 2^{2^{s+1}} \leqslant q < \frac{3n}{2^{s+1}} \cdot 2^{2^{s+2}}  \;
            \label{formula_bounds_for_q_in_S_CONJ}
    \end{equation}
    для некоторого значения $s \in [0, \ldots, K - 1]$, $s \in \mathbb Z_+$. Тогда мы можем утверждать,
    что данного значения $q$ достаточно для хранения значений на всех выходах подсхем на последних $(s+1)$ уровнях
    при нумерации уровней снизу вверх (см. соотношение~\eqref{formula_complexity_of_last_layers}),
    а количество первых уровней, для которых подсхемы нужно конструировать по мере необходимости,
    не превышает $(k-1)$, поскольку $(s+1) = K - (k-1)$.
    Следовательно, $r \leqslant k-1$.
    Из правого неравенства соотношения~\eqref{formula_bounds_for_q_in_S_CONJ} следует, что
    \begin{gather*}
        \log_2 q < \log_2 3 + \log_2 n - s - 1 + 2^{s+2}  \; , \\
        \frac{2^K}{2^k} = 2^s > \frac{\log_2 q - (\log_2 3 + \log_2 n - s - 1)}{4}  \; , \\
        (\log_2 q - (\log_2 3 + \log_2 n - s - 1))2^k < 4 \cdot 2^K  \; .
    \end{gather*}
    Поскольку $K = \lceil \log_2 n \rceil$, $s \geqslant 0$, $r + 1 \leqslant k$, то при $q > 2n$
    \begin{equation}
        2^{r+1} < \frac{8n}{\log_2 q - \log_2 n - 1}  \; .
        \label{formula_for_2_pow_r_in_lemma_L_CONJ_bound}
    \end{equation}
    Из этого неравенства и неравенства~\eqref{formula_bound_for_L_CONJ_with_r}
    следует оценка утверждения Леммы
    $$
        L_{CONJ}(n, q, t) \leqslant q + \frac{8nt}{\log_2 q - \log_2 n - 1}  \; .
    $$
    Величина $r$ является неотрицательной, целочисленной, её минимальное значение равно нулю,
    поэтому из неравенства~\eqref{formula_for_2_pow_r_in_lemma_L_CONJ_bound} следует ограничение $q < 2n2^{4n}$.
\end{proof}
Хотя оценка Леммы~\ref{lemma_L_CONJ_bound} и верна при $q < 2n2^{4n}$,
однако рассматривать схему $\frS_{CONJ}$ с количеством дополнительных входов $q$ таким, что $2^n = o(q)$,
нецелесообразно, поскольку если все $2^n$ конъюнкций на значимых выходах схемы реализовать одновременно,
а не по мере необходимости, то $L(\frS_{CONJ}) \sim 2^n$ и $Q(\frS_{CONJ}) \sim 2^n$,
как было доказано в работе~\cite[Лемма~1]{my_dm_complexity}.
Другими словами, имея $q \lesssim 2^n$ дополнительных входов, мы всегда сможем построить схему $\frS_{CONJ}$,
а увеличение значения $q$ не приведёт к снижению её сложности (следует из построения схемы).

Отметим, что по аналогии со схемой $\frS_{CONJ}$ можно построить схему $\frS_{XOR}$, которая для заданных входов
$x_1, \ldots, x_n$ на своих значимых выходах реализует по мере необходимости $t$ некоторых, не обязательно различных
значений $x_1 \wedge a_1 \oplus \ldots \oplus x_n \wedge a_n$,
$a_i \in \ZZ_2$. Для этого просто надо каждый элемент 2-CNOT в схеме $\frS_{CONJ}$ заменить на два элемента CNOT
(см. рис.~\ref{pic_basis}). Следовательно,
\begin{gather*}
    L_{XOR}(n, q, t) \leqslant 2q + \frac{16nt}{\log_2 q - \log_2 n - 1}  \; , \\
    Q_{XOR}(n, q, t) \leqslant q + n - 1  \; .
\end{gather*}

Теперь мы можем доказать основную теорему данного раздела.
\begin{theorem}[общая верхняя оценка сложности обратимой схемы с дополнительными входами]\label{theorem_L_n_q_bound_for_arbitrary_q}
    Для любого значения $q$ такого, что $8n < q \lesssim 2^{n-\lceil n \mathop / \phi(n)\rceil + 1}$,
    где $\phi(n)$ и $\psi(n)$~--- любые сколь угодно медленно растущие функции такие, что
    $\phi(n) \leqslant n \mathop / (\log_2 n + \log_2 \psi(n))$,
    верно соотношение
    $$
        L(n,q) \lesssim 2^n + \frac{8n2^n}{\log_2 (q-4n) - \log_2 n - 2} \; .
    $$
\end{theorem}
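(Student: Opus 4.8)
The plan is to combine the Lupanov-style synthesis with additional memory (algorithm~\algref{alg_asymp_with_mem_complexity_min} from Theorem~\ref{theorem_complexity_upper_with_memory}) with the ``on-demand'' conjunction construction of Lemma~\ref{lemma_L_CONJ_bound}, so that the amount of memory actually used can be tuned to any prescribed value $q$ in the stated range. Recall that in the proof of Theorem~\ref{theorem_complexity_upper_with_memory} the circuit $\frS$ was built from five subcircuits $\frS_1,\ldots,\frS_5$: $\frS_1$ realizes all conjunctions of the first $k$ variables, $\frS_2$ realizes all functions in the groups $G_i$, $\frS_3$ assembles the $n2^{n-k}$ coordinate functions $f_{i,j}$, $\frS_4$ realizes all conjunctions of the last $n-k$ variables, and $\frS_5$ assembles $f$ itself. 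The dominant memory cost there came from $\frS_1$, $\frS_3$ and $\frS_4$, which store all intermediate conjunctions simultaneously. The idea is to replace $\frS_1$ and $\frS_4$ by the schemes $\frS_{CONJ(k,q_1)}$ and $\frS_{CONJ(n-k,q_2)}$ of Lemma~\ref{lemma_L_CONJ_bound}, which build conjunctions only as needed, using a bounded number $q_i$ of extra inputs and paying an extra $O(k/(\log_2 q_1-\ldots))$ factor per conjunction, and similarly to use $\frS_{XOR}$ in place of the parts of $\frS_2$, $\frS_3$, $\frS_5$ that otherwise demand $\Theta(2^n)$ or $\Theta(n2^{n-k})$ memory.

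First I would fix the parameters: set $k=\lceil n/\phi(n)\rceil$ and $s=n-2k$ exactly as in Theorem~\ref{theorem_complexity_upper_with_memory}, so that $p=\lceil 2^k/s\rceil\sim 2^k/s$ and all the divisibility/size side conditions on $k,s$ hold for $\phi(n)\le n/(\log_2 n+\log_2\psi(n))$. Then I would allocate the given memory budget $q$ to the subcircuits. The number of conjunctions that must be produced, counted with multiplicity, is $t=O(n2^{n-k})$ for $\frS_4/\frS_5$-type work and $O(2^k)$ for $\frS_1$; applying Lemma~\ref{lemma_L_CONJ_bound} with an allotment of roughly $q-4n$ extra inputs gives a contribution of at most $(q-4n)+\frac{8(n-k)\cdot n2^{n-k}}{\log_2(q-4n)-\log_2(n-k)-1}$, and since $\log_2(q-4n)-\log_2(n-k)-1\ge \log_2(q-4n)-\log_2 n -2$ and $(n-k)2^{n-k}\le 2^n$ (because $k\ge 1$), this is at most $q-4n+\frac{8n2^n}{\log_2(q-4n)-\log_2 n-2}$. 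The remaining ``irreducible'' part of the construction — the part of $\frS_2$ realizing the $G_i$, the XOR-assembly in $\frS_3$ and the final $2^{n-k}$-fold conjunctions in $\frS_5$ — contributes, by the computation already carried out in Theorem~\ref{theorem_complexity_upper_with_memory} (with each 2-CNOT possibly expanded into two CNOT's), a term $\sim 2^n$; this is where the leading $2^n$ of the bound comes from. Summing gives $L(n,q)\lesssim 2^n+\frac{8n2^n}{\log_2(q-4n)-\log_2 n-2}$.

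Then I would check the endpoints of the admissible range. The lower bound $q>8n$ is what Lemma~\ref{lemma_L_CONJ_bound} needs (it requires $q-4n>2n$, say, with a small constant margin to make $\log_2(q-4n)-\log_2 n-2$ positive), and the upper bound $q\lesssim 2^{n-\lceil n/\phi(n)\rceil+1}=2^{n-k+1}$ is exactly the point past which additional memory no longer helps: as remarked after Lemma~\ref{lemma_L_CONJ_bound}, once $q$ reaches the order of $2^{n-k}$ all the relevant conjunctions of $n-k$ variables can simply be realized simultaneously, so the construction degenerates into the one of Theorem~\ref{theorem_complexity_upper_with_memory}, whose bound $L(n,q_0)\lesssim 2^n$ is absorbed into the claimed estimate (the second summand being then $o(2^n)$ or simply dominated). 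I would also need to verify that the total number of extra inputs actually used stays $\le n+q+o(q)$ so that we are genuinely bounding $L(n,q)$ and not $L(n,q')$ for some larger $q'$; this is routine from the $Q_{CONJ}$ and $Q_{XOR}$ bounds plus the $O(2^n)$-memory count of the irreducible core, the only subtlety being that the core already uses $\Theta(2^n)$ extra inputs, which is fine since $q\lesssim 2^{n-k+1}$ forces the core's memory to be comparable to or larger than $q$ and hence the effective $q$ is $\Theta(2^n)$ in that regime anyway.

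The main obstacle will be the bookkeeping in the middle paragraph: one has to be careful that the quantity $t$ of on-demand conjunctions, multiplied by the per-conjunction overhead $8(n-k)/(\log_2(q-4n)-\cdots)$ from Lemma~\ref{lemma_L_CONJ_bound}, really collapses to $\frac{8n2^n}{\log_2(q-4n)-\log_2 n-2}$ and not to something larger — in particular that replacing $n-k$ by $n$ in the numerator and $(n-k)2^{n-k}$ by $2^n$ in the product is legitimate and loses nothing asymptotically, and that the $\frS_{XOR}$-based parts, which carry a factor $2$ and a $2q$ rather than $q$ in Lemma~\ref{lemma_L_CONJ_bound}'s analogue, can be folded into the same bound by adjusting the constant inside the logarithm (hence the ``$-2$'' instead of ``$-1$''). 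Everything else — choosing $k,s$, invoking Theorems~\ref{theorem_complexity_upper_with_memory} and the conjunction/XOR lemmas, and checking the range of $q$ — is assembly of results already proved in the excerpt.
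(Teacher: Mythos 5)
You have the right skeleton: the paper's proof is indeed the Lupanov-style decomposition of Theorem~\ref{theorem_complexity_upper_with_memory} with the on-demand subcircuits $\frS_{CONJ}$ and $\frS_{XOR}$ of Lemma~\ref{lemma_L_CONJ_bound} substituted for the memory-hungry stages, the parameters $k=\lceil n/\phi(n)\rceil$ and $s\sim n$, the budget split $q_1=0$, $q_2=q_3$, and the final accounting $q<4n+2q_3$, whence $\log_2 q_3>\log_2(q-4n)-1$ and the ``$-2$'' in the denominator. But the central bookkeeping step, which you yourself flag as the main obstacle, hides a genuine gap rather than a routine verification. Your count $t=O(n2^{n-k})$ for the on-demand conjunctions of the last $n-k$ variables presupposes that the $n2^{n-k}$ coordinate values $f_{i,j}$ are all simultaneously available when those conjunctions are attached, i.e.\ that the subcircuit $\frS_3$ of the original construction survives as a separate stage; but $\frS_3$ stores $n2^{n-k}$ intermediate values, which exceeds the budget $q\lesssim 2^{n-k+1}$, and your attempted fix --- that ``the core already uses $\Theta(2^n)$ extra inputs, which is fine since the effective $q$ is $\Theta(2^n)$ in that regime anyway'' --- would make the theorem vacuous on the stated range of $q$: the whole point is that the entire circuit uses only $q$ additional inputs.

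The paper closes this by eliminating the intermediate values altogether: each term $x_{k+1}^{a_{k+1}}\wedge\ldots\wedge x_n^{a_n}\wedge g_{j_t}$ of formula~\eqref{formula_f_i_with_braces} is attached directly to one of the $n$ output wires by a single 2-CNOT, giving an assembly stage of $pn2^{n-k}$ gates with no storage, and the $p$ blocks $\frS_{3;i}=\frS_{XOR(s,q_3)}$ are processed sequentially, each un-computed (its gates repeated in reverse and the $s$ feeding conjunctions re-realized by $\frS_1$) before the next is built, so the same $q_3$ inputs are reused --- hence the factor $2p\cdot L_{XOR}(s,q_3,t_3)$ in the complexity formula. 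This sequential processing forces a choice of interleaving order, and the paper analyses two: either each of the $2^{n-k}$ conjunctions is rebuilt once per group ($t_2=p2^{n-k}$, $t_3\leqslant n2^{n-k}$), or each group output is built once ($t_2\leqslant pn2^{n-k}\sim n2^n/s$, $t_3\leqslant 2^s$); the second yields $8(n-k)t_2/(\log_2 q_2-\log_2(n-k)-1)\sim 8n2^n/(\log_2(q-4n)-\log_2 n-2)$. Your expression lands on the same final formula only by coincidence: your $t$ is too small by a factor of $p$, and your relaxation $(n-k)2^{n-k}\leqslant 2^n$ is loose by exactly that same factor, so the two errors cancel. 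With a correctly scheduled construction the count $t_2\sim n2^n/s$ is forced and that relaxation must not be applied.
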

\begin{proof}
    Опишем алгоритм синтеза~\nextalg{alg_asymp_with_mem_complexity_min_general_case},
    который является модификацией алгоритма синтеза~\algref{alg_asymp_with_mem_complexity_min} из 
    Теоремы~\ref{theorem_complexity_upper_with_memory} и который предназначен для синтезирования обратимых схем
    в условиях ограничения на количество используемых дополнительных входов.
    
    Напомним, что произвольное булево отображение $f\colon \ZZ_2^n \to \ZZ_2^n$ можно представить в виде некоторых $n$ булевых функций
    $f_i\colon \ZZ_2^n \to \ZZ_2$ от $n$ переменных
    $$
        f(\vv x) = \langle f_1(\vv x), f_2(\vv x), \ldots, f_n(\vv x) \rangle \; ,
    $$
    где
    \begin{equation}\label{formula_f_i_with_braces}
        f_i(\vv x) = \bigoplus_{a_{k+1}, \ldots, a_n \in \ZZ_2}
            \left( \bigoplus_{ \substack{t=1 \ldots p\\ g_{j_t} \in G_t\\ 1 \leqslant j_t \leqslant |G_t|}}
            x_{k+1}^{a_{k+1}} \wedge \ldots \wedge x_n^{a_n} \wedge g_{j_t}(\langle x_1, \ldots, x_k\rangle) \right) \; .        
    \end{equation}
    (Описание внутренних переменных в формуле~\eqref{formula_f_i_with_braces} можно найти в доказательстве
    Теоремы~\ref{theorem_complexity_upper_with_memory} на с.~\pageref{formula_function_decomposition_by_last_variables}.)

    Общая структура обратимой схемы $\frS_f$, которая реализует отображение $f$
    и которая синтезируется алгоритмом~\algref{alg_asymp_with_mem_complexity_min_general_case},
    показана на рис.~\ref{pic_scheme_structure_for_case_of_limited_memory}.
    
    \Figure[!ht]
        \includegraphics[scale=0.87]{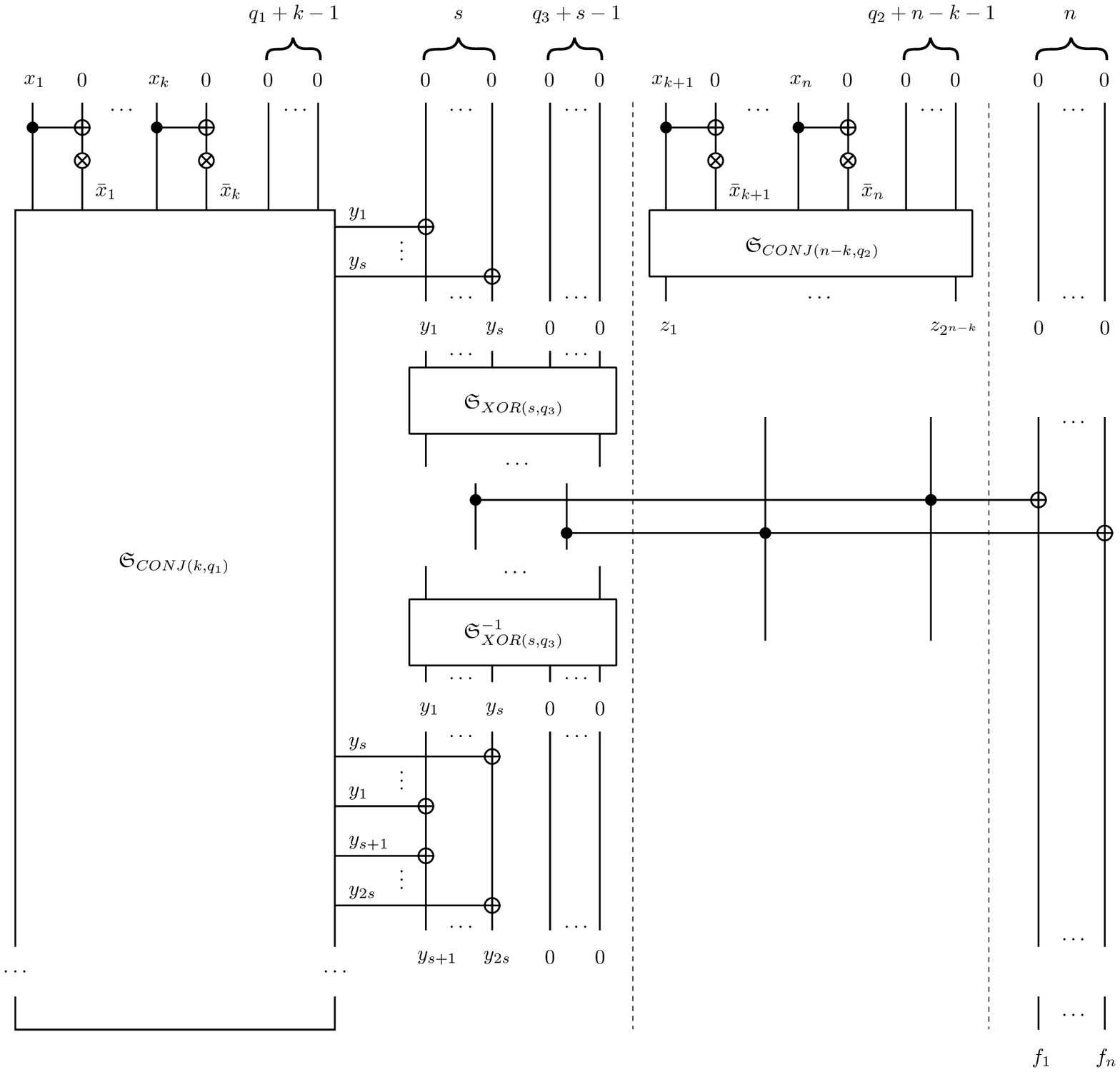}
        \caption
        {
            \small Структура обратимой схемы $\frS_f$, реализующей отображение $f\colon \ZZ_2^n \to \ZZ_2^n$
            в условиях ограничения на количество используемых дополнительных входов (входы схемы сверху).
        }\label{pic_scheme_structure_for_case_of_limited_memory}
    \end{figure}    
    
    Сперва реализуем отрицания для всех входных значений $x_1, \ldots, x_n$ со сложностью $2n$ (по элементу NOT и CNOT на каждый вход),
    задействовав $n$ дополнительных входов.
    
    Разобьём множество значимых входов схемы $x_1, \ldots, x_n$ на две группы: $\{\,x_1, \ldots, x_k\,\}$
    и $\{\,x_{k+1}, \ldots, x_n\,\}$.
    Первую группу входов вместе с их отрицаниями подадим на подсхему $\frS_1 = \frS_{CONJ(k,q_1)}$ для реализации некоторых $t_1$
    конъюнкций $x_1^{a_1} \wedge \ldots \wedge x_k^{a_k}$, $a_i \in \ZZ_2$, отведя данной подсхеме $q_1$ дополнительных входов
    для хранения промежуточных значений.
    Вторую группу входов вместе с их отрицаниями подадим на подсхему $\frS_2 = \frS_{CONJ(n-k,q_2)}$ для реализации некоторых $t_2$
    конъюнкций $x_{k+1}^{a_{k+1}} \wedge \ldots \wedge x_n^{a_n}$, $a_i \in \ZZ_2$, отведя данной подсхеме $q_2$ дополнительных входов
    для хранения промежуточных значений.
    
    Будем реализовывать все $2^k$ различных конъюнкций на значимых выходах подсхемы $\frS_1$ последовательно.
    Полученные значения будем хранить, используя дополнительные входы.
    Как только будут получены очередные $s$ конъюнкций, соответствующие им $s$ значимых выходов
    подаём на значимые входы подсхемы $\frS_{3;i} = \frS_{XOR(s,q_3)}$ для получения значений некоторых $t_3$ функций от переменных
    $x_1, \ldots, x_k$, отведя данной подсхеме $q_3$ дополнительных входов для хранения промежуточных значений.
    Всего будет не более $p = \lceil 2^k \mathop / s \rceil$ различных подсхем $\frS_{3;i}$.
    Как только работа с очередной подсхемой $\frS_{3;i}$ будет закончена, значение на $q_3$ незначимых выходах обнуляем,
    применяя те же самые \gate{}, что и для получения подсхемы, но в обратном порядке.
    Затем обнуляем значения на $s$ значимых выходах,
    служивших значимыми входами подсхеме $\frS_{3;i}$, реализуя ещё раз полученные ранее $s$ конъюнкций при помощи подсхемы $\frS_1$
    (см. рис.~\ref{pic_scheme_structure_for_case_of_limited_memory}).
    Тем самым мы сможем не увеличивать количество используемых дополнительных входов, а использовать одни и те же дополнительные входы,
    увеличивая однако при этом сложность соответствующих подсхем в два раза.
    
    Из формулы~\eqref{formula_f_i_with_braces} следует,
    что имея значения некоторого значимого выхода подсхемы $\frS_2$ и некоторого значимого выхода подсхемы $\frS_{3;i}$,
    мы можем реализовать одно слагаемое во внутренней скобке, используя ровно один элемент 2-CNOT, контролируемый выход которого
    будет одним из $n$ значимых выходов нашей конструируемой схемы $\frS_f$
    (см. рис.~\ref{pic_scheme_structure_for_case_of_limited_memory}).
    Рассматриваемое нами отображение $f$ имеет $n$ выходов, количество групп конъюнкций от первых $k$ переменных $x_1, \ldots, x_k$
    равно $p$, количество различных конъюнкций от последних $(n-k)$ переменных $x_{k+1}, \ldots, x_n$ равно $2^{n-k}$.
    Следовательно, схемная сложность реализации функции $f_i$ по формуле~\eqref{formula_f_i_with_braces}
    равна $p2^{n-k}$, а отображения $f$ в целом равна $L_4 = pn2^{n-k}$, при этом потребуется ровно $n$
    дополнительных входов для хранения выходных значений отображения $f$.
    
    Таким образом, мы можем вывести соотношение для $L(f,q)$ следующего вида:
    \begin{equation}
        L(f,q) = 2n + L_{CONJ}(k, q_1, t_1) + L_{CONJ}(n-k, q_2, t_2) + 2p \cdot L_{XOR}(s, q_3, t_3) + pn2^{n-k}  \; ,
        \label{formula_L_f_q_common}
    \end{equation}
    и для $Q(\frS_f)$ следующего вида:
    \begin{equation}
        Q(\frS_f) = q = n + Q_{CONJ}(k, q_1, t_1) + Q_{CONJ}(n-k, q_2, t_2) + Q_{XOR}(s, q_3, t_3) + n  \; .
        \label{formula_Q_f_q_common}
    \end{equation}
    
    Отметим, что каждая из $2^k$ различных конъюнкций на значимых выходах подсхемы $\frS_1$ будет получена ровно два раза,
    следовательно, $t_1 = 2^{k+1}$.
    
    Поскольку каждый значимый выход подсхемы $\frS_2$ используется в качестве входа для $pn$ элементов 2-CNOT,
    а значимый выход подсхемы $\frS_{3;i}$ может использоваться в качестве входа для $2^{n-k}$ элементов 2-CNOT,
    возникает два различных способа конструирования нашей искомой схемы $\frS_f$.
    \begin{enumerate}
        \item \label{item_first_case_for_theorem_L_n_q_bound_for_arbitrary_q}
            В первом случае мы минимизируем значение $t_2$: для каждой группы конъюнкций от первых $k$ переменных $x_1, \ldots, x_k$
            мы один раз конструируем очередной значимый выход подсхемы $\frS_2$, а затем конструируем для него
            $n$ значимых выходов подсхемы $\frS_{3;i}$. Тогда можно утверждать, что $t_2 = p2^{n-k}$, $t_3 \leqslant n2^{n-k}$.
        \item
            Во втором случае мы минимизируем значение $t_3$: для каждой группы конъюнкций от первых $k$ переменных $x_1, \ldots, x_k$
            мы один раз конструируем очередной значимый выход подсхемы $\frS_{3;i}$, а затем конструируем для него нужные значимые выходы
            подсхемы $\frS_2$. Таких выходов может быть один, а может быть и $2^{n-k}$. Однако мы точно можем утверждать,
            что $t_2 \leqslant pn2^{n-k}$, $t_3 \leqslant 2^s$.
    \end{enumerate}
    
    Оценим в общем случае значение $L(f,q)$:
    \begin{multline*}
        L(f,q) \leqslant 2n + pn2^{n-k} + q_1 + q_2 + 4p q_3 +
             \frac{8k2^{k+1}}{\log_2 q_1 - \log_2 k - 1} + \\
                + \frac{8(n-k)t_2}{\log_2 q_2 - \log_2 (n-k) - 1}
                + \frac{32pst_3}{\log_2 q_3 - \log_2 s - 1} \; .
    \end{multline*}
    
    Будем искать такие значения $k$ и $s$, что $p = \lceil 2^k \mathop / s \rceil \sim 2^k \mathop / s$. Тогда
    \begin{multline*}
        L(f,q) \lesssim 2n + \frac{n2^n}{s} + q_1 + q_2 + \frac{4q_3 2^k}{s} +
             \frac{8k2^{k+1}}{\log_2 q_1 - \log_2 k - 1} + \\
                + \frac{8(n-k)t_2}{\log_2 q_2 - \log_2 (n-k) - 1}
                + \frac{32t_3 2^k}{\log_2 q_3 - \log_2 s - 1} \; .
    \end{multline*}
   
    \begin{enumerate}
        \item
            Пусть $t_2 = p2^{n-k} \sim 2^n \mathop / s$, $t_3 \leqslant n2^{n-k}$. В этом случае 
            \begin{multline}
                L(f,q) \lesssim 2n + \frac{n2^n}{s} + q_1 + q_2 + \frac{4q_3 2^k}{s} +
                     \frac{8k2^{k+1}}{\log_2 q_1 - \log_2 k - 1} + \\
                        + \frac{8 \cdot 2^n}{\log_2 q_2 - \log_2 (n-k) - 1}
                        + \frac{32n2^n}{\log_2 q_3 - \log_2 s - 1} \; .
                    \label{formula_L_f_q_first_way}
            \end{multline}
            Положим $s = n - k$, $k = \lceil n \mathop / \phi(n) \rceil$,
            где $\phi(n)$ и $\psi(n)$~--- любые сколь угодно медленно растущие функции такие, что
            $\phi(n) \leqslant n \mathop / (\log_2 n + \log_2 \psi(n))$.
            В этом случае будет верно неравенство $2^k \mathop / s \geqslant \psi(n)$.

            Поскольку $q_3 \lesssim 2^s$ и $q_2 \lesssim 2^{n-k} = o(2^n)$, то верно соотношение
            \begin{equation}
                2n + \frac{n2^n}{s} + q_2 + \frac{4q_3 2^k}{s} \lesssim 2n + \frac{n2^n}{n - o(n)} + q_2 + \frac{2^{n+2}}{n-o(n)}
                    \lesssim 2^n  \; .
                \label{formula_least_member_bound_first_case_in_theorem_L_n_q_bound_for_arbitrary_q}
            \end{equation}

            Положим $q_1 = 0$, $q_2 = q_3$.    
            Согласно формуле~\eqref{formula_L_CONJ_0}, $L_{CONJ}(n, 0, t) \leqslant 2(n-1)t$,
            следовательно, мы можем заменить в соотношении~\eqref{formula_L_f_q_first_way}
            сложность подсхемы $\frS_1$ на $k2^{k+2}$:
            $$
                L(f,q) \lesssim 2^n + k2^{k+2} + \frac{8 \cdot 2^n}{\log_2 q_2 - \log_2 (n-k) - 1}
                    + \frac{32n2^n}{\log_2 q_3 - \log_2 s - 1} \; .
            $$
            Очевидно, что $k2^{k+2} = 4\lceil n \mathop / \phi(n) \rceil \cdot 2^{\lceil n \mathop / \phi(n) \rceil} = o(2^n)$
            и $8 \cdot 2^n \mathop / (\log_2 q_3 - \log_2 s - 1) = o \left(32n2^n \mathop / (\log_2 q_3 - \log_2 s - 1) \right)$,
            поэтому верно соотношение 
            $$
                L(f,q) \lesssim 2^n + \frac{32n2^n}{\log_2 q_3 - \log_2 s - 1} \; .
            $$

            Согласно Лемме~\ref{lemma_L_CONJ_bound}, $Q_{CONJ}(n, q, t) \leqslant q + n - 1$,
            поэтому соотношение~\eqref{formula_Q_f_q_common} можо переписать в виде
            \begin{equation}
                q \leqslant n + k - 1 + q_2 + n - k -1 + q_3 + s - 1 + n = 4n - k + 2q_3 - 3 < 4n + 2q_3  \; .
                \label{formula_q_3_bound_in_theorem_L_n_q_bound_for_arbitrary_q}
            \end{equation}
            Следовательно, $\log_2 q_3 > \log_2 (q - 4n) - 1$. Отсюда получаем соотношение
            $$
                L(f,q) \lesssim 2^n + \frac{32n2^n}{\log_2 (q - 4n) - \log_2 n - 2} \; ,
            $$
            которое верно при $\log_2 (q - 4n) > \log_2 n + 2$. Таким образом, $q > 8n$.
            С другой стороны $q < 4n + 2q_3 \lesssim 2^{n-\lceil n \mathop / \phi(n)\rceil + 1}$.
            \smallskip

        \item
            Пусть $t_2 \leqslant pn2^{n-k} \sim n2^n \mathop / s$, $t_3 \leqslant 2^s$. В этом случае 
            \begin{multline}
                L(f,q) \lesssim 2n + \frac{n2^n}{s} + q_1 + q_2 + \frac{4q_3 2^k}{s} +
                     \frac{8k2^{k+1}}{\log_2 q_1 - \log_2 k - 1} + \\
                        + \frac{8n2^n}{\log_2 q_2 - \log_2 (n-k) - 1}
                        + \frac{32 \cdot 2^n}{\log_2 q_3 - \log_2 s - 1} \; .
                    \label{formula_L_f_q_second_way}
            \end{multline}
            
            Как и в первом способе, положим $s = n - k$, $k = \lceil n \mathop / \phi(n) \rceil$,
            где $\phi(n)$ и $\psi(n)$~--- любые сколь угодно медленно растущие функции такие, что
            $\phi(n) \leqslant n \mathop / (\log_2 n + \log_2 \psi(n))$;
            $q_1 = 0$, $q_2 = q_3$.
            Тогда из рассуждений, приведённых при описании первого способа, следует соотношение
            $$
                L(f,q) \lesssim 2^n + \frac{8n2^n}{\log_2 q_2 - \log_2 (n-k) - 1}
                        + \frac{32 \cdot 2^n}{\log_2 q_3 - \log_2 s - 1} \; .
            $$

            Очевидно, что
            $32 \cdot 2^n \mathop / (\log_2 q_2 - \log_2 (n-k) - 1) = o \left(8n2^n \mathop / (\log_2 q_2 - \log_2 (n-k) - 1) \right)$,
            поэтому верно соотношение 
            $$
                L(f,q) \lesssim 2^n + \frac{8n2^n}{\log_2 q_2 - \log_2 (n-k) - 1}  \; .
            $$
            
            Поскольку $\log_2 q_3 > \log_2 (q - 4n) - 1$, то и $\log_2 q_2 > \log_2(q - 4n) - 1$.
            Отсюда получаем соотношение
            $$
                L(f,q) \lesssim 2^n + \frac{8n2^n}{\log_2 (q - 4n) - \log_2 n - 2} \; ,
            $$
            которое верно при $\log_2 (q - 4n) > \log_2 n + 2 \Rightarrow q > 8n$.

            Видно, что второй способ синтеза асимптотически лучше первого.
            \smallskip
    \end{enumerate}

    Поскольку мы описали алгоритм синтеза обратимой схемы для произвольного отображения $f$, то
    $$
        L(n,q) \leqslant L(f,q) \lesssim 2^n + \frac{8n2^n}{\log_2 (q-4n) - \log_2 n - 2} \;
    $$
    для любого значения $q$ такого, что $8n < q \lesssim 2^{n-\lceil n \mathop / \phi(n)\rceil + 1}$.
\end{proof}

\begin{corollary}
    Для любого значения $q$ такого, что $8n < q \lesssim 2^{n-\lceil n \mathop / \phi(n)\rceil + 1}$,
    где $\phi(n)$ и $\psi(n)$~--- любые сколь угодно медленно растущие функции такие, что
    $\phi(n) \leqslant n \mathop / (\log_2 n + \log_2 \psi(n))$,
    верны соотношения
    \begin{align}
        W(n,q) &\lesssim \WT \cdot \left(2^n + \frac{8 \cdot 2^n}{\log_2 (q - 4n) - \log_2 n - 2} \right)
            + \frac{32 \WC n2^n}{\log_2 (q - 4n) - \log_2 n - 2} \; ,
            \label{theorem_W_n_q_bound_for_arbitrary_q_first_case} \\
        W(n,q) &\lesssim \WT \cdot \left(2^n + \frac{8n2^n}{\log_2 (q-4n) - \log_2 n - 2}\right)
            + \frac{32 \WC 2^n}{\log_2 (q-4n) - \log_2 n - 2} \; .
            \label{theorem_W_n_q_bound_for_arbitrary_q_second_case}
    \end{align}
\end{corollary}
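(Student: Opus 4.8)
План доказательства~--- повторить рассуждения из доказательства Теоремы~\ref{theorem_L_n_q_bound_for_arbitrary_q}, но подсчитывать по отдельности количество $\LC(f,q)$ элементов NOT и CNOT и количество $\LT(f,q)$ элементов 2-CNOT в обратимой схеме $\frS_f$, синтезируемой алгоритмом~\algref{alg_asymp_with_mem_complexity_min_general_case}, а затем воспользоваться формулой~\eqref{formula_quantum_weigth_as_sum}: $W(n,q) = \WC \cdot \LC(n,q) + \WT \cdot \LT(n,q)$. Сначала я бы отметил, какие из подсхем в разложении схемы $\frS_f$ состоят из элементов только одного типа: начальные отрицания входов $x_1, \ldots, x_n$ дают $2n$ элементов NOT и CNOT; подсхемы $\frS_1 = \frS_{CONJ(k,q_1)}$ и $\frS_2 = \frS_{CONJ(n-k,q_2)}$, реализующие конъюнкции, а также заключительный слой из $pn2^{n-k}$ элементов, реализующих отображение $f$ по формуле~\eqref{formula_f_i_with_braces}, состоят только из элементов 2-CNOT; подсхемы $\frS_{3;i} = \frS_{XOR(s,q_3)}$ состоят только из элементов CNOT (каждый элемент 2-CNOT схемы $\frS_{CONJ}$ заменяется двумя элементами CNOT, рис.~\ref{pic_basis}). Таким образом, из формулы~\eqref{formula_L_f_q_common} получается разбиение
    \begin{align*}
        \LC(f,q) &= 2n + 2p \cdot L_{XOR}(s, q_3, t_3) \; , \\
        \LT(f,q) &= L_{CONJ}(k, q_1, t_1) + L_{CONJ}(n-k, q_2, t_2) + pn2^{n-k} \; .
    \end{align*}

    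Далее я бы подставил те же значения параметров, что и в доказательстве Теоремы~\ref{theorem_L_n_q_bound_for_arbitrary_q} ($s = n-k$, $k = \lceil n \mathop / \phi(n) \rceil$, $q_1 = 0$, $q_2 = q_3$), применил Лемму~\ref{lemma_L_CONJ_bound} и аналогичную оценку для $L_{XOR}(s, q_3, t_3)$ и рассмотрел по отдельности два способа построения схемы $\frS_f$. В первом способе ($t_2 = p2^{n-k}$, $t_3 \leqslant n2^{n-k}$) слагаемое $L_{CONJ}(k, 0, t_1) \leqslant 2(k-1)2^{k+1}$ (формула~\eqref{formula_L_CONJ_0}) и слагаемое $4 q_3 2^k \mathop / s$ оказываются величинами порядка $o(2^n)$, откуда $\LT(f,q) \lesssim 2^n + 8 \cdot 2^n \mathop / (\log_2 (q-4n) - \log_2 n - 2)$ и $\LC(f,q) \lesssim 32 n 2^n \mathop / (\log_2 (q-4n) - \log_2 n - 2)$; умножение этих оценок на $\WT$ и $\WC$ и сложение даёт соотношение~\eqref{theorem_W_n_q_bound_for_arbitrary_q_first_case}. Во втором способе ($t_2 \leqslant pn2^{n-k}$, $t_3 \leqslant 2^s$) та же бухгалтерия даёт $\LT(f,q) \lesssim 2^n + 8 n 2^n \mathop / (\log_2 (q-4n) - \log_2 n - 2)$ и $\LC(f,q) \lesssim 32 \cdot 2^n \mathop / (\log_2 (q-4n) - \log_2 n - 2)$, что приводит к соотношению~\eqref{theorem_W_n_q_bound_for_arbitrary_q_second_case}. Поскольку алгоритм~\algref{alg_asymp_with_mem_complexity_min_general_case} применим к произвольному отображению $f$, оба соотношения верны для $W(n,q)$.

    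Основная трудность здесь~--- аккуратная классификация элементов синтезированной схемы по типам и проверка того, что отброшенные слагаемые действительно пренебрежимо малы при $8n < q \lesssim 2^{n-\lceil n \mathop / \phi(n)\rceil + 1}$; никаких новых идей по сравнению с доказательством Теоремы~\ref{theorem_L_n_q_bound_for_arbitrary_q} не требуется, так как все нужные оценки величин $L_{CONJ}$, $L_{XOR}$ и $pn2^{n-k}$ уже были получены там.
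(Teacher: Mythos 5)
Ваше доказательство верно и по существу совпадает с доказательством в работе: там тоже отмечается, что элементы NOT и CNOT встречаются только в начальных отрицаниях и в подсхемах $\frS_{3;i}$, после чего две оценки получаются разнесением слагаемых из доказательства Теоремы~\ref{theorem_L_n_q_bound_for_arbitrary_q} по типам элементов для тех же двух случаев выбора $t_2$ и $t_3$. Ваш вариант лишь подробнее расписывает эту «бухгалтерию», новых идей не добавляя.
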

\begin{proof}
    В обратимой схеме $\frS_f$ из доказательства Теоремы~\ref{theorem_L_n_q_bound_for_arbitrary_q} элементы NOT и CNOT
    используются только для получения отрицаний для всех входных значений $x_1, \ldots, x_n$ и в подсхемах $\frS_{3;i}$.
    Следовательно, оценка~\eqref{theorem_W_n_q_bound_for_arbitrary_q_first_case} верна для случая
    $t_2 = p2^{n-k}$, $t_3 \leqslant n2^{n-k}$, описанного на с.~\pageref{item_first_case_for_theorem_L_n_q_bound_for_arbitrary_q},
    а оценка~\eqref{theorem_W_n_q_bound_for_arbitrary_q_second_case} верна для случая
    $t_2 \leqslant pn2^{n-k}$, $t_3 \leqslant 2^s$, описанного там же.
\end{proof}

Теперь можно оценить порядок роста функции $L(n,q)$.
\begin{theorem}\label{theorem_L_asymp_general}
    Для любого значения $q$ такого, что $n^2 \lesssim q \lesssim 2^{n-\lceil n \mathop / \phi(n)\rceil + 1}$,
    где $\phi(n)$ и $\psi(n)$~--- любые сколь угодно медленно растущие функции такие, что
    $\phi(n) \leqslant n \mathop / (\log_2 n + \log_2 \psi(n))$,
    верно соотношение
    $$
        L(n,q) \asymp \frac{n2^n}{\log_2 q} \; .
    $$
\end{theorem}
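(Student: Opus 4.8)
The plan is to sandwich $L(n,q)$ between the lower bound of Theorem~\ref{theorem_complexity_lower_bound} and the upper bound of Theorem~\ref{theorem_L_n_q_bound_for_arbitrary_q}, and then to verify that in the admissible range $n^2 \lesssim q \lesssim 2^{n-\lceil n \mathop / \phi(n)\rceil + 1}$ both of these bounds are of order $n2^n \mathop / \log_2 q$, differing from it only by multiplicative constants. No new synthesis construction is needed.

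First I would handle the lower bound. By Theorem~\ref{theorem_complexity_lower_bound}, $L(n,q) \geqslant \frac{2^n(n-2)}{3\log_2(n+q)} - \frac{n}{3}$. Since $q \gtrsim n^2$, in particular $q \geqslant n$ for all large $n$, so $n + q \leqslant 2q$ and hence $\log_2(n+q) \leqslant \log_2 q + 1$; because $\log_2 q \geqslant 2\log_2 n \to \infty$, this gives $\log_2(n+q) \leqslant 2\log_2 q$ eventually. Substituting, and noting that the term $\frac{n}{3}$ is negligible against $\frac{n2^n}{\log_2 q}$ while $(n-2)/n \to 1$, one gets $L(n,q) \gtrsim \frac{n2^n}{6\log_2 q}$, so the ratio $L(n,q)\big/\bigl(n2^n \mathop / \log_2 q\bigr)$ is bounded below by a positive constant.

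Next I would handle the upper bound. The hypothesis forces $8n < q \lesssim 2^{n-\lceil n \mathop / \phi(n)\rceil + 1}$, so Theorem~\ref{theorem_L_n_q_bound_for_arbitrary_q} applies and yields $L(n,q) \lesssim 2^n + \frac{8n2^n}{\log_2(q-4n) - \log_2 n - 2}$. It remains to show the right-hand side is $O\bigl(n2^n \mathop / \log_2 q\bigr)$. For the denominator: from $q \geqslant n^2 \geqslant 8n$ we get $q - 4n \geqslant q/2$, hence $\log_2(q-4n) \geqslant \log_2 q - 1$; and $q \geqslant n^2$ gives $\log_2 n \leqslant \tfrac12\log_2 q$, so $\log_2(q-4n) - \log_2 n - 2 \geqslant \tfrac12\log_2 q - 3 \geqslant \tfrac14\log_2 q$ for large $n$. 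Thus the second term is at most $\frac{32 n 2^n}{\log_2 q}$. For the first term, $\log_2 q \leqslant n - \lceil n \mathop / \phi(n)\rceil + 1 + o(1) \leqslant n + o(n)$, so $2^n = \frac{n2^n}{\log_2 q}\cdot\frac{\log_2 q}{n} \leqslant (1+o(1))\,\frac{n2^n}{\log_2 q}$. Adding, $L(n,q) \lesssim \frac{33\,n2^n}{\log_2 q}$, so the ratio is bounded above by a positive constant. Combining the two estimates gives $0 < c_1 \leqslant L(n,q)\big/\bigl(n2^n \mathop / \log_2 q\bigr) \leqslant c_2$ for all sufficiently large $n$, which is exactly $L(n,q) \asymp \frac{n2^n}{\log_2 q}$.

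The only delicate point, and the step I would spell out most carefully, is checking that the two denominators $\log_2(n+q)$ and $\log_2(q-4n) - \log_2 n - 2$ are each asymptotically proportional to $\log_2 q$ \emph{uniformly} over the whole interval $n^2 \lesssim q \lesssim 2^{n-\lceil n \mathop / \phi(n)\rceil + 1}$ (in particular at its lower end $q \asymp n^2$, where $\log_2 q$ is smallest). This is precisely where the hypothesis $q \gtrsim n^2$ is used rather than the weaker $q \gtrsim n$: it keeps $\log_2 n$ negligible compared with $\log_2 q$ in the upper-bound denominator. Everything else is routine manipulation of the inequalities already established in Theorems~\ref{theorem_complexity_lower_bound} and~\ref{theorem_L_n_q_bound_for_arbitrary_q}.
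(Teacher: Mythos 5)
Ваше доказательство верно и идёт тем же путём, что и в диссертации: там доказательство Теоремы~\ref{theorem_L_asymp_general} состоит из одной фразы «следует из Теоремы~\ref{theorem_complexity_lower_bound} и Теоремы~\ref{theorem_L_n_q_bound_for_arbitrary_q}». Вы лишь аккуратно выписали те выкладки (сравнение знаменателей $\log_2(n+q)$ и $\log_2(q-4n)-\log_2 n-2$ с $\log_2 q$ при $q\gtrsim n^2$), которые автор оставил читателю.
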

\begin{proof}
Следует из Теоремы~\ref{theorem_complexity_lower_bound} и Теоремы~\ref{theorem_L_n_q_bound_for_arbitrary_q}.
\end{proof}

Если рассматривать только знакопеременную группу $A(\ZZ_2^n)$ и обратимые схемы, реализующие отображения $\ZZ_2^n \to \ZZ_2^n$
из этой группы, то можно установить следующий порядок роста функции Шеннона $L_A(n,q)$ сложности таких обратимой схемы.
\begin{theorem}\label{theorem_L_A_asymp}
    Для любого значения $q$ такого, что $0 \leqslant q \lesssim 2^{n-\lceil n \mathop / \phi(n)\rceil + 1}$,
    где $\phi(n)$ и $\psi(n)$~--- любые сколь угодно медленно растущие функции такие, что
    $\phi(n) \leqslant n \mathop / (\log_2 n + \log_2 \psi(n))$,
    верно соотношение
    $$
        L_A(n,q) \asymp \frac{n2^n}{\log_2 (n+q)} \; .
    $$
\end{theorem}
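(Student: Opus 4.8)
The plan is to prove the two matching bounds separately, the main point being that both are already within reach of the material developed above once one is careful about the range of $q$. For the lower bound I would simply rerun the Riordan--Shannon counting argument from the proof of Theorem~\ref{theorem_complexity_lower_bound}. That proof never uses anything about $F(n,q)$ beyond the fact that it contains at least $|A(\ZZ_2^n)| = (2^n)! \mathop / 2$ maps; since the set of maps $\ZZ_2^n \to \ZZ_2^n$ induced by permutations of $A(\ZZ_2^n)$ has exactly this cardinality, every reversible circuit with $(n+q)$ inputs realizes at most $A_{n+q}^n \leqslant (n+q)^n$ of them, and there are at most $\EuScript C(n+q,s)$ circuits of complexity $\leqslant s$, so the same chain of inequalities yields
$$
    L_A(n,q) \geqslant \frac{2^n(n-2)}{3\log_2(n+q)} - \frac{n}{3} \gtrsim \frac{n2^n}{\log_2(n+q)}
$$
for every $q$ in the stated range.

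For the upper bound I would split the range of $q$ into a ``polynomial'' part and an ``exponential'' part. If $q$ is bounded by a fixed power of $n$ (say $q \leqslant n^{3}$), then $\log_2(n+q) \asymp \log_2 n$; since appending idle lines to a circuit never increases its complexity and every even permutation is realizable with $q=0$, we have $L_A(n,q) \leqslant L_A(n,0)$, and because $F(n,0)$ coincides with the set of maps induced by $A(\ZZ_2^n)$ for $n > 3$ (and with $S(\ZZ_2^n)$ for $n \leqslant 3$, a finite case), $L_A(n,0) = L(n,0)$. Theorem~\ref{theorem_complexity_no_memory_common} then gives $L_A(n,q) \lesssim n2^n \mathop / \log_2 n \asymp n2^n \mathop / \log_2(n+q)$. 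If instead $n^{3} < q \lesssim 2^{n-\lceil n\mathop/\phi(n)\rceil + 1}$, then in particular $q > 8n$, so Theorem~\ref{theorem_L_n_q_bound_for_arbitrary_q} applies, and combined with the trivial $L_A(n,q) \leqslant L(n,q)$ it gives
$$
    L_A(n,q) \leqslant L(n,q) \lesssim 2^n + \frac{8n2^n}{\log_2(q-4n) - \log_2 n - 2} \; .
$$
From $q > n^{3}$ one checks $q - 4n > q/2$, whence $\log_2(q-4n) - \log_2 n - 2 > \log_2 q - \log_2 n - 3$, which is of order $\log_2 q \asymp \log_2(n+q)$; moreover $\log_2(n+q) \leqslant n$ throughout the range, so $2^n \lesssim n2^n \mathop / \log_2(n+q)$ and the right-hand side is $\lesssim n2^n \mathop / \log_2(n+q)$.

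Putting the two bounds together gives $L_A(n,q) \asymp n2^n \mathop / \log_2(n+q)$ for all $0 \leqslant q \lesssim 2^{n-\lceil n\mathop/\phi(n)\rceil+1}$. The only step that needs genuine care is the seam between the two regimes. The polynomial threshold must be taken large enough that, on the exponential side, the denominator $\log_2(q-4n) - \log_2 n - 2$ of Theorem~\ref{theorem_L_n_q_bound_for_arbitrary_q} is bounded below by a constant multiple of $\log_2(n+q)$ — note that for $q$ just above $8n$ this denominator is only a constant, so Theorem~\ref{theorem_L_n_q_bound_for_arbitrary_q} by itself is too weak there and one really does need the ``ignore the additional inputs'' bound — while on the polynomial side $\log_2(n+q)$ must still stay within a constant factor of $\log_2 n$. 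Once the threshold is fixed, the constants in the two $\asymp$ relations merge into a single one and the remaining estimates are routine.
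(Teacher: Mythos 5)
Your proof is correct and follows essentially the same route as the paper, whose entire argument is a one-line citation of Theorem~\ref{theorem_complexity_no_memory_common} (the case $q=0$, which covers small $q$ by monotonicity since adding unused lines cannot increase complexity) and Theorem~\ref{theorem_L_asymp_general} (which is itself exactly the combination of the lower bound of Theorem~\ref{theorem_complexity_lower_bound} with the upper bound of Theorem~\ref{theorem_L_n_q_bound_for_arbitrary_q} that you invoke). Your explicit handling of the seam between the polynomial and exponential regimes is more careful than the paper's terse derivation, but the underlying argument is identical.
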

\begin{proof}
Следует из Теоремы~\ref{theorem_complexity_no_memory_common}
и Теоремы~\ref{theorem_L_asymp_general}, поскольку без дополнительных входов рассматриваемые нами обратимые схемы реализуют
только чётные подстановки.
\end{proof}

Из доказательства Теоремы~\ref{theorem_L_n_q_bound_for_arbitrary_q} также можно получить верхнюю оценку для функции $D(n,q)$
в случае $q > 8n$, $q \lesssim 2^{n-o(n)}$, но для этого необходимо сперва доказать вспомогательную лемму.

\begin{lemma}\label{lemma_D_CONJ_bound}
    Для любого значения $q$ такого, что $2n < q < 2n2^{4n}$, верны соотношения
    \begin{align*}
        D_{CONJ}(n, q, t) &\leqslant q + 2t(2+\log_2 n - \log_2 (\log_2 q - \log_2 n - 1)) \; . \\
        D_{CONJ}(n, 0, t) &\leqslant 2t \cdot \lceil \log_2 n \rceil  \; .
    \end{align*}
\end{lemma}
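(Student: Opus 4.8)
The plan is to reuse, almost verbatim, the construction from the proof of Lemma~\ref{lemma_L_CONJ_bound}, only replacing the gate-count bookkeeping by depth bookkeeping. The single new ingredient is the observation that the on-demand assembly of one conjunction $x_1^{a_1}\wedge\ldots\wedge x_n^{a_n}$ is a balanced binary $\wedge$-tree, and a balanced binary tree of depth $r$ can be evaluated by $r$ parallel layers of $2$-CNOT gates (all nodes on a given tree level being realized on pairwise disjoint lines, hence forming a depth-$1$ subcircuit), provided each internal node is given its own scratch input.

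First I would treat the case $q=0$. Here nothing is precomputed, so every required conjunction is built directly from the literals $x_1,\ldots,x_n,\bar x_1,\ldots,\bar x_n$ as a balanced binary $\wedge$-tree with $n$ leaves; its $n-1$ internal nodes are realized by $2$-CNOT gates whose targets are fresh scratch inputs, and the tree has depth $\lceil\log_2 n\rceil$. After the root value has been used, the scratch inputs are uncomputed (to be freed for the next conjunction) by the same gates in reverse order, adding another $\lceil\log_2 n\rceil$. Thus each of the $t$ conjunctions contributes depth $2\lceil\log_2 n\rceil$, giving $D_{CONJ}(n,0,t)\leqslant 2t\lceil\log_2 n\rceil$; note this uses exactly $n-1$ scratch inputs, in agreement with~\eqref{formula_Q_CONJ_0}.

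For general $q$ with $2n<q<2n2^{4n}$ I would pick the same level parameter $s\in\{0,\ldots,K-1\}$, $K=\lceil\log_2 n\rceil$, as in Lemma~\ref{lemma_L_CONJ_bound}, namely the one with $\frac{3n}{2^{s}}2^{2^{s+1}}\leqslant q<\frac{3n}{2^{s+1}}2^{2^{s+2}}$, and use the $q$ additional inputs to store once and for all every output of the topmost $s+1$ levels of $\frS_{CONJ}$. By~\eqref{formula_complexity_of_last_layers} and the choice of $s$ this prefix subcircuit has at most $q$ gates (one $2$-CNOT per stored value), so its depth is trivially at most $q$. Each subsequently required conjunction is then assembled on demand through the remaining $r\leqslant k-1=K-s-1$ levels, which, as in Lemma~\ref{lemma_L_CONJ_bound}, is a balanced binary $\wedge$-tree of depth $r$ whose $2^{r}$ leaves are already-stored values; computing it layer by layer on $2^{r}-1$ reusable scratch inputs takes depth $r$, and uncomputing those scratch inputs takes another $r$, so the $t$ on-demand conjunctions add depth $2rt$ in total. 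Combining, $D_{CONJ}(n,q,t)\leqslant q+2rt$. Finally, bounding $r$ by the estimate $2^{r+1}<8n/(\log_2 q-\log_2 n-1)$ already proved in Lemma~\ref{lemma_L_CONJ_bound} (inequality~\eqref{formula_for_2_pow_r_in_lemma_L_CONJ_bound}) gives $r<2+\log_2 n-\log_2(\log_2 q-\log_2 n-1)$, and substituting yields the asserted bound.

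The main obstacle I expect is the parallelization step: one must check carefully that a single on-demand conjunction can be produced with depth exactly equal to the number $r$ of traversed levels — one parallel layer of $2$-CNOT gates per tree level — while using only scratch inputs that are immediately uncomputed, so that the depth accounting stays mutually consistent with the $Q_{CONJ}(n,q,t)$ accounting and with Lemma~\ref{lemma_L_CONJ_bound}. The integer rounding hidden in passing from $2^{r+1}<8n/(\log_2 q-\log_2 n-1)$ to the bound on $r$ is then a routine estimate absorbed into the additive constant $2$, and the hypothesis $q>2n$ (inherited from Lemma~\ref{lemma_L_CONJ_bound}) is exactly what makes the argument go through.
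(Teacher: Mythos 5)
Your proposal is correct and follows essentially the same route as the paper's own proof: store the outputs of the top levels with depth at most $q$, assemble each on-demand conjunction as a balanced $\wedge$-tree over the remaining $r$ levels with depth $r$ (plus $r$ for uncomputation), and then bound $r$ via inequality~\eqref{formula_for_2_pow_r_in_lemma_L_CONJ_bound}, with the $q=0$ case handled by the depth-$\lceil\log_2 n\rceil$ tree over the literals. No gaps.
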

\begin{proof}
    Рассмотрим схему $\frS_{CONJ(n,q)}$ из Леммы~\ref{lemma_L_CONJ_bound}.
    Согласно формуле~\eqref{formula_bound_for_L_CONJ_with_r}, верно неравенство
    $L_{CONJ}(n, q, t) \leqslant q + 2t \cdot 2^r$. Промежуточные значения, хранимые на $q$ дополнительных входах,
    можно получить с глубиной не более $q$.
    Также очевидно, что сконструировать по мере необходимости один значимый выход схемы $\frS_{CONJ}$ на первых $r$ уровнях
    можно с глубиной $r$, см. рис.~\ref{pic_construct_conjunctions_on_demand}.
    Отсюда следует, что
    $$
        D_{CONJ}(n, q, t) \leqslant q + 2tr  \; .
    $$

    Согласно формуле~\eqref{formula_for_2_pow_r_in_lemma_L_CONJ_bound}, при $q > 2n$ верно неравенство
    $$
        2^r < \frac{4n}{\log_2 q - \log_2 n - 1}  \; ,
    $$
    откуда следует, что
    \begin{gather*}
        r < 2 + \log_2 n - \log_2 (\log_2 q - \log_n - 1)  \; , \\
        D_{CONJ}(n, q, t) \leqslant q + 2t(2+\log_2 n - \log_2 (\log_2 q - \log_2 n - 1)) \; .
    \end{gather*}
    
    Соотношение $D_{CONJ}(n, 0, t) \leqslant 2t \cdot \lceil \log_2 n \rceil$ следует из соотношения~\eqref{formula_L_CONJ_0}
    и того факта, что сконструировать одну конъюнкцию $x_1^{a_1} \wedge \ldots \wedge x_n^{a_n}$ можно с логарифмической глубиной
    $\lceil \log_2 n \rceil$.
\end{proof}

Аналогично, для обратимой схемы $\frS_{XOR(n,q)}$ верно неравенство
$$
    D_{XOR}(n, q, t) = 2D_{CONJ}(n, q, t)
        \leqslant 2q + 4t(2+\log_2 n - \log_2 (\log_2 q - \log_2 n - 1))
$$
для любого значения $q$ такого, что $2n < q < 2n2^{4n}$.

Итак, докажем последнюю теорему данного раздела.
\begin{theorem}[общая верхняя оценка глубины обратимой схемы с дополнительными входами]\label{theorem_D_n_q_bound_for_arbitrary_q}
    Для любого значения $q$ такого, что $8n < q \lesssim 2^{n-\lceil n \mathop / \phi(n)\rceil + 1}$,
    где $\phi(n)$ и $\psi(n)$~--- любые сколь угодно медленно растущие функции такие, что
    $\phi(n) \leqslant n \mathop / (\log_2 n + \log_2 \psi(n))$,
    верно соотношение
    $$
        D(n,q) \lesssim 2^{n+1}(2,5 + \log_2 n - \log_2 (\log_2 (q - 4n) - \log_2 n - 2))  \; .
    $$
\end{theorem}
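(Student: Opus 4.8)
The plan is to mirror the proof of Theorem~\ref{theorem_L_n_q_bound_for_arbitrary_q}: describe a synthesis algorithm that reuses the subcircuit decomposition $\frS_f = \text{(negations)} * \frS_1 * \frS_2 * \{\frS_{3;i}\} * \text{(final 2-CNOT layer)}$, but this time track the \emph{depth} rather than the complexity of each piece. First I would fix the same parameters $s = n-k$, $k = \lceil n \mathop / \phi(n) \rceil$, $q_1 = 0$, $q_2 = q_3$, and the same two-way accounting of $t_2, t_3$, so that the bounds on the number of required conjunctions and XOR-forms carry over verbatim from the complexity proof. The negations contribute depth $2$; the final layer of $n2^{n-k}$ 2-CNOT gates realizing~\eqref{formula_f_i_with_braces} can, after copying each significant output of $\frS_2$ and each $\frS_{3;i}$ enough times with logarithmic depth (as in Figures~\ref{pic_parallel_copy} and~\ref{pic_logarithmic_xor}), be arranged with depth $\sim n-k + \log_2(pn2^{n-k})$, which is $O(2^n)$ in the worst case and absorbed by the main term.

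The key step is to invoke Lemma~\ref{lemma_D_CONJ_bound} (and its $\frS_{XOR}$ analogue) in place of Lemma~\ref{lemma_L_CONJ_bound}. The subcircuit $\frS_1 = \frS_{CONJ(k,0)}$ is used to realize $t_1 = 2^{k+1}$ conjunctions on demand, each with logarithmic depth $\lceil \log_2 k \rceil$, so $D(\frS_1) \leqslant 2^{k+1}\lceil \log_2 k \rceil = o(2^n)$. The $p$ subcircuits $\frS_{3;i} = \frS_{XOR(s,q_3)}$ together realize on the order of $t_3$ XOR-forms on demand; each such form costs depth $\lesssim 2(2 + \log_2 s - \log_2(\log_2 q_3 - \log_2 s - 1))$ plus the cost of feeding values in. Because $q_3 > (q-4n)/2$ (from~\eqref{formula_q_3_bound_in_theorem_L_n_q_bound_for_arbitrary_q}) and $s \sim n$, this per-form depth is $\lesssim 2(2.5 + \log_2 n - \log_2(\log_2(q-4n) - \log_2 n - 2))$, and multiplying by the worst-case number of on-demand constructions — which, as in the second synthesis variant of Theorem~\ref{theorem_L_n_q_bound_for_arbitrary_q}, is $O(2^n)$ — yields the stated bound. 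The same bookkeeping applies to $\frS_2 = \frS_{CONJ(n-k,q_2)}$, which realizes $t_2 \lesssim pn2^{n-k}$ conjunctions on demand with per-conjunction depth $\lesssim 2(2 + \log_2(n-k) - \log_2(\log_2 q_2 - \log_2(n-k) - 1))$; since $q_2 = q_3$ this is dominated by (or comparable to) the $\frS_{3;i}$ contribution, so the overall depth is $D(n,q) \lesssim 2^{n+1}(2.5 + \log_2 n - \log_2(\log_2(q-4n) - \log_2 n - 2))$, valid for $8n < q \lesssim 2^{n-\lceil n\mathop/\phi(n)\rceil+1}$ exactly as in the complexity theorem.

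The main obstacle I anticipate is verifying that the on-demand reconstruction scheme really gives depth proportional to, rather than merely bounded by, $2^n$ times the per-item depth — i.e. that re-zeroing the $q_3$ garbage outputs and re-running $\frS_1$ to clear the $s$ significant outputs between successive $\frS_{3;i}$ (the doubling trick from the complexity proof) only doubles the depth of each block, and that these blocks are processed essentially sequentially so their depths add. One must also be careful that the logarithmic-depth copying and XOR gadgets do not introduce hidden dependencies forcing the blocks out of parallel with $\frS_2$; but since $\frS_1\text{-}\frS_{3;i}$ and $\frS_2$ act on disjoint input subsets $\{x_1,\dots,x_k\}$ and $\{x_{k+1},\dots,x_n\}$, the argument of Theorem~\ref{theorem_depth_upper_with_memory_3n} that such subcircuits run in parallel applies, and only the final 2-CNOT layer and the $\frS_{3;i}$ chain contribute to the critical path. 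Assembling these observations and substituting the chosen values of $k$ and $s$ completes the proof; I expect the computation itself to be routine once Lemma~\ref{lemma_D_CONJ_bound} is in hand.
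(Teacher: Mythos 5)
Your proposal follows essentially the same route as the paper's proof: reuse the circuit decomposition of Theorem~\ref{theorem_L_n_q_bound_for_arbitrary_q} with the same parameters $s=n-k$, $k=\lceil n\mathop/\phi(n)\rceil$, $q_1=0$, $q_2=q_3$, replace Lemma~\ref{lemma_L_CONJ_bound} by Lemma~\ref{lemma_D_CONJ_bound} and its $\frS_{XOR}$ analogue, and run the same two-case accounting of $t_2,t_3$ to land on the stated bound. The only slip is in attributing the dominant term: in the better (second) case it is $D_{CONJ}(n-k,q_2,t_2)$ with $t_2\lesssim pn2^{n-k}\sim 2^n$ that contributes $\sim 2^{n+1}d$, while the $\frS_{3;i}$ chain contributes only $O(2^n d\mathop/ n)$ — but since you include the correct factor and arrive at the same final expression, this does not affect the result.
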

\begin{proof}
    Рассмотрим обратимую схему $\frS_f$ из доказательства Теоремы~\ref{theorem_L_n_q_bound_for_arbitrary_q},
    синтезированную алгоритмом~\algref{alg_asymp_with_mem_complexity_min_general_case}.
    Из соотношения~\eqref{formula_L_f_q_common} можно вывести аналогичное соотношение для глубины $D(f,q)$ вида
    $$
        D(f,q) = 2 + D_{CONJ}(k, q_1, t_1) + D_{CONJ}(n-k, q_2, t_2) + 2p \cdot D_{XOR}(s, q_3, t_3) + pn2^{n-k}  \; .
    $$

    Положим $s = n - k$, $k = \lceil n \mathop / \phi(n) \rceil$,
    где $\phi(n)$ и $\psi(n)$~--- любые сколь угодно медленно растущие функции такие, что
    $\phi(n) \leqslant n \mathop / (\log_2 n + \log_2 \psi(n))$
    В этом случае будут верны соотношения $p \sim 2^k \mathop / s \geqslant \psi(n)$
    и $pn2^{n-k} \sim n2^n \mathop / s \sim 2^n$.
    
    Положим $q_1 = 0$. Поскольку $t_1 = 2^{k+1}$, то
    $$
        D_{CONJ}(k, q_1, t_1) \leqslant 2t_1 \cdot \lceil \log_2 k \rceil
            \leqslant \lceil \log_2 \lceil n \mathop / \phi(n)\rceil \rceil \cdot 2^{\lceil n \mathop / \phi(n) \rceil + 2} = o(2^n)  \; .
    $$
    Таким образом, верно соотношение
    $$
        D(f,q) \lesssim 2^n + D_{CONJ}(n-k, q_2, t_2) + \frac{2^{k+1}}{s} \cdot D_{XOR}(s, q_3, t_3)  \; .
    $$
   
    Рассмотрим те же два случая для $t_2$ и $t_3$, что и на с.~\pageref{item_first_case_for_theorem_L_n_q_bound_for_arbitrary_q}.
    \begin{enumerate}
        \item
            Пусть $t_2 = p2^{n-k} \sim 2^n \mathop / s$, $t_3 \leqslant n2^{n-k}$.
            В этом случае
            \begin{gather*}
                D_{CONJ}(n-k, q_2, t_2) \leqslant q_2 + \frac{2^{n+1}}{s}(2+\log_2 s - \log_2 (\log_2 q_2 - \log_2 s - 1))  \; , \\
                D_{XOR}(s, q_3, t_3) \leqslant 2 q_3 + n2^{n-k+2}(2+\log_2 s - \log_2 (\log_2 q_3 - \log_2 s - 1))  \; .
            \end{gather*}
            
            Положим $q_2 = q_3$. Обозначим $d = 2+\log_2 s - \log_2 (\log_2 q_2 - \log_2 s - 1)$, тогда
            $$
                D(f,q) \lesssim 2^n + q_2 + \frac{d2^{n+1}}{s} +
                    \frac{q_3 2^{k+2}}{s} + \frac{dn2^{n+3}}{s}  \; .
            $$
            
            Согласно формуле~\eqref{formula_least_member_bound_first_case_in_theorem_L_n_q_bound_for_arbitrary_q},
            верно соотношение
            $$
                2^n + q_2 + \frac{4q_3 2^k}{s} \lesssim 2^n  \; .
            $$
            Отсюда получаем, что
            $$
                D(f,q) \lesssim 2^n + \frac{dn2^{n+3}}{s} \lesssim
                    2^n + 2^{n+3}(2+\log_2 n - \log_2 (\log_2 q_3 - \log_2 n - 1))  \; .
            $$

            Из соотношения~\eqref{formula_q_3_bound_in_theorem_L_n_q_bound_for_arbitrary_q} следует, что
            $\log_2 q_3 > \log_2 (q - 4n) - 1$. Таким образом, получаем итоговую оценку сверху вида
            $$
                D(f,q) \lesssim 2^n(17 + 8(\log_2 n - \log_2 (\log_2 (q - 4n) - \log_2 n - 2)))  \; ,
            $$
            которая верна при $\log_2 (q - 4n) > \log_2 n + 2 \Rightarrow q > 8n$.
            \smallskip

        \item
            Пусть $t_2 \leqslant pn2^{n-k} \sim n2^n \mathop / s$, $t_3 \leqslant 2^s$. В этом случае
            \begin{gather*}
                D_{CONJ}(n-k, q_2, t_2) \leqslant q_2 + \frac{n2^{n+1}}{s}(2+\log_2 s - \log_2 (\log_2 q_2 - \log_2 s - 1))  \; , \\
                D_{XOR}(s, q_3, t_3) \leqslant 2 q_3 + 2^{s+2}(2+\log_2 s - \log_2 (\log_2 q_3 - \log_2 s - 1))  \; .
            \end{gather*}
            
            Положим $q_2 = q_3$. Обозначим $d = 2+\log_2 s - \log_2 (\log_2 q_2 - \log_2 s - 1)$, тогда
            $$
                D(f,q) \lesssim 2^n + q_2 + \frac{dn2^{n+1}}{s} + \frac{q_3 2^{k+2}}{s} + \frac{d2^{n+3}}{s}  \; .
            $$
            
            Согласно формуле~\eqref{formula_least_member_bound_first_case_in_theorem_L_n_q_bound_for_arbitrary_q},
            верно соотношение
            $$
                2^n + q_2 + \frac{4q_3 2^k}{s} \lesssim 2^n  \; .
            $$
            Отсюда получаем, что
            $$
                D(f,q) \lesssim 2^n + \frac{dn2^{n+1}}{s} \lesssim
                    2^n + 2^{n+1}(2+\log_2 n - \log_2 (\log_2 q_2 - \log_2 n - 1))  \; .
            $$

            Из соотношения~\eqref{formula_q_3_bound_in_theorem_L_n_q_bound_for_arbitrary_q} следует, что
            $\log_2 q_2 > \log_2 (q - 4n) - 1$. Таким образом, получаем итоговую оценку сверху вида
            $$
                D(f,q) \lesssim 2^{n+1}(2,5 + \log_2 n - \log_2 (\log_2 (q - 4n) - \log_2 n - 2))  \; ,
            $$
            которая верна при $\log_2 (q - 4n) > \log_2 n + 2 \Rightarrow q > 8n$.

            Видно, что второй способ синтеза асимптотически лучше первого.
            \smallskip
    \end{enumerate}
    
    Поскольку мы описали алгоритм синтеза обратимой схемы для произвольного отображения $f$, то
    $$
        D(n,q) \leqslant D(f,q) \lesssim 2^{n+1}(2,5 + \log_2 n - \log_2 (\log_2 (q - 4n) - \log_2 n - 2)) \;
    $$
    при $q > 8n$. Ограничение $q \lesssim 2^{n-\lceil n \mathop / \phi(n)\rceil + 1}$
    следует из Теоремы~\ref{theorem_L_n_q_bound_for_arbitrary_q}.
\end{proof}

При увеличении количества дополнительных входов с $q \sim 2^{n-o(n)}$ до $q \sim 2^n$
верхняя асимптотическая оценка функции $D(n,q)$ снижается с экспоненциальной до линейной,
согласно Теоремам~\ref{theorem_complexity_upper_with_memory} ($D(n,q) \leqslant L(n,q)$) и~\ref{theorem_depth_upper_with_memory_3n}.
Однако выведение зависимости верхней оценки функции $D(n,q)$ от $q$ для всех значений $q$ таких,
что $2^{n-o(n)} \lesssim q \lesssim 2^n$, выходит за рамки данной работы.

Теперь мы можем сформулировать основное утверждение данной главы.
\begin{predicate}\label{predicate_reduction_of_complexity_with_help_of_additional_memory}
    Использование дополнительной памяти в обратимых схемах, состоящих из элементов NOT, CNOT и 2-CNOT,
    почти всегда позволяет существенно снизить сложность, глубину и квантовый вес таких схем.
\end{predicate}
\begin{proof}
    Следует из Теорем~\ref{theorem_complexity_lower_bound}--\ref{theorem_D_n_q_bound_for_arbitrary_q}.
\end{proof}

Воспользуемся данным результатом, чтобы показать на примере реализации обратимой схемой
алгоритма дискретного логарифмирования по основанию примитивного элемента в конечном поле характеристики 2,
как использование дополнительных входов схемы позволяет снизить её сложность.

\sectionenumerated{Примеры применения обратимых схем}

\forceindent
В данной главе будет показано применение обратимых схем при решении задачи схемной реализации некоторых
вычислительно асимметричных преобразований. Будет подробно рассмотрен алгоритм дискретного логарифмирования
по основанию примитивного элемента в конечном поле характеристики 2 на примере фактор-кольца $\fp$,
где $f(x)$~--- неприводимый многочлен, и его реализация обратимой схемой.
Будет показано, как использование дополнительных входов схемы позволяет снизить её сложность.
В заключении данной главы будет рассмотрен вопрос схемной сложности реализации алгоритма, обратного к заданному,
и будет сделана попытка объяснить разницу в схемной сложности для прямого и обратного алгоритмов через
необратимость и потерю части информации во время работы прямого алгоритма.

\subsection{Дискретное логарифмирование в конечном поле характеристики 2}
\label{subsection_discrete_logarithm_in_polynomial_field}

\forceindent
Задачу дискретного логарифмирования в поле $\FF$ можно переформулировать следующим образом:
для элементов $\vv x, \vv y \in \FF^*$ найти такое $k$, что $\vv y = \vv x^k$, либо показать, что такого $k$ не существует.
Значение $k$ является значением дискретного логарифма элемента $\vv y$ по основанию $\vv x$: $k = \log_{\vv x}{\vv y}$.

Группа $\FF^*$ является циклической $\Rightarrow$ существует примитивный элемент $\balpha$,
такой что $\FF^* = \langle \balpha \rangle$. Отсюда следует, что для любого $\vv x \in \FF^*$ существует такая
степень $k < |\FF^*|$, что $\vv x = \balpha^k$.
Тогда задачу нахождения $\log_{\vv x}{\vv y}$ можно свести к решению уравнения
$$
    mk_{\vv x} = k_{\vv y}\pmod{|\FF^*|} \; ,
$$
где $m = \log_{\vv x}{\vv y}$, $k_{\vv x} = \log_{\balpha}{\vv x}$, $k_{\vv y} = \log_{\balpha}{\vv y}$.
Тогда $m = k_{\vv y}k_{\vv x}^{-1}\pmod{|\FF^*|}$.

На сегодняшний день неизвестно, существует ли полиномиальный алгоритм для задачи дискретного логарифмирования
в общем случае~\cite{discrete_log_brief_survey}.
Тем не менее было предложено несколько алгоритмов дискретного логарифмирования с субэкспоненциальной временн\'{о}й сложностью
для группы $G$. Обозначим через $L(N,c,a) = \exp(c (\log N)^a (\log \log N)^{1-a})$,
$L(a) = L(|G|,c,a)$, где $c$~--- некоторая константа.
Алгоритм Шенкса (<<baby-step giant-step>>), описанный в работе~\cite{shanks},
имеет временн\'{у}ю сложностью $L(1 / 2)$~\cite{discrete_log_brief_survey}.
Алгоритм Адлемана, представленный в работе~\cite{adleman_subexp_discrete_log},
также имеет временн\'{у}ю сложность $L(1 / 2)$~\cite{coppersmith_fast_discrete_log}.
В нескольких работах~\cite{gordon_discrete_log_in_gfp, adleman_function_field_sieve,
joux_function_field_sieve, joux_number_field_sieve} были представлены различные алгоритмы дискретного логарифмирования в 
конечных полях с временн\'{о}й сложностью $L(1 / 3)$.
В работе~\cite{coppersmith_fast_discrete_log} коллективом авторов был предложен алгоритм дискретного логарифмирования
в поле характеристики 2 с временн\'{о}й сложностью $L(1 / 3)$.
В работе~\cite{joux_index_calculus_discrete_log} был описан алгоритм дискретного логарифмирования в полях очень
малой характеристики с временн\'{о}й сложностью $L(1 / 4 + o(1))$.

Наконец, на конференции EUROCRYPT в 2014 году был предложен эвристический квази-полиномиальный алгоритм
дискретного логарифмирования в полях малой характеристики, который при определённых допущениях имеет временн\'{у}ю сложность
$n^{O(\log n)}$~\cite{quasi_polynomial_discrete_log}.
Также стоит отметить, что существуют квантовые алгоритмы дискретного логарифмирования (к примеру, алгоритм Шора)
с полиномиальной временн\'{о}й сложностью~\cite{shor}.

Однако автору не удалось найти какие-либо опубликованные результаты по реализации алгоритма дискретного логарифмирования
в схеме, состоящей из классических, неквантовых \gate, и оценок сложности такой схемы.

\bigskip
\bigskip
Рассмотрим некоторые особенности решения задачи нахождения $\log_{\balpha}{\vv x}$ в фактор-кольце $\fp$,
где $f(x)$~--- неприводимый многочлен степени $n$, являющегося полем характеристики 2.

Мощность $M$ мультипликативной группы этого поля равна $M = |\fpm| = 2^n - 1$.
Поскольку $M$ нечётно, то для любого чётного числа $m = 2k$ существует $m^{-1}\pmod{M}$.
Отсюда следует, что уравнение $\vv x^m = \vv a$ в поле $\fp$ разрешимо для всех $\vv a \in \fpm$ при чётном $m$,
$\vv x = \vv a^{m^{-1}\pmod{M}}$. Значение $m^{-1}\pmod{M}$ можно найти, к примеру, при помощи алгоритма Евклида.

\begin{predicate}
    Все элементы $\vv a \in \fpm$ являются квадратичными вычетами.
\end{predicate}
\begin{proof}
    Поскольку $2^{-1} = \frac{M+1}{2}\pmod{M}$, то $\vv x = \vv a^{(M+1) \mathop / 2}$ является решением уравнения
    $\vv x^2 = \vv a$ в поле $\fp$.
\end{proof}

\begin{corollary}
Для извлечения квадратного корня из $\vv a \in \fpm$ необходимо возвести $\vv a$ в степень $\frac{M+1}{2}$.
\end{corollary}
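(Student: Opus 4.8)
The plan is to obtain this corollary directly from the proposition just proved, making explicit the two elementary facts it rests on: the integrality of $\frac{M+1}{2}$ and the uniqueness of the square root in a field of characteristic $2$.

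First I would recall that $M = |\fpm| = 2^n - 1$ is odd, so $M+1$ is even and $\frac{M+1}{2} \in \NN$; hence the expression $\vv a^{(M+1)/2}$ is meaningful for every $\vv a \in \fpm$. Then I would verify the congruence $2 \cdot \frac{M+1}{2} = M + 1 \equiv 1 \pmod{M}$, which shows $\frac{M+1}{2} \equiv 2^{-1} \pmod{M}$. Since $\fpm$ is a cyclic group of order $M$, writing $\vv a = \balpha^{j}$ for the primitive element $\balpha$ gives $\left( \vv a^{(M+1)/2} \right)^{2} = \balpha^{j(M+1)} = \balpha^{jM} \cdot \balpha^{j} = \balpha^{j} = \vv a$, using $\balpha^{M} = 1$. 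This reproves existence and identifies $\vv a^{(M+1)/2}$ as a square root of $\vv a$, exactly as in the proof of the preceding proposition.

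Next I would establish uniqueness: in $\fp$, a field of characteristic $2$, the Frobenius map $\vv x \mapsto \vv x^{2}$ is a field homomorphism, hence has trivial kernel and is injective; therefore every $\vv a \in \fpm$ has precisely one square root, and it is forced to equal $\vv a^{(M+1)/2}$. This is what justifies the word ``необходимо'' in the statement: raising $\vv a$ to the power $\frac{M+1}{2}$ is not just one method but the method of extracting the square root.

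I do not expect any genuine obstacle here; the computation with the exponent is already done in the proof of the proposition above, and the only point that needs a separate line of argument is the uniqueness claim via injectivity of Frobenius. If one wishes, the same remark can be phrased without cyclicity by noting directly that $\vv a^{M} = 1$ for all $\vv a \in \fpm$ (Lagrange), so $\left(\vv a^{(M+1)/2}\right)^2 = \vv a^{M+1} = \vv a$.
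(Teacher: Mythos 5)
Ваше рассуждение верно и по существу совпадает с тем, как это сделано в работе: следствие выводится из предыдущего утверждения тем же вычислением $2^{-1} \equiv \frac{M+1}{2} \pmod{M}$ и $\left(\vv a^{(M+1)/2}\right)^2 = \vv a^{M+1} = \vv a$. Дополнительный аргумент о единственности квадратного корня через инъективность отображения Фробениуса $\vv x \mapsto \vv x^2$ в поле характеристики 2 --- полезное уточнение, обосновывающее слово <<необходимо>>, которое в работе оставлено неявным.
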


Для произвольного $\vv x = \balpha^k$ степень $k$ можно представить в двоичном виде с $n$ разрядами:
$$
    \balpha^k = \balpha^{[k_n \ldots k_2 k_1]} \; .
$$
При возведении в квадрат элемента $\vv x$ все разряды $k_i$ в двоичной записи смещаются на одну позицию влево:
$$
    (\balpha^k)^2 = \balpha^{[k_n \ldots k_2 k_1 0]}
        = \balpha^{k_n 2^n} \balpha^{[k_{n-1} \ldots k_2 k_1 0]} \; .
$$
Заметим, что $2^n\pmod{2^n - 1} \equiv 1\pmod{2^n - 1} \Rightarrow \balpha^{k_n 2^n} = \balpha^{k_n}$.
Отсюда следует равенство
\begin{equation}
    (\balpha^{[k_n \ldots k_2 k_1]})^2 = \balpha^{[k_{n-1} \ldots k_2 k_1 k_n]} \; .
    \label{formula_degree_rotation}
\end{equation}
Другими словами, при возведении в квадрат элемента $\balpha^k$ происходит \textit{циклический} сдвиг разрядов влево
в двоичной записи степени $k$. Несложно показать, что при извлечении квадратного корня из $\balpha^k$
происходит циклический сдвиг вправо разрядов в двоичной записи степени $k$:
\begin{equation}
    \sqrt{\balpha^{[k_n \ldots k_2 k_1]}} = \balpha^{[k_1 k_n \ldots k_3 k_2]} \; .
    \label{formula_sqrt_rotation}
\end{equation}

Рассмотрим для произвольного элемента $\vv x \in \fpm$ следующую последовательность:
$$
    \lbrace \vv x, \vv x^2, \vv x^4, \ldots, \vv x^{2^{n-1}} \rbrace \; .
$$
Очевидно, что $\vv x^{2^n} = \vv x$, поэтому выше рассмотрено только $n$ последовательных квадратов,
которые в общем случае все различны. Рассмотрим, при каком условии $\vv x^{2^i} = \vv x$, где $i \ne n-1$.
Пусть $\vv x = \balpha^k$, $k = [k_n \ldots k_2 k_1]$.
Согласно формуле~\eqref{formula_degree_rotation}, при возведении в квадрат происходит циклический сдвиг разрядов в двоичной
записи степени $k$, поэтому верно следующее равенство:
$$
    \vv x^{2^i} = \balpha^{k'}, \text{ }k'={[k_{n-i}\ldots k_2 k_1 k_n \ldots k_{n-i + 1}]} \; .
$$
Отсюда можно сделать вывод, что $k' = k \Leftrightarrow i \mid n$ и число $k$ в двоичной записи имеет
периодическую структуру с периодом, равным $i$ разрядов.
К примеру, при $n = 6$ число $k = [011011]$ имеет период из трёх разрядов, равный $[011]$.
Поэтому для такого значения $k$ элемент $\vv x = \balpha^k$ в кубе будет равен самому себе: $\vv x^3 = \vv x$.

Введём множество $R({\vv x})$ для элемента $\vv x \in \fpm$ следующим образом:
\begin{equation}
    R({\vv x}) = \{\,\vv x^{2^i} \mid 0 \leqslant i \leqslant n-1, \vv x^{2^i} \neq \vv x \text{ при } i \neq 0\,\} \; .
    \label{formula_cyclic_set_for_x}
\end{equation}
Все элементы множества $R({\vv x})$ различны.
Зная любой элемент $\vv y \in R({\vv x})$, можно получить всё множество $R({\vv x})$
путём последовательного возведения в квадрат элемента $\vv y$.
Из построения~\eqref{formula_cyclic_set_for_x} множества $R({\vv x})$ следует, что
\begin{equation}
    m \cdot |R({\vv x})| = n, \text{ } m \in \mathbb N^+  \; .
\end{equation}
Другими словами, мощность множества $R({\vv x})$ делит $n$.

Таким образом, все элементы мультипликативной группы $\fpm$ разбиваются на непересекающиеся множества $R({\vv x_i})$
с, возможно, неравными мощностями.
Умножив элемент множества $R({\vv x_i})$ на примитивный элемент $\balpha$, получим новый элемент,
который либо принадлежит этому же множеству $R({\vv x_i})$, либо новому $R({\vv x_j})$.

Для элемента $\vv x = \balpha^0 = 1$ множество $R({\vv x})$ состоит только из одного этого элемента.
Отсюда следует одно утверждение для простого $n$, являющееся тривиальным следствием равенства
$q^{p-1} \equiv 1 \pmod{p}$ для $q=2$, где $p$~--- простое число:
\begin{predicate}
    $2^n - 2$ делится нацело на $n$, если $n$~--- простое.
\end{predicate}
\begin{proof}
Докажем при помощи свойств множеств $R({\vv x})$.
Для всех $\balpha^k$ мощность множества $R({\balpha^k})$ должна делить $n$.
По условию, $n$ простое $\Rightarrow |R({\balpha^k})|$ равно либо $n$, либо $1$.
Только для $k = 0$ и $k = 2^n - 1$ мощность $|R({\balpha^k})|=1$. Однако $\balpha^0 = \balpha^{2^n - 1} = 1$.
Отсюда следует, что $|R({\balpha^k})| = n$ при $k \neq 0$, $k < 2^n - 1$.

Выше было показано, что все элементы мультипликативной группы $\fpm$ разбиваются на непересекающиеся множества $R({\vv x_i})$,
из которых только $R({\balpha^0})$ имеет мощность 1, остальные множества имеют мощность $n$.
Мощность мультипликативной группы $|\fpm| = 2^n - 1 \Rightarrow |\left( \fpm \right) \setminus \{ \balpha^0 \}| = 2^n - 2$,
откуда следует исходное утверждение.
\end{proof}

Если вычисление степени $k$ для некоторого $\vv x = \balpha^k$ происходит путём опробования всех возможных значений $k$,
то для поля $\fp$ в общем случае можно опробовать не $\sim 2^n$ значений, а $\sim 2^n \mathop / n$:
\label{brute_force_complexity_in_polynomial_field}
при опробовании нового значения $k'$ получить $\vv x' = \balpha^{k'}$ и $n-1$ последовательных квадратов
$(\vv x')^2, \ldots, (\vv x')^{2^{n-1}}$. Если среди полученных $n$ значений будет $\vv x$, то значение $k$
вычисляется из $k'$ не более чем за $n$ шагов при помощи циклического сдвига влево двоичной записи $k'$.

С другой стороны, если окажется, что среди $\vv x, \vv x^2, \ldots, \vv x^{2^{n-1}}$ есть повторяющиеся элементы,
то можно утверждать, что степень $k$, $\vv x = \balpha^k$, обладает периодом $m < n$.
Пусть $n$ не является простым числом, тогда $n = ml$, $l \in \mathbb N^+$.
В этом случае пространство перебора для степени $k$ сокращается до $\sim 2^m \mathop / m= o(2^n \mathop / n)$ значений.
Тем не менее, количество значений степени $k$, обладающих периодом $m < n$, ничтожно мало по сравнению с количеством
всех возможных значений степени $k$. Отсюда следует, что для произвольного $\vv x = \balpha^k$ вероятность того,
что $k$ обладает периодом $m < n$, стремится к нулю.


\subsection{Схемная реализация дискретного логарифмирования в конечном поле характеристики 2}

\forceindent
В данном разделе будут рассмотрены различные способы реализации с помощью обратимых схем алгоритма возведения в степень и
дискретного логарифмирования в конечном поле характеристики 2 на примере фактор-кольца $\fp$,
где $f(x)$~--- неприводимый многочлен степени $n$.

Без ограничения общности дальнейших рассуждений будем считать, что для неприводимого многочлена $f(x)$ элемент поля
$\balpha = x$, $\balpha \in \fp$, является примитивным, поскольку конечные поля с одинаковым количеством элементов
изоморфны друг другу.

Сформулируем \textit{прямую задачу} (задачу возведения в степень).\\
\textbf{Дано}: $k \in \ZZ_M$, где $M = 2^n - 1$, $n$~--- степень неприводимого многочлена $f(x)$.\\
\textbf{Получить}: $\balpha^k$.

Сформулируем \textit{обратную задачу} (задачу дискретного логарифмирования).\\
\textbf{Дано}: $\boldsymbol\beta \in \fpm$.\\
\textbf{Получить}: $k \in \ZZ_M \colon \balpha^k = \boldsymbol\beta$,
где $M = 2^n - 1$, $n$~--- степень неприводимого многочлена $f(x)$.

Как было показано в предыдущих главах, обратимые схемы, состоящие из элементов $E(t, I, J)$, задают некоторую подстановку
на множестве двоичных векторов. Построим взаимно однозначное отображение из множества двоичных векторов $\ZZ_2^n$
в множество вычетов $\ZZ_M$ и в множество элементов поля $\fp$.

Зададим отображение $f_\NN\colon \ZZ_2^n \to \ZZ_{2^n}$ следующим образом:
\begin{equation}
    f_\NN(\langle v_1, \ldots, v_n \rangle) = \sum_{i=1}^{n}{v_i 2^{i-1}} \; .
\end{equation}
В этом случае вектор $\vv v = \langle v_1, \ldots, v_n \rangle$ соответствует записи $[v_n, \ldots, v_1]$
в двоичной системе счисления числа $\sum_{i=1}^{n}{v_i 2^{i-1}}$.

Зададим отображение $f_\FF\colon \ZZ_2^n \to \fp$ следующим образом:
\begin{equation}
    f_\FF(\langle v_1, \ldots, v_n \rangle) = \bigoplus_{i=1}^{n}{v_i x^{i-1}} \; .
\end{equation}
В этом случае вектор $\vv v = \langle v_1, \ldots, v_n \rangle$ представляет собой вектор коэффициентов
многочлена $\bigoplus_{i=1}^{n}{v_i x^{i-1}}$.

Используя эти два отображения, зададим отображение $f_\mathrm{pow}\colon \ZZ_2^n \to \ZZ_2^n$ следующим образом:
\begin{equation}
    f_\mathrm{pow}(\vv v) = \vv u \colon f_\FF(\vv u) = \balpha^{f_\NN(\vv v)},
                \text{ если }\vv v \neq \langle 1, \ldots, 1 \rangle \; .
    \label{formula_discrete_power_base_function}
\end{equation}
На входном значении $\vv v = \langle 1, \ldots, 1 \rangle$ значение $f_\mathrm{pow}(\vv v)$ явно определять не будем,
т.\,к. в этом случае $f_\NN(\vv v) \notin \ZZ_M$, где $M = 2^n - 1$, $n$~--- степень неприводимого многочлена $f(x)$.

Также зададим отображение $f_\mathrm{log}\colon \ZZ_2^n \to \ZZ_2^n$ следующим образом:
\begin{equation}
    f_\mathrm{log}(\vv v) = \vv u \colon f_\FF(\vv v) = \balpha^{f_\NN(\vv u)},
                \text{ если }\vv v \neq \langle 0, \ldots, 0 \rangle \; .
    \label{formula_discrete_logarithm_base_function}
\end{equation}
Как и в предыдущем случае, на входном значении $\vv v = \langle 0, \ldots, 0 \rangle$
значение $f_\mathrm{log}(\vv v)$ явно определять не будем, т.\,к. в этом случае $f_\FF(\vv v) \notin \fpm$.

Тогда можно утверждать, что обратимая схема является решением \textit{прямой задачи},
если она реализует отображение $f_\mathrm{pow}$~\eqref{formula_discrete_power_base_function},
и является решением \textit{обратной задачи}, если она реализует отображение
$f_\mathrm{log}$~\eqref{formula_discrete_logarithm_base_function}.

Обозначим для краткости через $\Omega^2_*$ множество всех возможных элементов NOT, CNOT и 2-CNOT в обратимой схеме,
число входов которой заранее неизвестно. Этим мы ограничимся обратимыми \gate, имеющими не более двух прямых контролирующих
входов.


\myparagraph{Схемы без дополнительной памяти}

\forceindent
Рассмотрим сперва решение \textit{прямой задачи} при помощи обратимых схем, не использующих дополнительную
память. Доопределим отображение $f_\mathrm{pow}$ до биекции:
\begin{equation}
    f'_\mathrm{pow}(\vv v) =
        \begin{cases}
        f_\mathrm{pow}(\vv v),&\text{если }\vv v \neq \langle 1, \ldots, 1 \rangle  \; , \\
        \langle 0, \ldots, 0 \rangle & \text{иначе} \; .
        \end{cases}
    \label{formula_discrete_power_bijection_function}
\end{equation}
Получившееся отображение $f'_\mathrm{pow}$ является биекцией, однако оно уже не является отображением,
задаваемым непосредственно функцией возведения в степень в поле $\fp$.
Это следует из того, что $f_\NN(\langle 1, \ldots, 1 \rangle) = 2^n - 1$ и
$\balpha^{2^n - 1} = \balpha^0 = 1$, в то время как $f_\FF(f'_\mathrm{pow}(\langle 1, \ldots, 1 \rangle)) = 0$.

Отметим, что значение отображения $(f'_\mathrm{pow})^{-1}(\vv v)$ совпадает со значением отображения
$f_\mathrm{log}(\vv v)$ при $\vv v \neq \langle 0, \ldots, 0 \rangle$.
Отсюда следует, что обратимая схема, задающая преобразование $f'_\mathrm{pow}$, является решением \textit{прямой задачи} и
позволяет получить обратимую схему, являющуюся решением \textit{обратной задачи}, с той же сложностью
путём зеркального отображения слева направо данной схемы.

Данный подход позволяет оценить сверху сложность обратимой схемы, реализующей алгоритм
дискретного логарифмирования в поле $\fp$.
Если построенная схема будет иметь минимальную сложность среди всех обратимых схем,
реализующих отображение $f_\mathrm{log}$, то можно будет утверждать, что при данном значении $n$
\textit{прямая} и \textit{обратная задачи} решаются с одинаковой схемной сложностью.

В таблице~\ref{table_discrete_log_complexity_simple_case} приведены экспериментальные результаты синтеза
обратимых схем, состоящих из элементов $E(t,I,J)$ и реализующих отображение $f_\mathrm{log}$
без использования дополнительной памяти при различных значениях неприводимого многочлена $f(x)$.
Обозначения: $n$~--- степень многочлена $f(x)$, $L(\frS)$~--- сложность обратимой схемы.
Синтез производился при помощи разработанного программного обеспечения,
основанного на алгоритме синтеза~\algref{alg_my_common}.

{
    \renewcommand{\baselinestretch}{1.2}

    \Table[ht!]
        \small
        \centering
        \begin{tabular}{|*{3}{c|}|*{3}{c|}}
            \hline
            $n$ & $f(x)$ & $\balpha$ & $L(\frS)$ & $L(\frS^*)$ & $n^3$ \tabularnewline \hline
            
            2 & $x^2 + x + 1$                   & $x$           & 3     & 3    & 8    \tabularnewline \hline
            3 & $x^3 + x + 1$                   & $x$           & 6     & 7    & 27   \tabularnewline \hline
            3 & $x^3 + x^2 + 1$                 & $x$           & 8     & 7    & 27   \tabularnewline \hline
            4 & $x^4 + x + 1$                   & $x$           & 23    & 18   & 64   \tabularnewline \hline
            4 & $x^4 + x^3 + x^2 + x + 1$       & $x + 1$       & 18    & 15   & 64   \tabularnewline \hline
            4 & $x^4 + x^3 + 1$                 & $x$           & 22    & 17   & 64   \tabularnewline \hline
            5 & $x^5 + x^2 + 1$                 & $x$           & 53    & 41   & 125  \tabularnewline \hline
            5 & $x^5 + x^4 + x^3 + x^2 + 1$     & $x$           & 53    & 42   & 125  \tabularnewline \hline
            5 & $x^5 + x^4 + x^2 + x + 1$       & $x$           & 55    & 37   & 125  \tabularnewline \hline
            5 & $x^5 + x^3 + x^2 + x + 1$       & $x$           & 60    & 41   & 125  \tabularnewline \hline
            6 & $x^6 + x + 1$                   & $x$           & 178   & 85   & 216  \tabularnewline \hline
            6 & $x^6 + x^4 + x^2 + x + 1$       & $x + 1$       & 168   & 91   & 216  \tabularnewline \hline
            6 & $x^6 + x^5 + x^2 + x + 1$       & $x$           & 156   & 85   & 216  \tabularnewline \hline
            6 & $x^6 + x^3 + 1$                 & $x + 1$       & 145   & 90   & 216  \tabularnewline \hline
            7 & $x^7 + x + 1$                   & $x$           & 415   & 184  & 343  \tabularnewline \hline
            7 & $x^7 + x^3 + 1$                 & $x$           & 407   & 190  & 343  \tabularnewline \hline
            7 & $x^7 + x^5 + x^2 + x + 1$       & $x$           & 400   & 191  & 343  \tabularnewline \hline
            7 & $x^7 + x^6 + x^4 + x + 1$       & $x$           & 358   & 191  & 343  \tabularnewline \hline
            8 & $x^8 + x^4 + x^3 + x^2 + 1$     & $x$           & 951   & 422  & 512  \tabularnewline \hline
            8 & $x^8 + x^6 + x^5 + x^2 + 1$     & $x$           & 987   & 417  & 512  \tabularnewline \hline
            8 & $x^8 + x^7 + x^6 + x + 1$       & $x$           & 1019  & 414  & 512  \tabularnewline \hline
            8 & $x^8 + x^6 + x^3 + x^2 + 1$     & $x$           & 943   & 401  & 512  \tabularnewline \hline
            9 & $x^9 + x^4 + 1$                 & $x$           & 2698  & 858  & 729  \tabularnewline \hline
            9 & $x^9 + x^8 + x^4 + x + 1$       & $x$           & 2691  & 873  & 729  \tabularnewline \hline
            9 & $x^9 + x^8 + 1$                 & $x^2 + x + 1$ & 2780  & 892  & 729  \tabularnewline \hline
            9 & $x^9 + x^7 + x^6 + x^4 + 1$     & $x$           & 2679  & 849  & 729  \tabularnewline \hline
            10 & $x^{10} + x^3 + 1$             & $x$           & 6312  & 1840 & 1000 \tabularnewline \hline
            10 & $x^{10} + x^9 + x^5 + x + 1$   & $x + 1$       & 6419  & 1873 & 1000 \tabularnewline \hline
            10 & $x^{10} + x^6 + x^2 + x + 1$   & $x + 1$       & 6437  & 1858 & 1000 \tabularnewline \hline
            10 & $x^{10} + x^8 + x^7 + x^6 + 1$ & $x^2 + x + 1$ & 6289  & 1847 & 1000 \tabularnewline \hline
            11 & $x^{11} + x^2 + 1$             & $x$           & 14659 & 3947 & 1331 \tabularnewline \hline
            11 & $x^{11} + x^5 + x^3 + x + 1$   & $x$           & 14429 & 3952 & 1331 \tabularnewline \hline
            11 & $x^{11} + x^7 + x^6 + x^5 + 1$ & $x$           & 14636 & 3941 & 1331 \tabularnewline \hline
            11 & $x^{11} + x^7 + x^5 + x^3 + 1$ & $x$           & 14559 & 3921 & 1331 \tabularnewline \hline

        \end{tabular}
        \caption{
            \small Сложность $L(\frS)$ и $L(\frS^*)$ обратимых схем, состоящих из элементов $E(t,I,J)$
                и реализующих отображение $f_\mathrm{log}$ без использования и с использованием
                дополнительной памяти соответственно при различных значениях
                неприводимого многочлена $f(x)$.
        }\label{table_discrete_log_complexity_simple_case}
    \end{table}

} 

Из таблицы~\ref{table_discrete_log_complexity_simple_case} видно, что с ростом $n$ сложность
обратимой схемы увеличивается примерно в 2 раза.
Эти результаты согласуются с теоремой~\ref{theorem_complexity_no_memory_common}, при этом можно сделать вывод,
что при реализации отображения $f_\mathrm{log}$ обратимой схемой, состоящей из элементов $E(t,I,J)$
и не использующей дополнительную память, описанный подход не позволяет выявить существенных отличий данного отображения
от произвольного.


\myparagraph[paragraph_schemes_with_additional_memory_for_discrete_log]{Схемы c дополнительной памятью}

\forceindent
Согласно утверждению~\ref{predicate_reduction_of_complexity_with_help_of_additional_memory},
для большинства булевых отображений использование в реализующих их обратимых схемах дополнительной памяти
позволяет снизить их сложность.
Для решения задачи возведения в степень $k \in \ZZ_{2^n}$ известен быстрый алгоритм, использующий не более $2n$ умножений.
В работе~\cite{zakablukov_zasorina_chikin} было показано, что существует обратимая схема,
состоящая из \gate{} множества $\Omega^2_*$,
реализующая умножение многочленов в поле $\fp$ со сложностью $2n^2$ при использовании $(2n-1)$ дополнительных входов.
\label{complexity_of_power_and_multiply_in_polinomial_field}
Таким образом, можно построить обратимую схему, состоящую из \gate{} множества $\Omega^2_*$,
решающую задачу возведения в степень (прямую задачу) в поле $\fp$ со сложностью $4n^3$ при использовании
$(4n^2 - 2n)$ дополнительных входов.

Как уже было сказано в предыдущем параграфе, на данный момент неизвестно, существует ли полиномиальный алгоритм решения
задачи дискретного логарифмирования в общем случае.
Тем не менее представляется интересным исследование снижения сложности обратимой схемы,
являющейся решением обратной задачи, за счёт использования дополнительной памяти.

Отображение $f_\mathrm{log}$ можно представить в виде совокупности $n$ независимых координатных функций
$f_\mathrm{log}^{(i)}\colon \ZZ_2^n \to \ZZ_2$.
Один из способов реализации таких функций при помощи обратимых схем с дополнительной памятью использовался в алгоритме%
~\algref{alg_asymp_with_mem_complexity_min}.
Однако если мы допускаем, что обратимые схемы могут состоять не только из \gate{} множества $\Omega^2_*$,
но и обобщённых элементов $E(t,I,J)$, то можно предложить второй способ реализации указанных координатных функций,
который в некоторых случаях даёт меньшую сложность.

В параграфе~\ref{label_boolean_edge_search_paragraph} был описан способ снижения сложности обратимой схемы,
основанный на поиске грани булева куба.
Множество двоичных векторов $\vv v_j$, для которых $f_\mathrm{log}^{(i)}(\vv v_j) = 1$, можно разбить
на непересекающиеся подмножества, каждое из которых будет представлять собой грань булева куба.
Реализуя каждое из этих подмножеств с помощью обратимых элементов $E(t,I,J)$
(см. параграф~\ref{label_boolean_edge_search_paragraph}), можно построить обратимую подсхему,
реализующую координатную функцию $f_\mathrm{log}^{(i)}$. Объединив эти обратимые подсхемы для всех координатных функций,
мы получим обратимую схему, реализующую отображение $f_\mathrm{log}$.\label{simple_boolean_edge_search_algorithm}
В результате будет использовано всего $n$ дополнительных входов (по количеству координатных функций).

Экспериментальные результаты синтеза обратимых схем при помощи разработанного программного обеспечения,
основанного на данном подходе, приведены в таблице~\ref{table_discrete_log_complexity_simple_case}.
Обозначения: $n$~--- степень многочлена $f(x)$, $L(\frS^*)$~--- сложность обратимой схемы,
использующей дополнительную память.

Из таблицы~\ref{table_discrete_log_complexity_simple_case} видно, что с ростом $n$ сложность
обратимой схемы, использующей дополнительную память, увеличивается примерно в 2 раза.
При этом величина сложности растёт медленней по сравнению с обратимыми схемами,
не использующими дополнительную память.
Эти результаты согласуются с теоремой~\ref{theorem_complexity_with_memory_common}, при этом можно сделать вывод,
что при реализации отображения $f_\mathrm{log}$ обратимой схемой, состоящей из элементов $E(t,I,J)$
и использующей дополнительную память, описанный подход не позволяет выявить существенных отличий данного отображения
от произвольного.
Также стоит отметить, что предложенный способ синтеза обратимых схем с дополнительной памятью при помощи поиска
граней булева куба хоть и не позволяет во всех случаях снизить сложность, но даёт возможность получить
обратимую схему, использующую всего $n$ дополнительных входов. Такое количество дополнительных входов намного меньше,
чем количество $q \sim n2^{n - o(n)}$ дополнительных входов обратимой схемы, полученной при помощи алгоритма синтеза%
~\algref{alg_asymp_with_mem_complexity_min} обратимых схем с дополнительной памятью,
описанного на с.~\pageref{alg_asymp_with_mem_complexity_min}.

\bigskip
\noindent\textbf{Использование свойства циклического сдвига при возведении в квадрат}
\smallskip

В разделе~\ref{subsection_discrete_logarithm_in_polynomial_field} на с.~\pageref{brute_force_complexity_in_polynomial_field}
было показано, что для решения обратной задачи (задачи дискретного логарифмирования) в поле $\fp$ можно
перебирать не все $2^n$ возможных значений для степени, а только некоторые $2^n \mathop / n$ значений.
Можно воспользоваться этим свойством для снижения сложности обратимой схемы, являющейся решением данной задачи.

Рассмотрим произвольный элемент $\vv x \in \fpm$, $\vv x = \balpha ^ k$.
На с.~\pageref{formula_cyclic_set_for_x} мы ввели множество $R(\vv x)$ различных элементов из
$\fpm$ следующего вида:
$$
    R(\vv x) = \{\,\vv x^{2^i}\mid 0 \leqslant i \leqslant n - 1, \vv x^{2^i} \neq \vv x \text{ при } i \neq 0\,\} \; .
$$
В большинстве случаев мощность этого множества равна $n$.
Если степень $n$ неприводимого многочлена $f(x)$~--- простое число, то только для $\vv x' = 1$ верно равенство
$|R(\vv x')| = 1$, для всех оставшися $2^n -2$ элементов множества $\fpm$ мощность $|R(\vv x)| = n$.

Для различных элементов множества $R(\vv x)$ можно построить такие же множества. Очевидно, что все такие множества будут равны
между собой:
$$
    R(\vv x) = R(\vv x^{2^i}) \; .
$$
Для каждого из различных множеств $R(\vv x)$ зафиксируем ровно один элемент из него $\vv d \in R(\vv x)$.
Назовём этот элемент $\vv d$ \textit{представителем} множества $R(\vv x)$.

Введём отображение $g\colon \ZZ_2^n \to \ZZ_2^n$ следующим образом:
\begin{equation}
    g(\vv v) = \vv u \colon \balpha^{f_\NN(\vv u)} = \vv d, \text{ где } \vv d \text{ --- представитель множества }
        R(f_\FF(\vv v)) \; .
    \label{formula_discrete_logarithm_rotation_optimization}
\end{equation}
Другими словами, отображение $g$ является решением задачи дискретного логарифмирования не для самого элемента $f_\FF(\vv v)$,
а для представителя $\vv d$ множества $R(f_\FF(\vv v))$.
В зависимости от выбора представителей множеств $R(\vv x)$ будет меняться соответствующим образом и таблица истинности для
отображения $g$. Причём в отличие от таблицы истинности для отображения $f_\mathrm{log}$, в этой таблице будет примерно
в $n$ раз меньше различных значений, что предположительно может позволить снизить сложность обратимой схемы,
реализующей отображение $g$. Обозначим эту схему и её сложность через $\frS_1$ и $L(\frS_1)$ соответственно.

Зная представитель $\vv d$ множества $R(\vv x)$ и степень $k_{\vv d}$, для которой верно равенство
$\balpha^{k_{\vv d}} = \vv d$, можно относительно просто узнать для любого элемента $\vv y \in R(\vv x)$
степень $k_{\vv y}$, для которой верно равенство $\balpha^{k_{\vv y}} = \vv y$.
Рассмотрим, какова сложность обратимой схемы $\frS_2$, решающей данную задачу.
Как было сказано в начале этого параграфа на с.~\pageref{complexity_of_power_and_multiply_in_polinomial_field},
для возведения $\balpha$ в степень $k_{\vv d}$ требуется не более $4n^3$ обратимых \gate{}
Оставшаяся часть схемы представляет из себя $n$ почти одинаковых частей, которые делают следующее:
сравнивают текущий результат $\vv y'$ с входным значением $\vv y$ и, если результат совпадает, применяют к $k_{\vv d}$
циклический сдвиг на определённую величину и копируют это значение на выходы схемы;
в конце возводят текущий результат $\vv y'$ в квадрат и передают его на вход следующей подсхеме.
Циклический свдиг делается на величину, равную порядку подсхемы в схеме минус 1: первая подсхема не делает циклический сдвиг,
вторая делает циклический сдвиг на 1, третья~--- на 2 и т.\,д.

Сравнение двух двоичных векторов длины $n$ можно реализовать при помощи $(2n + 1)$ элементов $E(t,I,J)$:
соответствующие координаты двух векторов копируются при помощи элементов CNOT на отдельный выход в схеме, получается
функция сложения по модулю 2 этих координат; затем при помощи одного элемента $E(t, I, I)$, где множество $I$ равно множеству
тех выходов, которые были получены на предыдущем шаге, получается значение на выходе $t$, равное 1, если два входных вектора
совпадают по всем координатам, и 0 иначе.

Копирование степени $k_{\vv d}$ на выходы схемы с циклическим сдвигом на любую величину можно сделать при помощи $n$
элементов 2-CNOT, у которых в качестве одного из контролирующих входов будет вход $t$, полученный на предыдущем шаге.
И, наконец, возведение в квадрат текущего результата $\vv y'$ можно реализовать при помощи $2n^2$ обратимых элементов,
как было показано в начале этого параграфа на с.~\pageref{complexity_of_power_and_multiply_in_polinomial_field}.

Таким образом, суммарная сложность рассматриваемой обратимой схемы будет равна
\begin{equation}
    L(\frS_2) = 4n^3 + n(2n + 1 + n + 2n^2) = 6n^3 + 3n^2 + n \; .
\end{equation}

Объединяя две рассмотренные схемы $\frS_1$ и $\frS_2$ в одну, можно получить обратимую схему $\frS$, являющуюся решением 
обратной задачи (задачи дискретного логарифмирования). Её сложность $L(\frS)$ будет равна
\begin{equation}
    L(\frS) = L(\frS_1) + 6n^3 + 3n^2 + n \; .
\end{equation}
Таким образом, чтобы описанный подход позволил снизить сложность обратимой схемы, сложность $L(\frS_1)$
должна удовлетворять неравенству
\begin{equation}
    L(\frS_1) < L(\frS^*) - 6n^3 - 3n^2 - n \; ,
    \label{formula_profit_condition_for_rotation_optimization}
\end{equation}
где $\frS^*$~--- обратимая схема, являющаяся решением обратной задачи, полученная при помощи алгоритма синтеза
по таблице истинности, описанного в начале данного параграфа.

{
    \renewcommand{\baselinestretch}{1.16}
    \Table[ht!]
        \small
        \centering
        \begin{tabular}{|*{9}{c|}{|c|}}
            \hline
            
            \multirow{2}{*}{ $n$ } & 
            \multirow{2}{*}{ $f(x)$ } &
            \multirow{2}{*}{ $L(\frS^*)$ } &
            \multicolumn{6}{c||}{ $L(\frS_1)$ } &
            \multirow{2}{*}{ $\sigma$ } \tabularnewline
            
            \cline{4-9}
            & & &
            $k_\mathrm{min}$  &
            $k_\mathrm{max}$  &
            $k_\mathrm{dist}$ &
            $r_1$ & $r_2$ & $r_3$ &             
            \tabularnewline \hline
            
             2 & $x^2+x+1$              &    3 &    3 &    3 &    3 &    3 &    3 &    3 & 1    \tabularnewline \hline
             3 & $x^3+x+1$              &    7 &    5 &    5 &    5 &    5 &    5 &    5 & 1,4  \tabularnewline \hline
             3 & $x^3+x^2+1$            &    7 &    7 &    7 &    7 &    7 &   10 &   10 & 1    \tabularnewline \hline
             4 & $x^4+x+1$              &   18 &    8 &    8 &   12 &    9 &   11 &   13 & 2,25 \tabularnewline \hline
             4 & $x^4+x^3+x^2+x+1$      &   15 &   11 &   11 &   16 &   11 &   13 &   16 & 1,36 \tabularnewline \hline
             4 & $x^4+x^3+1$            &   17 &   11 &   11 &   11 &   11 &   13 &   18 & 1,55 \tabularnewline \hline
             5 & $x^5+x^2+1$            &   41 &   23 &   23 &   33 &   30 &   32 &   36 & 1,78 \tabularnewline \hline
             5 & $x^5+x^4+x^3+x^2+1$    &   42 &   29 &   29 &   45 &   38 &   45 &   50 & 1,45 \tabularnewline \hline
             5 & $x^5+x^4+x^2+x+1$      &   37 &   26 &   26 &   29 &   28 &   33 &   38 & 1,42 \tabularnewline \hline
             5 & $x^5+x^3+x^2+x+1$      &   41 &   22 &   22 &   27 &   25 &   31 &   36 & 1,86 \tabularnewline \hline
             6 & $x^6+x+1$              &   85 &   50 &   51 &   60 &   73 &   75 &   88 & 1,7  \tabularnewline \hline
             6 & $x^6+x^4+x^2+x+1$      &   91 &   48 &   50 &   64 &   62 &   66 &   69 & 1,9  \tabularnewline \hline
             6 & $x^6+x^5+x^2+x+1$      &   85 &   57 &   56 &   80 &   73 &   81 &   88 & 1,52 \tabularnewline \hline
             6 & $x^6+x^3+1$            &   90 &   54 &   40 &   57 &   51 &   62 &   73 & 2,25 \tabularnewline \hline
             7 & $x^7+x+1$              &  184 &  124 &  119 &  138 &  149 &  157 &  169 & 1,55 \tabularnewline \hline
             7 & $x^7+x^3+1$            &  190 &  119 &  119 &  128 &  159 &  160 &  172 & 1,6  \tabularnewline \hline
             7 & $x^7+x^5+x^2+x+1$      &  191 &  128 &  117 &  146 &  155 &  168 &  179 & 1,63 \tabularnewline \hline
             7 & $x^7+x^6+x^4+x+1$      &  191 &  123 &  108 &  169 &  169 &  179 &  189 & 1,77 \tabularnewline \hline
             8 & $x^8+x^4+x^3+x^2+1$    &  422 &  276 &  265 &  341 &  353 &  370 &  394 & 1,59 \tabularnewline \hline
             8 & $x^8+x^6+x^5+x^2+1$    &  417 &  273 &  260 &  378 &  375 &  389 &  403 & 1,6  \tabularnewline \hline
             8 & $x^8+x^7+x^6+x+1$      &  414 &  279 &  272 &  358 &  369 &  385 &  415 & 1,52 \tabularnewline \hline
             8 & $x^8+x^6+x^3+x^2+1$    &  401 &  261 &  257 &  357 &  376 &  384 &  395 & 1,56 \tabularnewline \hline
             9 & $x^9+x^4+1$            &  858 &  600 &  598 &  795 &  839 &  857 &  874 & 1,43 \tabularnewline \hline
             9 & $x^9+x^8+x^4+x+1$      &  873 &  595 &  609 &  814 &  836 &  845 &  850 & 1,47 \tabularnewline \hline
             9 & $x^9+x^8+1$            &  892 &  584 &  596 &  780 &  826 &  858 &  889 & 1,53 \tabularnewline \hline
             9 & $x^9+x^7+x^6+x^4+1$    &  849 &  618 &  605 &  775 &  828 &  849 &  886 & 1,4  \tabularnewline \hline
            10 & $x^{10}+x^3+1$         & 1840 & 1334 & 1311 & 1549 & 1797 & 1812 & 1838 & 1,4  \tabularnewline \hline
            10 & $x^{10}+x^9+x^5+x+1$   & 1873 & 1339 & 1331 & 1763 & 1792 & 1828 & 1834 & 1,41 \tabularnewline \hline
            10 & $x^{10}+x^6+x^2+x+1$   & 1858 & 1312 & 1288 & 1587 & 1789 & 1828 & 1845 & 1,44 \tabularnewline \hline
            10 & $x^{10}+x^8+x^7+x^6+1$ & 1847 & 1332 & 1305 & 1650 & 1820 & 1841 & 1872 & 1,42 \tabularnewline \hline
            11 & $x^{11}+x^2+1$         & 3947 & 2850 & 2891 & 3703 & 3849 & 3881 & 3910 & 1,38 \tabularnewline \hline
            11 & $x^{11}+x^5+x^3+x+1$   & 3952 & 2856 & 2841 & 3444 & 3837 & 3881 & 3940 & 1,39 \tabularnewline \hline
            11 & $x^{11}+x^7+x^6+x^5+1$ & 3941 & 2882 & 2881 & 3591 & 3877 & 3900 & 3945 & 1,37 \tabularnewline \hline
            11 & $x^{11}+x^7+x^5+x^3+1$ & 3921 & 2823 & 2864 & 3396 & 3830 & 3890 & 3943 & 1,39 \tabularnewline \hline

        \end{tabular}
        \caption{
            \small Сравнение сложностей $L(\frS^*)$ и $L(\frS_1)$ обратимых схем, состоящих из элементов $E(t,I,J)$
                и реализующих отображение $f_\mathrm{log}$~\eqref{formula_discrete_logarithm_base_function}
                и $g$~\eqref{formula_discrete_logarithm_rotation_optimization} соответственно,
                при различном выборе представителей $\vv d_i$ множеств $R(\vv x_i)$:
                $k_\mathrm{min}$~--- представитель с минимальной степенью;
                $k_\mathrm{max}$~--- с максимальной степенью;
                $k_\mathrm{dist}$~--- степень представителя имеет минимальное расстояние Хемминга до векторов всех элементов
                $R(\vv x_i)$;
                $r_1, r_2, r_3$~--- случайные представители (по возрастанию сложности);
                $\sigma = L(\frS^*) \mathop / \min {L(\frS_1)}$.
        }\label{table_discrete_log_complexity_rotation_optimization}
    \end{table}

} 

В таблице~\ref{table_discrete_log_complexity_rotation_optimization} приведены экспериментальные результаты синтеза
обратимых схем, состоящих из элементов $E(t,I,J)$ и реализующих отображение
$g$~\eqref{formula_discrete_logarithm_rotation_optimization} при различных значениях неприводимого многочлена $f(x)$
и различном выборе представителей $\vv d_i$ множеств $R(\vv x_i)$.
Обозначения: $n$~--- степень многочлена $f(x)$; $L(\frS^*)$~--- сложность обратимой схемы,
реализующей отображение $f_\mathrm{log}$~\eqref{formula_discrete_logarithm_base_function},
взято из таблицы~\ref{table_discrete_log_complexity_simple_case};
$L(\frS_1)$~--- сложность обратимой схемы,
реализующей отображение $g$~\eqref{formula_discrete_logarithm_rotation_optimization};
$\sigma$~--- отношение $L(\frS^*)$ к минимальному из полученных во время экспериментов значению $L(\frS_1)$.
Синтез производился при помощи разработанного программного обеспечения, основанного на алгоритме
поиска граней булева куба (см. с.~\pageref{simple_boolean_edge_search_algorithm}).

Для каждого элемента $\vv y_j$ каждого из множеств $R(\vv x_i)$ можно найти соответствующую степень $k_j$,
такую что $\balpha^{k_j} = \vv y_j$. Для одного множества $R(\vv x_i)$ все степени $k_j$, соответствующие его элементам,
представляют собой циклический сдвиг друг друга, поэтому вес Хэмминга у них одинаковый.
Во время проведения эксперимента выбор представителя $\vv d_i$ множества $R(\vv x_i)$ производился одним из следующих способов:
\begin{enumerate}
    \item представитель с наименьшим значением $k_j$
        (колонка $k_\mathrm{min}$ в таблице~\ref{table_discrete_log_complexity_rotation_optimization});

    \item представитель с наибольшим значением $k_j$
        (колонка $k_\mathrm{max}$ в таблице~\ref{table_discrete_log_complexity_rotation_optimization});
        
    \item степень $k_j$ и все элементы множества $R(\vv x_i)$ представляются в виде двоичных векторов длины $n$;
        выбирается представитель, для которого минимальна сумма расстояний Хемминга между вектором, представляющим
        $k_j$, и векторами, представляющими все элементы множества $R(\vv x_i)$
        (колонка $k_\mathrm{dist}$ в таблице~\ref{table_discrete_log_complexity_rotation_optimization});
        
    \item случайный представитель.
\end{enumerate}
Было произведено 5 различных экспериментов с выбором случайного представителя $\vv d_i$ множества $R(\vv x_i)$.
В таблице~\ref{table_discrete_log_complexity_rotation_optimization} приведены результаты трёх таких экспериментов:
когда получалась минимальная сложность $L(\frS_1)$ (колонка $r_1$), когда получалась максимальная
сложность $L(\frS_1)$ (колонка $r_3$), и один из экспериментов, при котором получалось промежуточное значение
$L(\frS_1)$ (колонка $r_2$).

Из данных таблицы~\ref{table_discrete_log_complexity_rotation_optimization} видно, что наименьшая сложность
$L(\frS_1)$ синтезированной схемы получалась при выборе представителя с минимальной/максимальной степенью среди возможных.
При этом значение $\sigma = L(\frS^*) \mathop / L(\frS_1)$ колебалось от $1,37$ до $2,25$.
Если предположить, что при растущем значении $n$ величина $L(\frS^*) = 2^n$ и что $\sigma \geqslant 1,35$, то
по неравенству~\eqref{formula_profit_condition_for_rotation_optimization} можно подсчитать, после какого значения $n$
предложенный подход позволяет снизить сложность:
\begin{gather*}
    \frac{2^n}{1,35} < 2^n - 6n^3 - 3n^2 - n  \; , \\
    1,35(6n^3 + 3n^2 + n) < 0,35 \cdot 2^n \Rightarrow n \geqslant 17 \; .
\end{gather*}
Другими словами, при $n \geqslant 17$ предложенный подход по синтезу обратимой схемы, являющейся решением обратной задачи
(задачи дискретного логарифмирования) в поле $\fp$, позволяет снизить сложность этой схемы в $1,35$ раза или больше
по сравнению с обратимой схемой, синтезированной по таблице истинности способом, описанным в начале
параграфа~\ref{paragraph_schemes_with_additional_memory_for_discrete_log}, при условии сохранения указанных выше темпов роста
величин $L(\frS^*)$ и $L(\frS_1)$.


\myparagraph{Верхняя асимптотическая оценка сложности}

\forceindent
Согласно теореме~\ref{theorem_complexity_upper_with_memory}, отображение $f_\mathrm{log}\colon \ZZ_2^n \to \ZZ_2^n$,
которое мы ввели на с.~\pageref{formula_discrete_logarithm_base_function}, может быть реализовано обратимой схемой
$\frS_\mathrm{log}$, состоящей из \gate{} множества $\Omega^2_*$,
со сложностью $L(\frS_\mathrm{log}) \lesssim 2^n$.
Это простейшая верхняя асимптотическая оценка сложности обратимой схемы, реализующей алгоритм дискретного
логарифмирования в поле $\fp$.

Однако в данной оценке сложности схемы не учитывается, что существует полиномиальный алгоритм возведения в степень в поле,
и при синтезе обратимой схемы с указанной выше сложностью
работа бы велась, как с произвольным булевым отображением $\ZZ_2^n \to \ZZ_2^n$.
Покажем, что верхнюю асимптотическую оценку сложности обратимой схемы $\frS_\mathrm{log}$ можно снизить,
если за основу взять схему, реализующую алгоритм возведения в степень в поле $\fp$.

\begin{theorem}
    \label{theorem_upper_bound_discrete_log}
    Существует обратимая схема $\frS_\mathrm{log}$, состоящая из \gate{} множества $\Omega^2_{n+q}$
    и реализующая отображение $f_\mathrm{log}$ со сложностью $L(\frS_\mathrm{log}) \lesssim \frac{2^{n+1} \cdot \log_2 n}{n}$
    при использовании $Q(\frS_\mathrm{log}) \sim 2^{n-\lceil n \mathop / \phi(n)\rceil + 2} \cdot \log_2 n$ дополнительных входов,
    где $\phi(n)$ и $\psi(n)$~--- любые сколь угодно медленно растущие функции такие,
    что $\phi(n) \leqslant n \mathop / (\log_2 n + \log_2 \psi(n))$.
\end{theorem}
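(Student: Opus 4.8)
The plan is to build $\frS_\mathrm{log}$ as a composition of three reversible subcircuits, arranging matters so that all but an $O(\log_2 n / n)$ fraction of the cost is carried by a Lupanov\nobreakdash-type synthesis of a map that is non\nobreakdash-trivial on only $\sim 2^n/n$ inputs. Recall from~\cite{zakablukov_zasorina_chikin} that multiplication in $\fp$ is realizable by a reversible circuit over $\Omega^2_*$ with $O(n^2)$ gates and $O(n)$ additional inputs, hence $k \mapsto \balpha^k$ by repeated squaring with $O(n^3)$ gates. By the cyclic\nobreakdash-shift property~\eqref{formula_degree_rotation}, squaring $\balpha^k$ cyclically shifts the binary digits of $k$; so, running the chain $\vv x, \vv x^2, \ldots, \vv x^{2^{n-1}}$ and tracking a minimum, one computes reversibly, in $O(n^3)$ gates and polynomially many extra inputs, the pair $(\vv d, j)$, where $\vv d$ is the distinguished representative of $R(\vv x)$ (say, the element whose discrete logarithm is lexicographically least) and $j$ is the shift with $\vv x^{2^j} = \vv d$. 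This map is injective, since $\vv x = \vv d^{\,2^{-j}}$, so the comparison garbage produced by the minimum\nobreakdash-finding can be uncomputed by the Bennett construction; call the resulting clean reversible circuit $\frS_A$. Symmetrically, the map $(\log \vv d, j) \mapsto \log \vv x$ (obtained by a cyclic rotation by $-j$, or, as in the definition of $\frS_2$ in the previous section, by re\nobreakdash-exponentiation and comparison) is realizable with polynomial complexity; call it $\frS_C$.

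What remains is a subcircuit $\frS_B$ realizing $h\colon \vv d \mapsto \log \vv d$ correctly on the $C \leqslant \lceil (2^n-1)/n \rceil + 1$ representatives, with arbitrary behaviour on all other inputs --- that is, a partially specified reversible transformation of $\ZZ_2^n$ whose non\nobreakdash-trivial support has size $\sim 2^n/n$. I would adapt the Lupanov\nobreakdash-style synthesis used in the proofs of Theorem~\ref{theorem_complexity_upper_with_memory} and Theorem~\ref{theorem_arbitrary_boolean_transformation_complexity}: decompose each of the $n$ coordinate functions $h^{(i)}$ by the last $(n-k)$ variables as in~\eqref{formula_function_decomposition_by_last_variables}--\eqref{formula_analog_sdnf_improved}, but observe that since $h$ agrees with the identity outside a set of $\sim 2^n/n$ points, for all but an $O(1/n)$ fraction of the blocks indexed by $(a_{k+1}, \ldots, a_n)$ the inner function over $x_1, \ldots, x_k$ is trivial, so the $\frS_3$\nobreakdash- and $\frS_5$\nobreakdash-type stages shrink by roughly a factor $n/\log_2 n$. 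With the parameter choice $k = \lceil n \mathop/ \phi(n) \rceil$, $s = n - 2k$, and $\phi(n) \leqslant n \mathop/ (\log_2 n + \log_2 \psi(n))$ exactly as in Theorem~\ref{theorem_complexity_upper_with_memory}, this yields $L(\frS_B) \lesssim 2^{n+1}\log_2 n \mathop/ n$ using about $\log_2 n \mathop/ n$ times the memory of that theorem, i.e. $Q(\frS_B) \sim 2^{\,n - \lceil n \mathop/ \phi(n)\rceil + 2} \log_2 n$ additional inputs.

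Finally I would compose $\frS_A$, then $\frS_B$ acting on the $\vv d$\nobreakdash-register with $j$ left untouched, then $\frS_C$, obtaining a reversible circuit that realizes $f_\mathrm{log}$ up to computational garbage on the non\nobreakdash-significant outputs. Since $f_\mathrm{log}$ (suitably extended, as with $f'_\mathrm{pow}$) is bijective, I apply the garbage\nobreakdash-removal construction $\frS \ast (\frS_\ast) \ast \frS^{-1} \ast (\frS^{-1}_\ast)$, which changes complexity, depth and the number of additional inputs only by constant factors. The polynomial contributions of $\frS_A$, $\frS_C$ and of the $\frS_1$/$\frS_2$ split are $o(2^n \mathop/ n)$ and are absorbed, leaving $L(\frS_\mathrm{log}) \lesssim 2^{n+1}\log_2 n \mathop/ n$ and $Q(\frS_\mathrm{log}) \sim 2^{\,n - \lceil n \mathop/ \phi(n)\rceil + 2}\log_2 n$, as claimed.

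The main obstacle is the middle step: making precise the saving in the Lupanov construction that comes from $h$ being non\nobreakdash-trivial on only $\sim 2^n/n$ inputs. The set of representatives is not a subcube, so the count of ``trivial blocks'' must be controlled combinatorially --- either by an averaging argument over the choice of which $k$ variables are split off, or by exhibiting a cheap reversible encoding of representatives into $\approx n - \log_2 n$ bits --- and one must then re\nobreakdash-verify that the bookkeeping subcircuits ($\frS_{CONJ}$, $\frS_{XOR}$ and the analogues of $\frS_1$--$\frS_5$) still balance to an $O(\log_2 n / n)$\nobreakdash-factor reduction in \emph{both} complexity and memory simultaneously; obtaining the constant $2$ in $2^{n+1}\log_2 n \mathop/ n$, rather than a larger constant, is the delicate part of the bookkeeping.
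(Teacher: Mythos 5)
Your route through the Frobenius orbits $R(\vv x)$ is genuinely different from the paper's, but the obstacle you flag at the end is not a bookkeeping detail --- it is the entire content of the theorem, and nothing in your argument or in the paper fills it. The only general bound available for a map $\ZZ_2^n \to \ZZ_2^n$ is $L \lesssim 2^n$ (Theorem~\ref{theorem_complexity_upper_with_memory}), so your $\frS_B$ must beat it by a factor of $n \mathop/ (2\log_2 n)$, and ``non-trivial on only $\sim 2^n \mathop/ n$ points'' does not deliver this. In the decomposition~\eqref{formula_f_i_with_braces_first} the dominant cost $pn2^{n-k}$ is paid per block $(a_{k+1},\ldots,a_n)$; a block can be discarded only if it contains no representative. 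With $k = \lceil n \mathop/ \phi(n)\rceil$ one has $2^k \gtrsim n$, so the $\sim 2^n \mathop/ n$ representatives are numerous enough to meet essentially every one of the $2^{n-k}$ blocks unless they concentrate into a $\log_2 n \mathop/ n$ fraction of them --- a strong combinatorial property of the orbit representatives in the chosen coordinates that is nowhere established (and the averaging/re-encoding ideas you sketch are not carried out). There is also a circularity in $\frS_A$: a representative defined as ``the element of $R(\vv x)$ with lexicographically least discrete logarithm'' cannot be computed without already knowing the logarithm; you would have to key the choice on the coefficient vectors, which makes the representative set even less structured.

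The paper avoids sparse-function synthesis altogether. It extracts the bits of $k$ one at a time: with $A_i = \{\balpha^j \mid j < 2^{i-1}\}$, the $i$-th bit is the characteristic function $\phi_{A_i}(\vv x_i)$ of the current residual, after which one conditionally multiplies by the precomputed constant $\balpha^{-2^{i-1}}$ (total cost $\sim 2n^3$ over all $n$ steps). Each $\phi_{A_i}$ is a \emph{single} Boolean function of weight exactly $2^{i-1}$: for $i \leqslant n - 2\log_2 n$ it is realized directly as an XOR of its minterms with $(4n-3)2^{i-1}$ gates and only $n-2$ extra inputs, and only the last $2\log_2 n$ functions need the general single-output bound $2^n \mathop/ n$ of Theorem~\ref{theorem_arbitrary_boolean_transformation_complexity}. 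Summing gives $L_2 \lesssim 2\log_2 n \cdot 2^n \mathop/ n$ and the stated memory count, with the polynomial part absorbed. To rescue your approach you would need either a synthesis lemma of the form ``a transformation supported on $N$ points costs $O(nN \mathop/ \log_2 N)$'' or the concentration statement above; neither is available.
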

\begin{proof}
    Введём пары множеств $A_i$ и $B_i$ при $1 \leqslant i \leqslant n$ следующим образом:
    \begin{align*}
        A_1 &= \{\,1\,\}  \; , \\
        A_i &= A_{i-1} \cup B_{i-1}  \; , \\
        B_i &= \{\,\balpha^{2^{i-1}} \vv x\mid \vv x \in A_i\,\} \; .
    \end{align*}
    При таком построении видно, что $A_i \cap B_i = \varnothing$ при $i < n$.
    Следовательно, $|A_i| = 2^{i-1}$ для всех значений $1 \leqslant i \leqslant n$.

    Отметим, что множество $A_i$ можно трактовать, как множество всех элементов $\balpha ^{k_i}$, где $k_i \in \ZZ_{2^{i-1}}$.
    Множество $B_i$ содержит все элементы множества $A_i$, домноженные на $\balpha ^ {2^{i-1}}$.

    Введём \textit{характеристическую} функцию $\phi_{\mathbb X}\colon \fpm \to \ZZ_2$ на множестве $\mathbb X$ следующим образом:
    $$
        \phi_{\mathbb X}(\vv x) =
        \begin{cases}
            1, & \text{если } \vv x \in \mathbb X  \; , \\
            0  & \text{иначе} \; .
        \end{cases}
    $$

    Теперь мы можем описать принцип работы алгоритма дискретного логарифмирования в поле $\fp$, работающего за $n$ шагов, взяв
    за основу известный нам полиномиальный алгоритм возведения в степень.
    Обозначим этот алгоритм через $\mathbf A_\mathrm{log}$.

    Для входного значения $\vv x \in \fpm$ выполняем следующие действия:
    \begin{enumerate}
        \item Определить начальные значения: $i = n$, $\vv x_i = \vv x$.
        \item\label{discrete_log_algorithm_second_step}
            Вычислить значение функции $\phi_{A_i}(\vv x_i) = \phi_i$.
        \item\label{discrete_log_algorithm_third_step} Вычислить новое значение $\vv x_{i-1}$:
            $$
                \vv x_{i-1} =
                \begin{cases}
                    \balpha ^ {-2^{i-1}} \vv x_i, & \text{если } \phi_i = 1  \; , \\
                    \vv x_i & \text{иначе} \; .
                \end{cases}
            $$
        \item Уменьшить значение $i$ на 1.
        \item\label{discrete_log_algorithm_last_step}
            Если $i = 0$, закончить работу. Иначе перейти к шагу~\ref{discrete_log_algorithm_second_step}.
    \end{enumerate}
    В конце работы этого алгоритма степень $k = [\phi_n \phi_{n-1} \ldots \phi_2 \phi_1]$ (квадратные скобки означают запись в
    двоичной системе счисления) будет решением уравнения $\vv x = \balpha^k$.

    Осталось показать, что алгоритм $\mathbf A_\mathrm{log}$ можно реализовать с помощью обратимой схемы,
    состоящей из \gate{} множества $\Omega^2_*$, с указанными в условии теоремы сложностью и количеством дополнительных входов.

    Шаги \ref{discrete_log_algorithm_second_step}--\ref{discrete_log_algorithm_last_step} данного алгоритма
    будут повторены ровно $n$ раз.
    Сложность всей схемы $L(\frS_\mathrm{log})$ будет складываться из суммарной сложности
    \ref{discrete_log_algorithm_second_step}-го и \ref{discrete_log_algorithm_third_step}-го шагов алгоритма
    $L_2$ и $L_3$ соответственно:
    \begin{equation}
        L(\frS_\mathrm{log}) = L_2 + L_3 \; .
        \label{formula_gate_complexity_of_discrete_log_as_sum}
    \end{equation}
    Остальные шаги (4 и 5) реализуются без использования дополнительных \gate{} при помощи $n$ подсхем,
    реализующих шаги \ref{discrete_log_algorithm_second_step} и \ref{discrete_log_algorithm_third_step}.

    \Figure[ht]
        \centering
        \includegraphics[scale=1.2]{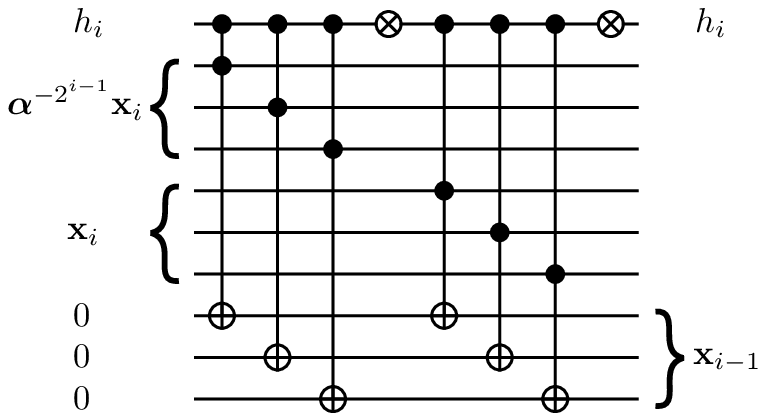}
        \caption
        {
            \small Реализация шага~\ref{discrete_log_algorithm_third_step} алгоритма $\mathbf A_\mathrm{log}$
            с помощью обратимых элементов NOT и 2-CNOT при $n = 3$. 
        }\label{pic_copy_with_control_input}
    \end{figure}

    Значения элементов $\balpha ^ {-2^{n-1}}, \ldots, \balpha ^ {-2}$ для \ref{discrete_log_algorithm_third_step}-го
    шага алгоритма можно вычислить заранее. Умножение $\balpha ^ {-2^{i-1}}$ на $\vv x_i$ можно реализовать при помощи $2n^2$
    \gate{} множества $\Omega^2_*$ при использовании $(2n-1)$ дополнительных входов~\cite{zakablukov_zasorina_chikin}.
    Выбор значения $\vv x_{i-1}$ на этом шаге можно реализовать при помощи $(2n + 1)$ элементов
    NOT и CNOT при использовании $n$ дополнительных входов.
    К примеру, на рис.~\ref{pic_copy_with_control_input} показана обратимая подсхема, реализующая такой выбор при $n = 3$.
    Следовательно, суммарная сложность шага~\ref{discrete_log_algorithm_third_step} равна
    $$
        L_3 = n(2n^2 + 2n + 1) \sim 2n^3 \; ,
    $$
    а суммарное количество использованных дополнительных входов на этом шаге равно
    $$
        q_3 = n(2n-1 + n) \sim 3n^2 \; .
    $$

    Сложность \ref{discrete_log_algorithm_second_step}-го шага алгоритма $L_2$ зависит от сложности
    обратимой схемы, вычисляющей значение $\phi_i$. Обозначим эту сложность через $L_{\phi_i}$.
    Тогда верно равенство
    $$
        L_2 = \sum_{i=1}^n{L_{\phi_i}} \; .
    $$

    Каждой из функций $\phi_{A_i}\colon \fpm \to \ZZ_2$ можно поставить в соответствие булеву функцию
    $f_i\colon \ZZ_2^n \to \ZZ_2$, такую что $\phi_{A_i}(\vv x) = f_i(\vv v)$, $\vv v \in \ZZ_2^n$~--- вектор
    коэффициентов многочлена $\vv x \in \fpm$, $f_\FF(\vv v) = \vv x$.

    Согласно теореме~\ref{theorem_arbitrary_boolean_transformation_complexity}, булеву функцию $f_i$ можно реализовать
    с помощью обратимой подсхемы, состоящей из \gate{} множества $\Omega^2_*$,
    со сложностью $L \lesssim 2^n \mathop / n$ при использовании $q \sim 2^{n-\lceil n \mathop / \phi(n)\rceil + 1}$
    дополнительных входов (см. начало данного параграфа),
    где $\phi(n)$ и $\psi(n)$~--- любые сколь угодно медленно растущие функции такие,
    что $\phi(n) \leqslant n \mathop / (\log_2 n + \log_2 \psi(n))$.
    Таким образом,
    \begin{gather}
        L_{\phi_i} \lesssim 2^n \mathop / n
            \label{formula_function_complexity_upper_bound_table_case}  \; , \\
        q_{\phi_i} \sim 2^{n-\lceil n \mathop / \phi(n)\rceil + 1} \notag \; .
    \end{gather}

    С другой стороны, функцию $f_i$ можно реализовать, используя аналог СДНФ, в котором дизъюнкции заменены на
    сложение по модулю 2:
    $$
        f_i(\langle v_1, \ldots, v_n \rangle) = \bigoplus_{\substack{a_1, \ldots, a_n \in \ZZ_2\\
            f_i(\langle a_1, \ldots, a_n \rangle) = 1}}{ v_1^{a_1} \wedge \ldots \wedge v_n^{a_n}} \; .
    $$
    Каждая из конъюнкций $v_1^{a_1} \wedge \ldots \wedge v_n^{a_n}$ реализуется одним элементом $E(t,I,J)$.
    Этот \gate{} может быть выражен в виде композиции не более $2n$ элементов NOT для инверсии контролирующих входов
    и не более $(2n-3)$ элементов 2-CNOT для замены одного элемента $n$-CNOT (см. рис.~\ref{pic_reducing_complexity}).
    При этом используется всего $(n - 2)$ дополнительных входов.
    Таким образом, при реализации функции $f_i$ верны следующие соотношения:
    \begin{gather*}
        L_{\phi_i} \leqslant (4n-3)|f_i| \; , \\
        q_{\phi_i} = n-2 \; ,
    \end{gather*}
    где $|f_i|$~--- вес вектора значений функции $f_i$.
    На рис.~\ref{pic_dnf_realization_example} приведён пример реализации с помощью обратимой схемы функции
    $g(\langle v_1, v_2, v_3 \rangle) = v_1^0 \wedge v_2 \wedge v_3^0 \oplus v_1 \wedge v_2^0 \wedge v_3$.

    \Figure[ht]
        \centering
        \includegraphics[scale=1.2]{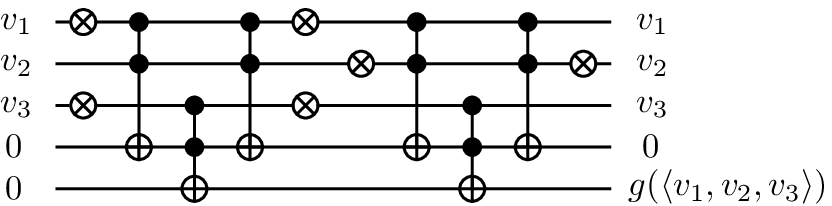}
        \caption
        {
            \small Реализация функции
            $g(\langle v_1, v_2, v_3 \rangle) = v_1^0 \wedge v_2 \wedge v_3^0 \oplus v_1 \wedge v_2^0 \wedge v_3$
            с помощью обратимых элементов NOT и 2-CNOT. 
        }\label{pic_dnf_realization_example}
    \end{figure}

    В начале доказательства было показано, что $|A_i| = 2^{i-1}$ для всех значений $i$ от $1$ до $n$,
    следовательно, вес вектора значений функции $f_i$ равен $|f_i| = 2^{i-1}$.
    Отсюда следует, что величины $L_{\phi_i}$ и $q_{\phi_i}$ ограничены следующим образом:
    \begin{gather}
        L_{\phi_i} \leqslant (4n-3)2^{i-1} \; ,
            \label{formula_function_complexity_upper_bound_function_weight_case} \\
        q_{\phi_i} = n-2 \notag \; .
    \end{gather}

    Оценим, при каких значениях $i$ величина $L_{\phi_i}$
    по формуле~\eqref{formula_function_complexity_upper_bound_function_weight_case}
    будет не больше, чем по формуле~\eqref{formula_function_complexity_upper_bound_table_case}:
    $$
        (4n-3)2^{i-1} \leqslant 2^n \mathop / n \Rightarrow i \leqslant n - 2 \log_2 n \; .
    $$
    Теперь можно оценить величину $L_2$. Пусть $k = n - 2 \log_2 n$, тогда
    \begin{gather*}
        L_2 = \sum_{i=1}^n{L_{\phi_i}} \lesssim
            \sum_{i=1}^k{(4n-3)2^{i-1}} + 
            \sum_{i= k + 1}^n{\frac{2^n}{n}}  \; , \\
        L_2 \lesssim 4n 2^k + (n-k) \frac{2^n}{n} =
            \frac{2^{n+2}}{n} + \frac{2^{n+1} \cdot \log_2 n}{n} \lesssim \frac{2^{n+1} \cdot \log_2 n}{n} \; .
    \end{gather*}
    Аналогичным образом можно оценить количество использованных дополнительных входов:
    \begin{gather*}
        q_2 = \sum_{i=1}^n{q_{\phi_i}} \sim
            \sum_{i=1}^k{(n-2)} + 
            \sum_{i= k + 1}^n{2^{n-\lceil n \mathop / \phi(n)\rceil + 1}}  \; , \\
        q_2 \sim k(n-2) + (n-k) 2^{n-\lceil n \mathop / \phi(n)\rceil + 1}
            \sim 2^{n-\lceil n \mathop / \phi(n)\rceil + 2} \cdot \log_2 n \; .
    \end{gather*}
    
    Таким образом, мы получаем следующие оценки:
    \begin{gather*}
        L(\frS_\mathrm{log}) = L_2 + L_3 \lesssim \frac{2^{n+1} \cdot \log_2 n}{n} + 2n^3
            \sim \frac{2^{n+1} \cdot \log_2 n}{n}  \; , \\
        Q(\frS_\mathrm{log}) = q_2 + q_3 \sim 2^{n-\lceil n \mathop / \phi(n)\rceil + 2} \cdot \log_2 n + 3n^2 \sim
            2^{n-\lceil n \mathop / \phi(n)\rceil + 2} \cdot \log_2 n \; .
    \end{gather*}
\end{proof}

Данная верхняя асимптотическая оценка сложности $(2^{n+1} \cdot \log_2 n) \mathop / n$ намного выше,
чем самая лучшая асимптотическая оценка временн\'{о}й сложности $O(n^{\log_2 n})$ алгоритма дискретного логарифмирования
в поле $\fp$, известная на данный момент~\cite{quasi_polynomial_discrete_log}.
При этом не представляется возможным реализовать с такой же или с любой другой субъэкспоненциальной
сложностью алгоритм из работы~\cite{quasi_polynomial_discrete_log},
т.\,к. в указанном алгоритме используются эвристические и вероятностные подходы, сложно реализуемые в схеме из обратимых \gate{}

\subsection{Вопрос схемной сложности реализации алгоритма, обратного к заданному}

\forceindent
В работе~\cite{zakablukov_zasorina_chikin} был описан способ построения обратимой схемы,
состоящей из \gate{} множества $\Omega^2_*$ и
реализующей двоичный сумматор без порождения вычислительного мусора на своих незначимых выходах. Такая схема позволила авторам
оценить временн\'{ы}е сложности алгоритмов сложения и вычитания, которые оказались линейными от $n$,
через оценку сложности обратимой схемы.

В работе~\cite{zakablukov_zasorina_chikin} также был описан подход к построению обратимой схемы,
состоящей из \gate{} множества $\Omega^2_*$
и реализующей умножение многочленов в поле $\fp$ без порождения вычислительного мусора на своих незначимых выходах.
Как и в предыдущем случае, такая схема позволила оценить временн\'{ы}е сложности алгоритмов умножения и деления многочленов
в поле $\fp$, которые оказались полиномиальными от $n$, через оценку сложности обратимой схемы.

В то же время мы имеем алгоритм возведения в степень примитивного элемента $\balpha$ в поле $\fp$ с полиномиальной
временн\'{о}й сложностью, который также можно реализовать с помощью обратимой схемы, использующей дополнительную память и 
состоящей из \gate{} множества $\Omega_*^2$, с полиномиальной сложностью.
Однако для алгоритма дискретного логарифмирования, находящего для заданного
$\vv x \in \fpm$ степень $k$, такую что $\balpha ^ k = \vv x$ в поле $\fp$, неизвестно о существовании алгоритма
с полиномиальной временн\'{о}й сложностью~\cite{quasi_polynomial_discrete_log}
или о существовании реализующей его обратимой схемы с полиномиальной сложностью.

Возникает вопрос: почему в одних случаях прямой и обратный алгоритм имеют при реализации обратимой схемой
сложности с одинаковой степенью роста (линейной или полиномиальной), и это можно доказать, а в других случаях не удаётся
реализовать обратимой схемой алгоритм, обратный к заданному, со сложностью, имеющей ту же степень роста?

\begin{hypothesis}\label{hypothesis_reverse_algorithm_complexity}
    Обратимая схема, реализующая алгоритм, обратный к заданному, имеет сложность с б\'{о}льшей на порядок степенью роста
    по отношению к сложности обратимой схемы, реализующей прямой алгоритм, если при переходе от прямого алгоритма к обратному
    теряется какая-то часть информации.
\end{hypothesis}

Рассмотрим на примерах, на чём основана данная гипотеза:
\begin{enumerate}
    \item\label{enum_addition_gate_compexity}
        \textbf{Сложение}.
    
        Пусть необходимо сложить два элемента $a$ и $b$ какого-либо кольца $K$.
        Алгоритм сложения можно описать отображением $f_\mathrm{sum}\colon K \times K \to K \times K$:
        $$
            f_\mathrm{sum}(a, b) = (a, b + a) \; .
        $$
        Тогда обратное к нему отображение $f^{-1}_\mathrm{sum}\colon K \times K \to K \times K$ будет ничем иным,
        как отображением $f_\mathrm{sub}$, описывающим алгоритм вычитания:
        $$
            f^{-1}_\mathrm{sum}(a, b + a) = f_\mathrm{sub}(a, b + a) = (a, (b + a) - a) = (a, b) \; .
        $$
        
        Как видно, прямое и обратное преобразования $f_\mathrm{sum}$ и $f_\mathrm{sub}$ являются, во-первых, обратимыми,
        во-вторых, множество входных и выходных значений у них совпадают,
        а в-третьих, обратное преобразование есть по сути то же сложение, только с обратным относительно операции сложения
        элементом: $c - a = c + (-a)$.
        
    \item \textbf{Умножение}.
    
        Пусть необходимо умножить два ненулевых элемента $a$ и $b$ какой-либо группы $H$.
        Алгоритм умножения можно описать отображением $f_\mathrm{mul}\colon H \times H \to H \times H$:
        $$
            f_\mathrm{mul}(a, b) = (a, b * a)\text{ при }a \ne 0, b \ne 0 \; .
        $$
        Тогда обратное к нему отображение $f^{-1}_\mathrm{mul}\colon H \times H \to H \times H$ будет ничем иным,
        как отображением $f_\mathrm{div}$, описывающим алгоритм деления:
        $$
            f^{-1}_\mathrm{mul}(a, b * a) = f_\mathrm{div}(a, b * a) = (a, (b * a) \mathop / a) = (a, b) \; .
        $$
        
        Как и в предыдущем случае, прямое и обратное преобразования $f_\mathrm{mul}$ и $f_\mathrm{div}$ являются,
        во-первых, обратимыми,
        во-вторых, множество входных и выходных значений у них совпадают,
        а в-третьих, обратное преобразование есть по сути то же умножение, только с обратным относительно операции умножения
        элементом: $c \mathop / a = c * a^{-1}$.

    \item \textbf{Возведение в степень}.

        Пусть необходимо возвести в степень $n$ ненулевой элемент $a$ какой-либо группы $H$.
        Алгоритм возведения в степень можно описать отображением $f_\mathrm{pow}\colon \NN \times H \to \NN \times H$:
        $$
            f_\mathrm{pow}(n, a) = (n, 1 * \underbrace{a * \ldots * a}_n)\text{ при }a \ne 0 \; .
        $$
        Однако алгоритм дискретного логарифмирования, который можно описать отображением
        $f_\mathrm{log}\colon H \times H \to \NN \times H$, не является обратным к данному:
        $$
            f_\mathrm{log}(b, a) = (n, a)\text{, где } f_\mathrm{pow}(n, a) = (n, b) \; .
        $$
        Это следует из того, что множество входных и выходных значений у отображений
        $f_\mathrm{pow}$ и $f_\mathrm{log}$ не совпадают.
        
        Также стоит отметить, что для конечной группы $H$ верно равенство
        $$
            f_\mathrm{pow}(n, a) = f_\mathrm{pow}(k \cdot |H| + n, a), \text{ где } k \in \NN \; .
        $$
        В то же время для пары ненулевых элементов $a, b \in H$ отображение $f_\mathrm{log}$ даёт всего лишь одну
        степень $n$, для которой $f_\mathrm{pow}(n, a) = (n, b)$, а не целое множество степеней вида $n + k \cdot |H|$.
        
        Однако если рассмотреть алгоритм извлечения корня $n$-й степени, описываемый отображением
        $f_\mathrm{root}\colon \NN \times H \to \NN \times H$ следующего вида:
        $$
            f_\mathrm{root}(n, a^n) = (n, \sqrt[n]{a^n}) = (n, a)\text{ при }a \ne 0 \; ,
        $$
        то он как раз и будет обратным к алгоритму возведения в степень: $f^{-1}_\mathrm{pow} = f_\mathrm{root}$.
        Эти два преобразования являются,
        во-первых, обратимыми,
        во-вторых, множество входных и выходных значений у них совпадают,
        а в-третьих, обратное преобразование есть по сути то же возведение в степень,
        только с обратным относительно операции возведения в степень элементом: $\sqrt[n]{b} = b ^ {n^{-1}}$.

\end{enumerate}

На с.~\pageref{formula_discrete_logarithm_base_function} мы задали преобразование $f_\mathrm{log}\colon \ZZ_2^n \to \ZZ_2^n$
и ввели два отображения: одно для отображения степени $n$ в двоичный вектор $\NN \to \ZZ_2^n$ и для отображения
элемента поля $\vv x \in \fp$ в двоичный вектор $\fpm \to \ZZ_2^n$.
Именно на данном этапе при переходе от прямого алгоритма (алгоритма возведения в степень) к обратному
(алгоритму дискретного логарифмирования) произошла потеря информации о том, что для любой степени $n \in \NN$
существует обратимая схема, реализующая алгоритм возведения в степень со сложностью не более $4 (\log_2 n)^3$.
Рассматриваемое отображение $f_\mathrm{log}\colon \ZZ_2^n \to \ZZ_2^n$ было ограничено только степенями
$n \in \ZZ_{2^n - 1}$.
Тем самым синтез обратимой схемы, реализующей это отображение, перестал отличаться от синтеза обратимой схемы, реализующей
произвольное отображение $\ZZ_2^n \to \ZZ_2^n$.

Доказательство верхней асимптотической оценки $(2^{n+1} \cdot \log_2 n) \mathop / n$
в теореме~\ref{theorem_upper_bound_discrete_log}
ясно показывает на примере, что не удаётся построить обратимую схему, реализующую алгоритм, обратный к алгоритму
возведения в степень, используя обратимую схему для прямого алгоритма,
без увеличения сложности обратимой схемы на несколько порядков.
Прямой алгоритм принимает значения степени из множества $\NN$, а обратный алгоритм выдаёт в качестве ответа степень из
ограниченного множества $\ZZ_{2^n}$.
Выдвинутая гипотеза~\ref{hypothesis_reverse_algorithm_complexity} заключается в том,
что такое увеличение сложности является следствием потери информации о прямом алгоритме~---
следствием перехода из $\NN$ в $\ZZ_{2^n}$.

С другой стороны, для алгоритма извлечения корня $n$-й степени такого ограничения множества входных/выходных значений
не происходит, информация не теряется, поэтому данный алгоритм, как и алгоритм возведения в степень, тоже должен
реализовываться обратимой схемой с полиномиальной сложностью.
По крайней мере, это выполняется для временн\'{о}й сложности данных алгоритмов:
для степени корня $n$ достаточно найти такую степень $k$,
что $nk \equiv 1\pmod{|H|}$, где $|H|$~--- порядок группы. Другими словами, $k \equiv n^{-1}\pmod{|H|}$.
К примеру, данная степень $k$ может быть найдена при помощи расширенного алгоритма Евклида,
имеющего полиномиальную сложность.
В итоге, при извлечении корня $n$-й степени из элемента $b$ сначала ищется степень $k \equiv n^{-1}\pmod{|H|}$ за полиномиальное
время, а затем $b$ возводится в степень $k$ также за полиномиальное время.

Рассмотрим общую схему для отображения $f\colon H \times G \to H \times G$, где $H$ и $G$~--- некоторые группы.
Пусть данное отображение при помощи опреации $*$ переводит пару $(h, g)$ в некоторую новую пару $(h, g')$ следующим образом:
$$
    f(h, g) = (h, g * h), \text{ } g' = g * h \; .
$$
Тогда обратное к данному преобразование $f^{-1}$ будет выглядеть следующим образом:
$$
    f^{-1}(h, g') = (h, g' * f_\mathrm{inv}(h)) \text{ \;и\; }
    f^{-1}(f(h, g)) = (h, g) \; ,
$$
где $f_\mathrm{inv}(h)$~--- функция обращения элемента $h \in H$ относительно операции $*$.

Видно, что сложность $L(\frS_{f^{-1}})$ обратимой схемы, реализующей отображение $f^{-1}$,
будет больше сложности $L(\frS_f)$ обратимой схемы, реализующей отображение $f$,
на величину $L(\frS_{f_\mathrm{inv}})$, равную сложности обратимой схемы, реализующей отображение $f_\mathrm{inv}$.

Если $L(\frS_{f_\mathrm{inv}})$ и $L(\frS_f)$ имеют одинаковый порядок роста, то и $L(\frS_{f^{-1}})$ и $L(\frS_f)$
будут иметь одинаковый порядок роста (линейный, полиномиальный, экспоненциальный).
Примерами таких обратимых схем могут быть схемы, реализующие алгоритмы сложения/вычитания
и умножения/деления элементов в группе (см. с.~\pageref{enum_addition_gate_compexity}),
а также схемы, реализующие алгоритм возведения в степень $k$
и извлечения корня $k$-й степени из элементов группы, $k \in \ZZ_{2^n}$.
Если же будет доказано, что $L(\frS_f) = o(L(\frS_{f_\mathrm{inv}}))$, то тогда $L(\frS_f) = o(L(\frS_{f^{-1}}))$.

\sectionnoenumeration{Заключение}

\forceindent
В связи с тепловыми потерями во время вычислительного процесса, вызванными необратимостью производимых операций,
обратимость вычислений, по-видимому, станет в ближайшем будущем одним из главных требований, предъявляемых к синтезу
управляющих систем. С другой стороны, обратимость является неотъемлемой частью, к примеру, квантовых вычислений.
Следовательно, уже сейчас необходимо развивать и улучшать существующий математический базис для синтеза
компактных обратимых схем с малым числом входов.

В данной диссертационной работе были рассмотрены существующие переборные и непереборные алгоритмы синтеза обратимых схем,
состоящих из элементов NOT, CNOT и 2-CNOT. Предложен новый быстрый и эффективный алгоритм синтеза обратимой схемы,
задающей подстановку на множестве $\ZZ_2^n$ с малым числом подвижных точек. Предложены различные эквивалентные
замены композиций обратимых \gate{}, описан алгоритм снижения сложности обратимой схемы, использующий данные замены.
Также были рассмотрены различные методы снижения сложности обратимой схемы на этапе её синтеза и показана
эффективность этих методов на практике.
Доказаны асимптотические верхние и нижние оценки сложности, глубины и квантового веса обратимых схем,
состоящих из элементов NOT, CNOT и 2-CNOT.
Показано, что использование дополнительных входов в таких схемах почти всегда позволяет существенно снизить
их сложность, глубину и квантовый вес, в отличие от классических схем, состоящих из необратимых \gate{}
Снижение сложности обратимой схемы за счёт использования дополнительных входов было также показано на примере
реализации алгоритма дискретного логарифмирования в конечном поле характеристики 2.

Однако в виду ограниченного времени не были рассмотрены некоторые вопросы, представляющие научный интерес для дальнейших
исследований.
Одним из таких вопросов является улучшение нижней оценки глубины $D(n,q)$ обратимой схемы,
состоящей из элементов NOT, CNOT и 2-CNOT. В настоящей работе эта оценка была получена из нижней оценки сложности
обратимой схемы $\frS$ при помощи простого соотношения $D(\frS) \geqslant L(\frS) \mathop / n$, где $n$~--- количество входов
схемы. Тем не менее, можно попытаться посчитать количество неэквивалентных обратимых схем заданной глубины.
Если это будет сделано, то можно будет применить мощностной метод Риордана--Шеннона и, предположительно, улучшить
нижнюю оценку для $D(n,q)$.

Вторым направлением дальнейших исследований является поиск асимптотически оптимального метода синтеза обратимых схем без
дополнительной памяти. В доказанной в настоящей работе верхней оценке $L(n,0)$ участвует константа $3 \cdot 2^4$.
Её можно попытаться снизить либо модифицировав предложенный алгоритм синтеза, либо разработав новый алгоритм.
Также можно попытаться посчитать среднее значение сложности для всех обратимых схем с $n$ входами.

В настоящей работе почти не был затронут вопрос синтеза обратимых схем без дополнительной памяти с асимптотически
оптимальной глубиной. Если для сложности обратимых схем удалось получить нижние и верхние оценки, эквивалентные с точностью
до порядка, то для глубины обратимых схем таких оценок получить не удалось. Отчасти это связано с тем,
что предложенный алгоритм синтеза обратимых схем без дополнительной памяти плохо поддаётся модификации для снижения глубины
синтезированной схемы.

Те же соображения касаются и обратимых схем с дополнительной памятью: для них также не было получено эквивалентных
с точностью до порядка нижних и верхних оценок глубины (в отличие от оценок сложности). Дело осложняется ещё и тем,
что нижняя асимптотическая оценка глубины $D(n,q)$ обратимых схем является очень слабой при быстро растущем значении
количества дополнительных входов $q$. Поэтому предлагается сперва улучшить нижнюю оценку глубины обратимых схем, а затем
попытаться улучшить верхнюю оценку глубины.

Ещё одним важным направлением дальнейших исследований является изучение асимметричных преобразований
через построение реализующих их обратимых схем. Если обратимая схема не имеет дополнительных входов,
то очевидно, что обратное преобразование можно реализовать обратимой схемой с той же сложностью: для этого просто надо
зеркально отобразить существующую схему. Если обратимая схема без дополнительной памяти имеет минимальную сложность среди
всех обратимых схем, реализующих прямое преобразование, можно утверждать, что и обратное преобразование реализуется с той
же схемной сложностью.
Однако если схема имеет дополнительные входы и содержит вычислительный мусор на своих незначимых выходах, то в этом случае уже
нельзя так однозначно утверждать о равенстве схемной сложности для обратного преобразования. Для такой схемы можно построить
подсхему по уборке вычислительного мусора. Тогда разница в схемной сложности прямого и обратного преобразований будет
зависеть от сложности подсхем, <<генерирующей>> и <<убирающей>> вычислительный мусор.

В настоящей работе была сделана попытка построить схему по уборке вычислительного мусора для обратимой схемы, реализующей
алгоритм возведения в степень в конечном поле характеристики 2. К сожалению, ситуация осложнилась тем, что вычислительный мусор
оказался равным входному значению показателя степени, т.\,е. по сути не было обратимой подсхемы,
<<генерирующий>> вычислительный мусор. Как следствие, подсхема по уборке вычислительного мусора оказалась
равной обратимой схеме, реализующей алгоритм дискретного логарифмирования в конечном поле характеристики 2.

Тем не менее, несмотря на описанные сложности, объединение в одной обратимой схеме реализаций как прямого, так и обратного
преобразований, представляет, по мнению автора, значительный научный интерес и открывает новые возможности для
изучения сложности асимметричных преобразований.

\renewcommand{\bibname}{Список литературы}

\refstepcounter{fakecounter}\label{page_last}

\end{document}